\newcommand{\conf}{true} 
\newcommand{\NotForConf}[1]{\ifthenelse{\boolean{Journal}}{}{#1}}
\newcommand{\IfConf}[2]{\ifthenelse{\boolean{Journal}}{#1}{#2}}
\newcommand{\techreport}{false} 
\newcommand{\Figure}{\IfConf{Fig.~}{Fig.~}}
\renewcommand{\eqref}[1]{(\ref{#1})}
\renewcommand{\bx}{x}
\renewcommand{\MC}{C}
\renewcommand{\MD}{D}
\renewcommand{\leq}{\leqslant}
\renewcommand{\geq}{\geqslant}
\newcommand{\X}{M}
\newcommand{\sol}{\phi^{f}}
\newcommand{\reals}{\mathbb{R}}
\newcommand{\ball}{\mathbb{B}}
\newcommand{\rb}{b}
\newcommand{\rc}{c}
\def \SMF {}
\def \STF {}  
\def \SMR {\color{black}}
\def \STR {\color{black}\normalcolor}  
\def \SM {\color{black}}
\def \ST {\color{black}\normalcolor}  
\newcommand{\Mag}[1]{}
\renewcommand{\Cred}[1]{{#1}}
\renewcommand{\Cblue}[1]{{#1}}
\newcommand{\Cblack}[1]{{#1}}
\newcommand{\Blue}[1]{{\color{black}#1}} 
\newcommand{\Red}[1]{{\color{black}#1}} 
\newcommand{\CBlue}[1]{{\color{black}#1}} 
\newcommand{\CRed}[1]{{\color{black}#1}} 
\renewcommand{\Cmag}[1]{} 
\newcommand{\RRed}[1]{{\color{black}#1}} 
\def \SMB {\color{black}}
\def \STB {\color{black}\normalcolor}  
\newcommand{\IfTAC}[2]{\ifthenelse{\boolean{TechReport}}{{#1}}{#2}} 
\newcommand{\mtau}{{\tau(t)}}
\newcommand{\Ztau}{(t)}
\newcommand{\Ytau}{\tau(t)}
\begin{document}

\title{\LARGE \textbf{
Notions, Stability, Existence, and Robustness of Limit Cycles\\ in Hybrid Dynamical Systems}} 

\author{Xuyang Lou, Yuchun Li, and Ricardo G. Sanfelice,~\IEEEmembership{Fellow, IEEE}\thanks{X. Lou is with
the Key Laboratory of Advanced Process Control for Light Industry (Ministry of Education), Jiangnan University, Wuxi 214122, China.
Email: {\tt Louxy@jiangnan.edu.cn}
Y. Li and R. G. Sanfelice are with
the Department of Computer Engineering, University of California, Santa Cruz, CA 95064, USA.
      Email: {\tt yuchunli,ricardo@ucsc.edu.}
Research by X. Lou has been supported by
Jiangsu Provincial Natural Science Foundation of China Grants no. BK20201340 and the 111 Project Grants no. B23008.
Research by R. G. Sanfelice partially supported by NSF Grants no. ECS-1710621 and CNS-1544396, by AFOSR Grants no. FA9550-16-1-0015 and FA9550-19-1-0053, and by CITRIS and the Banatao Institute at the University of California.}}

\date{}
\maketitle

\begin{abstract}
This paper deals with existence
and robust stability of hybrid limit cycles for a class of hybrid systems
given by the combination of continuous dynamics on
a flow set and discrete dynamics on a jump set.
For this purpose, the notion of Zhukovskii stability,
typically stated for continuous-time systems,
is extended to the hybrid systems.
Necessary conditions, particularly, a condition using a forward invariance notion,
for existence of hybrid limit cycles are first presented.
In addition, a sufficient condition, related to Zhukovskii stability,
for the existence of (or lack of) hybrid limit cycles
is established.
Furthermore, under mild assumptions, we show that asymptotic
stability of such hybrid limit cycles is not only equivalent to
asymptotic stability of a fixed point of the associated Poincar\'{e}
map but also robust to perturbations.
Specifically, robustness to
generic perturbations, which capture state noise and unmodeled
dynamics, and to inflations of the flow and jump sets are established in terms of $\mathcal{KL}$ bounds.
Furthermore, results establishing relationships between the properties of a computed Poincar\'{e} map,
which is necessarily affected by computational error,
and the actual asymptotic stability properties of a hybrid limit cycle are proposed.
In particular, it is shown that
asymptotic stability of the exact Poincar\'{e} map is preserved
when computed with enough precision.~\IfTAC{Two}{Several}~examples, including a congestion control system\IfTAC{,}{~and spiking neurons,} are presented to illustrate the notions and
results throughout the paper.
\end{abstract}

\IEEEpeerreviewmaketitle

\IfTAC{\vspace{-2mm}}{}
\section{Introduction}

\IfTAC{
\subsection{Motivation and Related Work}

}
{

\subsection{Motivation}

Due to recent technological advances requiring
advanced \IfConf{mathematical}{ma-thematical} models, hybrid systems have drawn considerable attention in recent years.
Hybrid systems have state variables that can evolve continuously (flow) and/or discretely (jump),
leading to trajectories of many types, such as those with finitely many jumps and infinite amount of flow,
with an infinite number of intervals of flow with finite (nonzero) length followed by a jump
-- potentially defining a limit cycle with a jump --
and with an infinite number of intervals of flow with finite (but decreasing) length followed by a jump, which are
the so-called Zeno solutions.
There exist several frameworks capable of modeling such systems
as well as tools for their analysis and design
\cite{hybrid:automata,vanderSchaftSchumacher00,Borrelli.03,HaddadChellaboinaNersesov06}.
Recent progress in the development of a robust stability theory for
hybrid dynamical systems has led to a
new framework, known as hybrid inclusions.
These developments appear in \cite{Goebel:book}
and include results to assure existence of solutions,
as well as to certify asymptotic stability of closed sets
and robustness to perturbations.
In spite of these advances,
the study of existence and robust stability of limit cycles for such systems
has not received much attention,
even though numerous applications in robotics \cite{Grizzle:2001},
mechanical systems \cite{Bullo:2002}, genetic regulatory networks \cite{Shu:2014},
and neuroscience \cite{Izhikevich:2003}
would benefit from results guaranteeing
such properties.

\subsection{Related Work}
}

Nonlinear dynamical systems with periodic solutions are found in many areas, including biological dynamics \cite{Adimy:2006}, neuronal systems \cite{Rodrigues:2007}, and population dynamics \cite{Gonzalez-Olivares}, to name just a few.
%
%
In recent years, the study of limit cycles in hybrid systems has received \Blue{renewed attention, mainly due to} the existence of hybrid limit cycles in many engineering applications, such as walking robots \cite{Grizzle:2001},
genetic \IfTAC{}{regulatory} networks \cite{Shu:2014},
\Blue{holonomic mechanical systems subject to impacts \cite{Slynko},}
among others. \Blue{Theory for the study of such periodic behavior dates back to the work Andronov et al. in 1966 \cite{Andronov},
where self-oscillations (limit cycles) and discontinuous oscillations
were studied.} \Blue{Limit cycles have been studied within the impulsive differential equations framework \cite{Samoilenko:1987,Lakshmikantham:1989,Samoilenko:1995},}
\Blue{for example in strongly nonlinear impulsive systems \cite{Akhmetov:1992,Perestjuk:1974},
in slowly impulsive systems \cite{Perestjuk:1974b}, in the Van der Pol equation \cite{Akhmet:2013}, in a holonomic mechanical system
subject to impacts \cite{Slynko}, and in a weakly nonlinear two-dimensional impulsive system \cite{Yu:2013}.} \Blue{These early developments pertain to
nominal systems given in the form of impulsive differential equations,
leaving the question of whether it is possible to
handle more general models, such as hybrid system models, and guarantee robustness to generic perturbations
wide open.}

As a difference to general continuous-time systems, for which the Poincar\'{e}-Bendixson theorem
\SM uses the topology of $\BR^2$ to rule out chaos and \ST
offers
criteria for existence of limit cycles/periodic orbits,
the problem of identifying the existence of limit cycles for hybrid systems has been
studied for specific classes of hybrid systems.
Specific results for existence
of hybrid limit cycles include \IfTAC{\cite{Grizzle:2001}-\cite{Colombo}.}{\cite{Grizzle:2001}, \cite{Matveev:Savkin:1999}-\cite{Colombo}.}
~In particular,
Grizzle et al. establish the existence and stability properties of a
periodic orbit of nonlinear systems with impulsive effects via the method of Poincar\'{e} sections \cite{Grizzle:2001}.
Using the transverse contraction framework, the existence and orbital stability of nonlinear hybrid limit cycles
are analyzed for a class of autonomous hybrid
dynamical systems with impulse in \cite{Tang:Manchester:2014}.
In \cite{reset}, the existence and stability of limit cycles in reset control systems
are investigated via techniques that rely on the linearization of the Poincar\'{e} map about its fixed point.
In \cite{Lou:ACC17}, we analyze the existence of hybrid
limit cycles in hybrid dynamical systems and establish necessary conditions for the existence of
hybrid limit cycles.
Clark et al. prove a version of the Poincar\'{e}-Bendixson theorem for planar hybrid dynamical systems with
empty intersection between the flow set and the jump set \cite{Clark:Bloch:2019a}, and
extend the results to the case of an arbitrary number of state spaces (each of which is a subset of $\BR^2$) and impacts in \cite{Clark:Bloch:2019b}.
More recently, Goodman and Colombo propose necessary conditions for existence of a periodic orbit related to the Poincar\'{e} map and
sufficient conditions for local conjugacy between two Poincar\'{e} maps
in systems with prespecified jump times evolving on a differentiable manifold \cite{Colombo}.
We believe that
conditions for existence of hybrid limit cycles in general hybrid systems
should play a more prominent role in analysis and control of hybrid limit cycles.
To the best of our knowledge,
tools for the analysis of
existence \SM or nonexistence \ST of hybrid limit cycles
for the class of hybrid systems in \cite{Goebel:book,Ricardo:book:2021}
are still not available in the literature.

Stability issues of hybrid limit cycles are currently a major focus in studying
hybrid systems for their practical value in applications.
Due to the complicated behavior caused by interaction
between continuous change and instantaneous change,
the study of stability of limit cycles in hybrid systems is more difficult than
the study in continuous systems or discrete systems,
and so becomes a challenging issue.
In this respect, the Poincar\'{e} map and its variations or generalizations
still play a dominating role; see, e.g., \cite{HaddadACC:2002}-\cite{Znegui:2020}.
For instance,
Nersesov et al. generalize the Poincar\'{e}  method
to analyze limit cycles for left-continuous hybrid impulsive dynamical systems \cite{HaddadACC:2002}.
Gon\c{c}alves analytically develops the local stability of limit cycles
in a class of switched linear systems when a limit cycle exists \cite{Goncalves:2000}.
The authors in \cite{Benmiloud:2019}
analyze local stability of a predefined limit cycle for switched affine systems
and
design switching surfaces by computing eigenvalues of the Jacobian of the Poincar\'{e} map.
\IfTAC{}{The trajectory sensitivity approach in \cite{Hiskens:2007} is employed
to develop sufficient conditions for stability of limit cycles in switched differential-algebraic
systems.}
Motivated by robotics applications,
the authors in \IfTAC{\cite{Grizzle:2009}-\cite{Znegui:2020}}{
\cite{Grizzle:2009}, \cite{WangHelin:2020}, \cite{Hamed:CST:2020,Znegui:2020}} analyze
the stabilization of periodic orbits in systems with impulsive effects
using the Jacobian linearization of the Poincar\'{e} return map
and the relationship between the stability of the return map and 
the stability of the hybrid zero dynamics.
To the best of our knowledge, all of the aforementioned results about limit cycles
are only suitable for hybrid systems that
have jumps on switching surfaces
and under nominal/noise-free conditions. In fact, the results therein do not characterize the robustness properties to perturbations
of stable hybrid limit cycles,
which is a very challenging problem due to the impulsive behavior
in such systems.

Besides our preliminary results in \cite{Lou:ACC17,Lou:CDC15,lou.li.sanfelice16:TAC},
results for the study of existence and robustness of limit cycles in hybrid systems
are currently missing from the literature,
being perhaps the main reason that a robust stability theory for such systems has only been \IfTAC{}{recently}
developed in \cite{Goebel:book,Ricardo:book:2021}.
In fact, all of the aforementioned results about limit cycles are formulated for hybrid systems
operating in nominal/noise-free conditions.
The development of tools that characterize the existence of hybrid limit cycles
and the robustness properties
to perturbations of stable hybrid limit cycles is very challenging and
demands a modeling framework that properly handles time and the
complex combination of continuous and discrete dynamics.

\subsection{Contributions}

Tools for the analysis of existence of limit cycles
and robustness of asymptotic stability of limit cycles
in hybrid systems are not yet available in the literature.
In this paper, we propose such tools for
hybrid systems given as hybrid inclusions \cite{Goebel:book},
which is a broad modeling framework for hybrid systems as
it subsumes hybrid automata, impulsive systems, reset systems, among others; see \cite{Goebel:book,Ricardo:book:2021} for more details.
We introduce a notion of hybrid limit cycle
for hybrid systems modeled as hybrid equations, which are given by
 \begin{equation}\label{sec2:eq1}
  \MH \left\{
  \begin{array}{cccclcccc}
    \dot{x} & = & f(x) && x \in \MC, \\
    x^{+}   & = & g(x) && x \in \MD, \\
  \end{array}
  \right.
\end{equation}
where $x\in\BRn$ denotes the state of the system,
$\dot{x}$ denotes its derivative with respect to time, and $x^{+}$ denotes its value
after a jump. The state $x$ may have components that correspond
to physical states, logic variables, timers, memory states, etc.
The map $f$ and the set $\MC$ define the continuous dynamics (or flows),
and the map $g$ and the set $\MD$ define the discrete dynamics (or jumps).
In particular, the function $f:\BRn\RA\BRn$ (respectively, $g:\BRn\RA\BRn$) is a single-valued map describing
\IfTAC{the continuous (respectively, discrete) evolution}{the continuous evolution (respectively, the discrete evolution)}
while $C\subset\BRn$ (respectively, $D \subset \BRn$) is the set on which the flow map $f$ is effective (respectively, from which jumps can occur).

For this hybrid systems framework, we develop tools for
characterizing existence of hybrid limit cycles
and robustness properties
to perturbations of stable hybrid limit cycles.\IfTAC{\footnote{
Preliminary version of the results in this paper appeared without proof
in the conference articles \cite{Lou:ACC17} and \cite{Lou:CDC15}.
}}{}~The contributions of this paper include the following:

\begin{itemize}
\item  We introduce a notion of hybrid limit cycle (with one jump per period\footnote{Here, we mainly focus on hybrid limit cycles with ``one jump per period." The case of multiple jumps per period can be treated similarly; see \IfTAC{\cite{lou.li.sanfelice16:TAC}}{\cite{lou.li.sanfelice16:TAC,Lou:ADHS15}}.}) for the class of hybrid systems in \eqref{sec2:eq1}.
   Also, we define the notion of flow periodic solution and asymptotic stability of the hybrid limit cycle for such hybrid systems.\footnote{In this work, a hybrid limit cycle is given by a closed set,
while the limit cycle defined in \cite{Grizzle:2001,Grizzle:2009,Hamed:Gregg:2017} 
is given by an open set due to
the right continuity assumption in the definition of solutions.}

\item
We present necessary conditions for existence of hybrid limit cycles, including compactness,
\IfTAC{transversality of the limit cycle,
and a continuity of the so-called time-to-impact function.}{finite-time convergence of the jump set, and transversality of the limit cycle,
as well as a continuity of the so-called time-to-impact function.}
Particularly, a condition using a forward invariance notion for existence of hybrid limit cycles is first presented.

\item
Motivated by the use of Zhukovskii stability methods for periodic orbits in continuous-time systems,
as done in \cite{Ding:2004,Yang:2000,Leonov:2006},
we introduce this notion
for the class of hybrid systems introduced in \eqref{sec2:eq1} and provide a sufficient condition for
the incremental stability introduced in \cite{Li:Sanfelice:CDC15} that involves the Zhukovskii stability notion.

\item
By assuming that the state space contains no isolated equilibrium point for the flow dynamics, we establish
a sufficient condition for the existence of hybrid limit cycles based on Zhukovskii stability.
In addition,
based on
an incremental graphical stability notion introduced in \cite{Li:Sanfelice:CDC15},
an approach to rule out existence of hybrid limit cycles in some cases
is proposed.

\item
We establish sufficient and necessary conditions for guaranteeing
(local and global) asymptotic stability of hybrid limit cycles for a class of hybrid systems.
In the process of deriving these results, we construct 
time-to-impact functions and Poincar\'{e} maps that
cope with one jump per period of a hybrid limit cycle.
\item  Via perturbation analysis for hybrid systems, we
propose a result on robustness to generic perturbations of asymptotically stable hybrid limit cycles, which allows for state noise
and unmodeled dynamics, in terms of $\KL$ bounds.

\item Due to the wide applicability of the Poincar\'{e} section method,
we present results that relate the properties of a computed Poincar\'{e} map,
which is necessarily affected by computational error,
to the actual asymptotic stability properties of hybrid limit cycles.

\end{itemize}

\IfTAC{
\vspace{-2mm}
\subsection{Notation}
}
{
\subsection{Organization and Notation}

The organization of the paper is as follows.
\begin{itemize}
\item
Section~\ref{sec:motive_ex} presents
two motivational examples.
\item
Section~\ref{sec:prelim} gives some preliminaries on hybrid systems
 and basic properties of hybrid limit cycles.
\item
Section~\ref{sec:necess} presents several necessary conditions for
existence of hybrid limit cycles. 
\item
In Section~\ref{sec:notions}, the
Zhukovskii stability notion and incremental graphical stability
notion are introduced.
Moreover, the relationship between these two notions is studied.
\item
In Section~\ref{sec:existence},
a sufficient condition for existence of hybrid limit cycles
is established including a result 
on nonexistence of hybrid limit cycles.
\item
With the hybrid limit cycle definition, Section~\ref{sec:stability} establishes
sufficient conditions
for stability of hybrid limit cycles. 
\item
Section~\ref{sec:robustness}
provides results on general
robustness of stability to perturbations.
In addition, several examples are presented throughout the paper, including
spiking neurons, which exhibit regular spiking,
and a congestion control system, which exhibits periodic solutions. 
\item
Section~\ref{sec:conclu} concludes the paper.
\end{itemize}

\textbf{Notation.}}
The set $\BRn$ denotes the $n$-dimensional Euclidean space,
$\BR_{\geq 0}$ denotes the set of nonnegative real numbers, i.e., $\BR_{\geq 0}:=[0,+\infty),$
and
$\BN$ denotes the set of natural numbers including $0$, i.e., $\BN:=\{0,1,2,\cdots\}.$
Given a vector $x\in\BRn,$ 
$|x|$ denotes its Euclidean norm. 
Given a set $S,$ $S^n$ denotes $n$ cross products of $S$,
namely $S^n=S\!\times\! S\!\times\! \cdots\!\times\! S$.
Given a continuously differentiable function $h\!:\! \BRn\RA\BR$ and
a function $f\!:\! \BRn\RA\BRn,$ the Lie derivative of $h$ at $x$
in the direction of $f$ is denoted by $L_{f}h(x):=\langle \nabla h(x),f(x) \rangle$. 
Given a function $f: \BR^{m}\RA\BRn,$ its domain of definition is denoted by
$\dom f$, i.e., $\dom f:= \{ x\in\BR^{m} : f(x) \text{ is defined} \}$.
The range of $f$ is denoted by $\text{rge}f$, i.e., $\text{rge}f:=\{f(x): x\in\dom f\}$.
Given a closed set $\MA\subset\BRn$ and a point
$x\in\BRn,$ $|x|_{\MA}:=\inf_{y\in\MA} |x-y|$.
Given a set $\MA\subset\BRn,$
$\overline{\MA}$ (respectively,
$\overline{\mathrm{con}}\ \MA$) denotes its closure (respectively, its closed
convex hull) and
$\MA^{\circ}$ denotes its interior.
Given an open set $\mathcal{X}\subset \BRn$ containing a compact set $\MA$,
a function $\omega: \mathcal{X}\RA\BR_{\geq0}$ is a
{\it proper indicator} for $\MA$ on $\mathcal{X}$ if $\omega$ is continuous, $\omega(x)=0$ if and only if $x\in\MA$, and $\omega(x)\RA\infty$ as $x$
approaches the boundary of $\mathcal{X}$ or as $|x|\RA\infty$.
\Cred{Given a sequence of set $\mathcal{X}_{i}$, $\limsup_{i\rightarrow\infty} \mathcal{X}_{i}$ denotes the outer limit of $\mathcal{X}_{i}$.}
The set $\BB$ denotes a closed unit ball in Euclidean space (of appropriate dimension)
centered at zero.
Given $\delta>0$ and $x\in \BRn$, $x+\delta\BB$ denotes a closed ball centered at $x$ with radius $\delta$.
A function $\alpha: \BR_{\geq 0}\RA\BR_{\geq0}$ belongs to class-$\MK$ $(\alpha\in\MK)$ if it is
continuous, zero at zero, and strictly increasing; it belongs to class-$\Kinf$ $(\alpha\in\Kinf)$ if, in addition, is unbounded.
A function $\beta: \BR_{\geq 0}\times\BR_{\geq 0} \RA \BR_{\geq 0}$ belongs
to class-$\KL$ $(\beta\in\KL)$ if, for each $t \geq 0,$ $\beta(\cdot, t)$ is nondecreasing and $\lim_{s\RA 0^{+}}\beta(s,t)=0$
and, for each $s \geq 0,$ $\beta(s,\cdot)$ is nonincreasing and $\lim_{t\RA\infty}\beta(s,t)=0.$

\IfTAC{
\vspace{2mm}
\section{Motivational Example}\label{sec:motive_ex}

}
{
\section{Motivational Examples}\label{sec:motive_ex}

The following examples motivate the need of tools for the study of
existence and robust stability of limit cycles in hybrid systems.
}

\IfTAC{}{
\begin{example}\label{exam:TCP}}
Consider the hybrid model for a congestion control mechanism in TCP proposed in \cite{Hespanha:Bohacek:2001}. 
The hybrid model in congestion avoidance mode can
be described as follows:
\begin{itemize}
\item when $q\in [0, q_{\max}]$
\addtocounter{equation}{1}
\begin{equation}
\begin{bmatrix}
\dot{q}\\
\dot{r}
\end{bmatrix} =
\left\{
\begin{aligned}
&\begin{bmatrix}
\max\{0, r-B\} \\
a
\end{bmatrix}  & \; \mathrm{if }\;\; q=0 \\
&\begin{bmatrix}
r-B\\
a
\end{bmatrix}  & \; \mathrm{if }\;\; q>0
\end{aligned}
\right. \tag{\theequation a}
\end{equation}
\item
when $q=q_{\max}, r\geq B$
\IfTAC
{
\begin{equation}
 (q^{+}, r^{+}) = (q_{\max},  mr) \tag{\theequation b}
    \label{eq:TCP1}
\end{equation}\vspace{-4mm}
}
{
\begin{equation}
    \begin{bmatrix}
     q^{+}\\
     r^{+}
    \end{bmatrix}
    =
    \begin{bmatrix}
    q_{\max}\\
    mr
    \end{bmatrix} \tag{\theequation b}
    \label{eq:TCP1}
\end{equation}}
\end{itemize}
where $q\in [0, q_{\max}]$ denotes the current queue size, $q_{\max}$ is the maximum
queue size, $r\geq 0$ is the rate of incoming data packets,
and $B\geq 0$ is the rate of outgoing packets.
The constant $a\geq 1$ 
reflects the rate of growth of incoming data packets $r$
while $m\in (0, 1)$ 
reflects the factor that makes the rate of incoming packets decrease;
see \cite{Hespanha:Bohacek:2001} for details.
The model in \eqrefTCP reduces the rate of incoming packets $r$
by the factor $m$ if
the queue size $q$ equals the maximum value
$q_{\max}$ with rate larger than or equal to $B$.
\begin{figure}[!ht]
\psfrag{MT}[][][0.8]{{\color{magenta}$M_{\mathrm{T}}$}}
\psfrag{M1}[][][0.7]{{\color{magenta}$M_1$}}
\psfrag{Q}[][][0.7]{{\color{magenta}$M_2$}}
\psfrag{qq}[][][0.7]{$q$}
\psfrag{rr}[][][0.7]{$r$}
\psfrag{P1}[][][0.6]{\Cblue{$P_{1}$}}
\psfrag{P2}[][][0.6]{\Cblue{$P_{2}$}}
\psfrag{P3}[][][0.6]{\Cblue{$P_{3}$}}
\psfrag{P4}[][][0.6]{\Cblue{$P_{4}$}}
\psfrag{P5}[][][0.6]{\Cblue{$P_{5}$}}
\psfrag{Q1}[][][0.6]{$(q_{\max}, B\!+\!\sqrt{2aq_{\max}})$}
\psfrag{Q2}[][][0.6]{\sethlcolor{white}\hl{$(0,B)$}}
\psfrag{Q3}[][][0.6]{\sethlcolor{white}\hl{$(\frac{B^2}{2a}, 0)$}}
\centering{\includegraphics[width=\figwidth]{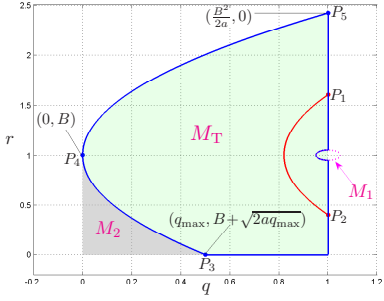}}
\caption{
Diagram of the compact set $M_{\mathrm{T}}$ denoted in the region with
light green filled 
pattern.    
Parameters used in the plot are $B = 1,a = 1,{q_{\max }} = 1,$ and $m = 0.25$.
The points $P_{1}$ and $P_{2}$ correspond to state values in a limit cycle just before and right after each jump, respectively.
The points $P_{1}$ corresponds to $(q,r)=(q_{\max}, 2B/(1+m))$,
the point $P_{2}$ corresponds to $(q,r)=(q_{\max}, 2Bm/(1+m))$,
the point $P_{3}$ corresponds to $(q,r)=(B^2/(2a), 0)$,
the point $P_{4}$ corresponds to $(q,r)=(0, B)$,
and the point $P_{5}$ corresponds to $(q,r)=(q_{\max}, B+\sqrt{2aq_{\max}})$.
}
\label{tcp:region}
\end{figure}

We are interested in the \IfTAC{}{hybrid} system \eqrefTCP restricted to
the region
\begin{equation}\label{eqMT}
{\footnotesize
\! M_{\mathrm{T}}\!\!:=\!\!
\{(q,r)\!\in\!\BR_{\geq 0}\!\!\times\!\! \BR_{\geq 0}\!:
q\!\leq\! q_{\max}, aq\!\geq\! \frac{1}{2}r^2\!\!-\!\!Br\!+\!\frac{B^2}{2} \}\!\!}
\end{equation}
for given parameters $a$, $m$, $q_{\max}$
and $B$ (later, the set $M_{\mathrm{T}}$ will be part of our analysis); see \Figure~\ref{tcp:region}.
From the first piece in the definition in \eqrefTCP with $a>0$,
for any maximal solution with initial condition with zero
$q$ and $r$ less than $B$, $q$ remains at zero until $r>B$.
\Figure~\ref{tcp:region} is shown to analyze
how we get a region from which a limit cycle with one jump exists.
The points in the curve $P_{3}\RA P_{4}\RA P_{5}$
satisfy $aq=\frac{1}{2}r^2-Br+\frac{B^2}{2}$.
Solutions from the region $M_2$ 
result in solutions such that $q$ reaches zero and remains at zero 
until $r=B$ (point $P_{4}$).
The open set
$M_{1}:=\{(q_{\max},B)\}+\varepsilon\BB^{\circ}$
with $\varepsilon>0$ small enough,
will be part of our analysis in Example~\ref{exam:TCP3}
and 
be ruled out to ensure the transversality of the limit cycle.
We are not interested in the region $M_2$
with
gray filled pattern
as it \Blue{leads to} a complex hybrid model
which \Blue{might} be hard to be analyzed.
The compact set $M_{\mathrm{T}}$ is marked by the region with light green filled pattern.
Hence, the set $M_{\mathrm{T}} \setminus M_{1}$ (the region surrounded by blue line)
is the region of the state space that we are interested in. Note that if the value of $r$ after a jump from the point $P_{5}$
is larger than $B$ (for instance, point $P_{5}$ jumps to point $P_{1}$),
a consecutive jump will happen.
Therefore, to avoid this case, 
we impose the
condition $m(B+\sqrt{2aq_{\max}})<B$.

From points in the set $M_{\mathrm{T}}$, solutions approach a limit cycle.
 On $M_{\mathrm{T}}$ and for parameters satisfying the conditions above,
the resulting system with $(q,r)\in M_{\mathrm{T}}$
can be described as a hybrid system $\MH_{\textrm{\tiny TCP}}$
on $M_{\mathrm{T}}$ with data
\begin{align}
\MH_{\textrm{\tiny TCP}}
\left\{
\begin{aligned}
\dot{x}&= f_{\textrm{\tiny TCP}}(x) :=
\left[
  \begin{array}{cc}
    r-B\\
    a
  \end{array}
\right] &\quad
x\in  C_{\textrm{\tiny TCP}},\\
x^{+}&= g_{\textrm{\tiny TCP}}(x): = \left[
  \begin{array}{cc}
    q_{\max}\\
    mr
    \end{array}
\right] &\quad
x\in  D_{\textrm{\tiny TCP}},
\end{aligned}
\right.
\label{eq:TCP3}
\end{align}
where $x=(q,r),$ $ C_{\textrm{\tiny TCP}}=\{x\in\BR^{2}: q\leq q_{\max} \},$
$ D_{\textrm{\tiny TCP}}=\{x\in\BR^{2}: q=q_{\max}, r\geq B\}.$

A limit cycle of the system in \eqref{eq:TCP3}
with parameters $B=1, a=1, m=0.25$, and $q_{\max}=1$
\IfTAC{
{is depicted in~\Figure~\ref{tcp:region}.
This figure shows
in red a limit cycle denoted as $\MO$ and}
}
{is depicted in~\Figure~\ref{tcp:fig1}.
This figure shows
in blue a limit cycle denoted as $\MO$ and}
defined by the solution to the congestion control system
with initial condition $P_{2}=\{(1,0.4)\}$. 
This solution flows to the point $P_{1}$,
jumps to the point $P_{2}$,
and then flows back to $P_{1}$.
The interest in this paper is to find conditions under which
such limit cycles may exist.
\IfTAC{}{
\begin{figure}[!ht]
\psfrag{x1}[][][1.0]{$q$}
\psfrag{x2}[][][1.0]{$r$}
\psfrag{A}[][][1.0]{$P_{1}$}
\psfrag{B}[][][1.0]{$P_{2}$}
\psfrag{O}[][][0.9]{$\MO$}
\centering{\includegraphics[width=\figwidth]{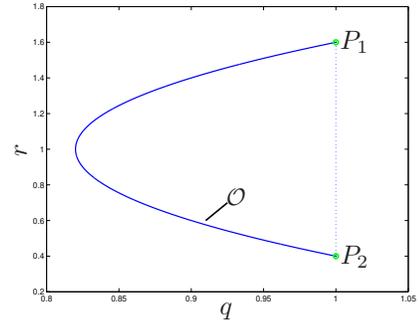}}
\caption{
A limit cycle of the congestion control system in \eqref{eq:TCP3}.
The point $P_{1}=\{(1, 1.6)\}$ ($P_{2}=\{(1, 0.4)\}$, respectively) corresponds to the value of the state
before the jump (after the jump, respectively).
}
\label{tcp:fig1}
\end{figure}
}
\IfTAC{}{
\end{example}
}

\IfTAC{}{
\begin{example}\label{exam:Izhi}
Consider the Izhikevich neuron model \cite{Izhikevich:2003}
given by
\addtocounter{equation}{1}
\begin{equation}
\left\{
\begin{array}{l c l}
\dot{v}=0.04 v^{2}+5v+140-w+I_{\text{ext}},\\
\dot{w}=a(bv-w),
\end{array}
\right.
 \tag{\theequation a}
\end{equation}
%
where $v$ is the membrane
potential, $w$ is the recovery
variable, and $I_{\text{ext}}$ represents
the synaptic (injected) DC current. 
When the membrane voltage of a neuron increases
and reaches a threshold (which in \cite{Izhikevich:2003} is
equal to 30 millivolts),  
the membrane voltage and the recovery variable
are instantaneously reset via the  
following rule:
\begin{equation}\label{eq:Izhi1ab}
\text{when } v\geq 30 \text{, then}
\left\{
\begin{array}{l c l}
v^{+}=c,\\
w^{+}=w+d.
\end{array}
\right. \tag{\theequation b}
\end{equation}
The value of the input $I_{\text{ext}}$ and the model parameters $a, b, c$, and $d$ are used to determine the
neuron type. In fact,
the model in \eqrefIzhi can exhibit a specific firing pattern (of all known types)
of cortical neurons when these parameters are appropriately chosen \cite{Izhikevich:2003}. 
For instance, when 
the parameters are chosen as
\begin{equation}\label{eq:Izhi:param}
a=0.02, b=0.2, c=-55, d=4, I_{\text{ext}}=10,
\end{equation}
the neuron model exhibits intrinsic bursting behavior.
This corresponds to a limit cycle, which is denoted as $\MO$,
and is defined by the solution to \eqrefIzhi that jumps from point $P_{1}$ to point $P_{2}$
and then flows back to $P_{1}$; see~\Figure~\ref{fig:1}.

As suggested in~\Figure~\ref{fig:1}, the limit cycle $\MO$ is
asymptotically stable.
In particular,
solutions initialized close to the set $\MO$ stay close for all time and
converge to the set $\MO$ as time gets large.
For instance,
the trajectory (black line) of a solution starting from $(-54.76, -3.5)$ (the point $P_{3}$ in the subfigure),
which is close to the point $P_{2}$,
remains close to the limit cycle $\MO$ and approaches it eventually.
However, solutions initialized relatively far away from the set $\MO$
may not stay close for all time.
For instance, as shown in~\Figure~\ref{fig:1},
the trajectory (red line) of a solution starting from $(-54.5, -3.5)$ (the point $P_{4}$ in the subfigure)
that is close to the point $P_{2}$
first goes far away from the limit cycle $\MO$ and approaches it eventually.

Interestingly, solutions to the neuron model with state perturbations, in particular,
solutions to \eqrefIzhi with  an admissible state
perturbation\footnote{A mapping $e$ is an admissible state perturbation if $\dom e$ is a hybrid time domain and the function $t\mapsto e(t,j)$ is measurable on $\dom e \cap (\BR_{\geq 0}\times
\{j\})$ for each $j\in\BN$. See \cite[Definition 4.5]{Goebel:book} for more details.}, may not be always close to the nominal solutions. For instance,
an additive perturbation $e=(0.24, 0)$ (or $e=(0.5, 0)$, respectively) to $(v^{+},w^{+})$ after a jump from the point $P_{1}$ would result in
a state value equal to the point $P_{3}$ (or
to the point $P_{4}$, respectively) instead of the point $P_{2}$.
As shown in~\Figure~\ref{fig:1}, the trajectory (black line) from the point $P_{3}$
remains close to the limit cycle $\MO$ and approaches it eventually,
while the trajectory (red line) from the point $P_{4}$ does not stay close to the one
from the point $P_{2}$. Since the points $P_{3}$ and $P_{4}$ are close to each other
but the trajectories from those points are not,
the stability property of the limit cycle $\MO$
has a small margin of robustness to perturbations.
\begin{figure}[!ht]
\vspace{3mm}
\centering
\psfrag{x1}[][][0.8]{\hspace{0.15in}Membrane potential, $v$ (mV)}
\psfrag{y1}[][][0.8]{\hspace{0.15in}Recovery variable, $w$}
\psfrag{A}[][][0.7]{$P_{1}$}
\psfrag{B}[][][0.7]{$P_{2}$}
\psfrag{C}[][][0.7]{$P_{3}$}
\psfrag{D}[][][0.7]{$P_{4}$}
\includegraphics[width=\figwidth]{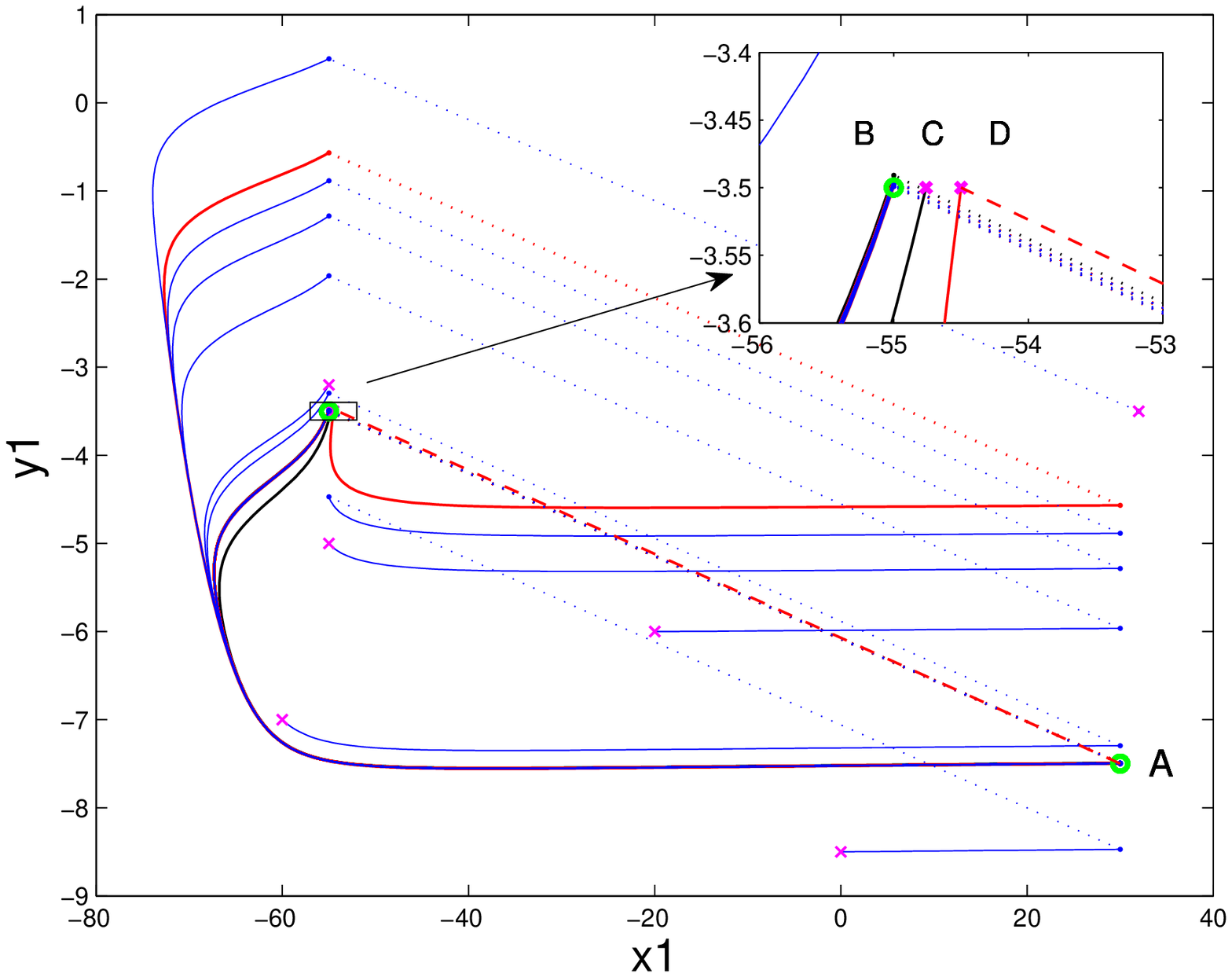}
\vspace{-3mm}
\caption{
Phase plot of several solutions  
to the Izhikevich neuron model
in \eqrefIzhi
with different conditions.
The point $P_{1}$ corresponds to $(v,w)=(30, -7.5)$,
the point $P_{2}$ corresponds to $(v,w)=(-55, -3.5)$,
the point $P_{3}$ corresponds to $(v,w)=(-54.76, -3.5)$,
and the point $P_{4}$ corresponds to $(v,w)=(-54.5, -3.5)$.
\Cred{The limit cycle of the Izhikevich neuron model is defined by the solution to \eqrefIzhi from the point $P_2$ that jumps at the
point $P_1$.}}
\label{fig:1}
\end{figure}
\end{example}
}

\IfTAC{}
{Motivated by these examples, our interest is in developing constructive analysis tools that
can be applied to such systems so as to
guarantee the existence, stability and robustness properties of hybrid limit cycles.
which are missing in the literature. 
}

\IfTAC{
\section{Definitions and Basic Properties}\label{sec:prelim}
}{
\section{Definitions and Basic Properties of Hybrid Systems with Hybrid Limit Cycles}\label{sec:prelim}
}

\subsection{Hybrid Systems}
\label{subsec:hybrid}

We consider hybrid systems $\MH$ as in \cite{Goebel:book}, which can be written as
 in \eqref{sec2:eq1}.
The data of a hybrid system $\MH$ is given by $(C, f, D, g )$. The restriction of $\MH$ on a set $M$ is defined as $\MH|_{M} = (M\cap C, f, M\cap D, g)$.
A solution to $\MH$ is parameterized by ordinary time $t$ and a counter $j$ for jumps. It is given by a hybrid arc\footnote{A hybrid arc is a function $\phi$ defined on a hybrid time domain and
for each $j\in\BN$, $t\mapsto \phi(t,j)$
is locally absolutely continuous. 
A \emph{compact hybrid time domain} is a set $\ME\subset \BR_{\geq 0}\times \BN$ of the form
$\ME=\bigcup_{j=0}^{J-1}([t_{j},t_{j+1}],j)$
for some finite sequence of times $0=t_{0}\leq t_{1}\leq \cdots\leq t_{J}$;
the set $\ME$ is a \emph{hybrid time domain} if for all $(T, J)\in \ME,$ $\ME \cap ([0, T]\times \{0,1,\cdots,J\})$
is a compact hybrid time domain. Denote $I^{j}:=\{t: (t,j)\in \mathcal{E}$.
}
$\phi: \dom\phi\RA\BRn$ \SM that satisfies the dynamics of
$\MH$; see \cite{Goebel:book} for more details. \ST A solution $\phi$ to $\MH$ is said to be
complete if $\dom\phi$ is unbounded.
It is Zeno if it is complete and the projection
of $\dom\phi$ onto $\BR_{\geq 0}$ is bounded.
It is discrete if $\dom\phi\subset \{0\}\times\BN$.
It is said to be maximal if it is not a (proper) truncated
version of another solution.
The set of maximal solutions to $\MH$ from the set $K$ is denoted as
$$
\MS_{\MH}(K)\!:=\!\{\phi\!:\! \phi \text{ is a maximal solution to } \!\MH\! \text{ with }\! \phi(0,0)\!\in\! K\}.
$$

We define $t\mapsto \sol(t,x_{0})$ as a solution of the flow dynamics~\IfTAC{$\dot{x}=f(x)\quad x\in C$}{$$\dot{x}=f(x)\quad x\in C$$}
from $x_{0}\in \SM \overline{C} \ST $. A hybrid system $\MH$ is said to be well-posed if it satisfies the {\em hybrid basic conditions}, \SM namely, \ST
\begin{enumerate}
\item [A1)] The sets $C, D\subset\BRn$ are closed. 
\item [A2)] The flow map  $f:C\RA\BRn$ 
and the jump map $g:D\RA\BRn$ are continuous.
\end{enumerate}

The following notion of $\omega$-limit set of a hybrid arc
is used in Section \ref{sec:existence} to formulate
sufficient conditions for the existence of hybrid limit cycles.
\begin{definition}\label{def:LS1}
\cite[Definition 6.17]{Goebel:book}
The $\omega$-limit set of a hybrid arc $\phi : \dom\phi\RA\BRn$, denoted
$\Omega(\phi)$, is the set of all points $x\in\BRn$ for which there
exists a sequence $\{ (t_{i}, j_{i}) \}_{i=1}^{\infty}$ of points $(t_i, j_i) \in \dom\phi$ with $\lim_{i\RA\infty} t_i + j_i=\infty$
and $\lim_{i\RA\infty} \phi(t_i, j_i) = x$. Every such point $x$ is an $\omega$-limit point of $\phi$.
\end{definition}

For more details about this hybrid systems framework, we refer the readers to \cite{Goebel:book}.

\IfTAC{\vspace{-2mm}}{}
\subsection{Hybrid Limit Cycles}

Before revealing their basic properties,
we define hybrid limit cycles.
For this purpose,
we consider the following notion of flow periodic solutions.

\begin{definition}
\label{def:periodic_sol}
{(flow periodic solution)}
A complete solution $\phi^{*}$ to $\MH$ is {\em flow periodic with period $T^{*}$ and one jump in each
period} if $T^{*}\in (0, \infty)$ is the smallest number such that 
$\phi^{*}(t+T^{*},j+1)=\phi^{*}(t,j)$ for all $(t,j)\in\dom\phi^{*}$.
\end{definition}

The definition of a flow periodic solution $\phi^{*}$ with period $T^{*}>0$ 
above implies that if $(t,j)\in\dom\phi^{*}$, then $(t+T^{*},j+1)\in\dom\phi^{*}$.
For a notion allowing for multiple jumps in a period,
see \IfTAC{\cite{lou.li.sanfelice16:TAC}}{\cite{lou.li.sanfelice16:TAC,Lou:ADHS15}}.
A flow periodic solution to $\MH$ as in Definition~\ref{def:periodic_sol} generates a hybrid limit cycle.
\begin{definition}
\label{def:periodic_orbit}
{(hybrid limit cycle)}
A flow periodic solution $\phi^{*}$ with period $T^{*}\in (0, \infty)$ and one jump in each period defines a {\em hybrid limit cycle}\footnote{
\SM Alternatively, the hybrid limit cycle $\MO$
\ST can be written as
$\{x\in\BRn: x=\phi^{*}(t,j), t\in [t_{s},t_{s}+T^{*}], (t,j)\in\dom\phi^{*} \} $
for some $t_{s}\in\BR_{\geq 0}$.
}
$\MO:=\{ x\in\BRn: x=\phi^{*}(t,j), (t,j)\in\dom\phi^{*}\}$.
\end{definition}

\IfTAC{}{
Perhaps the simplest hybrid system with a hybrid limit cycle is the scalar system
capturing the dynamics of a timer with resets, namely
\begin{equation}\label{eq:timer} 
 \MH_{\mathrm{T}} \ \left\{
  \begin{array}{cccclcccc}
    \dot{\chi} & = & 1 && \chi \in [0,1], \\
    \chi^{+}   & = & 0 && \chi = 1, \\
  \end{array}
  \right.
\end{equation}
where $\chi\in [0,1]$.
Its unique maximal solution from $\xi\in [0, 1]$ is given by
$\phi(t,j)=  \xi+t-j$
for each $(t,j)\in \BR_{\geq 0}\times \BN$ such that $ t \in [\max \{0, j-\xi\}, j+1-\xi]$.
The hybrid limit cycle generated by $\phi$ is ${\MO} = \{\chi \in [0,1]: \chi = \phi(t,1), t\in [1-\xi, 2-\xi]\}
= [0,1]$.

\begin{remark}
The definition of a hybrid limit cycle $\MO$ with period $T^{*}\!>\!0$ 
implies that
$\MO$ is nonempty and contains more than two points; in particular,
a hybrid arc that generates $\MO$ cannot be
discrete.
A hybrid limit cycle $\MO$ is restricted to have one jump per period,
but extensions to more complex cases are possible \IfTAC{\cite{lou.li.sanfelice16:TAC}}{\cite{lou.li.sanfelice16:TAC,Lou:ADHS15}}.
\end{remark}
}

\IfTAC{
In \cite[Example 3.5]{lou:TAC:2022}, we revisit the example in Section~\ref{sec:motive_ex} to further illustrate the hybrid limit cycle notion in Definition~\ref{def:periodic_orbit}.}
{
Next, the more advanced examples in Section~\ref{sec:motive_ex} are revisited to further illustrate the hybrid limit cycle notion in Definition~\ref{def:periodic_orbit}.


\begin{example}\label{exam:TCP2}
Consider the hybrid congestion control system $\MH_{\textrm{\tiny TCP}}$ on $M_{\mathrm{T}}$ in \IfTAC{Section \ref{sec:motive_ex}}{Example~\ref{exam:TCP}}.
\IfTAC{As suggested by~\Figure~\ref{tcp:region},}
{As suggested by~\Figure~\ref{tcp:fig1},}
system \eqref{eq:TCP3} has
a flow periodic solution $\phi^{*}$ with period $T^{*}$. 
In fact, a solution $\phi^{*}$ to the system \eqref{eq:TCP3}
starting from the point $P_{2}$
in \IfTAC{\Figure~\ref{tcp:region}}{\Figure~\ref{tcp:fig1}},
i.e., $\phi_1^{*}(0,0) = \xi_1$ and $\phi_2^{*}(0,0) = \xi_2$
with $(\xi_1, \xi_2)=(q_{\max}, 2Bm/(1+m))\in M_{\mathrm{T}}\cap ( C_{\textrm{\tiny TCP}}\cup  D_{\textrm{\tiny TCP}})$,
is given by
\begin{align}
\phi_1^{*}(t,j)&=
    \xi_1 + (\xi_2-B)(t-jT^{*})+ \frac{a(t-jT^{*})^{2}}{2},  
  \label{eq:TCP4}\\
\phi_2^{*}(t,j)&=
    \xi_2 + a(t-jT^{*}),  
  \label{eq:TCP5}
\end{align}
for all $(t,j)\in \dom \phi^{*}$ with $t \in \left[jT^{*}, (j+1)T^{*} \right]$ and $j \geq 0,$
where
$T^{*}=2B(1-m)/(a+ma)$.
From \eqref{eq:TCP4} and \eqref{eq:TCP5},
when $t=T^{*}$ and $j=0$,
it is easy to obtain that
$\phi_1^{*}(T^{*},0)=
\xi_1 + (\xi_2-B)T^{*}+ \frac{a {T^{*}}^{2}}{2}
=q_{\max}$
and
$\phi_2^{*}(T^{*},0)=\xi_2 + aT^{*}=2Bm/(1+m)+2B(1-m)/(1+m)=2B/(1+m)>B$ (using the fact $m\in (0,1)$).
Therefore, $\phi^{*}(T^{*},0)\in D_{\textrm{\tiny TCP}}$ and the first jump happens.
Then, according to the jump map, the state is updated with $\phi_1^{*}(T^{*},1)=q_{\max}=\xi_1$
and $\phi_2^{*}(T^{*},1)=m\phi_2^{*}(T^{*},0)=2Bm/(1+m)=\xi_2$, which equals the initial value.
In addition, since $m\in (0,1)$, we have $2Bm/(1+m)<B$
implying that there is only one jump after every interval of flow.

Using \eqref{eq:TCP4} and \eqref{eq:TCP5},
for solutions initialized at $(\xi_1, \xi_2)=(q_{\max}, 2Bm/(1+m))$,
a hybrid limit cycle can be characterized as $\MO = \{x\in \mathbb{R}_{\geq 0} \times \mathbb{R}_{\geq 0}: x =  \phi^{*}(t,1), t\in [\bar t_1, \bar t_1 + T^{*}]\}.$ Furthermore, any solution $\bar{\phi} $ to $\MH_{\textrm{\tiny TCP}}$ from $\bar{\phi}(0,0) \in \MO$ satisfies $|\bar{\phi}(t,j)|_\MO=0$ for all $(t,j)\in \dom \bar{\phi}$.
In fact, using the parameters given in \IfTAC{Section \ref{sec:motive_ex}}{Example~\ref{exam:TCP}}, the solution to $\MH_{\textrm{\tiny TCP}}$ from $P_{2}=(1, 0.4)$ generates
a hybrid limit cycle as depicted in red in \IfTAC{\Figure~\ref{tcp:region}}{\Figure~\ref{tcp:fig1}}.
\end{example}
}

\IfTAC{}{
\begin{example}\label{exam:Izhi:revisit1}
Consider the Izhikevich neuron system in Example~\ref{exam:Izhi}.
This neuron system is slightly modified and
written as a hybrid system as in \eqref{sec2:eq1},
which we denote as $\MH_{\mathrm{I}}$ and is given by
\begin{equation}\label{eq:izhi2}
  \MH_{\mathrm{I}} \left\{
  \begin{aligned}
    \dot{\bx} & = f_{\mathrm{I}}(\bx):=
    \begin{bmatrix}
    f_{1}(\bx)\\
    a(bv-w)
    \end{bmatrix}
     && \bx \in \MC_{\mathrm{I}}, \\
    \bx^{+} & =g_{\mathrm{I}}(\bx):=
    \begin{bmatrix}
    c\\
    w+d
    \end{bmatrix}
     && \bx \in \MD_{\mathrm{I}}, \\
  \end{aligned}
  \right.
\end{equation}
\begin{figure}[ht]
\vspace{3mm}
\centering
\psfrag{TT}[][][1.0]{$t$}
\psfrag{VV}[][][1.0]{$v$}
\psfrag{WW}[][][1.0]{$w$}
\includegraphics[width=\figwidth]{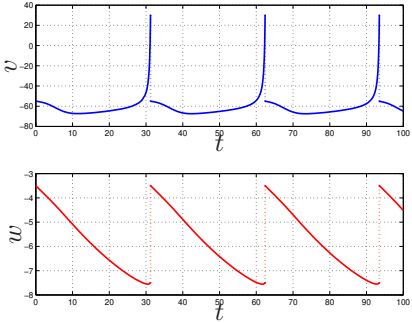}
\vspace{-3mm}
\caption{
State trajectories of a solution $\phi^{*}$ to $\MH_{\mathrm{I}}$
from $\phi^{*}(0,0)=(-55, -3.5)$.
}
\label{fig:timevw}
\end{figure}
where $x=(v,w)$, $f_{1}(\bx)=0.04 v^{2}+5v+140-w+I_{\text{ext}}$,
$\MC_{\mathrm{I}}=\{x\in\BR^{2}: v\leq 30\}$,
$\MD_{\mathrm{I}}=\{x\!\in\!\BR^{2}\!:\! v=30, f_{1}(x)\!\geq\! 0\},$
where $f_{1}(x)\!\geq \! 0$ models the fact that the
spikes occur when the membrane potential $v$ grows to the
$30 \mbox{mV}$ threshold.

The neuron system has a flow periodic solution with $T^{*}$. 
However, due to the nonlinear form of the flow map, an analytic expression of the solution is not available.
For parameters of $\MH_{\mathrm{I}}$ given by \eqref{eq:Izhi:param},
a numerical approximation of this solution is shown
in~\Figure~\ref{fig:timevw}, which is given by
the solution $\phi^{*}$ to $\MH_{\mathrm{I}}$ from
$\phi^{*}(0,0)=(-55, -3.5)$. This solution is flow periodic
with $T^{*}\approx 31.24 \mbox{ms}$.  
\end{example}
}

\IfTAC{\vspace{-2mm}}{}
\section{Necessary Conditions} 
\label{sec:necess}

\IfTAC{}{Next, we study necessary conditions for existence of hybrid limit cycles
defined in Section~\ref{sec:prelim}.
In Section~\ref{sec4:A}, under mild assumptions, we derive necessary conditions
that guarantee
compactness and transversality of the limit cycle, finite-time convergence of the jump set,
as well as a continuity of the time-to-impact function.
In Section~\ref{sec4:B},
we present a necessary condition for existence of a hybrid limit cycle with period $T^*$
using a forward invariance notion.
}

\IfTAC{
\subsection{Necessary Conditions for Existence of Hybrid Limit Cycles}
\label{sec4:A}
}{
\subsection{Necessary Conditions for Existence of Hybrid Limit Cycles in a Class of Hybrid Systems}
\label{sec4:A}
}

In this subsection, we derive several necessary conditions for the existence of hybrid limit cycles for a class of hybrid systems $\MH$ as in \eqref{sec2:eq1} satisfying the following properties.

\begin{assumption}
\label{ass:basic_data}
For a hybrid system $\MH = (C,f,D,g)$ on $\BRn$ 
and a compact set $M\subset\BRn$,
there exists a continuously differentiable function $h: \BRn\RA \BR$
such that
\begin{itemize}
\item [1)] the flow set can be written as $C=\{ x\in\BRn: h(x)\geq 0 \}$ and  the jump set as $D\!=\!\{ x\!\in\!\BRn\!: h(x)\!=\!0, L_{f}h(x)\!\leq\! 0 \}$;
\item [2)] the flow map $f$ is continuously differentiable on an open neighborhood of $M\cap C$,
and the jump map $g$ is continuous on $M\cap D$;

\item [3)]$L_{f}h(x)\!<\!0$ for all $x\!\in\! M\cap D$, and $g(M\!\cap\! D)\cap (M\!\cap\! D)\!=\!\emptyset$.

\end{itemize}
\end{assumption}

\begin{remark}\label{rem:mainAssum}
Item $1)$ in Assumption~\ref{ass:basic_data} implies that
flows occur when $h$ is nonnegative while
jumps only occur at points in the zero level set of $h$.
Note that  since $h$ is continuous and $f$ is continuously differentiable,
the flow set and the jump set are closed.
The state $x$ may include 
logic variables, counters, timers, 
etc. 
The continuity  property of $f$ in item $2)$ of Assumption~\ref{ass:basic_data}
is further required for the existence of solutions to $\dot x = f(x)$ according to \cite[Proposition 2.10]{Goebel:book}.
Moreover, item $2)$  also guarantees that solutions to $\dot x = f(x)$
depend continuously on initial conditions.
\Blue{In the upcoming results,} item $3)$ 
in Assumption~\ref{ass:basic_data} allows us to
\Blue{establish a transversality property and} 
restrict the analysis of a hybrid system $\MH$ to a region of \Cblue{a} state space $M\subset \BRn$,
leading to the restriction of $\MH$ given by
$\MH|_{M}:=(M\cap C, f, M\cap D, g).$ \Blue{The condition $g(M\!\cap\! D)\cap (M\!\cap\! D)\!=\!\emptyset$
is assumed to exclude 
discrete solutions.}
 As we will show later,
the set $M$ is appropriately chosen for each specific system such that
it guarantees completeness of maximal solutions to $\MH|_{M}$ and
the existence of flow periodic solutions.
This is illustrated in \IfTAC{Section \ref{sec:motive_ex}}{Example~\ref{exam:TCP}} with a set $M_{\mathrm{T}}$.
\IfTAC{}{See also the forthcoming Example~\ref{exam:Izhi:revisit2} and Example~\ref{exam:AS}.}
\end{remark}

\IfTAC{
\CRed{\begin{remark}
\label{rk:hybrid_basic_conds}
By  items $1)$ and $2)$ of Assumption~\ref{ass:basic_data}, the data of
$\MH|_{M}$ satisfies the hybrid basic conditions \cite[Assumption 6.5]{Goebel:book}.
Then,  using item $3)$ of Assumption~\ref{ass:basic_data},
\cite[Lemma 2.7]{Sanfelice:2007} implies that
for any bounded and complete solution $\phi$ to $\MH|_{M}$ there exists $r>0$ such that
$t_{j+1}-t_{j}\geq r$ for all $j\geq 1,$
$t_{j}=\min I^{j},$ $t_{j+1}=\max I^{j}$;
i.e.,
the elapsed time between two consecutive jumps is uniformly bounded below by a positive constant.
\end{remark}}
}{
\begin{remark}
\label{rk:hybrid_basic_conds}
By  items $1)$ and $2)$ of Assumption~\ref{ass:basic_data}, the data of
$\MH|_{M}$ satisfies the hybrid basic conditions \cite[Assumption 6.5]{Goebel:book}.
Then,  using item $3)$ of Assumption~\ref{ass:basic_data},
\cite[Lemma 2.7]{Sanfelice:2007} implies that
for any bounded and complete solution $\phi$ to $\MH|_{M}$ there exists $r>0$ such that
$t_{j+1}-t_{j}\geq r$ for all $j\geq 1,$
$t_{j}=\min I^{j},$ $t_{j+1}=\max I^{j}$;
i.e.,
the elapsed time between two consecutive jumps is uniformly bounded below by a positive constant.
\end{remark}
}

It can be shown that a hybrid limit cycle generated by
periodic solutions as in Definition \ref{def:periodic_orbit} is closed and bounded,
as established in the following result.

\begin{lemma}
\label{lem:closeness_orbit}
Given a hybrid system $\MH=(C,f,D,g)$ on $\BRn$ and a closed set 
$M\subset\BRn$ satisfying Assumption~\ref{ass:basic_data},
suppose that $\MH$ has a hybrid limit cycle $\MO$.
Then, $\MO$ is compact and forward invariant\footnote{Every $\phi\in \MS_{\MH}(\MO)$ is complete and satisfies $\text{rge}\ \phi\subset \MO$; see \cite[Definition~3.3]{Chai:Sanfelice:TAC19}.}.
\end{lemma}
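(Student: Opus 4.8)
The plan is to establish the two assertions of Lemma~\ref{lem:closeness_orbit} separately: compactness of $\MO$ (closedness plus boundedness) and forward invariance. For compactness, I would start from the alternative characterization of the hybrid limit cycle given in the footnote to Definition~\ref{def:periodic_orbit}, namely $\MO = \{x\in\BRn : x = \phi^{*}(t,j),\ t\in [t_s, t_s+T^{*}],\ (t,j)\in\dom\phi^{*}\}$ for some $t_s\ge 0$. Since $\phi^{*}$ is flow periodic with period $T^{*}\in(0,\infty)$ and exactly one jump per period, the portion of $\dom\phi^{*}$ corresponding to $t\in[t_s,t_s+T^{*}]$ is a \emph{compact} hybrid time domain — it consists of at most two intervals of flow separated by a single jump, and by Remark~\ref{rk:hybrid_basic_conds} (which applies because Assumption~\ref{ass:basic_data} holds) the inter-jump time is bounded below, so only finitely many jumps occur in bounded ordinary time. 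On each such compact piece the hybrid arc $t\mapsto\phi^{*}(t,j)$ is absolutely continuous, hence continuous, so $\MO$ is the image of a compact set under a continuous-on-each-piece map and is therefore compact. I would phrase this by noting that $\MO$ equals the finite union of images of closed bounded intervals under continuous maps.

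For forward invariance I would take an arbitrary $\phi\in\MS_{\MH}(\MO)$ and show it is complete with $\text{rge}\,\phi\subset\MO$. The key observation is that the flow periodic solution $\phi^{*}$ restricted to any interval $[t_s+kT^{*}, t_s+(k+1)T^{*}]$, suitably time-shifted, is itself a maximal solution to $\MH$ starting at the corresponding point of $\MO$; by periodicity, $\phi^{*}$ (appropriately reinitialized) visits every point of $\MO$ and stays in $\MO$ for all hybrid time. Given $\phi(0,0)=\xi\in\MO$, there exist $(t_\xi,j_\xi)\in\dom\phi^{*}$ with $\phi^{*}(t_\xi,j_\xi)=\xi$; the truncation of the (periodically extended) $\phi^{*}$ from $(t_\xi,j_\xi)$ onward, reindexed to start at $(0,0)$, is a complete solution to $\MH$ through $\xi$ whose range lies in $\MO$. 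To conclude that \emph{every} maximal solution from $\MO$ has this property — not just this particular one — I would invoke uniqueness of solutions: Assumption~\ref{ass:basic_data} gives $f$ continuously differentiable near $M\cap C$ (hence locally Lipschitz, forcing uniqueness of flows), $g$ single-valued, and item~3) ($L_f h<0$ on $M\cap D$, transversal jumps, and $g(M\cap D)\cap(M\cap D)=\emptyset$ ruling out discrete solutions) ensures that at each point of $\MO$ the decision to flow or jump is unambiguous. Hence the maximal solution from any $\xi\in\MO$ is unique and coincides with the reindexed periodic extension of $\phi^{*}$, which is complete and remains in $\MO$.

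The main obstacle I anticipate is the bookkeeping around completeness and the ``one jump per period'' structure: one must verify that the periodic extension of $\phi^{*}$ is genuinely a maximal (in particular complete) solution to $\MH$ and does not terminate — e.g., that after a jump from a point of $\MO\cap D$ the image $g$ of that point lies in $\overline{C}$ so flow can resume, and that no finite-escape or blow-up occurs. Completeness of $\phi^{*}$ is given by Definition~\ref{def:periodic_sol}, so the real content is translating ``$\phi^{*}$ is complete and periodic'' into ``the solution from each point of $\MO$ is complete with range in $\MO$,'' which is where the uniqueness argument and the structure of $\dom\phi^{*}$ (ensured by Remark~\ref{rk:hybrid_basic_conds}) do the work. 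A secondary subtlety is handling the case $\xi\in\MO\cap D$ versus $\xi\in\MO\cap C\setminus D$, i.e., starting the solution exactly at the jump point versus in the interior of a flow arc; both cases are covered by choosing the appropriate $(t_\xi,j_\xi)\in\dom\phi^{*}$, but it is worth stating explicitly.
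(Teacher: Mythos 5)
Your proof is correct. For compactness, your route is essentially the paper's: both decompose $\MO$ into finitely many images of compact ordinary-time intervals under the (absolutely) continuous maps $t\mapsto\phi^{*}(t,j)$. The only difference is that you obtain boundedness for free from compactness of those images, whereas the paper proves closedness this way (via a three-case analysis on where the jump time $t^{*}$ falls in $[0,T^{*}]$) and then establishes boundedness separately, by contradiction: an unbounded $\MO$ would force finite-time escape during flow, contradicting completeness of the flow periodic solution. Either argument works; yours is slightly more economical.

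For forward invariance you take a genuinely different, and in fact more careful, route. The paper argues by contradiction that if $\mathrm{rge}\,\phi^{*}\not\subset\MO$ then $\phi^{*}$ would ``define another hybrid limit cycle $\MO'$,'' contradicting uniqueness of $\MO$ --- a uniqueness that is not among the hypotheses and is left implicit. You instead construct, for each $\xi\in\MO$, the time-shifted periodic continuation of $\phi^{*}$ through $\xi$ as an explicit complete solution with range in $\MO$, and then invoke uniqueness of \emph{solutions} (local Lipschitzness of $f$ from item 2) of Assumption~\ref{ass:basic_data}, single-valuedness of $g$, and the strict inequality $L_{f}h<0$ on $M\cap D$ forcing a jump rather than flow there) to conclude that this continuation is the \emph{only} maximal solution from $\xi$. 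This is consistent with how the paper itself uses \cite[Proposition 2.11]{Goebel:book} in later proofs, and it closes the gap in the paper's own argument. Your explicit attention to the two cases $\xi\in\MO\cap D$ versus $\xi$ in the interior of a flow arc is also a point the paper glosses over. The one shared imprecision --- present in both your proof and the paper's --- is that Assumption~\ref{ass:basic_data} only controls $f$ and $g$ on neighborhoods of $M\cap C$ and $M\cap D$, while the lemma does not explicitly require $\MO\subset M$; this is inherited from the statement and is not a defect of your argument.
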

\IfTAC{
\noindent A proof can be found in \cite[Lemma 4.4]{lou:TAC:2022}.
}{
\begin{proof}
We show that any hybrid limit cycle $\MO$ is
closed and bounded.
First, to prove closeness of $\MO$, consider a flow periodic solution $\phi^{*}$ to $\MH$ from $\phi^{*}(0,0)$ associated with a hybrid limit cycle $\MO$ with period $T^{*}$.
Then, $\MO = \{x\in \mathbb{R}^n: x = \phi^{*}(t,j), t\in [0,T^{*}], j\in\{0,1\}, (t,j)\in \mbox{dom }\phi^{*} \}$.
We have the following cases for the first jump of
$\phi^{*}$. Let $t^{*}$ be such that
$(t^{*}, 0)\in \dom\phi^{*}$,
$(t^{*}, 1)\in \dom\phi^{*}$. Then:
\begin{itemize}
\item If $t^{*} = 0$, i.e., $(0,1)\in\dom \phi^{*}$, then $\MO = \{x\in \mathbb{R}^n: x = \phi^{*}(0,0) \} \bigcup \{x\in \mathbb{R}^n: x = \phi^{*}(t,1), t\in [0,T^{*}],(t,1)\in \mbox{dom }\phi^{*} \}$. Since $\phi^{*}(0,0)\in D$ and $D\subset \dom g$, $\phi^{*}(0,1)=g(\phi^{*}(0,0))$ is bounded.
    Since $t \mapsto \phi^{*}(t,1)$ is continuous on $[0,T^{*}]$ due to the continuity of $f$ from item $2)$ of Assumption~\ref{ass:basic_data}
    and $[0,T^{*}]$ is a closed interval, the set $\{x\in \mathbb{R}^n: x = \phi^{*}(t,1), t\in [0,T^{*}],(t,1)\in \mbox{dom }\phi^{*} \}$ is closed. Then, $\MO$ is closed as it is the union of two closed sets;
\item If $0<t^{*}<T^{*}$, then $\MO = \{x\in \mathbb{R}^n: x = \phi^{*}(t,0), t\in [0,t^{*}], (t,0)\in \dom \phi^{*} \} \bigcup \{x\in \mathbb{R}^n: x = \phi^{*}(t,1), t\in [t^{*},T^{*}],(t,1)\in \mbox{dom }\phi^{*} \}$. Using the continuity of $f$ from item $2)$ of Assumption~\ref{ass:basic_data}, since $t\mapsto \phi^{*}(t,0)$ is continuous on $[0,t^{*}]$ and the set $[0,t^{*}]$ is a closed interval, the set $\{x\in \mathbb{R}^n: x = \phi^{*}(t,0), t\in [0,t^{*}], (t,0)\in \dom \phi^{*} \}$ is closed. Similarly, since $t\mapsto \phi^{*}(t,1)$ is continuous on the closed interval $[t^{*}, T^{*}]$, the set $\{x\in \mathbb{R}^n: x = \phi^{*}(t,1), t\in [t^{*},T^{*}],(t,1)\in \mbox{dom }\phi^{*} \}$ is closed. Therefore, $\MO$ is closed;
\item If $t^{*} = T^{*}$, the argument follows similarly as in the first case above.
\end{itemize}

To prove its boundedness, we proceed by contradiction.
Suppose that $\MO$ is unbounded.
Then, due to $T^{*}$ being finite and $D\subset \dom g$, it follows that the flow periodic solution $\phi^{*}$
 can only escape to infinity in finite time during flows.
This further implies that $\phi^{*}$ is not complete and that its domain is not closed, which leads to a contradiction with the definition of flow periodic solution with period $T^{*}$. 

Next, we prove that $\MO$ is forward invariant.
First, each $\phi^*\in \MS_{\MH}(\MO)$ is complete,
which directly follows from the definition of flow periodic solution in Definition~\ref{def:periodic_sol}.
Then, it remains to prove that each $\phi^*\in \MS_{\MH}(\MO)$ stays in $\MO$, that is,
$\text{rge}\ \phi^*\subset \MO$.
By contradiction, suppose that $\text{rge}\ \phi^*\not\subset \MO$.
Then, by Definition~\ref{def:periodic_orbit}, the flow periodic solution $\phi^*$
defines another hybrid limit cycle $\MO'$,
which contradicts with the uniqueness of the hybrid limit cycle $\MO$.
Therefore, for each $\phi^*\in \MS_{\MH}(\MO)$, we have
$\text{rge}\ \phi^*\subset \MO$.
Hence, $\MO$ is forward invariant.
\end{proof}
}

\IfTAC{\CRed{\begin{remark}
Since a hybrid limit cycle $\MO$ to $\MH|_{M}$ is compact, for any solution $\phi$ to $\MH|_{M}$, the distance $|\phi(t,j)|_{\MO}$ is well-defined for all $(t,j)\in \dom \phi$.
\end{remark}}
}{\begin{remark}
Since a hybrid limit cycle $\MO$ to $\MH|_{M}$ is compact, for any solution $\phi$ to $\MH|_{M}$, the distance $|\phi(t,j)|_{\MO}$ is well-defined for all $(t,j)\in \dom \phi$.
\end{remark}
}

\IfTAC{
We revisit the previous example to illustrate the properties of a hybrid system $\MH$ satisfying
Assumption~\ref{ass:basic_data}.
}
{
We revisit the previous examples to illustrate the properties of a hybrid system $\MH$ satisfying
Assumption~\ref{ass:basic_data}.
}

\begin{example}\label{exam:TCP3}
Consider the congestion control system in \IfTAC{Section \ref{sec:motive_ex}}{Example~\ref{exam:TCP}}\IfTAC{.}{; see also Example~\ref{exam:TCP2}.}
By definition, the sets $C_{\textrm{\tiny TCP}}$ and $D_{\textrm{\tiny TCP}}$ of the model in \eqref{eq:TCP3}
are closed. Moreover, $f_{\textrm{\tiny TCP}}$ and $g_{\textrm{\tiny TCP}}$ are continuously differentiable.
Define the function $h:\mathbb{R}^2\to \mathbb{R}$ as $h(x) = q_{\max}-q$. Then, $ C_{\textrm{\tiny TCP}}$ and $ D_{\textrm{\tiny TCP}}$ can be written as
$ C_{\textrm{\tiny TCP}}=\{x\in\BR^{2}: h(x)\geq 0\}$ and
$ D_{\textrm{\tiny TCP}}=\{x\in\BR^{2}: h(x)=0, L_{f_{\textrm{\tiny TCP}}}h(x)\leq 0\}$, respectively.
Consider the compact set
$M_{\textrm{\tiny TCP}}:=(M_{\mathrm{T}}\cap  C_{\textrm{\tiny TCP}}) \setminus M_{1},$
where $M_{\mathrm{T}}$ is defined in \eqref{eqMT} 
and
$M_{1}=\{(q_{\max},B)\}+\varepsilon\BB^{\circ}$
%
with $\varepsilon>0$ small enough; see \Figure~\ref{tcp:region}.
We obtain that
$M_{\textrm{\tiny TCP}}\cap D_{\textrm{\tiny TCP}}=\{x\in\BR^{2}: q=q_{\max}, r\in [B+\varepsilon, B+\sqrt{2aq_{\max}}]\}$ and
for each $x\in M_{\textrm{\tiny TCP}}\cap D_{\textrm{\tiny TCP}}$, $L_{f_{\textrm{\tiny TCP}}}h(x)=B\!-\!r\!<\!0$.
Moreover,
due to the condition on parameters  $m(B\!+\!\sqrt{2aq_{\max}})\!<\!B$ (see \IfTAC{Section \ref{sec:motive_ex}}{Example~\ref{exam:TCP}}),
it can be verified that
$g_{\textrm{\tiny TCP}}(M_{\textrm{\tiny TCP}}\!\cap\! D_{\textrm{\tiny TCP}})\!\cap\! (M_{\textrm{\tiny TCP}}\!\cap\! D_{\textrm{\tiny TCP}})\!=\!\emptyset$
and
$g_{\textrm{\tiny TCP}}(M_{\textrm{\tiny TCP}}\!\cap\! D_{\textrm{\tiny TCP}}) \!\subset \! M_{\textrm{\tiny TCP}} \!\cap\! C_{\textrm{\tiny TCP}}$.
Furthermore, for any point
$x\!\in\! M_{\textrm{\tiny TCP}} \cap C_{\textrm{\tiny TCP}}$,
since the $r$ component of the flow map $f_{\textrm{\tiny TCP}}$, i.e., $\dot r\!=\!a$, is positive,
$\mathbf{T}_{M_{\textrm{\tiny TCP}}\cap C_{\textrm{\tiny TCP}}}(x) \!\cap\! \{f_{\textrm{\tiny TCP}}(x)\}\!=\! \{f_{\textrm{\tiny TCP}}(x)\} \!\neq\! \emptyset$
for each $x\in (M_{\textrm{\tiny TCP}}\cap C_{\textrm{\tiny TCP}})\setminus D_{\textrm{\tiny TCP}}$.\footnote{ $\mathbf{T}_{(M \cap C)}(x)$ denotes the tangent cone to the set $M\cap C$ at $x$; see \cite[Definition 5.12]{Goebel:book}. }
When $x \in  M_{\textrm{\tiny TCP}}\cap D_{\textrm{\tiny TCP}},$
we have $q=q_{\max}$ and $r\geq B+\varepsilon$ with $\varepsilon>0$ small enough,
which implies that $r-B>0$ and solutions from $x$ cannot be extended via flow.
By \cite[Proposition 6.10]{Goebel:book}, every maximal solution to $\MH_{\mathrm{TCP}}|_{M_{\mathrm{TCP}}}=
(M_{\textrm{\tiny TCP}}\cap C_{\textrm{\tiny TCP}},f_{\textrm{\tiny TCP}},M_{\textrm{\tiny TCP}}\cap D_{\textrm{\tiny TCP}},g_{\textrm{\tiny TCP}})$ is complete.
Therefore, Assumption~\ref{ass:basic_data} holds.
Moreover, a solution $\phi^{*}$ to $\MH_{\mathrm{TCP}}|_{M_{\mathrm{TCP}}}$
from
$\phi^{*}(0,0)=(q_{\max}, 2Bm/(1+m))\in M_{\textrm{\tiny TCP}}\cap C_{\textrm{\tiny TCP}}$
is a flow periodic solution with $T^{*}=2B(1-m)/(a+ma)$. 
\end{example}

\IfTAC{}{
\begin{example}\label{exam:Izhi:revisit2}
Consider the Izhikevich neuron system introduced in Example~\ref{exam:Izhi:revisit1}.
By design, the sets $\MC_{\mathrm{I}}$ and $\MD_{\mathrm{I}}$ \SMR of the model in \eqref{eq:izhi2} \STR
are closed. Moreover,
$f_{\mathrm{I}}$ and $g_{\mathrm{I}}$ are continuously differentiable.
Define the function $h:\mathbb{R}^2\to \mathbb{R}$ as $h(x) = 30-v$. Then, $\MC_{\mathrm{I}}$ and $\MD_{\mathrm{I}}$ can be written as
$\MC_{\mathrm{I}}=\{x\in\BR^{2}: h(x)\geq 0\}$ and
$\MD_{\mathrm{I}}=\{x\in\BR^{2}: h(x)=0, L_{f_{\mathrm{I}}}h(x)\leq 0\}$, respectively.
Consider the closed set $M_{\mathrm{I}}:=\{x\in\BR^2: w\leq 325+I_{\text{ext}}\}$.
Then,
for each $x\in M_{\mathrm{I}}\cap\MD_{\mathrm{I}}$ we have
$L_{f_{\mathrm{I}}}h(x)=-f_{1}(x)= -(0.04 v^{2}+5v+140-w+I_{\text{ext}})\leq
-(326-(325+I_{\text{ext}})+I_{\text{ext}})=-1<0$.

For $\MH_{\mathrm{I}}|_{M_{\mathrm{I}}}:=(M_{\mathrm{I}}\cap\MC_{\mathrm{I}},f_{\mathrm{I}},M_{\mathrm{I}}\cap\MD_{\mathrm{I}},g_{\mathrm{I}})$,
it can be verified that
$g_{\mathrm{I}}(M_{\mathrm{I}}\cap\MD_{\mathrm{I}})\cap (M_{\mathrm{I}}\cap\MD_{\mathrm{I}})=\emptyset$
for the parameters given in \eqref{eq:Izhi:param}.  
Note that $g_{\mathrm{I}}(M_{\mathrm{I}}\cap\MD_{\mathrm{I}}) \subset M_{\mathrm{I}} \cap  C_{\mathrm{I}}$. Furthermore, for any point
$x = (\bar v,\bar w)\in M_{\mathrm{I}} \cap  C_{\mathrm{I}}$, if $x$ belongs to the boundary of the set $M_{\mathrm{I}}\cap  C_{\mathrm{I}}$ and $\bar w = 325+I_{ext}$, then $b \bar v<\bar w$ and  we have $a(b\bar v- \bar w)<0$ (recall that $b = 0.2$). This implies that the $w$ component of the flow map $f_{\mathrm{I}}$ is negative.
Then, $\mathbf{T}_{M_{\mathrm{I}}\cap  C_{\mathrm{I}}}(x) \cap \{f_{\mathrm{I}}(x)\} = \{f_{\mathrm{I}}(x)\} \neq \emptyset$ for each $x\in (M_{\mathrm{I}}\cap  C_{\mathrm{I}})\setminus D_{\mathrm{I}}.$
If $x$ belongs to the interior of the set $(M_{\mathrm{I}}\cap  C_{\mathrm{I}})\setminus D_{\mathrm{I}}$, $\mathbf{T}_{M_{\mathrm{I}}\cap  C_{\mathrm{I}}}(x) \cap \{f_{\mathrm{I}}(x)\} = \{f_{\mathrm{I}}(x)\} \neq \emptyset$.
When $x \in  M_{\mathrm{I}}\cap \MD_{\mathrm{I}}$, $f_{1}(x)>0$, solutions cannot be extended via flow.
By \cite[Proposition 6.10]{Goebel:book}, every maximal solution to $\MH_{\mathrm{I}}|_{M_{\mathrm{I}}}$ is complete.
Therefore, the neuron system $\MH_{\mathrm{I}}$ on $\BR^{2}$ and $M_{\mathrm{I}}$ satisfy Assumption~\ref{ass:basic_data}
and, as pointed out in Example~\ref{exam:Izhi:revisit1},
$\MH_{\mathrm{I}}|_{M_{\mathrm{I}}}$ has a flow periodic solution $\phi^{*}$ with period $T^{*}$,
which defines a unique hybrid limit cycle
$\MO\subset M_{\mathrm{I}}\cap \MC_{\mathrm{I}}$.
\end{example}
}

\IfTAC{}{\SMR
\begin{example}\label{exam:Compass}
Consider a compass gait biped in \cite{Goswami:1997,Gritli:2014} which
consists of a double pendulum with point masses $m_h$ and $m$ concentrated
at the hip and legs (a stance leg and a swing leg), respectively, as shown in \Figure~\ref{compass:fig0}.
The two legs are modeled as rigid bars without knees and feet, and
with a frictionless hip. Given adequate initial conditions, the compass gait biped robot
is powered only by gravity and
performs a passive walk for a given constant slope $\phi$ 
without any external intervention.
The movement of the compass gait biped
is mainly composed of two phases: a swing phase and an impact phase.
In the first case,
the biped is modeled as a double pendulum.
The latter case occurs when the swing leg strikes the ground and the stance leg
leaves the ground.
Therefore, the compass gait biped can be modeled as a hybrid system to describe the continuous and discrete dynamics
of the system.
\begin{figure}[ht]
\vspace{3mm}
\centering
\psfrag{TT}[][][1.0]{$t$}
\includegraphics[width=\figwidth]{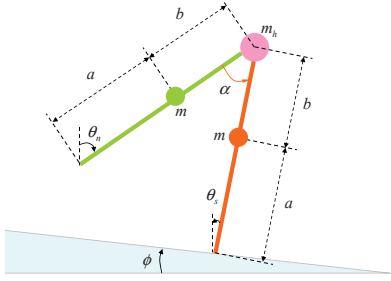}
\vspace{-3mm}
\caption{
A compass gait biped robot on a slop.}
\label{compass:fig0}
\end{figure}

The movement of the legs and hip during each step is described by the continuous dynamics of the
model, while the discrete dynamics describe the instantaneous change that occurs upon the impact at the end of each
step. At all times, one of the legs is the stance leg while the other is the swing leg, and they switch roles upon each step.
The state component vector $x$ is comprised of the angle vector $q$, which contains
the nonsupport (swing) leg angle $\theta_{n}$ and
the stance leg angle $\theta_{s}$; the velocity vector
$\dot{q}$, which contains the nonsupport (swing) leg angular velocity $\dot{\theta}_{n}$
and the stance leg angular velocity $\dot{\theta}_{s}$.
A complete hybrid system model of the compass gait biped, denoted $\MH_{\mathrm{B}}$ is defined as
\begin{equation*}\label{gait:eq1}
{\small
\MH_{\mathrm{B}}: \!
\left\{
\begin{array}{lcllllll}
\!
\dot{x}\!=\! f_{\mathrm{B}}(x):=\!\!
\begin{bmatrix}
\dot{q}\\
-\MM(q)^{-1}(\MN(q,\dot q)\dot q + \MG(q))
\end{bmatrix}  && \!\!\!\! x\in C_{\mathrm{B}}\\[1em]
\!\!
x^{+}\!=\! g_{\mathrm{B}}(x):=\!
\begin{bmatrix}
\Lambda q\\
Q_p(\alpha)^{-1}Q_m(\alpha)\dot q
\end{bmatrix}  && \!\!\!\! x\in D_{\mathrm{B}}
\end{array}
\right.
}
\end{equation*}
where $x=(q,\dot{q})=(\theta_{n},\theta_{s},\dot\theta_{n},\dot\theta_{s})\in\BR^{4}$ is the state, $\alpha  = {\theta_n} - {\theta_s}$,
$h(x):=\cos(\theta_{s}+\phi)-\cos(\theta_{n}+\phi)$,
$C_{\mathrm{B}}=\{x\in \BR^{4}: h(x)\geq 0\}$,
$D_{\mathrm{B}}=\{x\in \BR^{4}: h(x)=0, L_{f}h(x)\leq 0\}$,
{\small$\Lambda=\left[ \begin{array}{l}
0\;\;1\\
1\;\;0
\end{array} \right]$}, $\gamma$ is the gravitational acceleration,
\\
${\small
{Q_m}(\alpha ) = \left[ {\begin{array}{*{20}{c}}
{ - mab}&{ - mab + ({m_h}{l^2} + 2mal)\cos (\alpha )}\\
0&{ - mab}
\end{array}} \right]}$,\\
${\small {Q_p}(\alpha ) = \left[\!\! {\begin{array}{*{20}{c}}
{m{b^2} \!\!-\!\! mbl\cos (\alpha )}&{(m \!+\! {m_h}){l^2} \!+\! m{a^2} \!-\! mbl\cos (\alpha )}\\
{m{b^2}}&{ - mbl\cos (\alpha )}
\end{array}} \!\!\right]}$.\\ 
 $\MM(q)\in \BR^{2\times 2}$, $\MN(q)\in \BR^{2\times 2}$, and $\MG(q)\in \BR^{2}$ denotes
the inertia matrix, the matrix with centrifugal coefficients, and the vector of gravitational torques, which
are given as

$\MM(q) = \left[ {\begin{array}{*{20}{c}}
{m{b^2}}&{ - mlb\cos ({\theta_s} - {\theta_n})}\\
{ - mlb\cos ({\theta_s} - {\theta_n})}&{({m_h} + m){l^2} + m{a^2}}
\end{array}} \right],$

$\MN(q,\dot q) = \left[ {\begin{array}{*{20}{c}}
0&{ - mlb\sin (\alpha ){{\dot \theta }_s}}\\
{mlb\sin (\alpha ){{\dot \theta }_n}}&0
\end{array}} \right]$,

$\MG(q) = \left[ {\begin{array}{*{20}{c}}
{mb\gamma \sin ({\theta_n})}\\
{ - ({m_h}l + ma + ml)\gamma \sin ({\theta_s})}
\end{array}} \right]$.

We are interested in the system
\eqref{gait:eq1} restricted to
the region
\begin{equation}\label{eqCMT}
M_{\mathrm{c}}:=
\BR^{4}\backslash M_{0}
\end{equation}
for given parameters $\gamma$, $m_h$, $m$, $a$, $b$ and $\phi\in (-\frac{\pi}{4},\frac{\pi}{4})^2$,
where the open set $M_{0}:=\{(q,\dot{q})\in \BR^{4}: \dot{\theta}_{s}>-\varepsilon\}$ with $\varepsilon>0$ small enough,
is ruled out to ensure that the compass gait biped walks down the slope.
Under adequate initial conditions, the compass gait biped system has a flow periodic solution with $T^{*}$. 
However, due to the nonlinear form of the flow map, an analytic expression of the solution is not available.
For instance, when 
the initial condition is taken as
$(q(0,0),\dot{q}(0,0))=(0,0,2,-0.4)$
and
the parameters are chosen as
\begin{align}\label{eq:biped:param}
&\gamma=9.81 \mathrm{m/s^2}, m_h = 12\mathrm{kg}, m = 5\mathrm{kg},\\ \nonumber
&a =b= 0.5\mathrm{m}, \phi=0.0524\mathrm{rad},
\end{align}
a numerical approximation of this solution is shown
in~\Figure~\ref{fig:biped}, which is given by
the solution $\phi^{*}$ to $\MH_{\mathrm{B}}$ from
$\phi^{*}(0,0)=(-0.33, 0.22, -0.38,-1.09)$.
This solution is flow periodic
with $T^{*}\approx 0.385\mbox{s}$.  
%
%
\begin{figure}[ht]
\vspace{3mm}
\centering
\psfrag{A1}[][][0.8]{$P_1(q)$}
\psfrag{B1}[][][0.8]{$P_1(\dot q)$}
\psfrag{A2}[][][0.8]{$P_2(q)$}
\psfrag{B2}[][][0.8]{\quad$P_2(\dot q)$}
\psfrag{Theta}[][][0.8]{$q$}
\psfrag{dTheta}[][][0.8]{$\dot q$}
\includegraphics[width=\figwidth]{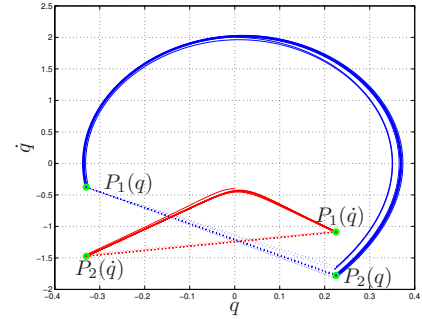}
\vspace{-3mm}
\caption{
Phase plot of a solution  
to the compass gait biped system
in \eqref{gait:eq1}
from $(0,0,2,-0.4)$. Blue: $\theta_{n}$ vs $\dot{\theta}_{n}$; Red: $\theta_{s}$ vs $\dot{\theta}_{s}$.
The point $P_{1}$ corresponds to $(q,\dot{q})=(-0.33, 0.22, -0.38,-1.09)$
and
the point $P_{2}$ corresponds to $(q,\dot{q})=(0.22, -0.33, -1.78, -1.47)$.
\Cred{A limit cycle of the compass gait biped system  is defined by the solution to \eqref{gait:eq1} from the point $P_1$ that jumps at the
point $P_2$.}}
\label{fig:biped}
\end{figure}

By design, the sets $\MC_{\mathrm{B}}$ and $\MD_{\mathrm{B}}$ are closed. Moreover,
$f_{\mathrm{B}}$ and $g_{\mathrm{B}}$ are continuously differentiable.
The open sets
$M_{1}:=
\{(q,\dot{q})\in M_{\mathrm{c}}:
\theta_{n}+\theta_{s}+2\phi \in (-\varepsilon, \varepsilon),
\sin(\theta_{n}+\phi)(\dot{\theta}_{n}+\dot{\theta}_{s})>-\varepsilon
\}$,
$M_{2}:=
\{(q,\dot{q})\in M_{\mathrm{c}}:
\theta_{n}+\theta_{s}+2\phi \in (-\varepsilon, \varepsilon),
\dot{\theta}_{n}>-\varepsilon
\}$,
and
$M_{3}:=\{(q,\dot{q})\in M_{\mathrm{c}}:
\theta_{n}-\theta_{s} \in (-\varepsilon, \varepsilon),
\sin(\theta_{n}+\phi)(\dot{\theta}_{n}-\dot{\theta}_{s})<\varepsilon
\}$
with $\varepsilon>0$ small enough,
are ruled out to ensure the transversality of the limit cycle,
ensure that the nonsupport lag walks down the slope at jumps (this is natural, since the nonsupport lag switches its role to the stance leg
when it is in contact with the slope),
and ensure that a jump happens only when two legs are in contact with the slope,
respectively.
Consider the closed set $M_{\mathrm{B}}:=M_{\mathrm{c}}\backslash (M_{1}\cup M_{2}\cup M_{3})$.
Then,
for each $x\in M_{\mathrm{B}}\cap\MD_{\mathrm{B}}$,
we have $\cos(\theta_{s}+\phi)=\cos(\theta_{n}+\phi)$
and $L_{f_{\mathrm{B}}}h(x)=\sin(\theta_{n}+\phi)\dot{\theta}_{n}-\sin(\theta_{s}+\phi)\dot{\theta}_{s}\leq 0$.
Hence, at this time,
we have two cases: $\theta_{s}=\theta_{n}$ 
or $\theta_{s}+\phi=-(\theta_{n}+\phi)$.
By the definition of $M_{3}$, since $L_{f_{\mathrm{B}}}h(x)=\sin(\theta_{n}+\phi)(\dot{\theta}_{n}-\dot{\theta}_{s})\geq \varepsilon>0$
if $\theta_{s}=\theta_{n}$, only the case $\theta_{s}+\phi=-(\theta_{n}+\phi)$ will happen
for each $x\in M_{\mathrm{B}}\cap\MD_{\mathrm{B}}$, which implies that
$\sin(\theta_{s}+\phi)=-\sin(\theta_{n}+\phi)$.
In addition, by the definition of $M_{1}$,
we have
\begin{equation}\label{my:eqLf}
L_{f_{\mathrm{B}}}h(x)=\sin(\theta_{n}+\phi)(\dot{\theta}_{n}+\dot{\theta}_{s})\leq -\varepsilon<0.
\end{equation}
Moreover,
by the jump map $q^{+}=\Lambda q$
for each $x\in M_{\mathrm{B}}\cap\MD_{\mathrm{B}}$, we have $\theta_{n}+\phi=-(\theta_{s}+\phi)$
and $\theta_{s}^{+}+\phi=\theta_{n}+\phi=-(\theta_{s}+\phi)=-(\theta_{n}^{+}+\phi)$.
In addition, by the definitions of $M_{0}$ and $M_{2}$,
we have
for each $x\in M_{\mathrm{B}}\cap\MD_{\mathrm{B}}$,
$\dot{\theta}_{n}+\dot{\theta}_{s}<0$
and
$\dot{\theta}_{n}^{+}+\dot{\theta}_{s}^{+}<0$, which
together with \eqref{my:eqLf} imply $\sin(\theta_{n}+\phi)=-\sin(\theta_{n}^{+}+\phi)>0$.
Then, for each $x\in M_{\mathrm{B}}\cap\MD_{\mathrm{B}}$,
we have that $\sin(\theta_{n}^{+}+\phi)<0$ and $\dot{\theta}_{n}^{+}+\dot{\theta}_{s}^{+}<0$,
which implies that
$L_{f_{\mathrm{B}}}h(x^{+})=\sin(\theta_{n}^{+}+\phi)(\dot{\theta}_{n}^{+}+\dot{\theta}_{s}^{+})>0$.
Therefore, we have
$g_{\textrm{B}}(M_{\textrm{B}}\cap D_{\textrm{B}})\cap (M_{\textrm{B}}\cap D_{\textrm{B}})=\emptyset$
and
$g_{\textrm{B}}(M_{\textrm{B}}\cap D_{\textrm{B}}) \subset M_{\textrm{B}} \cap  C_{\textrm{B}}$.
Furthermore, for any point
$x\in M_{\textrm{B}} \cap  C_{\textrm{B}}$,
since the $\theta_{s}$ component of the flow map $f_{\textrm{B}}$, i.e., $\dot{\theta}_{s}\leq -\varepsilon<0$, is negative,
$\mathbf{T}_{M_{\textrm{B}}\cap  C_{\textrm{B}}}(x) \cap \{f_{\textrm{B}}(x)\} = \{f_{\textrm{B}}(x)\} \neq \emptyset$
for each $x\in (M_{\textrm{B}}\cap  C_{\textrm{B}})\setminus D_{\textrm{B}}$.
When $x \in  M_{\textrm{B}}\cap D_{\textrm{B}},$
we have
$\cos(\theta_{s}+\phi)=\cos(\theta_{n}+\phi)$ and
$L_{f_{\mathrm{B}}}h(x)\leq -\varepsilon$
with $\varepsilon>0$ small enough,
which implies that $L_{f_{\mathrm{B}}}h(x)<0$ and solutions from $x$ cannot be extended via flow.
By \cite[Proposition 6.10]{Goebel:book}, every maximal solution to $\MH_{\mathrm{B}}|_{M_{\mathrm{B}}}$ is complete.
Therefore, the compass gait biped system $\MH_{\mathrm{B}}$ on $\BR^{4}$ and $M_{\mathrm{B}}$ satisfy Assumption~\ref{ass:basic_data}
and, 
$\MH_{\mathrm{B}}|_{M_{\mathrm{B}}}$ has a flow periodic solution $\phi^{*}$ with period $T^{*}$,
which defines a hybrid limit cycle
$\MO\subset M_{\mathrm{B}}\cap \MC_{\mathrm{B}}$.
\end{example}
\STR
}

The following result establishes a transversality
property of any hybrid limit cycle for $\MH$ restricted to $M$.\footnote{A hybrid limit cycle $\MO$ to a hybrid system $\MH$
satisfying Assumption~\ref{ass:basic_data} is transversal to $M\cap D$ if \Blue{$\MO$} 
 intersects
\SM $M\cap D$ \ST at exactly one point $\bar{x}:=\MO\cap (M\cap D)$ with the property
$L_{f}h(\bar{x})\neq 0$.
}

\begin{lemma}\label{lem:transversal}
Given a hybrid system $\MH=(C,f,D,g)$ on $\BRn$ and a closed set $M\subset\BRn$  satisfying Assumption~\ref{ass:basic_data},
suppose that $\MH|_{M} = (M\cap C, f, M\cap D, g)$ has a hybrid limit cycle $\MO\subset M\cap C$.
Then, $\MO$ is transversal to $M\cap D$.
\end{lemma}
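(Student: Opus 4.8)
The goal is to show that any hybrid limit cycle $\MO \subset M\cap C$ of $\MH|_M$ meets $M\cap D$ in exactly one point, and that the flow is transversal to the jump surface there, i.e.\ $L_f h(\bar x)\neq 0$ at $\bar x := \MO\cap(M\cap D)$. I would build the proof around the flow periodic solution $\phi^*$ (period $T^*$, one jump per period) generating $\MO$, as in Definitions~\ref{def:periodic_sol} and \ref{def:periodic_orbit}, and use the structural description of $C$ and $D$ from item~1) of Assumption~\ref{ass:basic_data}, namely $C=\{h\geq 0\}$, $D=\{h=0,\ L_f h\leq 0\}$.

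\textbf{Step 1: the cycle meets $M\cap D$ in at least one point.} Since $\phi^*$ is flow periodic with \emph{one jump per period}, every period contains exactly one jump time. Pick $(t^*,j)\in\dom\phi^*$ with $(t^*,j+1)\in\dom\phi^*$; then $\phi^*(t^*,j)\in D$ and, by the restriction, $\phi^*(t^*,j)\in M\cap D$ (it lies in $\MO\subset M\cap C$ and in $D$, and since $\MO$ is forward invariant by Lemma~\ref{lem:closeness_orbit}, $\phi^*(t^*,j)\in M$). Set $\bar x:=\phi^*(t^*,j)$. So $\MO\cap(M\cap D)\neq\emptyset$.

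\textbf{Step 2: transversality at any intersection point.} Let $x_0\in\MO\cap(M\cap D)$ be arbitrary. By item~3) of Assumption~\ref{ass:basic_data}, $L_f h(x_0)<0$ for all $x_0\in M\cap D$; in particular $L_f h(x_0)\neq 0$, which is exactly the transversality condition. This is the "easy half" and follows directly from the hypotheses. (If one only had $L_f h\leq 0$ on $D$ this step would fail, so item~3) is doing the real work here.)

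\textbf{Step 3: uniqueness of the intersection point — the main obstacle.} This is the delicate part. Suppose, for contradiction, that $\MO$ meets $M\cap D$ at two distinct points $x_0\neq x_1$. Both are values of $\phi^*$ at jump times, say $x_0=\phi^*(t_0,j_0)$ and $x_1=\phi^*(t_1,j_1)$ with jumps occurring right after. The key is that $\phi^*$ has \emph{one jump per period} $T^*$: the jump times of $\phi^*$ on $\dom\phi^*$ form a sequence $\{s_k\}$ with $s_{k+1}-s_k = T^*$ for all $k$ (this uses Definition~\ref{def:periodic_sol}, which forces $\phi^*(t+T^*,j+1)=\phi^*(t,j)$ and hence the jump structure to repeat with period exactly $T^*$), and by periodicity $\phi^*(s_k,\cdot)=\phi^*(s_0,\cdot)$ for all $k$ — i.e.\ the pre-jump state is always the \emph{same} point. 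So if $x_0$ and $x_1$ were both pre-jump states they would coincide, a contradiction. The remaining possibility is that $x_1$ is an interior-of-flow point of $\phi^*$ that happens to lie in $M\cap D$ without a jump occurring there. I would rule this out using item~3) together with item~1): at such a point $x_1\in D$ one has $h(x_1)=0$ and $L_f h(x_1)<0$, so flowing from $x_1$ immediately makes $h$ strictly negative, leaving $C=\{h\geq 0\}$; hence the solution cannot continue to flow past $x_1$, forcing a jump at $x_1$ — contradicting that $x_1$ is an interior flow point with no jump, or else producing a second jump in the period $[s_0,s_0+T^*)$, contradicting "one jump per period." One technical point to handle carefully: the pre-jump point $\bar x=\phi^*(t^*,j)$ itself satisfies $h(\bar x)=0$ and $L_f h(\bar x)<0$, which is consistent — the solution reaches $\bar x$ along the flow (with $h>0$ just before, $h=0$ at $\bar x$) and then jumps; I must make sure the argument distinguishes "arriving at the boundary" (allowed, triggers the jump) from "passing through the boundary" (impossible), which is precisely what $L_f h<0$ guarantees. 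Also one should observe $\MO\cap(M\cap D)$ cannot be empty-interior-issue: since $M\cap D$ has empty interior relative to $\MO$ (it sits in the codimension-one set $\{h=0\}$ and $\MO$ is a cycle with nonempty flow), isolated intersection is the generic picture, and the one-jump-per-period hypothesis pins it to a single point.

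\textbf{Assembly.} Combining Steps 1–3: $\MO\cap(M\cap D)$ is a single point $\bar x$, and $L_f h(\bar x)<0\neq 0$, so $\MO$ is transversal to $M\cap D$ in the sense of the footnote. The main obstacle is Step 3, specifically cleanly translating "one jump per period" plus the sign condition $L_f h<0$ on $D$ into the statement that the flow can touch $\{h=0\}$ at only one point of $\MO$; everything else is a direct read-off from Assumption~\ref{ass:basic_data} and the definitions.
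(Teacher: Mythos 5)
Your proposal is correct and follows essentially the same route as the paper's proof: existence and the nondegeneracy $L_fh(\bar x)\neq 0$ are read off directly from Assumption~\ref{ass:basic_data}, and uniqueness is obtained by contradiction, using the continuity of $x\mapsto L_fh(x)$ and the strict inequality $L_fh<0$ on $M\cap D$ to show that $\phi^{*}$ cannot flow through a second intersection point and would have to jump there, violating ``one jump per period.'' Your extra case analysis in Step~3 (pre-jump states coincide by periodicity) is not needed in the paper's version but is harmless.
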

\begin{proof}
We proceed by contradiction.
Consider the flow periodic solution $\phi^{*}$ with period $T^{*}$ that generates the hybrid limit cycle $\MO$ for $\MH|_{M}$. By definition, there exists $x^{*} \in \MO$ such that $x^{*} \in \MO\cap (M\cap\MD)$ and $\phi^{*}(t^{*},j^{*}) = x^{*}$ for some $(t^{*},j^{*}) \in \dom \phi^{*}$.
Suppose that
$\MO$ intersects $M\cap\MD$ at another point $x'\!\neq\! x^{*}$, i.e., $x'\!\in\! \MO\!\cap \! (M\!\cap\!\MD)$ and $\phi^{*}(t',j') \!=\! x'$ for some $(t',j') \in \dom \phi^{*}$.
Then, by items $1)$ and $3)$ of Assumption~\ref{ass:basic_data}, it follows that $h(x') = 0$ and $L_{f}h(x') < 0$.
Since $h$ is continuously differentiable and $f$ is continuous, $x\mapsto L_{f} h(x)$ is continuous. Then, there exists $\tilde \delta >0$ such that $L_{f}h(x) < 0$ for all $x \in x'+\tilde \delta \mathbb{B}$.
Therefore, the solution $\phi^{*}$ to $\MH|_{M}$ cannot be extended through
flow at $x'$. In fact, since $x' \in M\cap\MD$, $\phi^{*}$ will jump
immediately when it reaches $x'$. This contradicts the fact that $\phi^{*}$ has only one jump in its period $T^{*}$.
\end{proof}

\IfTAC{
}
{Asymptotic stability of $\MO$ implies a
finite-time convergence and recurrence property of the jump set from points in its basin of attraction, which is denoted as $\MB_{\MO}.$ The following result states this property.

\begin{lemma}\label{thm:recurrence}
Consider a hybrid system $\MH=(C,f,D,g)$ on $\BRn$ and a closed set $\X\subset\BRn$ satisfying Assumption~\ref{ass:basic_data}.
Suppose that
$\MH|_{M} = (M\cap C, f, M\cap D, g)$ has a flow periodic solution $\phi^{*}$ with
period $T^{*}>0$ 
that defines a
locally asymptotically stable
hybrid limit cycle $\MO\subset M\cap C$.
Then, the set $M\cap D$ is finite time attractive and recurrent
from the basin of attraction of $\MO$
-- in the sense that
for each solution $\phi$ to $\MH|_{M}$ with $\phi(0, 0)\in \MB_{\MO}$,
there exists $\{(\tilde t_i,\tilde j_i)\}_{i=0}^\infty$,  $\tilde{j}_0=0$,  $(\tilde t_i,\tilde j_i) \in \dom \phi$ with $\tilde t_i$ and $\tilde j_i$ strictly increasing and unbounded such that $|\phi(\tilde t_i,\tilde j_i)|_{M\cap D} = 0$ for all $i \in \mathbb{N}$.
\end{lemma}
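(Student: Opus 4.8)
The plan is to establish both asserted properties at once by showing that every solution $\phi$ to $\MH|_{M}$ with $\phi(0,0)\in\MB_{\MO}$ jumps infinitely many times, at $t$-coordinates that are strictly increasing and unbounded. Granting this, let $t_{1},t_{2},\dots$ be the $t$-coordinates of the successive jumps of $\phi$ and put $\tilde t_{i}:=t_{i+1}$, $\tilde j_{i}:=i$ for $i\in\BN$; then $(\tilde t_{i},\tilde j_{i})$ is the hybrid time immediately preceding the $(i+1)$-st jump, so $\phi(\tilde t_{i},\tilde j_{i})\in D$, and since the range of $\phi$ lies in $M$ (as $\phi$ solves $\MH|_{M}$) we obtain $\phi(\tilde t_{i},\tilde j_{i})\in M\cap D$, i.e. $|\phi(\tilde t_{i},\tilde j_{i})|_{M\cap D}=0$, with $\tilde j_{0}=0$ and $\tilde j_{i}$, $\tilde t_{i}$ strictly increasing and unbounded.

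First I would gather the standing facts. By the definition of the basin of attraction $\MB_{\MO}$, the solution $\phi$ is complete, $|\phi(t,j)|_{\MO}\to 0$ as $t+j\to\infty$, and its range is bounded (it is eventually contained in a bounded neighborhood of the compact set $\MO$, and on the complementary compact initial piece of $\dom\phi$ it is bounded as well). By Lemma~\ref{lem:closeness_orbit}, $\MO$ is compact and forward invariant; by Lemma~\ref{lem:transversal}, $\MO\cap(M\cap D)=\{\bar x\}$ with $L_{f}h(\bar x)<0$; set $\bar y:=g(\bar x)\in\MO$. Normalize the generating flow periodic solution $\phi^{*}$ (period $T^{*}$, one jump per period) so that $\phi^{*}(0,0)=\bar y$ and $\phi^{*}(T^{*},0)=\bar x$; then $\MO=\{\phi^{*}(s,0):s\in[0,T^{*}]\}$. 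Since $f$ is continuously differentiable on a neighborhood of $M\cap C$ (item~$2)$ of Assumption~\ref{ass:basic_data}), solutions of $\dot x=f(x)$ are unique, so for $w=\phi^{*}(s_{0},0)\in\MO$ the maximal solution $t\mapsto\sol(t,w)$ of the flow dynamics equals $s\mapsto\phi^{*}(s_{0}+s,0)$ and reaches $\bar x$ at the flow time $T^{*}-s_{0}\le T^{*}$.

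The core step is to rule out $\phi$ having only finitely many jumps. Suppose it does; by completeness there is $(t_{\star},j_{\star})\in\dom\phi$ with $[t_{\star},\infty)\times\{j_{\star}\}\subset\dom\phi$ and no jump of $\phi$ after $t_{\star}$, so that $z(\tau):=\phi(t_{\star}+\tau,j_{\star})$ is a complete forward solution of $\dot z=f(z)$ staying in $M\cap C$, with bounded range and $|z(\tau)|_{\MO}\to 0$. By the classical theory of $\omega$-limit sets of bounded solutions of $\dot x=f(x)$ (the notion being that of Definition~\ref{def:LS1} applied to the hybrid arc $z$ with domain $[0,\infty)\times\{0\}$), the set $\Omega(z)$ is nonempty and compact, $\Omega(z)\subset\MO\subset C$, and it is flow invariant: through every $w\in\Omega(z)$ there passes a solution $\xi$ of $\dot\xi=f(\xi)$ defined for all forward time with $\text{rge}\,\xi\subset\Omega(z)\subset\MO$. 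But by the preceding paragraph such a $\xi$ reaches $\bar x$ at some finite time $t_{1}\le T^{*}$, and then $h(\xi(t))=L_{f}h(\bar x)\,(t-t_{1})+o(t-t_{1})<0$ for $t>t_{1}$ near $t_{1}$, contradicting $\xi(t)\in\MO\subset C=\{x:h(x)\ge 0\}$. Hence $\phi$ jumps infinitely often.

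Finally I would check monotonicity and unboundedness of the jump times. By item~$3)$ of Assumption~\ref{ass:basic_data}, $g(M\cap D)\cap(M\cap D)=\emptyset$, so immediately after every jump $\phi$ lies outside $D$ and must flow for a positive amount of time before the next jump; thus $t_{1}<t_{2}<\cdots$ and the jump counter preceding the $(i+1)$-st jump is exactly $i$, so $\tilde j_{0}=0$. Since $\phi$ is bounded and complete, Remark~\ref{rk:hybrid_basic_conds} supplies $r>0$ with $t_{k+1}-t_{k}\ge r$ for all $k\ge 1$, hence $\tilde t_{i}=t_{i+1}\to\infty$, which completes the argument. I expect the $\omega$-limit-set step to be the main obstacle: one must justify rigorously that nonemptiness, compactness, and flow invariance of $\Omega(z)$ carry over to the constrained flow solution $z$, and then exploit the transversality $L_{f}h(\bar x)<0$ for the contradiction. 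A viable alternative, more in the spirit of the paper's time-to-impact constructions, replaces this step by a uniform return estimate: using compactness of $\MO$, continuous dependence of flows on initial conditions, and $L_{f}h(\bar x)<0$, one produces $\varepsilon>0$ and $\bar T>0$ such that every solution to $\MH|_{M}$ starting in $\MO+\varepsilon\ball$ reaches $M\cap D$ within flow time $\bar T$ while $g$ maps the relevant part of $M\cap D$ back into $\MO+\varepsilon\ball$, and then iterates from the hybrid time at which $\phi$ enters and remains in $\MO+\varepsilon\ball$.
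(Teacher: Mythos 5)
Your proof is correct, and it is in fact more complete than the paper's own argument at the one place where that argument is thin. The paper proves the two properties separately by contradiction: for finite-time attractivity it assumes a solution from $\MB_{\MO}$ never meets $M\cap D$ and then simply asserts that, because the solution converges to $\MO$, ``at least one jump will happen in finite flow time''; for recurrence it uses $g(M\cap D)\cap(M\cap D)=\emptyset$ to push the solution off the jump set after each jump and re-applies the first part inductively. The crux --- why a convergent, eventually-flowing solution cannot flow forever --- is exactly what you supply: the pure-flow tail $z$ has a nonempty, compact, flow-invariant $\omega$-limit set contained in $\MO$, every forward flow solution issued from a point of $\MO$ coincides (by uniqueness, from item $2)$ of Assumption~\ref{ass:basic_data}) with a segment of $\phi^{*}$ and hence reaches $\bar{x}=\MO\cap(M\cap D)$ within time $T^{*}$, and transversality $L_{f}h(\bar{x})<0$ then forces $h$ negative, expelling the solution from $C\supset\MO$. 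This buys a single unified argument that yields infinitely many jumps at once, with Remark~\ref{rk:hybrid_basic_conds} giving the uniform dwell time and hence unboundedness of the $\tilde t_{i}$ --- a conclusion the paper obtains only implicitly through its iteration. Two points to state explicitly when writing this up: (i) the invariance of $\Omega(z)$ is the classical $\omega$-limit-set theorem for the \emph{unconstrained} ODE $\dot x=f(x)$, which applies because $f$ is $C^{1}$ on an open neighborhood of $M\cap C$ and $\Omega(z)\subset\MO$ is compact, so the limiting solutions are complete and remain in $\Omega(z)$; and (ii) completeness and boundedness of solutions from all of $\MB_{\MO}$ (not just from a $\mu$-neighborhood of $\MO$) should be taken as part of the definition of the basin of attraction, an imprecision the paper's own proof shares.
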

\begin{proof}
First, we show that $M\cap D$ is finite time attractive with basin of attraction of $\MO$, namely, $\MB_{\MO}$.
We proceed by contradiction.
Suppose that there exists $\phi \in \MS_{\MH|_{M}}$ with
$\phi(0, 0)\in \MB_{\MO}\subset M\cap C$ such that
for any $(t,j)\in \dom \phi$, $\phi(t,j)\notin M\cap D$.
%
%
From the local attractivity of $\phi$,
there exists $\mu>0$ such that the solution $\phi$ to $\MH|_{M}$
starting from $|\phi(0,0)|_{\MO} \leq \mu$ is complete and satisfies
$\lim\limits_{t+j\rightarrow\infty}|\phi(t,j)|_{\MO}=0.$
Therefore, at least one jump will happen in finite flow time, namely, there exists $(t^{*},j^{*})\in\dom\phi$ such that $\phi(t^{*},j^{*})\in D$.
Recalling that $\phi(0, 0)\in M\cap C$ and $\phi \in \MS_{\MH|_{M}}$, by definition of solution
we have
$\phi(t^{*},j^{*})\in M\cap D$,
which leads to a contradiction.
Therefore, the set $M\cap D$ is finite time attractive from $\MB_{\MO}.$

Next, we prove that the set $M\cap D$ is recurrent
from $\MB_{\MO}$.
Suppose that there exists $\phi \in \MS_{\MH|_{M}}$ with
$\phi(0, 0)\in \MB_{\MO}\cap (M\cap D)$.
By item $3)$ of Assumption~\ref{ass:basic_data},
$g(M\cap D)\cap (M\cap D)\!=\!\emptyset$.
Then, $\phi(0,1)=g(\phi(0,0))\notin M\cap D$.
In addition, from the local asymptotic stability of $\MO$,
$\phi(0,1)\in \MB_{\MO}$. 
We claim that there exists $t'$ such that $\phi(t',1)\in M\cap D$.
Proceeding by contradiction, suppose that $\phi(t,1)\notin M\cap D$ for all $(t,1)\in\dom \phi$.
However, from the above argument, the set $M\cap D$ is finite time attractive from $\MB_{\MO}$,
which implies that from $\phi(0,1)\in \MB_{\MO}$,
$M\cap D$ is finite time attractive.
This leads to a contradiction. Therefore, there exists $t'$ such that $\phi(t',1)\in M\cap D$.
Recurrence follows by repeating the same argument and using completeness
of solutions from $\MB_{\MO}$.
\end{proof}

\begin{remark}
The finite-time convergence and recurrence property of the jump set in Lemma~\ref{thm:recurrence}
are necessary conditions
for the existence of asymptotically stable hybrid limit cycles.
This provides a way to verify the existence of an asymptotically stable hybrid limit cycle in a hybrid system.
\end{remark}
}

To state our next result, let us introduce the \emph{time-to-impact function} for hybrid \IfTAC{}{dynamical} systems as in $\MH$.
Alternative equivalent definitions can be found in \cite{Grizzle:2001} and \cite[Definition 2]{Maghenem:ACC:2020}.
In \cite{Maghenem:ACC:2020}, a \emph{minimal-time function} notion with respect to a closed
set is presented for a constrained continuous-time system, which
provides the first time that a solution starting from a given initial condition
reaches that set.
Following \cite{Grizzle:2001}, for a hybrid system $\MH = (C, f, D,g)$,
the \emph{time-to-impact function with respect to $D$} is defined by
$T_{I}: \overline{C}\cup D \RA \BR_{\geq 0}\cup \{\infty\}$,  where\footnote{In particular, when there does not exist $t\geq 0$
such that $\sol(t,x)\in D$,
we have $\{t\geq 0: \sol(t,x)\in D\}=\emptyset$, which gives
$T_{I}(x)=\infty$.}
\begin{equation}\label{eq:TI}
T_{I}(x):=\inf \{ t\geq 0: \phi(t,j)\in D,\ \phi \in \MS_{\MH}(x) \}
\end{equation}
for each $x\in \overline{C}\cup D$.

Inspired by \cite[Lemma 3]{Grizzle:2001}, we show that
the function \IfTAC{$T_{I}$}{$x \mapsto T_{I}(x)$} is continuous on a subset of $M\!\cap\! (\overline{C}\!\cup\! D)$,
as specified next. 

\begin{lemma}
\label{lem:continuity_impact_time}
Suppose a hybrid system $\MH=(C,f,D,g)$ on $\BRn$ and a closed set $M\subset\BRn$ satisfy
Assumption~\ref{ass:basic_data}.
Then, $T_I$ is continuous at points in ${\mathcal{X}}:=\{x \in M\cap C: 0 \!<\! T_{I}(x) \!<\! \infty\}$.
\end{lemma}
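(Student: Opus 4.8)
The plan is to establish continuity of $T_I$ at an arbitrary point $x_0 \in \mathcal{X}$ by showing that solutions to $\dot{x} = f(x)$ depend continuously on initial conditions (which follows from $f$ being continuously differentiable on a neighborhood of $M \cap C$, per item~2) of Assumption~\ref{ass:basic_data}), and then using the transversality built into the jump set via item~3). Fix $x_0 \in \mathcal{X}$, so $T_I(x_0) =: t^* \in (0,\infty)$ and $\phi^f(t^*, x_0) \in D$, which by item~1) means $h(\phi^f(t^*,x_0)) = 0$ and $L_f h(\phi^f(t^*,x_0)) \le 0$. The first step is to argue that in fact $L_f h(\phi^f(t^*, x_0)) < 0$: since $x_0 \in M \cap C$ and the flow stays in $M$ until the impact, the impact point lies in $M \cap D$, and item~3) of Assumption~\ref{ass:basic_data} gives $L_f h < 0$ there. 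Also, because $t^*$ is the \emph{infimum} of impact times and $L_f h(\phi^f(t,x_0))$ is continuous in $t$ with $h(\phi^f(0,x_0)) \ge 0$, one gets $h(\phi^f(t,x_0)) > 0$ for $t \in (0, t^*)$ — i.e., $t^*$ is genuinely the first crossing.

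Next I would set up the implicit-function-theorem argument. Define $\Psi(t, \xi) := h(\phi^f(t, \xi))$ on a neighborhood of $(t^*, x_0)$; by continuous differentiability of $h$ and of $\xi \mapsto \phi^f(t,\xi)$, $\Psi$ is $C^1$ there, and $\partial_t \Psi(t^*, x_0) = L_f h(\phi^f(t^*, x_0)) < 0 \ne 0$. The implicit function theorem then yields a neighborhood $U$ of $x_0$ and a continuous (indeed $C^1$) map $\tau: U \to \BR_{\ge 0}$ with $\tau(x_0) = t^*$ and $h(\phi^f(\tau(\xi), \xi)) = 0$ for all $\xi \in U$. The remaining work is to verify that $\tau(\xi)$ is actually equal to $T_I(\xi)$ for $\xi$ near $x_0$ in $M \cap C$: shrinking $U$, continuity forces $\phi^f(\tau(\xi),\xi)$ to stay near $\phi^f(t^*,x_0)$ where $L_f h < 0$, so $\phi^f(\tau(\xi),\xi) \in M \cap D$; and by continuity of $\Psi$ together with $h(\phi^f(t, x_0)) > 0$ on a compact subinterval of $(0, t^*)$, one rules out an earlier crossing for $\xi$ close enough to $x_0$, so $\tau(\xi) = T_I(\xi)$. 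Continuity of $T_I$ at $x_0$ then follows from continuity of $\tau$.

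The main obstacle I anticipate is the bookkeeping needed to exclude an \emph{earlier} impact for nearby initial conditions — i.e., showing $T_I(\xi) = \tau(\xi)$ rather than just $T_I(\xi) \le \tau(\xi)$. Handling this requires a uniform lower bound on $h(\phi^f(t,x_0))$ over $t \in [\epsilon, t^* - \epsilon]$ for small $\epsilon > 0$, plus a separate argument near $t = 0$ using that $x_0 \in M \cap C$ with $0 < T_I(x_0)$ (so $x_0 \notin M \cap D$, hence $h(x_0) > 0$ or $L_f h(x_0) > 0$, keeping the flow inside $C \setminus D$ for a short initial time, uniformly for nearby $\xi$). A secondary subtlety is confirming that the flow from $\xi \in U \cap (M \cap C)$ indeed remains in $M$ up to time $\tau(\xi)$, so that the impact point lies in $M \cap D$ and item~3) applies; this is where one uses that $M$ is closed and $\phi^f(\cdot, x_0)$ stays in a neighborhood of the compact arc, combined with continuous dependence on initial conditions. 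None of these steps is deep, but assembling them with the right order of quantifiers is where the care is needed.
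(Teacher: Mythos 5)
Your proposal is correct, and it rests on the same two pillars as the paper's proof -- strict transversality $L_{f}h<0$ at the impact point (from item 3) of Assumption~\ref{ass:basic_data}) and regular dependence of $\phi^{f}(t,\cdot)$ on initial conditions -- but the mechanism by which you extract continuity of the crossing time is genuinely different. The paper argues by sandwiching: it extends the flow slightly past $T_I(x^{*})$, uses $L_fh(\bar x)<0$ to get $h(\phi^f(\tilde t_2,x^{*}))<0$ while $h(\phi^f(\tilde t_1,x^{*}))>0$, and then invokes continuous dependence plus the intermediate value theorem to trap $T_I(x)$ in $(\tilde t_1,\tilde t_2)$ for $x$ near $x^{*}$. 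You instead apply the implicit function theorem to $\Psi(t,\xi)=h(\phi^f(t,\xi))$, using $\partial_t\Psi(t^{*},x_0)=L_fh(\phi^f(t^{*},x_0))<0$. Your route requires $C^1$ dependence of the flow on initial data (available here since $f$ is continuously differentiable on a neighborhood of $M\cap C$) and, like the paper, implicitly requires extending the unconstrained flow slightly beyond the impact time so that $\Psi$ is defined on a two-sided $t$-neighborhood; in exchange it yields the stronger conclusion that $T_I$ is locally $C^1$ near such points, not merely continuous. The paper's route is more elementary, needing only continuity in initial conditions. On the one genuinely delicate step -- ruling out an \emph{earlier} impact for nearby initial conditions so that the implicit zero $\tau(\xi)$ really equals $T_I(\xi)$ -- you are, if anything, more explicit than the paper: you correctly identify the need for a uniform positive lower bound on $h\circ\phi^f$ away from the endpoints and a separate argument near $t=0$ (where $h(x_0)=0$ with $L_fh(x_0)>0$ is possible), whereas the paper's stated choice of $\epsilon$ only controls the distance to $M\cap D$ at the two times $\tilde t_1,\tilde t_2$. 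No gap; your argument, once the quantifier bookkeeping you describe is carried out, is a valid alternative proof.
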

\IfTAC{
\noindent A proof can be found in \CBlue{\cite[Lemma 4.12]{lou:TAC:2022}}.
}{
\begin{proof} 
Let $\varepsilon>0$. Consider a flow solution $t \mapsto \phi^f(t, x^{*})$ to $\dot x = f(x)$ from $x^{*} \in M\cap C$ with $0< T_I(x^{*}) < \infty$.
From the definition of $T_I$ and the condition $0< T_I(x^{*}) < \infty$,
$\phi^f(t, x^{*})$ remains in $M\cap C$ over $[0, T_I(x^{*})]$.
Denote $\bar x := \phi^f(T_I(x^{*}),x^{*})$. Then, $\bar x\in M\cap D$. By the definition of $T_I$ and items $1)$
and $3)$ in Assumption~\ref{ass:basic_data}, we have $h(\phi^f(t,x^{*})) > 0$ for all $t\in[0,T_I(x^{*}))$.
Let $\bar \varepsilon = \frac{1}{2}\min\{T_I(x^{*}),\varepsilon\}$ and $\tilde{t}_1 := T_I(x^{*}) - \bar \varepsilon$.
Using item $2)$ of Assumption~\ref{ass:basic_data}, $\phi^f$ can be extended to the interval $[T_I(x^{*}),T_I(x^{*})+\bar \varepsilon]$,\footnote{{Note that $\phi^f$ is a solution to $\dot x = f(x)$ only, without constraints.}} if needed, by decreasing $\varepsilon$. Then, by using the property that $L_f(h(\bar x))<0$ and $h(\bar x) = 0$, we obtain\footnote{
This property follows from the fact that for each $t\in[0,T_I(x^{*}))$,
$h(\phi^f(t,x^{*})) > 0$ and for each $\bar x\in M \cap D$, $L_f(h(\bar x))<0$ and $h(\bar x) = 0$,
which imply that
there exists small enough $\bar \varepsilon>0$ such that
$h(\tilde x)<0$ with $\tilde x = \phi^f(\tilde{t}_2, x^{*})$ and $\tilde{t}_2 = T_I(x^{*})+ \bar \varepsilon$.}  $h(\tilde x)<0$,
where $\tilde x = \phi^f(\tilde{t}_2, x^{*})$ and $\tilde{t}_2 = T_I(x^{*})+ \bar \varepsilon$. From item $2)$ of Assumption~\ref{ass:basic_data},
$f$ is continuous on $M\cap \MC$ and differentiable on a neighborhood of $M\cap \MC$, which implies Lipschitz continuity of $f$.
Then, by \cite[Theorem 3.5]{Khalil:2002}, it follows that solutions to $\dot x = f(x)$ depend continuously on the initial conditions, that is,
for every $\epsilon>0$ there exists $\delta=\delta(\epsilon)>0$ such that if
$|x-x^*|<\delta$, then $|\phi^f(t,x)-\phi^f(t,x^{*})|<\epsilon$ for all $t\in\dom \phi^f$.
Therefore,
given $\epsilon=\min\{|\phi^f(\tilde{t}_1, x^{*})|_{M\cap D}, |\phi^f(\tilde{t}_2, x^{*})|_{M\cap D} \}$,
there exists $\delta>0$ such that for all $x\in M\cap C$ satisfying $|x-x^{*}|<\delta$, $|\phi^f(t,x)-\phi^f(t,x^{*})| <\epsilon$ for all $t\in[0,T_I(x^{*})+\bar \varepsilon]$. Then, it follows that $h(\phi^f(\tilde{t}_1,x))>0$, $h(\phi^f(\tilde{t}_2,x))<0$,
and by continuity of $h$, $\tilde{t}_1 < T_I(x) < \tilde{t}_2$.
Therefore, $|T_I(x) - T_I(x^{*})|< \varepsilon$, which implies $T_I$ is continuous at any $x^{*}$ such that $x^{*}\in M\cap C$ and $0<T_I(x^{*})<\infty$.
\end{proof}
}

Next, we show that the function $x \mapsto T_{I}(x)$ is also continuous on a subset of $\MO$.

\begin{lemma}\label{lem:continuityTI}
Given a hybrid system $\MH=(C,f,D,g)$ on $\BRn$ and a closed set $M\subset\BRn$ satisfying Assumption~\ref{ass:basic_data}, suppose that $\MH|_{M} = (M\cap C, f, M\cap D, g)$ has a unique hybrid limit cycle $\MO\subset M\cap C$ defined by the flow periodic solution $\phi^*$.
Then, $T_I$ is continuous on $\MO\setminus \{\phi^*(t^{*},0)\}$,
where $t^{*}$ is such that $(t^{*},0), (t^{*},1) \in \dom \phi^*$, namely, $(t^{*},0)$ is a jump time of $\phi^{*}$ and
$\phi^{*}(t^{*},0)$ is the point in $M\cap D$ at which $\phi^{*}$ jumps.
\end{lemma}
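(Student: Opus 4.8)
The goal is to show that $T_I$ is continuous on $\MO\setminus\{\phi^*(t^*,0)\}$, i.e., on the hybrid limit cycle with the single pre-jump point removed. The natural approach is to apply Lemma~\ref{lem:continuity_impact_time}, which already gives continuity of $T_I$ at every point of $\mathcal{X}=\{x\in M\cap C: 0<T_I(x)<\infty\}$. So the entire task reduces to verifying that $\MO\setminus\{\phi^*(t^*,0)\}\subset\mathcal{X}$, that is, for every $x\in\MO$ with $x\neq\phi^*(t^*,0)$ we have (i) $x\in M\cap C$, and (ii) $0<T_I(x)<\infty$.

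\textbf{Step 1: membership in $M\cap C$.} By hypothesis $\MO\subset M\cap C$, so (i) is immediate for every $x\in\MO$; in particular it holds on $\MO\setminus\{\phi^*(t^*,0)\}$.

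\textbf{Step 2: finiteness of $T_I$.} Fix $x\in\MO$, $x\neq\phi^*(t^*,0)$. Since $\MO$ is generated by the flow periodic solution $\phi^*$ with period $T^*$ and one jump per period (Definition~\ref{def:periodic_orbit}), there is $(t,j)\in\dom\phi^*$ with $\phi^*(t,j)=x$; using periodicity we may take $j\in\{0,1\}$ and $t\in[0,T^*]$, and in fact we may assume $x=\phi^*(t,0)$ for some $t\in[0,t^*)$ (the portion of the period spent flowing toward the jump), since the remaining portion after the jump is a time-shifted copy of a sub-arc of the first piece by periodicity. Along this flow piece, the continuous-time arc $t\mapsto\phi^f(t,x)$ reaches the jump point $\phi^*(t^*,0)\in M\cap D$ after the finite flow time $t^*-t>0$. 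Hence $\{s\ge 0:\phi^f(s,x)\in D\}$ is nonempty, and from the definition \eqref{eq:TI} we get $T_I(x)\le t^*-t<\infty$.

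\textbf{Step 3: strict positivity of $T_I$.} We must rule out $T_I(x)=0$, i.e., $x\in M\cap D$. If $x\in M\cap D$, then since $x\in\MO$ and $\MO$ is transversal to $M\cap D$ by Lemma~\ref{lem:transversal}, the limit cycle intersects $M\cap D$ at exactly one point, namely $\bar x=\MO\cap(M\cap D)$; and by the one-jump-per-period structure together with item~3) of Assumption~\ref{ass:basic_data} (which forces an immediate jump whenever a solution reaches $M\cap D$, as used in the proof of Lemma~\ref{lem:transversal}), that unique point must be the pre-jump point $\phi^*(t^*,0)$. But $x\neq\phi^*(t^*,0)$, contradiction. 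Therefore $x\notin M\cap D$. Combined with $x\in M\cap C$ and item~1) of Assumption~\ref{ass:basic_data} (so $h(x)\ge 0$ but $x\notin D$ means either $h(x)>0$ or $h(x)=0$ with $L_fh(x)>0$), the flow arc from $x$ does not lie in $D$ at time $0$, and moreover stays out of $D$ for a short positive time by continuity of $h$ and $L_fh$. Hence $T_I(x)>0$.

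\textbf{Conclusion.} Steps 1--3 establish $\MO\setminus\{\phi^*(t^*,0)\}\subset\mathcal{X}$, so Lemma~\ref{lem:continuity_impact_time} yields continuity of $T_I$ at every such point, which is the claim. The main obstacle I anticipate is Step~3 at points $x\in\MO$ with $h(x)=0$ but $x\notin D$ (if such points exist on the orbit other than the jump point): one must argue carefully, using $L_fh(x)>0$ there and continuity, that the flow immediately moves into $\{h>0\}$ and does not re-enter $D$ before the designated jump time — this is where transversality and the precise form of $D$ in Assumption~\ref{ass:basic_data} do the real work. A secondary subtlety is making the reduction to "$x=\phi^*(t,0)$, $t\in[0,t^*)$" rigorous using periodicity and the fact that the post-jump arc's future coincides (up to time shift) with a tail of the pre-jump arc.
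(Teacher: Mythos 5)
Your proof is correct and takes essentially the same route as the paper: both reduce the claim to verifying $\MO\setminus\{\phi^{*}(t^{*},0)\}\subset\mathcal{X}=\{x\in M\cap C:0<T_I(x)<\infty\}$ and then invoke Lemma~\ref{lem:continuity_impact_time}, with the paper citing forward invariance (Lemma~\ref{lem:closeness_orbit}) where you use periodicity and transversality (Lemma~\ref{lem:transversal}), and your Step~3 on strict positivity is in fact more explicit than the paper's own argument. One minor slip in Step~2: the post-jump points $\phi^{*}(s,1)$ with $s\in(t^{*},T^{*})$ cannot be rewritten as $\phi^{*}(t,0)$ with $t\in[0,t^{*})$ (it is the pre-jump arc that is a time-shifted copy of part of the $j=1$ arc, not the reverse), but the same periodicity argument gives $T_I(x)=t^{*}+T^{*}-s\in(0,T^{*}]$ at those points, so the conclusion is unaffected.
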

\begin{proof}
Consider a hybrid limit cycle $\MO\subset M\cap C$ defined by the flow periodic solution $\phi^*$.
For $(t^*,0), (t^*,1) \in \dom \phi^*$, we have $\phi^*(t^*,0) \in M\cap D$.
By Lemma \ref{lem:closeness_orbit}, since $\MO$ is forward invariant, for all $x\in \MO\setminus \{\phi^*(t^*,0)\}$,
there exists
$t>t^*$ such that $\phi^*(t,1)$ has a jump, which implies that $0<T_I(x)<\infty$.
By Lemma~\ref{lem:continuity_impact_time},
$T_I$ is continuous at points in $\mathcal{X}\!:=\!\{x \!\in\! M\!\cap\! C: 0 \!<\! T_{I}(x) \!<\! \infty\}$.
Then, $T_I$ is continuous on $\MO\setminus \{\phi^*(t,0)\}$.
\end{proof}

\IfTAC{\vspace{-2mm}}{}

\subsection{A Necessary Condition via Forward Invariance}
\label{sec4:B}

Following the spirit of \Cblue{the} necessary condition for existence of limit cycles in nonlinear continuous-time
systems in \cite{Ghaffari:2009}, \SM we have the following necessary condition for general hybrid systems with a hybrid limit cycle given by the zero-level set of a smooth enough function. \ST

\begin{proposition}\label{thm:necessary}
Consider a hybrid system $\MH=(C,f,$ $D,g)$ on $\BRn$
satisfying the hybrid basic conditions
with $f$ continuously differentiable.
Suppose every solution $\phi \in \MS_{\MH}$ is unique
and
there exists a hybrid limit cycle $\MO$
for $\MH$ with period $T^{*}>0$ 
\Blue{satisfying}
$$\Blue{\MO\subseteq \{x\in \BRn: p(x) = 0\},}$$
where $p: \BRn\RA \BR$ is
twice continuously differentiable on an open neighborhood $\MU$ of $\MO$.
Then, there exists \IfTAC{}{a function} $W: \BRn\RA \SM \BR_{\geq 0} \ST $ 
that is twice continuously differentiable
on $\MU$ and
\begin{equation}\label{eqn:necess00}
W(x)\geq 0 \qquad \forall x\in \MO,
\end{equation}
\begin{equation}
\langle \grad W(x), f(x)\rangle = 0 \qquad \forall x\in \MO \cap C,
\label{eqn:necess1}
\end{equation}
\begin{equation}
\langle\grad\langle \grad W(x), f (x) \rangle,f (x) \rangle= 0 \qquad \forall x\in \MO \cap C,
\label{eqn:necessx}
\end{equation}
\begin{equation}
W(g(x))-W(x) = 0 \qquad \forall x\in \MO \cap D.
\label{eqn:necess2}
\end{equation}
\Blue{Furthermore, if $p$ is such that
$p(\bar{x})\neq 0$ for some $\bar{x}\in C\cup D$, then $W$ is such that \eqref{eqn:necess00} holds with strict inequality.}
\end{proposition}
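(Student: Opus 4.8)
The plan is to take $W$ to be (a constant shift of) the square of $p$, so that the four displayed identities collapse to elementary facts about $p$ along $\MO$. First I would set $W:=p^{2}$: this $W$ is twice continuously differentiable on $\MU$ because $p$ is, it satisfies $W(x)\geq 0$ for all $x\in\BRn$, and, since $\MO\subseteq\{x: p(x)=0\}$, it vanishes identically on $\MO$, so \eqref{eqn:necess00} holds. The remaining three identities then follow by the chain rule from two facts: (i) $p$, $L_{f}p:=\langle\grad p,f\rangle$, and $L_{f}(L_{f}p):=\langle\grad\langle\grad p,f\rangle,f\rangle$ all vanish on $\MO\cap C$; and (ii) $g$ maps $\MO\cap D$ into $\{x: p(x)=0\}$. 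Indeed, $\grad W=2p\,\grad p$ vanishes on $\MO\cap C$ (where $p=0$), which gives \eqref{eqn:necess1}; moreover $\langle\grad W,f\rangle=2p\,L_{f}p$, hence $\langle\grad\langle\grad W,f\rangle,f\rangle=2(L_{f}p)^{2}+2p\,L_{f}(L_{f}p)$ vanishes on $\MO\cap C$ by (i), which gives \eqref{eqn:necessx}; and for $x\in\MO\cap D$ we get $W(g(x))-W(x)=p(g(x))^{2}-p(x)^{2}=0$ by (ii) and $p(x)=0$, which gives \eqref{eqn:necess2}.

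To establish (i), let $\phi^{*}$ be the flow periodic solution generating $\MO$. On each flow interval $[t_{j},t_{j+1}]$ the arc $t\mapsto\phi^{*}(t,j)$ solves $\dot x=f(x)$ with $f\in C^{1}$, so it is $C^{2}$ in $t$; consequently $t\mapsto p(\phi^{*}(t,j))$ is $C^{2}$ with derivatives $L_{f}p(\phi^{*}(t,j))$ and $L_{f}(L_{f}p)(\phi^{*}(t,j))$, and since $p(\phi^{*}(t,j))\equiv 0$ these derivatives vanish on the interval and hence, by continuity, on its closure; as every point of $\MO$ lies on such a closed flow arc and these arcs are contained in $C$, (i) follows. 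To establish (ii), the hypothesis that every solution is unique is essential: a point from which both a flow and a jump are possible would admit two distinct maximal solutions, so no point of $\MO$ at which $\phi^{*}$ flows can belong to $D$; thus $\MO\cap D$ consists exactly of the jump point(s) of $\phi^{*}$, and since, again by uniqueness, every solution starting in $\MO$ is a tail of $\phi^{*}$, the set $\MO$ is forward invariant, so $g(\MO\cap D)\subseteq\MO\subseteq\{x: p(x)=0\}$.

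For the final assertion, fix $\bar x\in C\cup D$ with $p(\bar x)\neq 0$ and take instead $\widetilde W:=(p-p(\bar x))^{2}$, again a square, hence nonnegative and $C^{2}$ on $\MU$. Since $p(\bar x)$ is constant, $\grad\widetilde W=2(p-p(\bar x))\grad p$ and $\langle\grad\langle\grad\widetilde W,f\rangle,f\rangle=2(L_{f}p)^{2}+2(p-p(\bar x))\,L_{f}(L_{f}p)$, and evaluating on $\MO\cap C$ and using all three conclusions of (i) (in particular $L_{f}(L_{f}p)=0$) makes these vanish, so \eqref{eqn:necess1} and \eqref{eqn:necessx} hold; likewise $\widetilde W(g(x))-\widetilde W(x)=(0-p(\bar x))^{2}-(0-p(\bar x))^{2}=0$ for $x\in\MO\cap D$ by (ii), so \eqref{eqn:necess2} holds; and $\widetilde W\equiv p(\bar x)^{2}>0$ on $\MO$, so \eqref{eqn:necess00} now holds with strict inequality (equivalently, one may take $\widetilde W:=p^{2}+p(\bar x)^{2}$). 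The only steps that are not routine chain-rule bookkeeping are (ii) --- pinning down $\MO\cap D$ and the forward invariance of $\MO$ from uniqueness of solutions --- and the closure step in (i) that propagates the vanishing of $L_{f}p$ and $L_{f}(L_{f}p)$ from the relative interiors of the flow arcs to their endpoints; I expect these to be the main obstacles.
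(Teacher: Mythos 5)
Your proposal is correct and follows essentially the same route as the paper's proof: differentiate $p\circ\phi^{f}\equiv 0$ along the flow to obtain the first- and second-order Lie derivative identities for $p$ on $\MO\cap C$, use forward invariance of $\MO$ (via uniqueness of solutions) to get $p(g(x))=p(x)=0$ on $\MO\cap D$, and then define $W$ as an even power of $p-p(\bar x)$ (the paper uses an arbitrary even exponent $\bar n$, you use $2$). Your explicit treatment of the non-strict case via $W=p^{2}$ is a minor completeness improvement over the paper, which only constructs $W$ when a point $\bar x$ with $p(\bar x)\neq 0$ exists.
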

\IfTAC{
\noindent A proof can be found in \CBlue{\cite[Proposition 4.14]{lou:TAC:2022}}, where the forward invariance property of $\mathcal{O}$ in Lemma~\ref{lem:closeness_orbit} is used.
}{
\begin{proof}
Using forward invariance of $\MO$ (see Lemma~\ref{lem:closeness_orbit}),
under the assumption that
$p(x_0)=0$ for all $x_0\in \MO$ and $p$ is continuously differentiable on $\MU$,
we have that
the solution $t \mapsto \phi^f(t,x_0)$ to $\dot x = f(x)$ with $x_0 \in g(D)\cap\MO$ satisfies
\begin{equation}\label{eq:pf-Aux}
\frac{\partial p\circ\phi^f}{\partial t}(t,x_0) = 0 \qquad \forall t \in (0,T^{*}).
\end{equation}
Since forward invariance of $\MO$ implies that $\text{rge}\ \phi^f\subset \MO$,
this property leads to
\begin{equation}\label{eq:pf}
\langle \grad p(x), f (x) \rangle = 0 \qquad \forall x\in\MO \cap C.
\end{equation}

Similarly, we have that
the solution $t \mapsto \phi^f(t,x_0)$ to $\dot x = f(x)$ with $x_0 \in g(D)\cap\MO$ satisfies
\begin{equation}\label{eq:pf2-Aux}
\frac{\partial^2 p\circ\phi^f}{\partial t}(t,x_0) = 0 \qquad \forall t \in (0,T^{*}),
\end{equation}
which, since
\begin{equation}
  \begin{array}{llllcccc}
\frac{\partial}{\partial t}\frac{\partial p\circ\phi^f}{\partial t}(t,x_0)
&=&
\frac{\partial}{\partial t}
\left.\langle \grad p(x), f (x) \rangle\right|_{x = \phi^f(t,x_0)}\nonumber\\
&=&
\left.\langle\grad\langle \grad p(x), f (x) \rangle,f (x) \rangle\right|_{x = \phi^f(t,x_0)}\nonumber
  \end{array}
\end{equation}
leads to
\begin{equation}\label{eq:pf2}
\langle\grad\langle \grad p(x), f (x) \rangle,f (x) \rangle = 0 \qquad \forall x \in \MO \cap C.
\end{equation}
In fact,
if there exists $\overline{x} \in \MO \cap C$ such that
$$
\langle\grad\langle \grad p(\overline{x}), f (\overline{x}) \rangle,f (\overline{x}) \rangle \not = 0,
$$
then $\langle \grad p(\overline{x}), f (\overline{x}) \rangle \not = 0$,
which contradicts \eqref{eq:pf}.

Since $\MO$ is forward invariant and
every maximal solution to $\MH$ is unique,
for each $x\in \MO\cap D$ we have $g(x) \in \MO$. In addition, since $p$ vanishes on
$\MO$, we obtain
\begin{equation}\label{eq:pg}
p(g(x))-p(x)=0 \qquad \forall x\in \MO \cap D.
\end{equation}
%
%

Now, we use the properties above to construct a function $W$ such that
it satisfies the properties stated in the result.
\Blue{By assumption, since there exists $\bar{x}\in C\cup D$ such that $p(\bar{x})\neq 0$,}
following \cite{Ghaffari:2009},
define the function $W$ as
$$W(x) = \left( p(x)-p(\bar{x}) \right)^{\bar{n}},$$
where $\bar{n} \in \mathbb{N}\setminus\{0\}$ is an arbitrary positive even integer.
Then, $W(x)\geq 0$ and
$W$ is continuously differentiable on $\MU$.
In particular,
\Blue{since $p(x)=0$ for all $x\in \MO$ and $p(\bar{x})\neq 0$,}
we have $W(x)\Blue{> 0}$ for all $x\in \MO.$
%
%
%
Using \eqref{eq:pf} and \eqref{eq:pf2}, we have, for all $x\in\MO \cap C$,
\begin{align}
&
\langle \grad W(x), f(x) \rangle
=
\bar{n} \left( p(x) - p(\bar{x}) \right)^{\bar{n}-1}
\langle \grad p(x), f (x) \rangle =0,\nonumber
\end{align}
and
\IfConf{
\begin{align}
&\langle\grad\langle \grad W(x), f (x) \rangle,f (x) \rangle\\
= &\bar{n}(\bar{n}-1) \left( p(x) - p(\bar{x}) \right)^{\bar{n}-2} \langle \grad p(x), f (x) \rangle^{2}\nonumber\\
&  + \bar{n} \left( p(x) - p(\bar{x}) \right)^{\bar{n}-1}
\langle\grad\langle \grad p(x), f (x) \rangle,f (x) \rangle =0. \nonumber
\end{align}
}{
\begin{align}
\langle\grad\langle \grad W(x), f (x) \rangle,f (x) \rangle
&= \bar{n}(\bar{n}-1) \left( p(x) - p(\bar{x}) \right)^{\bar{n}-2} \langle \grad p(x), f (x) \rangle^{2}\nonumber\\
& \quad + \bar{n} \left( p(x) - p(\bar{x}) \right)^{\bar{n}-1}
\langle\grad\langle \grad p(x), f (x) \rangle,f (x) \rangle =0. \nonumber
\end{align}
}
Finally, using \eqref{eq:pg}, we have, for all $x \in \MO \cap D$,
\begin{align}
W(g(x))-W(x)
&= \left( p(g(x)) - p(\bar{x}) \right)^{\bar{n}} - \left( p(x) - p(\bar{x}) \right)^{\bar{n}} \nonumber\\
&= \left( p(x) - p(\bar{x}) \right)^{\bar{n}} - \left( p(x) - p(\bar{x}) \right)^{\bar{n}}
=0.
\nonumber
\end{align}
\end{proof}
}

Proposition~\ref{thm:necessary} provides a necessary condition, that by seeking for a function $W$ with the properties therein,
can be used to identify the existence of a hybrid limit cycle with period $T^*$.
In addition, as exploited in \cite[Theorem 1]{Ghaffari:2009}, it can be used to determine the stability of limit cycles
for continuous-time systems.


The following example illustrates the result in Proposition~\ref{thm:necessary}.
\begin{example}\label{exam:TCPWx}
Consider the hybrid congestion control system in Example~\ref{exam:TCP3}.
The set defined by points $(q, r)$ such that $q-\frac{(r-B)^2}{2a}=R$
with $R=q_{\max}-\frac{B^2(m-1)^2}{2a(m+1)^2}$
represents a hybrid limit cycle for $\MH_{\textrm{\tiny TCP}}$, namely,
$$\MO:=\left\{(q,r)\in M_{\textrm{\tiny TCP}}: q-\frac{(r-B)^2}{2a}=R\right\},$$
is a hybrid limit cycle. In particular, the state vector $x=(q,r)$ moves clockwise
within $\MO$ as depicted in \IfTAC{\Figure~\ref{tcp:region}}{\Figure~\ref{tcp:fig1}}.
Using the flow and jump maps,
it is verified that $\MO$ is forward invariant.
Note that when $\MO\cap D_{\textrm{\tiny TCP}}$, $q=q_{\max}$ and $r=2B/(m+1)$.
To validate \Blue{Proposition}~\ref{thm:necessary},
\IfTAC{define the continuously differentiable functions $p(x):=q-\frac{(r-B)^2}{2a}-R$,
\Blue{which satisfies $p(0)=-q_{\max}-\frac{4m}{(m+1)^2}\frac{B^2}{2a}\neq 0$,}
 and $W: \BR^2\RA \BR_{\geq 0}$ as}
{
define the continuously differentiable functions $p(x):=q-\frac{(r-B)^2}{2a}-R$,
pick the point $\bar{x}:=(0,0)\notin \MO$, \Blue{which satisfies
$p(\bar{x})=-\frac{B^2}{2a}-R=-q_{\max}-\frac{4m}{(m+1)^2}\frac{B^2}{2a}<0$,}
 and define a continuously differentiable function
$W: \BR^2\RA \BR_{\geq 0}$ as 
}\IfTAC{\begin{equation}
{\small
   W(x)
   =\Big(q-\frac{(r-B)^2}{2a}+\frac{B^2}{2a} \Big)^2 \Blue{>0 \quad \forall x\in \MO}.}
\end{equation}}{\begin{equation}
  \begin{array}{llllcccc}
   W(x)& = & \Big( \big(q-\frac{(r-B)^2}{2a}-R \big)-\big(-\frac{B^2}{2a}-R \big) \Big)^2\nonumber\\
    & = & \Big(q-\frac{(r-B)^2}{2a}+\frac{B^2}{2a} \Big)^2 \Blue{>0 \quad \forall x\in \MO}.
  \end{array}  
\end{equation}}
This function satisfies \eqref{eqn:necess1}-\eqref{eqn:necess2} since
$\langle \grad p(x), f_{\textrm{\tiny TCP}}(x) \rangle=
[1\; \frac{B-r}{a}]f_{\textrm{\tiny TCP}}(x)=r-B-(r-B)=0$
for all $x\in  C_{\textrm{\tiny TCP}}.$ Then, for all
$x\in \MO\cap M_{\textrm{\tiny TCP}}\cap C_{\textrm{\tiny TCP}},$
\IfConf{
\begin{equation}
{\small
   \langle \grad W(x), f_{\textrm{\tiny TCP}}\rangle
\!=\!
   2\Big( q-\frac{(r-B)^2}{2a}+\frac{B^2}{2a} \Big)(r-B-r+B)
\!=\!0\nonumber
}\end{equation}}
{\begin{equation}
   \langle \grad W(x), f_{\textrm{\tiny TCP}}\rangle
=
   2\Big( q-\frac{(r-B)^2}{2a}+\frac{B^2}{2a} \Big)(r-B-r+B)
=0\nonumber
\end{equation}}and
\IfTAC{
$
\langle\grad\langle \grad W(x), f_{\textrm{\tiny TCP}}(x) \rangle, f_{\textrm{\tiny TCP}}(x) \rangle =0.
$
}
{
\begin{equation}
  \begin{array}{llllcccc}
  &&
\langle\grad\langle \grad W(x), f_{\textrm{\tiny TCP}}(x) \rangle, f_{\textrm{\tiny TCP}}(x) \rangle\nonumber\\
   & = &
2 \langle \grad p(x), f_{\textrm{\tiny TCP}}(x) \rangle^{2}+
2\Big( q-\frac{(r-B)^2}{2a}+\frac{B^2}{2a} \Big) \nonumber\\
&& \times \langle\grad\langle \grad p(x), f_{\textrm{\tiny TCP}}(x) \rangle, f_{\textrm{\tiny TCP}}(x) \rangle
\nonumber\\
 & = & 0. \nonumber
  \end{array}
\end{equation}
}
Moreover,  
for all $x\in \MO\cap  M_{\textrm{\tiny TCP}}\cap D_{\textrm{\tiny TCP}}$,
using the fact that $q=q_{\max}$ and $r=2B/(m+1)$, we have
\IfTAC{
$
   W(g_{\textrm{\tiny TCP}}(x))-W(x)=0.
$
}
{
\begin{equation}
  \begin{array}{llllcccc}
   W(g_{\textrm{\tiny TCP}}(x))-W(x) & = &  \Big( q-\frac{(mr-B)^2}{2a}+\frac{B^2}{2a} \Big)^2\nonumber\\
   &&-\Big( q-\frac{(r-B)^2}{2a}+\frac{B^2}{2a} \Big)^2\nonumber\\
    &=& 0.\nonumber
  \end{array}
\end{equation}}
\end{example}

\IfTAC{}{
The following example illustrates the results in
Lemma~\ref{lem:closeness_orbit}, Lemma~\ref{lem:transversal}, Lemma \ref{lem:continuity_impact_time} as well as Proposition~\ref{thm:necessary}.
\begin{example}\label{exam:AS}
Consider a hybrid system $\MH_{\mathrm{S}} = (C_{\textrm{\tiny S}},f_{\textrm{\tiny S}},D_{\textrm{\tiny S}},g_{\textrm{\tiny S}})$ with state $x = (x_1,x_2)$ and data
\begin{equation}\label{eq:simple_ex}
  \MH_{\mathrm{S}} \left\{
  \begin{aligned}
    \dot{x} & = f_{\textrm{\tiny S}}(x):=
    \rb\begin{bmatrix}
    x_2\\
    -x_1
    \end{bmatrix}
     && x \in C_{\textrm{\tiny S}}, \\
    x^{+} & = g_{\textrm{\tiny S}}(x):=
    \begin{bmatrix}
    \rc\\
    0
    \end{bmatrix}
     && x \in D_{\textrm{\tiny S}}, \\
  \end{aligned}
  \right.
\end{equation}
where $C_{\textrm{\tiny S}}:= \{x\in \mathbb{R}^2: x_1\geq 0 \}$ and $D_{\textrm{\tiny S}}:=\{x\in \mathbb{R}^2: x_1 = 0, x_2\leq 0\}$.
The two parameters $\rb$ and $\rc$ satisfy $\rb>0$ and $\rc>0$.
Since $C_{\textrm{\tiny S}}$ and $D_{\textrm{\tiny S}}$ are closed, and the flow and jump maps are continuous with $f_{\textrm{\tiny S}}$ continuously differentiable,
the hybrid system $\MH_{\mathrm{S}}$ satisfies the hybrid basic conditions.
Note that every solution $\phi\in \MS_{\MH_{\mathrm{S}}}$ is unique.
The flow dynamics characterizes an oscillatory behavior.
In fact, a \Cblue{maximal} solution $\phi^{*}$ to $\MH_{\mathrm{S}}$ from $\phi^{*}(0,0) = (\rc, 0)$ is a unique flow periodic solution with period $T^{*} = \frac{\pi}{2\rb}$.
As depicted in~\Figure~\ref{SecondOrder:fig1},
a solution $\phi$ starting from the point $P_{0}=\{(5, 0)\}$ flows to the point $P_{1}$,
jumps to the point $P_{2}$,
flows to the point $P_{3}$,
and jumps back to the point $P_{2}$.
\begin{figure}[!ht]
\psfrag{X1}[][][1.0]{$x_1$}
\psfrag{X2}[][][1.0]{$x_2$}
\psfrag{P0}[][][1.0]{$P_0$}
\psfrag{P1}[][][1.0]{$P_1$}
\psfrag{P2}[][][1.0]{$P_2$}
\psfrag{P3}[][][1.0]{$P_3$}
\centering{
\includegraphics[width=\figwidth]{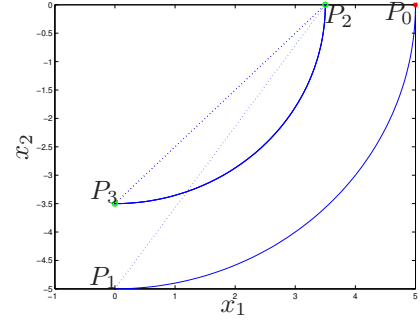}
}
\caption{
Phase plot of a solution to $\MH_{\mathrm{S}}$ from $\phi(0,0) = (5, 0)$
where $P_{0}=\{(5, 0)\}$, $P_{1} = \{(0,-5)\}$, $P_{2} = \{(c, 0)\}$ and $P_{3} = \{(0,-c)\}$.
Here, we set $\rb=0.8$ and $\rc=3.5$.
}
\label{SecondOrder:fig1}
\end{figure}

Define the function $h:\mathbb{R}^2\to \mathbb{R}$ as $h(x) = x_1$
for all $x\in \mathbb{R}^2$. Then, the sets $C_{\textrm{\tiny S}}$ and $D_{\textrm{\tiny S}}$
can be written as $C_{\textrm{\tiny S}} = \{x\in \mathbb{R}^2: h(x) \geq 0\}$ and
$D_{\textrm{\tiny S}} = \{x\in \mathbb{R}^2: h(x) =0, L_{f_{\textrm{\tiny S}}}h(x) \leq 0\}$.
Define a compact set $M_{\textrm{\tiny S}}: = \{x\in \mathbb{R}^2: |x|\geq c-\epsilon, x_2\leq 0 \}$
with $\epsilon\in (0,c)$.
Then, we obtain that for all $x\in M_{\textrm{\tiny S}}\cap D_{\textrm{\tiny S}}$, $L_{f_{\textrm{\tiny S}}}h(x)=\rb x_{2}<0$ and
$\MO \subset M_{\textrm{\tiny S}}\cap C_{\textrm{\tiny S}}$.
In addition, it is found that the invariant set defined by points $(x_1, x_2)$
such that $x_1^2+x_2^2=c^2$
represents a hybrid limit cycle $\MO$, i.e.,
$$\MO:=\left\{(x_1,x_2)\in M_{\textrm{\tiny S}}\cap C_{\textrm{\tiny S}}: x_1^2+x_2^2=c^2 \right\},$$
along which the state vector $x=(x_1,x_2)$ moves clockwise.
Using the flow and jump maps,
it is verified that $\MO$ is forward invariant.
%
Define the continuously differentiable function $p(x):=x_1^2+x_2^2-c^2$.
To validate Proposition~\ref{thm:necessary},
pick the point $\bar{x}:=(0,0)\notin \MO$ satisfying $p(\bar{x})
=-c^2\neq 0$
and define a continuously differentiable function 
$W: \BR^2\RA \BR_{\geq 0}$ 
as $\forall x\in\MO$,
$W(x)=(x_1^2+x_2^2)^2\geq 0$.
This function
satisfies \eqref{eqn:necess1}-\eqref{eqn:necess2} using the fact
$\langle \grad p(x), f_{\textrm{\tiny S}}(x) \rangle=
[2x_1\; 2x_2]f_{\textrm{\tiny S}}(x)=0$
for all $x\in  C_{\textrm{\tiny S}}$. In particular,
$\forall x\in \MO\cap M_{\textrm{\tiny S}}\cap C_{\textrm{\tiny S}},$
\begin{equation}
\langle \grad W(x), f_{\textrm{\tiny S}}(x) \rangle=2(x_1^2+x_2^2)[2x_1\; 2x_2]f_{\textrm{\tiny S}}(x)=0, \nonumber
\end{equation}
\begin{equation}
\langle\grad\langle \grad W(x), f_{\textrm{\tiny S}}(x) \rangle, f_{\textrm{\tiny S}}(x) \rangle=0, \nonumber
\end{equation}
and $\forall x\in \MO\cap  M_{\textrm{\tiny S}}\cap D_{\textrm{\tiny S}}$,
$W(g_{\textrm{\tiny S}}(x))-W(x)=0,$
where the condition $x=(0,-c)$
for $\MO\cap D_{\textrm{\tiny S}}$
is applied. Therefore, Proposition~\ref{thm:necessary} is verified.

Note that the hybrid limit cycle
$\MO:=\left\{(x_1,x_2)\in M_{\textrm{\tiny S}}\cap C_{\textrm{\tiny S}}: x_1^2+x_2^2=c^2 \right\}$
is bounded, otherwise a flow periodic solution $\phi^{*}$
will escape to infinity in finite time which leads to a contradiction with the definition of a flow periodic solution.
Moreover, due to the closeness of $M_{\textrm{\tiny S}}\cap C_{\textrm{\tiny S}}$, $\MO$ is closed; hence, $\MO$ is compact.
By the data of $\MH_{\mathrm{S}}$ in \eqref{eq:simple_ex},
one can verify that each $\phi\in \MS_{\MH}(\MO)$ is complete and satisfies $\text{rge}\ \phi\subset \MO$.
Then, $\MO$ is compact and forward invariant,
which illustrates Lemma~\ref{lem:closeness_orbit}.
In addition, since for all $x\in M_{\textrm{\tiny S}}\cap D_{\textrm{\tiny S}}$, $L_{f_{\textrm{\tiny S}}}h(x)=\rb x_{2}<0$ and
$\bar{x}=\MO\cap (M_{\textrm{\tiny S}}\cap D_{\textrm{\tiny S}})=(-c,0)$,
$\MO$ is transversal to $M_{\textrm{\tiny S}}\cap D_{\textrm{\tiny S}}$, which illustrates Lemma~\ref{lem:transversal}.
Finally, for the hybrid limit cycle $\MO$ defined by a flow periodic solution $\phi^*$
and for any $x\in \MO\setminus \{\phi^*(t,0)\}$,
$T_I(x)\in [0, \frac{\pi}{2b}]$ is continuous which illustrates Lemma~\ref{lem:continuity_impact_time}.
\end{example}
}

\section{Existence of Hybrid Limit Cycles}

In this section, we introduce a
stability notion that relates a solution to nearby solutions, which enables us to
provide sufficient conditions for the existence of
hybrid limit cycles
 for the class of hybrid systems in \eqref{sec2:eq1}. 

\subsection{Zhukovskii Stability for Hybrid Systems}\label{sec:notions}

Zhukovskii stability for a continuous-time system
consists of
the property
that, with a suitable reparametrization of perturbed trajectories, Lyapunov stability
implies Zhukovskii stability; see, e.g., \cite{Yang:2000,Leonov:2006}. We extend this notion to hybrid systems and establish links to the existence of hybrid limit cycles. To this end, inspired by \cite{Ding:2004,Yang:2000,Leonov:2006}, 
we employ
the family of maps
$\MT$ defined by
$$\MT  =\{\tau(\cdot)\! :\!\; \tau: \BR_{\geq 0} \!\RA\! \BR_{\geq 0}
\Blue{\text{ is a homeomorphism}}, \tau(0)\!=0\}.$$

A map $\tau$ in the family $\MT$
is employed to reparametrize ordinary time for the trajectories of the hybrid system \eqref{sec2:eq1} and formulate stability and attractivity notions involving
the reparametrized trajectories, as formulated next.

\begin{definition}\label{def:ZAS}
Consider a hybrid system $\MH$ on $\BRn$ as in \eqref{sec2:eq1}.
A maximal solution $\phi_1$ to $\MH$ is
said to be
\begin{itemize}
\item [1)]
\emph{Zhukovskii stable} (ZS) if for each
$\varepsilon>0$ there exists $\delta>0$ such that
for each $\phi_2 \in \MS_{\MH}(\phi_1(0,0) + \delta\BB)$
there exists $\tau \in \MT$ such that for each $(t,j)\in \dom \phi_1$ we have
$(\mtau, j) \in \dom \phi_2$ and $|\phi_1(t,j)-\phi_2(\mtau, j )| \leq \varepsilon;$
%
\item [2)]
\emph{Zhukovskii locally attractive} (ZLA) if there exists
$\mu>0$ such that
for each $\phi_2 \in \MS_{\MH}(\phi_1(0,0) + \mu\BB)$
there exists $\tau \in \MT$ such that
for each $\varepsilon>0$
there exists $T>0$ for which
    we have that
$(t,j)\in \dom \phi_1$ and $t+j\geq T$
imply
$(\mtau, j) \in \dom \phi_2$ and $|\phi_1(t,j) - \phi_2(\mtau, j )| \leq \varepsilon;$
\item [3)]
\emph{Zhukovskii locally asymptotically stable} (ZLAS)
if it is both ZS and ZLA.
\end{itemize}
\end{definition}

\begin{remark}
The map $\tau$ in Definition~\ref{def:ZAS}
reparameterizes the flow time of the solution
$\phi_2$. In particular, the ZS notion only
requires that the solution $\phi_2$ stays close to the solution $\phi_1$
for the same value of the jump counter $j$ but potentially at different flow times $t$.
Note that $\tau$ in the ZS and ZLA notions may depend on the initial conditions of $\phi_1$ and $\phi_2$.
For simplicity and for the purposes of this work, the ZLA notion is written as a uniform property, in the sense of hybrid time and over the
compact set of initial conditions defined by $\mu$.
When $\phi_1$ and each $\phi_2$ are complete, the nonuniform
version of that property would require
$$
\lim_{(t,j) \in \dom \phi_1, t + j \to \infty} |\phi_1(t,j) - \phi_2(\mtau,j)| = 0,
$$
which resembles the notion defined in the literature of continuous-time systems; see
\cite[Definition 4.1]{Yang:2000} and \cite[Definition 2]{Leonov:2006}.
\end{remark}

The ZLAS notion
will be related to existence of hybrid limit cycles
by analyzing the properties of
a Poincar\'{e} map in Section \ref{section:exist} (within the proof of Theorem~\ref{thm:exist2}) 
and the $\omega$-limit set of a hybrid arc.
\IfTAC{Next, the ZLAS notion in Definition~\ref{def:ZAS} is illustrated in an example with a hybrid limit cycle.
}
{Next, the ZLAS notion in Definition~\ref{def:ZAS} is illustrated in two examples with a hybrid limit cycle.
}

\IfTAC{}{
\begin{example}\label{ex:timer_system2}
Consider the timer system in \eqref{eq:timer}. 
Note that every maximal solution to the timer system is
unique and complete.
To verify the ZS notion, consider $\phi_1 \in \MS_{\MH_{\mathrm{T}}}$.
Given $\varepsilon>0$, let $0<\delta<\varepsilon$.
Then, for each $\phi_2 \in \MS_{\MH_{\mathrm{T}}}(\phi_1(0,0) + \delta\BB)$,
$T_I(\phi_1(0,0))=1-\phi_1(0,0)$ and $T_I(\phi_2(0,0))=1-\phi_2(0,0)$.
Without loss of generality, we further suppose
$\phi_1(0,0)>\phi_2(0,0)$. Then, the solution $\phi_1$ jumps before $\phi_2$ \Cblack{since jumps occur when the timer reaches one.}
Denote $t_{\Delta}=T_I(\phi_2(0,0))-T_I(\phi_1(0,0))=\phi_1(0,0)-\phi_2(0,0)>0$.
Now construct the map $\tau$ as
\begin{equation}\label{xxxx1}
  \tau\Ztau=\left\{
\begin{array}{lll}
    \frac{T_I(\phi_2(0,0))}{T_I(\phi_1(0,0))}t && t \in [0, T_I(\phi_1(0,0))], \\[1em]
    t+t_{\Delta} && t> T_I(\phi_1(0,0)). \\
\end{array}
  \right.
\end{equation}
Note that $\tau$
\Blue{is  a homeomorphism and} satisfies $\tau(0)=0$, hence it belongs to $\MT$, and, in addition, is continuous.
Then, for $j=0$, for each $t\in [0, T_I(\phi_1(0,0))]$,
we have $\tau\Ztau=\frac{T_I(\phi_2(0,0))}{T_I(\phi_1(0,0))}t=\frac{1-\phi_2(0,0)}{1-\phi_1(0,0)}t$,
which satisfies $(\tau\Ztau,0)\in \dom \phi_2$
and
\begin{equation}
\begin{array}{r cll}
\IfConf{&&}{}
|\phi_1(t,0)-\phi_2(\tau\Ztau, 0)|\IfConf{\\}{}
&=& |\phi_1(0,0)+t-\phi_2(0,0)-\tau\Ztau|\\
&=& |\phi_1(0,0)-\phi_2(0,0)+t-\frac{1-\phi_2(0,0)}{1-\phi_1(0,0)}t |\\
&=& |(\phi_1(0,0)-\phi_2(0,0))\big(1-\frac{t}{T_I(\phi_1(0,0))}\big)|\\
&\leq& |\phi_1(0,0)-\phi_2(0,0)|\leq \delta<\varepsilon.
\end{array}
\end{equation}
Note that $\phi_1(T_I(\phi_1(0,0)),0)=\phi_2(\tau(T_I(\phi_1(0,0))), 0)=1$.
Then,
for each $j\in \BN\backslash\{0\}$ and each $t\geq T_I(\phi_1(0,0))$ 
such that $(t,j)\in \dom \phi_1$,
we have $\mtau=t+t_{\Delta}$,
which satisfies
 $(\mtau,j)\in \dom \phi_2$
and
$|\phi_1(t,j)-\phi_2(\mtau, j)|
=0<\varepsilon.$
Therefore, the solution $\phi_{1}$ is ZS.
In fact, any solution $\phi_1 \in \MS_{\MH_{\mathrm{T}}}$ is ZS.

To verify the ZLA notion, let $\mu>0$. \Cblack{Let $\phi_1$ be
a maximal solution to $\MH_{\mathrm{T}}$.} Then, for each $\varepsilon>0$ and for each
$\phi_2 \in \MS_{\MH_{\mathrm{T}}}(\phi_1(0,0) + \mu\BB)$,
we have
$T_I(\phi_2(0,0))=1-\phi_2(0,0)$.
Similar to the above \Cblack{steps showing ZS of $\phi_1$,}
without loss of generality, assume
$\phi_1(0,0)>\phi_2(0,0)$.
Then, the solution $\phi_1$ jumps before $\phi_2$.
Note that $\phi_1(T_I(\phi_1(0,0)), 1)=\phi_2(\tau(T_I(\phi_1(0,0))), 1)=0$.
Then, for $j=1$ and for each $t\geq T_I(\phi_1(0,0))$, 
we have $\Ytau=t+t_{\Delta}$,
which satisfies $(\Ytau,1)\in \dom \phi_2$
and
$|\phi_1(t,1)-\phi_2(\Ytau, 1)|
=0<\varepsilon.$
In fact, \Cblack{for each $j\in \BN\backslash\{0\}$ and each $t\geq T_I(\phi_1(0,0))$}  
we have $\Ytau=t+t_{\Delta}$
and
\begin{align}
  (t,j)\in \dom \phi_1, \quad t+j\geq T=T_I(\phi_1(0,0))+1 \nonumber 
\end{align}
imply that
$(\mtau,j)\in \dom \phi_2$
and
$|\phi_1(t,j)-\phi_2(\mtau, j)|
=0<\varepsilon.$
Therefore, \Cblack{$\phi_{1}$ is ZLA. In fact, any solution
$\phi_1 \in \MS_{\MH_{\mathrm{T}}}$ is ZLA.
Hence, every maximal solution to $\MH_{\mathrm{T}}$ is ZLAS.}
\begin{figure}[!htbp]
\centering
\subfigure[
The projections of two solutions
from $\phi_1(0,0)=0.8$ and $\phi_2(0,0)=0$
on the $t$ direction.
]{
\label{figTimer:sub1}
\psfrag{P1}[][][1.0]{$\phi_1$}
\psfrag{P2}[][][1.0]{$\phi_2$}
\psfrag{t}[][][1.0]{$t$}
\includegraphics[width=0.27\textwidth]{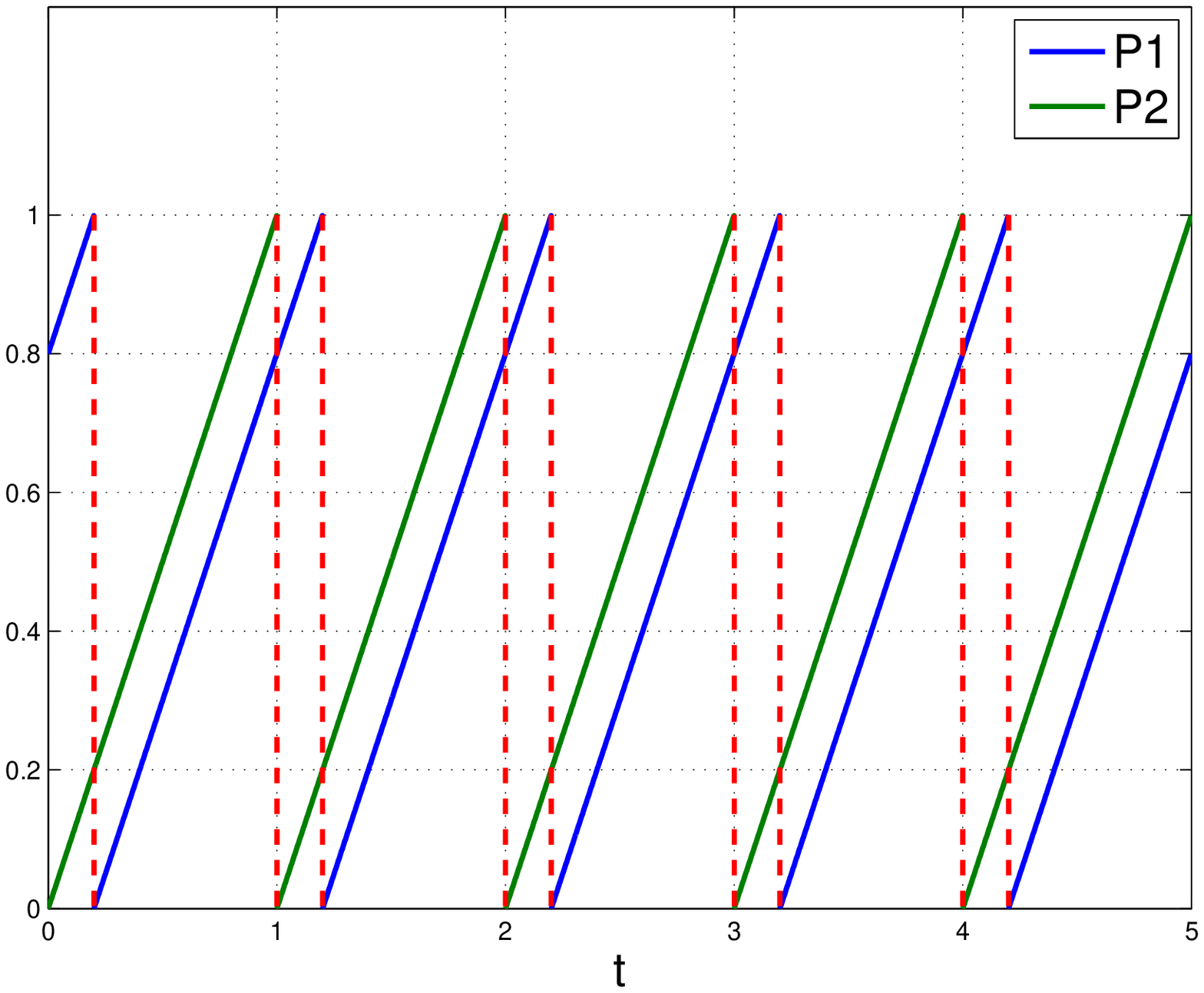}
}
\
\subfigure[
Comparison between Euclidean
distance $d(\phi_1,\phi_2)$ (top) and the distance $d_z(\phi_1,\phi_2)$
(bottom).
]{
\label{figTimer:sub2}
\psfrag{D1}[][][0.8]{$d(\phi_1,\phi_2)$}
\psfrag{D2}[][][0.8]{$d_z(\phi_1,\phi_2)$}
\psfrag{t}[][][1.0]{$t$}
\includegraphics[width=0.27\textwidth]{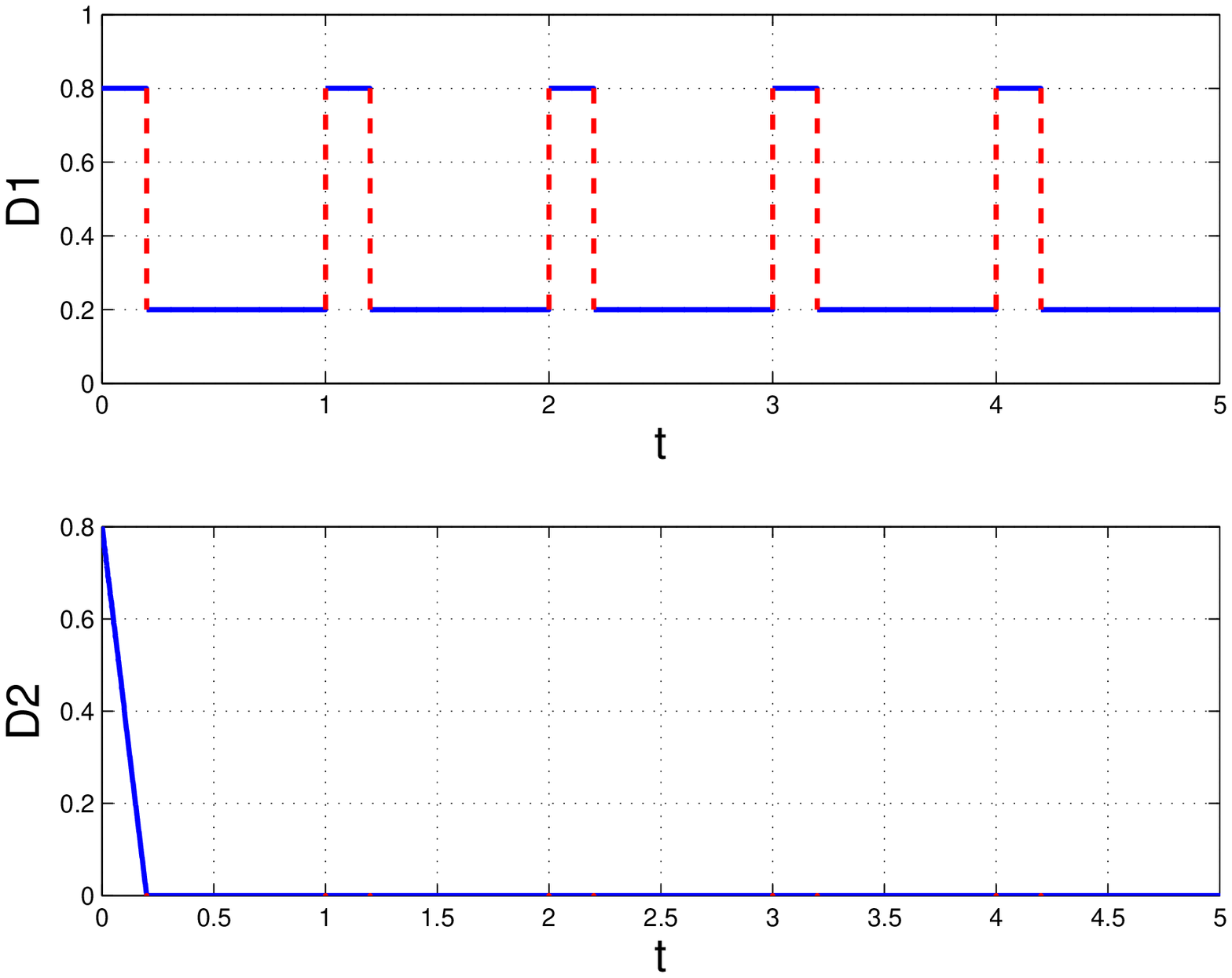}
}
\caption{
Two solutions $\phi_1$ and $\phi_2$
to the timer system in Example~\ref{ex:timer_system2}.
Unlike the Euclidean distance, which is $d(\phi_1(t,j_1),
\phi_2(t,j_2)):= |\phi_1(t,j_1)-\phi_2(t,j_2)|$ for all $(t, j_1)\in\dom\phi_1$
and $(t, j_2)\in\dom\phi_2$, which assumes
the value $0.2$ for $0.8$ seconds after every $0.2$ seconds, the distance
satisfies $d_z(\phi_1(t,j),\phi_2(\mtau,j)):=
|\phi_1(t,j)-\phi_2(\mtau,j)|$ for all
$(t,j)\in\dom\phi_1$ and $(\mtau,j)\in\dom\phi_2$, which decreases to zero after $0.2$ seconds.
Here, $\mtau=5t$ for $t\in [0, 0.2]$ and $\mtau=t+t_{\Delta}$
for $t>0.2$
with $t_{\Delta}=\phi_1(0,0)-\phi_2(0,0)=0.8$.
}
\label{fig:timer2}
\end{figure}

\Figure~\ref{fig:timer2} shows two solutions
to the timer system in \eqref{eq:timer} and the distance between them.
\Figure~\ref{figTimer:sub1} shows that
no matter how close the two maximal solutions are initialized,
the peak always exists for the Euclidean distance between
them.
However, the distance between solution $\phi_1$ and solution $\phi_2$ with parameterization $\mtau$, \Cblack{denoted $d_{z}$ and}
shown in \Figure~\ref{figTimer:sub2}, is zero a short time.

\end{example}
}

\begin{example}\label{exam:AS1}
Consider the academic system $\MH_{\mathrm{A}} = (C_{\mathrm{A}},f_{\mathrm{A}},D_{\mathrm{A}},g_{\mathrm{A}})$ with \Cblack{scalar} state $x$ and data
\begin{equation}\label{eq:TS}
  \MH_{\mathrm{A}} \left\{
  \begin{aligned}
    \dot{x} & = f_{\mathrm{A}}(x):= -ax+b
     && x \in C_{\mathrm{A}}, \\
    x^{+} & = g_{\mathrm{A}}(x):= b_{2}
     && x \in D_{\mathrm{A}}, \\
  \end{aligned}
  \right.
\end{equation}
%
where \Cblack{$C_{\mathrm{A}}:= [0, b_{1}]$} and $D_{\mathrm{A}}:=\{x\in [0, b_{1}]: x=b_{1} \}$.
The parameters $a$, $b$, $b_{1}$, and $b_{2}$ satisfy $a>0$ and $b>ab_{1}>ab_{2}>0$.
Define the compact set $M_{\mathrm{A}}:=[0, b_{1}]$
 and define a continuously differentiable function $h: M_{\mathrm{A}}\RA \BR$ as
$h(x):=b_1-x$.
Then, $C_{\mathrm{A}}$ and $D_{\mathrm{A}}$ can be rewritten as
$C_{\mathrm{A}} = \{x\in M_{\mathrm{A}}: h(x)\geq 0\}$
and $D_{\mathrm{A}}=\{ x\in C_{\mathrm{A}}: h(x)=0, L_{f_{\mathrm{A}}}h(x) \leq 0\}$, \Cblack{respectively,} where we used the property
$L_{f_{\mathrm{A}}}h(x) = -(-ax+b)= ab_{1}-b<0$ for all $x \!\in\! M_{\mathrm{A}}\cap D_{\mathrm{A}}$.
By design, the sets $C_{\mathrm{A}}$ and $D_{\mathrm{A}}$ are closed.
Moreover,
the function $f_{\mathrm{A}}$ is continuously differentiable
and the function $g_{\mathrm{A}}$ is continuous.
%
Furthermore, it can be verified that
\scalebox{0.98}{$g_{\mathrm{A}}(M_{\mathrm{A}} \!\cap\! D_{\mathrm{A}}) \!\cap\! (M_{\mathrm{A}} \!\cap\! D_{\mathrm{A}})
 \!=\! \emptyset$}.
Therefore, Assumption~\ref{ass:basic_data} holds.
Note that every maximal solution $\phi$ to $\MH_{\mathrm{A}}|_{M_{\mathrm{A}}}=
(M_{\mathrm{A}}\cap C_{\mathrm{A}}, f_{\mathrm{A}}, M_{\mathrm{A}}\cap D_{\mathrm{A}}, g_{\mathrm{A}})$ is unique via \cite[Proposition 2.11]{Goebel:book}.

To verify the ZS notion,
let us consider a maximal solution $\phi_1$ to $\MH_{\mathrm{A}}|_{M_{\mathrm{A}}}$.
For a given $\varepsilon$, let $0<\delta<\RRed{\varepsilon}$.  
Then, for each $\phi_2 \in
\MS_{\MH_{\mathrm{A}}|_{M_{\mathrm{A}}}}(\phi_1(0,0) + \delta\BB)$,
we have
$T_I(\phi_1(0,0))=
\frac{1}{a}\ln\frac{a\phi_{1}(0,0)-b}{ab_{1}-b}$
and
$T_I(\phi_2(0,0))=
\frac{1}{a}\ln\frac{a\phi_{2}(0,0)-b}{ab_{1}-b}$.
Without loss of generality, assume
$\phi_1(0,0)>\phi_2(0,0)$.
Then, the solution $\phi_1$ jumps before $\phi_2$ since jumps occur when $x$ reaches $b_1$.
%
Denote
$t_{\Delta}=T_I(\phi_2(0,0))-T_I(\phi_1(0,0))= 
\frac{1}{a}\ln\frac{a\phi_{2}(0,0)-b}{ab_{1}-b}
-
\frac{1}{a}\ln\frac{a\phi_{1}(0,0)-b}{ab_{1}-b}
=\frac{1}{a}\ln\frac{a\phi_{2}(0,0)-b}{a\phi_{1}(0,0)-b}>0$.
Let us construct $\tau$ as
\begin{equation}\label{xxxx2}
  \tau\Ztau=\left\{
\begin{array}{lll}
    \frac{T_I(\phi_2(0,0))}{T_I(\phi_1(0,0))}t && t \in [0, T_I(\phi_1(0,0))], \\[1em]
    t+t_{\Delta} && t>T_I(\phi_1(0,0)). \\
\end{array}
  \right.
\end{equation}
Note that $\tau$ \Blue{is  a homeomorphism and} satisfies $\tau(0)=0$, hence it belongs to $\MT$, and, in addition, is continuous.
Then, for $j=0$, for each $t\in [0, T_I(\phi_1(0,0))]$, 
we have $\tau\Ztau=\frac{T_I(\phi_2(0,0))}{T_I(\phi_1(0,0))}t$,
which satisfies $(\tau\Ztau,0)\in \dom \phi_2$
and
\IfConf{
\begin{equation}\label{eq:phi1phi2}
{\footnotesize
\!\!\begin{array}{r cll}
&&\!\!\!\!\!\!|\phi_1(t,0)-\phi_2(\tau\Ztau, 0)|\\
&\!\!\!\!=\!\!\!\!\!& \!\!\!\Big| \big((\phi_1(0,0)\!-\!\frac{b}{a})e^{-at}+\frac{b}{a}\big) \!-\! \big( (\phi_2(0,0)-\frac{b}{a})e^{-a \tau\Ztau}\!+\!\frac{b}{a}\big) \Big|\\
&\!\!\!\!=\!\!\!\!\!& \!\!\!\Big| \big(\phi_1(0,0)-\frac{b}{a}\big)e^{-at} - \big(\phi_2(0,0)-\frac{b}{a}\big)e^{-a \tau\Ztau} \Big|,
\end{array}
}
\end{equation}}
{
\begin{equation}\label{eq:phi1phi2}
{\begin{array}{r cll}
|\phi_1(t,0)-\phi_2(\tau\Ztau, 0)|
&\!\!\!\!=\!\!\!\!& \Big| \big((\phi_1(0,0)-\frac{b}{a})e^{-at}+\frac{b}{a}\big) - \big( (\phi_2(0,0)-\frac{b}{a})e^{-a \tau\Ztau}+\frac{b}{a}\big) \Big|\\
&\!\!\!\!=\!\!\!\!& \Big| \big(\phi_1(0,0)-\frac{b}{a}\big)e^{-at} - \big(\phi_2(0,0)-\frac{b}{a}\big)e^{-a \tau\Ztau} \Big|,
\end{array}}
\end{equation}
}
where
\IfTAC{$e^{-a \tau\Ztau}=
e^{-at}
\Big(\frac{a\phi_1(0,0)-b}{a\phi_2(0,0)-b}\Big)^{\frac{t}{T_I(\phi_1(0,0))}}.$
}{\begin{eqnarray}
e^{-a \tau\Ztau}&=&e^{-a \frac{T_I(\phi_2(0,0))}{T_I(\phi_1(0,0))}t}=
e^{-at}e^{at\big(1-\frac{T_I(\phi_2(0,0))}{T_I(\phi_1(0,0))}\big)}
\nonumber\\
&=&
e^{-at}
e^{-\frac{at}{T_I(\phi_1(0,0))}\big(T_I(\phi_2(0,0))-T_I(\phi_1(0,0))\big)}
\IfConf{\nonumber\\
&=&}{=}
e^{-at}
e^{-\frac{at}{T_I(\phi_1(0,0))}\Big(
\frac{1}{a}\ln\frac{a\phi_{2}(0,0)-b}{a\phi_{1}(0,0)-b}
\Big)}
\nonumber\\
&=&
e^{-at}
\Big(e^{-\ln\frac{a\phi_2(0,0)-b}{a\phi_1(0,0)-b}}\Big)^{\frac{t}{T_I(\phi_1(0,0))}}
\IfConf{\nonumber\\
&=&}{=}
e^{-at}
\Big(\frac{a\phi_1(0,0)-b}{a\phi_2(0,0)-b}\Big)^{\frac{t}{T_I(\phi_1(0,0))}}.\nonumber
\end{eqnarray}}
Since
$b/a>\phi_1(0,0)>\phi_2(0,0)$ and
$T_I(\phi_2(0,0))>T_I(\phi_1(0,0)),$
we have that
for each $t\in [0, T_I(\phi_1(0,0))]$,
$0<\big(\frac{a\phi_1(0,0)-b}{a\phi_2(0,0)-b}\big)\leq \big(\frac{a\phi_1(0,0)-b}{a\phi_2(0,0)-b}\big)^{t/T_I(\phi_1(0,0))}\leq 1.$
Therefore, \eqref{eq:phi1phi2} is equivalent to
\IfTAC{{ $$
|\phi_1(t,0)-\phi_2(\tau\Ztau, 0)|
\leq |\phi_1(0,0)-\phi_2(0,0)| e^{-at}\leq \delta<\varepsilon.\nonumber
$$}}{\begin{eqnarray}
\IfConf{&&}{}
|\phi_1(t,0)-\phi_2(\tau\Ztau, 0)|
\IfConf{\nonumber\\}{}
&=& \Big| \big(\phi_1(0,0)-\frac{b}{a}\big)e^{-at} - \big(\phi_2(0,0)-\frac{b}{a}\big)e^{-a \tau\Ztau} \Big|\nonumber\\
&\leq & \Big| \big(\phi_1(0,0)-\frac{b}{a}\big)e^{-at} - \big(\phi_2(0,0)-\frac{b}{a}\big)e^{-at} \Big|\nonumber\\
&\leq & |\phi_1(0,0)-\phi_2(0,0)| e^{-at}\leq \delta<\varepsilon.
\end{eqnarray}}
Note that $\phi_1(T_I(\phi_1(0,0)),0)=\phi_2(\tau(T_I(\phi_1(0,0))), 0)=b_{1}$.
%
In fact, for \Cblack{each $j\in \BN\backslash\{0\}$ and each} $t\geq T_I(\phi_1(0,0))$ 
such that $(t,j)\in \dom \phi_1$,
we have $\mtau=t+t_{\Delta}$,
which satisfies $(\mtau,j)\in \dom \phi_2$
and
$|\phi_1(t,j)-\phi_2(\mtau, j)|
=0<\varepsilon.$
Therefore, the solution $\phi_{1}$
is ZS.
In fact, any solution
$\phi_1 \in  \MS_{\MH_{\mathrm{A}}|_{M_{\mathrm{A}}}}$ is ZS.

To verify the ZLA notion, let $\mu>0$. \Cblack{Let $\phi_1$ be
a maximal solution to $\MH_{\mathrm{A}}|_{M_{\mathrm{A}}}$.}
Then, for each $\varepsilon>0$ and for each $\phi_2 \in \MS_{\MH_{\mathrm{A}}|_{M_{\mathrm{A}}}}(\phi_1(0,0) + \mu\BB)$,
we have
$T_I(\phi_1(0,0))=
\frac{1}{a}\ln\frac{a\phi_{1}(0,0)-b}{ab_{1}-b}$
and
$T_I(\phi_2(0,0))=
\frac{1}{a}\ln\frac{a\phi_{2}(0,0)-b}{ab_{1}-b}$.
Similar to the above proof of the ZS notion,
without loss of generality, assume
$\phi_1(0,0)>\phi_2(0,0)$.
Then, \Cblack{the} solution $\phi_1$ jumps before $\phi_2$.
Note that $\phi_1(T_I(\phi_1(0,0)), 1)=\phi_2(\tau(T_I(\phi_1(0,0))), 1)=b_{2}$.
Then, for $j=1$ and for each $t\geq T_I(\phi_1(0,0))$, 
we have $\Ytau=t+t_{\Delta}$,
which satisfies $(\Ytau,1)\in \dom \phi_2$
and
$|\phi_1(t,1)-\phi_2(\Ytau, 1)|
=0<\varepsilon.$
%
%
In fact,
\Cblack{for each $j\in \BN\backslash\{0\}$ and} each $t\geq T_I(\phi_1(0,0))$, 
we have that $\Ytau=t+t_{\Delta}$
and
\begin{align}
  (t,j)\in \dom \phi_1, \quad t+j\geq T=T_I(\phi_1(0,0))+1 \nonumber
\end{align}
imply that
$(\mtau,j)\in \dom \phi_2$
and
$|\phi_1(t,j)-\phi_2(\mtau, j)|
=0<\varepsilon.$
Therefore, $\phi_{1} \in \MS_{\MH_{\mathrm{A}}|_{M_{\mathrm{A}}}}$
is ZLA. \Cblack{In fact, any solution
$\phi_1 \in  \MS_{\MH_{\mathrm{A}}|_{M_{\mathrm{A}}}}$ is ZLA.
Hence, every maximal solution to $\MH_{\mathrm{A}}|_{M_{\mathrm{A}}}$ is ZLAS.}
\end{example}

Next, we establish a link between the Zhukovskii stability notion in Definition~\ref{def:ZAS}
and incremental graphical stability as introduced in \cite{Li:Sanfelice:CDC15}.
The later notion is presented next for self-containedness.
\begin{definition}\label{def:incrementalStable}
\cite[Definition 3.2]{Li:Sanfelice:CDC15}
Consider a hybrid system $\MH$ on $\BRn$ as in \eqref{sec2:eq1}.
The hybrid system $\MH$ is said to be
\begin{itemize}
\item [1)]
\emph{incrementally graphically stable} ($\delta$S) if for every
$\varepsilon>0$ there exists $\delta>0$ such that for any two maximal
solutions $\phi_1, \phi_2$ to $\MH$, $|\phi_1(0,0)-\phi_2(0,0)|\leq \delta$ implies that,
for each $(t,j)\in \dom \phi_1$, there exists
    $(s,j)\in\dom \phi_2$ satisfying $|t-s|\leq\varepsilon$ and
    \begin{equation}\label{increSta}
    |\phi_1(t,j)-\phi_2(s,j)|\leq \varepsilon;
    \end{equation}

\item [2)]
\emph{incrementally graphically locally attractive} ($\delta$LA) if there exists
$\mu>0$ such that for every $\varepsilon > 0$ and for any two maximal
solutions $\phi_1, \phi_2$ to $\MH$, $|\phi_1(0,0)-\phi_2(0,0)|\leq \mu$ implies
that there exists $T>0$ such that
for each $(t,j)\in \dom \phi_1$ such that $t+j\geq T$, there exists
    $(s,j)\in\dom \phi_2$ satisfying $|t-s|\leq\varepsilon$ and
    \begin{equation}\label{increSta2}
    |\phi_1(t,j)-\phi_2(s,j)|\leq \varepsilon;
    \end{equation}
%


\item [3)]
\emph{incrementally graphically locally asymptotically stable} ($\delta$LAS)
if it is both $\delta$S and $\delta$LA.
\end{itemize}
\end{definition}

\IfTAC{}
{
The following example is provided to illustrate
\CBlue{the $\delta$S notion in Definition~\ref{def:incrementalStable}.}

\begin{example}\label{exam:SecondOrder}
Consider the hybrid system $\MH_{\mathrm{S}}|_{M_{\textrm{\tiny S}}} = (M_{\textrm{\tiny S}}\cap C_{\textrm{\tiny S}},f_{\textrm{\tiny S}}, M_{\textrm{\tiny S}}\cap D_{\textrm{\tiny S}},g_{\textrm{\tiny S}})$ in Example~\ref{exam:AS}, with $M_{\textrm{\tiny S}}$ given therein.
We verify $\delta S$ for $\MH_{\mathrm{S}}|_{M_{\textrm{\tiny S}}}$ by definition.
\begin{figure}[!ht]
\psfrag{X1}[][][1.0]{$x_1$}
\psfrag{X2}[][][1.0]{$x_2$}
\psfrag{THE}[][][0.9]{$\theta$}
\psfrag{O}[][][0.8]{$O$}
\psfrag{P0}[][][0.8]{$Q_0$}
\psfrag{P1}[][][0.8]{$Q_1$}
\psfrag{Q0}[][][0.8]{$P_0$}
\psfrag{Q1}[][][0.8]{$P_1$}
\centering{
\IfTAC{\includegraphics[width=\figwidth]{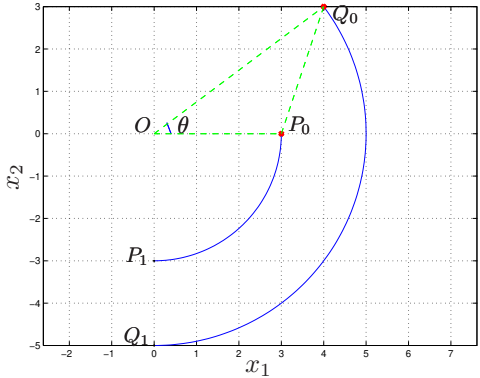}}
{\includegraphics[width=0.35\textwidth]{SecondOrderSystem_distance}}}
\vspace{-2mm}
\caption{
Phase plot of two different solutions $\phi_1$ and $\phi_2$ to $\MH_{\mathrm{S}}$
before the jump,
where $P_{0} = \{(3,0)\}$,
$P_{1} = \{(0, -3)\}$,
$Q_{0}=\{(4, 3)\}$, and $Q_{1} = \{(0, -5)\}$.
The point $P_{0}$ corresponds to $\phi_1(0,0)$
and
$Q_{0}$ corresponds to $\phi_2(0,0)$.
In this example, we have
$T_I(\phi_1(0,0))=\frac{1}{b}(\frac{\pi}{2}+\arctan\frac{0}{3})=\frac{\pi}{2b}$
and
$T_I(\phi_2(0,0))=\frac{1}{b}(\frac{\pi}{2}+\arctan\frac{3}{4})$.
Due to the form of the system, we have
$\theta=b(T_I(\phi_2(0,0))-T_I(\phi_1(0,0)))>0$.
Therefore,
$\phi_1$ jumps before $\phi_2$.
}
\label{SecondOrder:disfig}
\end{figure}
%
%
Consider $\phi_1=(\phi_{1,x_1},\phi_{1,x_2})\in  \MS_{\MH_{\mathrm{S}}|_{M_{\textrm{\tiny S}}}}$ and $\phi_2=(\phi_{2,x_1},\phi_{2,x_2}) \in  \MS_{\MH_{\mathrm{S}}|_{M_{\textrm{\tiny S}}}}$.
Given $\varepsilon>0$, let
$0<\delta<\min\{\varepsilon,\delta_{\phi},\delta_{t}\}$ be
such that $|\phi_1(0,0) - \phi_2(0,0)|\leq \delta$,
where\footnote{
\begin{itemize}
  \item To explain why we impose the constraint $\delta\leq \delta_{\phi}$,
we show a phase plot of two different solutions
in \Figure~\ref{SecondOrder:disfig}.
As we show in the example, given $\varepsilon>0$
there exists $\delta>0$ 
such that $\bar{t}'_{1}-\bar{t}_{1}\leq \varepsilon$
as illustrated in \eqref{eq:56}.
%
From \Figure~\ref{SecondOrder:disfig},
using the cosine theorem for the triangle in green, we have
$|P_0Q_0|^2=|OP_0|^2+|OQ_0|^2-2|OP_0||OQ_0|\cos\theta$.
Then, $\theta=\arccos d_{0}$, where
$d_{0}=(|OP_0|^2+|OQ_0|^2-|P_0Q_0|^2)/(2|OP_0||OQ_0|).$
Note that
$|OP_0|=|\phi_1(0,0)|$, $|OQ_0|=|\phi_2(0,0)|,$
and
$|P_0Q_0|=|\phi_1(0,0)-\phi_2(0,0)|.$
Then,
given $\varepsilon>0$,
to ensure $\bar{t}'_{1}-\bar{t}_{1}=\frac{\theta}{b}
\leq \varepsilon$, we require that
$d_{0}\geq \cos(b\varepsilon)$,
which is equivalent to
$|\phi_1(0,0)-\phi_2(0,0)|^2\leq |\phi_1(0,0)|^2+|\phi_2(0,0)|^2- 2|\phi_1(0,0)| |\phi_2(0,0)|\cos(b\varepsilon)$.
Hence, one can take $0<\delta\leq \delta_{\phi}$.
%
\item We impose the constraint $\delta\leq \delta_{t}$ since we need to show
$|\phi_1(t,1)-\phi_2(\bar{t}'_{1},1)|\leq \varepsilon$ when proving the system is $\delta$S,
as illustrated in \eqref{eq5.7:sec}.
\end{itemize}
}
{\small $$\delta_{\phi}=\sqrt{|\phi_1(0,0)|^2+|\phi_2(0,0)|^2- 2|\phi_1(0,0)| |\phi_2(0,0)|\cos(b\varepsilon)}$$}
and
{\small $$\delta_{t}=\sqrt{|\phi_1(0,0)|^2+|\phi_2(0,0)|^2- 2|\phi_1(0,0)| |\phi_2(0,0)|
\big(1-\frac{\varepsilon^2}{2c^2}\big)}.$$}
Without loss of generality, assume
$\phi_1$ jumps first. 
For each $j\in \BN\backslash\{0\}$, let
$\bar{t}_{j} = \max_{(t,j-1)\in \dom\phi_1\cap \dom\phi_2}t$ and
$\bar{t}'_{j} = \min_{(t,j)\in \dom\phi_1\cap \dom\phi_2}t$.
Then, we have that for each $t\in [0, \bar{t}_{1}]$,
$(s,0)=(t,0)\in \dom \phi_2$ and
\IfConf{
$|\phi_1(t,0)-\phi_2(s, 0)|
=|e^{At}\phi_1(0,0)-e^{At}\phi_2(0, 0)|\leq \delta<\varepsilon,$
}
{\begin{equation}
|\phi_1(t,0)-\phi_2(s, 0)|
=
|e^{At}\phi_1(0,0)-e^{At}\phi_2(0, 0)|\leq \delta<\varepsilon,
\end{equation}
}
where $A=\begin{bmatrix} 0 & b\\
-b & 0\end{bmatrix}$ and
$e^{At}=\begin{bmatrix} \cos bt & \sin bt\\
-\sin bt & \cos bt\end{bmatrix}$.

For each $t\in [\bar{t}_{1}, \bar{t}'_{1}],$
we have that
$(s,0)=(t,0)\in \dom \phi_2$ and
{\begin{equation}
{\small
\begin{array}{r cll}
|\phi_1(\bar{t}_{1},0)-\phi_2(s, 0)|
&=& |e^{A\bar{t}_{1}}\phi_1(0,0)-e^{At}\phi_2(0, 0)|\\
&\leq & |e^{A\bar{t}_{1}}\phi_1(0,0)-e^{A\bar{t}_{1}}\phi_2(0, 0) |\\
&\leq& \delta<\varepsilon.
\end{array}
}
\end{equation}}
%
Note that by the form of system \eqref{eq:simple_ex},
{\begin{equation}\label{eq:56}
{\small
\begin{array}{r cll}
\bar{t}'_{1}-\bar{t}_{1}&=&
T_I(\phi_2(0,0))-T_I(\phi_1(0,0))\\
&=&
\frac{1}{b}\arccos\frac{|\phi_1(0,0)|^2+|\phi_2(0,0)|^2-|\phi_1(0,0)-\phi_2(0, 0)|^2}{2|\phi_1(0,0)| |\phi_2(0,0)|}\\
&\leq&
\frac{1}{b}\arccos\frac{|\phi_1(0,0)|^2+|\phi_2(0,0)|^2-\delta^2}{2|\phi_1(0,0)| |\phi_2(0,0)|}\\
&\leq&
\frac{1}{b}\arccos
\frac{|\phi_1(0,0)|^2+|\phi_2(0,0)|^2-\delta_{\phi}^2}{2|\phi_1(0,0)| |\phi_2(0,0)|}
\leq \varepsilon.
\end{array}
}\end{equation}}
Then, for each $t\in [\bar{t}_{1}, \bar{t}'_{1}]$,
$|t-\bar{t}_{1}|\leq |\bar{t}'_{1}-\bar{t}_{1}|
\leq \varepsilon$.
By definition of $g_{\textrm{\tiny S}}$, we have $\phi_2(\bar{t}'_{1},1)=(c,0)$.
For each $t\in [\bar{t}_{1}, \bar{t}'_{1}]$,
$\phi_1(t,1)=e^{A(t-\bar{t}_{1})}(c,0)$.
Therefore, using \eqref{eq:56}, we have
\begin{equation}\label{eq5.7:sec}
\begin{array}{r cll}
\IfConf{&&}{}
|\phi_1(t,1)-\phi_2(\bar{t}'_{1},1)|
\IfConf{\\}{}
&=&
c\Big|
\big( 1-\cos(b(t-\bar{t}_{1})),
\sin (b(t-\bar{t}_{1})) \big)\Big| \\
&=&
\sqrt{2}c\sqrt{1-\cos(b(t-\bar{t}_{1}))}\\
&\leq&
\sqrt{2}c\sqrt{1-\cos(b(\bar{t}'_{1}-\bar{t}_{1}))}\\
&\leq&
\sqrt{2}c\sqrt{1-\frac{|\phi_1(0,0)|^2+|\phi_2(0,0)|^2-\delta^2}{2|\phi_1(0,0)| |\phi_2(0,0)|}}\\
&\leq&
\sqrt{2}c\sqrt{1-\frac{|\phi_1(0,0)|^2+|\phi_2(0,0)|^2-\delta_{t}^2}{2|\phi_1(0,0)| |\phi_2(0,0)|}}
\leq \varepsilon.
\end{array}
\end{equation}

In fact, for each $t\in [\bar{t}_{j}, \bar{t}'_{j}]$, where $j\in\BN\backslash\{0\}$,
$|t-\bar{t}_{j}|\leq |\bar{t}'_{j}-\bar{t}_{j}|
\leq \varepsilon$.  
Moreover, for each $t\in [\bar{t}'_{j}, \bar{t}_{j+1}]$, where $j\in\BN\backslash\{0\}$,
we have that $(s,j)=(t,j)\in \dom \phi_2$ and
\begin{equation}
\begin{array}{r cll}
\IfConf{&&}{}
|\phi_1(t,j)-\phi_2(s, j)|
\IfConf{\\}{}
&=&
|e^{A(t-\bar{t}'_{j})}\phi_1(\bar{t}'_{j},j)-e^{A(t-\bar{t}'_{j})}\phi_2(\bar{t}'_{j}, j)|\\
&\leq&
|\phi_1(\bar{t}'_{j},j)-\phi_2(\bar{t}'_{j}, j)|\\    
&\leq&
\sqrt{2}c\sqrt{1-\cos(b(\bar{t}'_{j}-\bar{t}_{j}))}
\leq \varepsilon,
\end{array}
\end{equation}
where we used the facts that
$\phi_1(\bar{t}'_{j},j)=e^{A(\bar{t}'_{j}-\bar{t}_{j})}\phi_1(\bar{t}_{j},j)$ and
$\phi_1(\bar{t}_{j}, j)=\phi_2(\bar{t}'_{j}, j)=(c,0)$.

For each $t\in [\bar{t}_{j+1}, \bar{t}'_{j+1}]$, where $j\in\BN\backslash\{0\}$,
we have that $(s,j)=(t,j)\in \dom \phi_2$ and
\begin{equation}
\begin{array}{r cll}
\IfConf{&&}{}
|\phi_1(\bar{t}_{j+1},j)-\phi_2(s, j)|
\IfConf{\\}{}
&\leq &
|\phi_1(\bar{t}_{j+1},j)-\phi_2(\bar{t}_{j+1}, j)|\\
&=&
|e^{A(\bar{t}_{j+1}-\bar{t}'_{j})}
(\phi_1(\bar{t}'_{j},j)-\phi_2(\bar{t}'_{j}, j))|\\
&\leq&
\sqrt{2}c\sqrt{1-\cos(b(\bar{t}'_{j}-\bar{t}_{j}))}
\leq \varepsilon.
\end{array}
\end{equation}
Therefore, the system is $\delta$S.
%
\end{example}
}

\IfTAC{}{
The following theorem establishes \CBlue{sufficient conditions for $\delta$LAS via ZLAS.}

\begin{theorem}\label{thm:LAS:ZAS}
Consider a hybrid system $\MH=(C,f,D,g)$ on $\BRn$ and a closed set $M\subset\BRn$ satisfying Assumption~\ref{ass:basic_data}.
Suppose every maximal solution $\phi$ to $\MH|_{M} = (M\cap C, f, M\cap D, g)$ is complete.
The following hold:
\vspace{-0.1em}
\begin{itemize}
\item[a)] \CRed{If each $\phi \in \MS_{\MH|_{M}}$ is ZS,
and the condition $|t-\mtau|\leq \varepsilon$
\RRed{for each $(t,j)\in\dom\phi$ and each $\phi \in \MS_{\MH|_{M}}$},
is satisfied by $\tau$ in the definition of ZS,
then the hybrid system $\MH|_{M}$ is $\delta$S;}
\item[b)] \CRed{If each $\phi \in \MS_{\MH|_{M}}$ is ZLA,
and the condition $|t-\mtau|\leq \varepsilon$
\RRed{for each $(t,j)\in\dom\phi$ and each $\phi \in \MS_{\MH|_{M}}$},
is satisfied by $\tau$ in the definition of ZLA,
then
the hybrid system $\MH|_{M}$ is $\delta$LA.}
\end{itemize}
\vspace{-0.4em}
\end{theorem}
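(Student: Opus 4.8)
The plan is to translate the Zhukovskii-stability estimates, which are stated in terms of a reparametrization $\tau$ of the flow time of a \emph{single} solution, into the incremental graphical estimates of Definition~\ref{def:incrementalStable}, which involve \emph{two} arbitrary maximal solutions. The key observation is that $\delta$S and $\delta$LA are, morally, symmetric/relative versions of ZS and ZLA: in ZS one fixes $\phi_1$ and asks all nearby $\phi_2$ to stay close after a time reparametrization; in $\delta$S one takes any two nearby solutions and asks that for each $(t,j)\in\dom\phi_1$ there is $(s,j)\in\dom\phi_2$ with $|t-s|\le\varepsilon$ and $|\phi_1(t,j)-\phi_2(s,j)|\le\varepsilon$. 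So the bridge is: given the hypothesis that each $\phi\in\MS_{\MH|_M}$ is ZS (or ZLA) \emph{and} that the associated $\tau$ additionally satisfies $|t-\tau(t)|\le\varepsilon$, the reparametrization furnishes exactly the required $s=\tau(t)$.

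For part (a), I would fix $\varepsilon>0$. Let $\phi_1,\phi_2$ be any two maximal solutions to $\MH|_M$. Since $\phi_1\in\MS_{\MH|_M}$ is ZS, there is $\delta>0$ (depending on $\varepsilon$ and possibly on $\phi_1(0,0)$, but one should argue this can be taken uniform over the compact set $M$ — e.g.\ by a standard compactness/finite-subcover argument over initial conditions, or simply note the statement of $\delta$S in Definition~\ref{def:incrementalStable} permits $\delta$ depending only on $\varepsilon$ because $M$ is compact) such that whenever $\phi_2(0,0)\in\phi_1(0,0)+\delta\BB$ — which holds once $|\phi_1(0,0)-\phi_2(0,0)|\le\delta$ — there is $\tau\in\MT$ with $(\tau(t),j)\in\dom\phi_2$ and $|\phi_1(t,j)-\phi_2(\tau(t),j)|\le\varepsilon$ for all $(t,j)\in\dom\phi_1$. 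Now set $s:=\tau(t)$. By the added hypothesis $|t-\tau(t)|\le\varepsilon$, we get $|t-s|\le\varepsilon$, and the ZS inequality gives $|\phi_1(t,j)-\phi_2(s,j)|\le\varepsilon$. That is precisely \eqref{increSta}, so $\MH|_M$ is $\delta$S. Part (b) is entirely analogous: fix $\mu>0$ from ZLA; given two solutions with $|\phi_1(0,0)-\phi_2(0,0)|\le\mu$, and given $\varepsilon>0$, the ZLA property yields $\tau\in\MT$ and $T>0$ such that $(t,j)\in\dom\phi_1$, $t+j\ge T$ imply $(\tau(t),j)\in\dom\phi_2$ and $|\phi_1(t,j)-\phi_2(\tau(t),j)|\le\varepsilon$; again put $s:=\tau(t)$ and invoke $|t-\tau(t)|\le\varepsilon$ to conclude \eqref{increSta2}. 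Completeness of all maximal solutions to $\MH|_M$ is used to ensure $\dom\phi_1$ is unbounded so the condition $t+j\ge T$ is non-vacuous and the time-domain membership statements make sense.

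The main obstacle I anticipate is the quantifier bookkeeping around how $\delta$ (resp.\ $\mu$, $T$) and the reparametrization $\tau$ are allowed to depend on the initial conditions. In Definition~\ref{def:ZAS}, $\tau$ and the constants may depend on $\phi_1(0,0)$ and $\phi_2(0,0)$, whereas Definition~\ref{def:incrementalStable} as written asks $\delta$ to depend only on $\varepsilon$; one must either (i) appeal to compactness of $M$ to extract a uniform $\delta$ (a finite-subcover argument on the compact set of admissible initial conditions, using continuity of the relevant bounds), or (ii) observe that since the hypothesis asks \emph{every} $\phi\in\MS_{\MH|_M}$ to be ZS with the extra $|t-\tau(t)|\le\varepsilon$ control, the needed estimate already holds for each pair and the uniformization is routine on the compact $M$. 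I would make this explicit but keep it brief, since the conceptual content is simply ``$s=\tau(t)$ works, and the extra hypothesis on $\tau$ supplies the missing $|t-s|\le\varepsilon$.'' Everything else is a direct substitution into the definitions.
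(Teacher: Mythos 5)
Your proposal is correct and takes essentially the same route as the paper's proof: set $s:=\tau(t)$, so that the ZS/ZLA estimate supplies $|\phi_1(t,j)-\phi_2(s,j)|\le\varepsilon$ and the additional hypothesis $|t-\tau(t)|\le\varepsilon$ supplies $|t-s|\le\varepsilon$, which is exactly \eqref{increSta} (resp.\ \eqref{increSta2}). The quantifier issue you flag --- whether $\delta$ can be chosen independently of $\phi_1(0,0)$ --- is a real subtlety that the paper's own proof silently glosses over (it just takes $0<\delta<\varepsilon$), so your explicit acknowledgment of it is, if anything, more careful than the published argument.
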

%
\begin{proof}
Consider any solution $\phi_1 \in \MS_{\MH|_{M}}$.

a) To show \CBlue{$\delta$S}, given $\varepsilon>0$, let $0<\delta<\varepsilon$ and
consider any solution $\phi_2 \in \MS_{\MH|_{M}}(\phi_1(0,0) + \delta\BB)$.
\CBlue{Since $\phi_1 \in \MS_{\MH|_{M}}$ is ZS,
then, there exists $\tau \in \MT$ such that
for each $(t,j)\in \dom \phi_1$ we have
$(\mtau,j)\in\dom \phi_2$, $|\phi_1(t,j)-\phi_2(\mtau,j)|\leq \varepsilon.$
In addition, by assumption, we have $|t-\mtau|\leq \varepsilon$.
%
Hence, we simply take
$s:=\mtau$
and then obtain \eqref{increSta}.
Therefore, $\MH|_{M}$ is $\delta$S.}

b) \CBlue{We next show $\delta$LA.
%
Since $\phi_1 \in \MS_{\MH|_{M}}$ is ZLA,
then, there exists
$\mu>0$ such that
for each $\phi_2 \in \MS_{\MH}(\phi_1(0,0) + \mu\BB)$
there exists $\tau \in \MT$ such that
for each $\varepsilon>0$
there exists $T>0$ for which
    we have that
$(t,j)\in \dom \phi_1$ and $t+j\geq T$
imply
$(\mtau, j) \in \dom \phi_2$ and $|\phi_1(t,j) - \phi_2(\mtau, j )| \leq \varepsilon.$
In addition, by assumption, we have $|t-\mtau|\leq \varepsilon$.
Hence, we simply take
$s:=\mtau$
and then obtain \eqref{increSta2}.
Therefore, $\MH|_{M}$ is $\delta$LA.}
\end{proof}

\CRed{The sufficient conditions for $\delta$LAS in
Theorem~\ref{thm:LAS:ZAS} can be illustrated in
the hybrid systems $\MH_{\mathrm{T}}$ in Example~\ref{ex:timer_system2}
and $\MH_{\mathrm{A}}$ in Example \ref{exam:AS1},
since the ZLAS properties have been shown in
Example~\ref{ex:timer_system2} and Example \ref{exam:AS1}, respectively.}

} 

\IfTAC{}
{
\SMB
The following two examples are provided to illustrate the sufficient conditions for $\delta$LAS in
Theorem~\ref{thm:LAS:ZAS}.

\begin{example}\label{ex:timer_system3b}
Consider the timer system in \eqref{eq:timer}. 
It has been shown in Example~\ref{ex:timer_system2} that
each maximal solution is ZLAS.
Hence, by Theorem \ref{thm:LAS:ZAS}, to show the system is $\delta$LAS,
it remains to show that 
the constructed map $\tau\Ztau$ in
Example~\ref{ex:timer_system2} satisfy that $|t-\mtau|\leq \varepsilon$.

By \eqref{xxxx1}, for $j=0$, for each $t\in [0, T_I(\phi_1(0,0))]$,
we have
\begin{equation*}
\begin{array}{r cll}
|t-\mtau| &=& \Big| t-\tfrac{T_I(\phi_2(0,0))}{T_I(\phi_1(0,0))}t\Big| \\
&=& \Big| \tfrac{T_I(\phi_2(0,0))-T_I(\phi_1(0,0))}{T_I(\phi_1(0,0))} \Big| t\\
&\leq & T_I(\phi_2(0,0))-T_I(\phi_1(0,0))\\
&=&t_{\Delta}=\phi_1(0,0)-\phi_2(0,0)\leq \delta<\varepsilon.
\end{array}
\end{equation*}

On the other hand, for each $j\in \BN\backslash\{0\}$ and each $t\geq T_I(\phi_1(0,0))$ 
such that $(t,j)\in \dom \phi_1$,
we have $\mtau=t+t_{\Delta}$.
Then, it follows that
\begin{equation*}
|t-\mtau|\leq t_{\Delta}=\phi_1(0,0)-\phi_2(0,0)\leq \delta<\varepsilon.
\end{equation*}
\end{example}

\begin{example}\label{exam:AS2b}
Consider the hybrid system $\MH_{\mathrm{A}}$ in Example \ref{exam:AS1}.
It has been shown in Example~\ref{exam:AS1} that each maximal solution to $\MH_{\mathrm{A}}$ is ZLAS.
Hence, by Theorem \ref{thm:LAS:ZAS}, to show the system is $\delta$LAS,
it remains to show that 
the constructed map $\tau\Ztau$ in
Example~\ref{exam:AS1} satisfies that $|t-\mtau|\leq \varepsilon$.

From Example~\ref{exam:AS1}, given $\varepsilon>0$, let $0<\delta<\min\{\varepsilon, b\varepsilon\}$.
By \eqref{xxxx2}, for $j=0$, for each $t\in [0, T_I(\phi_1(0,0))]$,
we have
\begin{equation*}
\begin{array}{r cll}
|t-\mtau| &=& \Big| t-\tfrac{T_I(\phi_2(0,0))}{T_I(\phi_1(0,0))}t\Big| \\
&=& \Big| \tfrac{T_I(\phi_2(0,0))-T_I(\phi_1(0,0))}{T_I(\phi_1(0,0))} \Big| t\\
&\leq & T_I(\phi_2(0,0))-T_I(\phi_1(0,0))\\
&=& t_{\Delta}=\tfrac{1}{a}\ln\tfrac{a\phi_{2}(0,0)-b}{a\phi_{1}(0,0)-b}.
\end{array}
\end{equation*}

Since $a>0$ and $b/a>\phi_1(0,0)>\phi_2(0,0)\geq 0$, by using the inequality $m-1\geq \ln m$ for any $m>0$, 
it follows that
\begin{equation*}
\begin{array}{r cll}
t_{\Delta}&=&\tfrac{1}{a}\ln\tfrac{a\phi_{2}(0,0)-b}{a\phi_{1}(0,0)-b}\\
&\leq&
\tfrac{1}{a}\Big( \tfrac{a\phi_{2}(0,0)-b}{a\phi_{1}(0,0)-b}-1 \Big)\\
&=&
\tfrac{\phi_{1}(0,0)-\phi_{2}(0,0)}{b-a\phi_{1}(0,0)}\\
&\leq&
|\phi_{1}(0,0)-\phi_{2}(0,0)|/b\\
&\leq&
\delta/b<\varepsilon.
\end{array}
\end{equation*}
Therefore, we have $|t-\mtau|<\varepsilon$.

For \Cblack{each $j\in \BN\backslash\{0\}$ and each} $t\geq T_I(\phi_1(0,0))$ 
such that $(t,j)\in \dom \phi_1$,
we have $\mtau=t+t_{\Delta}$,
which satisfies $(\mtau,j)\in \dom \phi_2$.
Then, one can directly obtain $|t-\mtau|\leq t_{\Delta}\leq \delta/b<\varepsilon$.
\end{example} \STB
}

\IfTAC{\vspace{-3mm}}{}
\subsection{Existence of Hybrid Limit Cycles via Zhukovskii and Incremental Graphical Stability}\label{sec:existence}
\label{section:exist}

In this section, we present conditions for the existence of a hybrid limit cycle for hybrid systems that
are ZLAS.
The existence of such a hybrid limit cycle is related to nonemptyness of an $\omega$-limit set
and
continuity of a Poincar\'{e} map $\Gamma$ on
a closed set $\Sigma$ near an $\omega$-limit point.

\IfTAC{ \Blue{Inspired by \cite[Chapter V, Definition 2.13]{Nemytskii:book}, the following notion is introduced in a sufficiently ``short" tube
$\Phi_{\bar{t}}(U):=\{\phi_{x}(t,0):
t\mapsto \phi_{x}(t,0) \text{ is a solution to } \dot{x}=f(x)\quad x\in\BRn
\text{ from }\phi_{x}(0,0)\in U,
 t\in [0, \bar{t}],  (t,0)\in\dom\phi_{x}\}$, where $U\subset \BRn$ and $\bar{t}\geq 0$.}
}{
The importance of such a Poincar\'{e} map
is that it allows one
to determine the existence of a hybrid limit cycle, as we show in this section.
Before presenting that result,
\Blue{inspired by \cite[Chapter V, Definition 2.13]{Nemytskii:book},}
the following notion is introduced in a sufficiently ``short" tube
$\Phi_{\bar{t}}(U):=\{\phi_{x}(t,0):
t\mapsto \phi_{x}(t,0) \text{ is a solution to } \dot{x}=f(x)\quad x\in\BRn
\text{ from }\phi_{x}(0,0)\in U,
 t\in [0, \bar{t}],  (t,0)\in\dom\phi_{x}\}$, where $U\subset \BRn$ and $\bar{t}\geq 0$. 
}

\begin{definition}\label{def:sectionlocal}
\Blue{(\emph{forward local section})}
Consider a dynamical system $\dot{x}=f(x)\quad x\in\BRn.$
Given $U\subset \BRn$ and $\bar{t}\geq 0$, a closed set $\Sigma \subset \Phi_{\bar{t}}(U)$ is called a \emph{forward local section}
if for each solution $\phi_x$ to
$\dot{x}=f(x)\quad x\in\BRn$
starting from
$\phi_{x}(0) \in \Blue{U}$,
there exists a unique $t_{v}\in [0, \bar{t}]$ 
such that $t_{v}\in\dom\phi_x$ and $\phi_x(t_{v})\in \Sigma$.\footnote{Here, for the system $\dot{x}=f(x)\quad x\in\BRn$,
since it is a continuous-time system, we have $\dom\phi_{x}\subset \BR_{\geq 0}$.
Then, we write $\phi_{x}(t)$ instead of $\phi_{x}(t,0)$.}
\end{definition}

\SMR
To guarantee the existence of a \emph{forward local section},
inspired by \cite[Chapter V, Theorem 2.14]{Nemytskii:book},
we present the following result, which is different \Blue{from}
\cite[Chapter V, Theorem 2.14]{Nemytskii:book} as it only allows for forward times.  
\begin{lemma}\label{lemm:Bebutov:my}
Consider the dynamical system $\dot{x}=f(x)\quad x\in\BRn.$
\Blue{If $f$ is continuously differentiable
and $v$ is not an equilibrium point of the dynamical system,}
then, for any sufficiently small $\bar{t}>0$,
there exists $\sigma>0$ such that there exists a forward local section 
$\Sigma \subset \Phi_{\bar{t}}(v+\sigma\BB).$
\end{lemma}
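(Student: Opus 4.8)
The plan is to construct $\Sigma$ as a piece of a hyperplane transversal to $f$ at $v$, carried forward by the flow, and then use continuous dependence on initial conditions to show the "unique crossing time" property holds for nearby initial conditions. First I would fix the hyperplane $\Pi := \{x \in \BRn : \langle f(v), x - v\rangle = 0\}$ through $v$ with normal $f(v)$; this is well-defined because $v$ is not an equilibrium, so $f(v)\neq 0$. Since $f$ is continuously differentiable (hence locally Lipschitz), solutions to $\dot x = f(x)$ exist, are unique, and depend continuously (indeed $C^1$) on initial data on a uniform time interval around any compact set of initial conditions; in particular there is $\bar t_0 > 0$ and a neighborhood of $v$ on which all solutions are defined on $[0,\bar t_0]$. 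Define the map $F(t, x) := \langle f(v), \phi^f(t,x) - v\rangle$, which is continuous (and $C^1$) on $[0,\bar t_0]\times(v+\sigma_0\BB)$ for some $\sigma_0>0$.

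Next I would pin down the crossing for the nominal point. For the solution from $v$ itself, $F(0,v) = 0$ and $\frac{\partial F}{\partial t}(0,v) = \langle f(v), f(v)\rangle = |f(v)|^2 > 0$. So near $t=0$ the solution from $v$ crosses $\Pi$ transversally at $t=0$ and only at $t=0$ on a short interval $[0,\bar t]$ (shrinking $\bar t \le \bar t_0$ so that $\frac{\partial F}{\partial t}(t,v) > 0$ for all $t\in[0,\bar t]$, by continuity). Then by continuity of $(t,x)\mapsto \frac{\partial F}{\partial t}(t,x)$ and compactness of $[0,\bar t]$, there is $\sigma_1 \in (0,\sigma_0]$ such that $\frac{\partial F}{\partial t}(t,x) > 0$ for all $(t,x)\in [0,\bar t]\times(v+\sigma_1\BB)$; hence for each fixed $x$ in this ball, $t \mapsto F(t,x)$ is strictly increasing on $[0,\bar t]$, so it has at most one zero there. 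To get existence of a zero I would note $F(0,v)=0$, $F(\bar t, v) > 0$; choosing $\sigma \in (0,\sigma_1]$ small enough (using continuity of $F(0,\cdot)$ and $F(\bar t,\cdot)$ and strict monotonicity) guarantees $F(0,x) \le 0 \le F(\bar t,x)$ fails to be the right framing in general — instead I would pick $U = v + \sigma\BB$ with $\sigma$ so small that $F(\bar t, x) > 0$ for all $x\in U$ while $F(0,x)$ ranges over a small interval around $0$; then flow each such point \emph{backward} a tiny amount if needed, or more cleanly, define $\Sigma$ after the fact. The cleanest route: let $t_v(x)$ be the unique zero in $[0,\bar t]$ of $F(\cdot,x)$ when it exists; by the intermediate value theorem and strict monotonicity it exists and is unique whenever $F(0,x)\le 0$ and $F(\bar t,x)\ge 0$, and by continuity both endpoint inequalities hold on a ball $v+\sigma\BB$ once we note $F(0,v)=0$ — here we must shrink so that $F(0,x) \le 0$, which is only one-sided. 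The honest fix is to take the section slightly \emph{inside} the tube: pick $\bar t$ and then work on the ball $v + \sigma\BB$ with $\sigma$ chosen so that the solution from $v$ at time $\bar t/2$, call it $w$, has all nearby solutions crossing $w+\Pi'$ (the translated hyperplane $\langle f(v), x-w\rangle = 0$) at a time in $(0,\bar t)$; since $F(\bar t/2, v) = \langle f(v), w - v\rangle > 0$ is replaced by the analogous quantity which is $0$ at $v$'s trajectory at time $\bar t/2$, this re-centers the sign change to the interior and both $F'(0,\cdot)<0$-type and $F'(\bar t,\cdot)>0$-type bracketing hold on a full ball.

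Then I would \emph{define} $\Sigma := \{\phi^f(t_v(x), x) : x \in v+\sigma\BB\}$, verify it is contained in $\Phi_{\bar t}(v+\sigma\BB)$ by construction, and argue it is closed: $t_v$ is continuous on $v+\sigma\BB$ by the implicit function theorem (applicable since $\frac{\partial F}{\partial t}\neq 0$ there), so $x\mapsto \phi^f(t_v(x),x)$ is continuous, and shrinking $\sigma$ to a closed ball makes the image of a compact set compact, hence closed. Finally, the "unique $t_v$" property in Definition~\ref{def:sectionlocal} is exactly what strict monotonicity of $F(\cdot,x)$ gives. The main obstacle I anticipate is the bookkeeping to make the sign-change interior to $[0,\bar t]$ so that a genuine bracketing (not just a one-sided inequality) holds uniformly over a full ball around $v$ — this is where the "only forward times" restriction bites, since the classical Bebutov-type argument in \cite[Chapter V, Theorem 2.14]{Nemytskii:book} symmetrically uses $t\in(-\bar t,\bar t)$; re-centering the section along the nominal trajectory at a small positive time, or equivalently enlarging $\bar t$ a little and placing the hyperplane at $\phi^f(\bar t/2, v)$, resolves it, and everything else is routine continuous-dependence and IVT.
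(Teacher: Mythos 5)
Your construction is correct, and its architecture is the same as the paper's: build a scalar observable of $(t,x)$ that is uniformly strictly increasing in $t$ on $[0,\bar{t}]\times(v+\sigma\BB)$, place the target level at the value attained by the nominal trajectory at the interior time $\bar{t}/2$ so that the bracketing needed for the intermediate value theorem holds on a full ball (this is exactly how the paper handles the forward-time-only restriction you correctly flag: its level is $\varphi(v,\bar{t}_1)$ with $\bar{t}_1=\bar{t}/2$), and then get existence of the crossing time from IVT and uniqueness from strict monotonicity. The only genuine difference is the choice of observable: you use the linear functional $F(t,x)=\langle f(v),\phi^{f}(t,x)-v\rangle$, whose $t$-derivative is $\langle f(v), f(\phi^{f}(t,x))\rangle$ and equals $|f(v)|^{2}>0$ at $(0,v)$, i.e., the standard flow-box transversal hyperplane with normal $f(v)$; the paper instead uses the Bebutov-type integral $\varphi(x,t)=\int_{t}^{t+t_{0}}|v-\phi^{f}(\tau,x)|\,d\tau$ from Nemytskii--Stepanov, whose $t$-derivative is $|v-\phi^{f}(t+t_{0},x)|-|v-\phi^{f}(t,x)|$. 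Your choice is the more elementary one in $\BRn$ (the positivity of the derivative is immediate from $f(v)\neq 0$ rather than requiring the auxiliary time $t_{0}$ with $|v-\phi^{f}(t_{0},v)|>0$), and it also gives you continuity of $x\mapsto t_{v}(x)$ for free via the implicit function theorem, which makes the closedness of $\Sigma$ cleaner than in the paper; the paper's integral construction has the advantage of not using the linear structure of $\BRn$ and hence generalizes to flows on metric spaces. Your middle paragraph wanders through a discarded one-sided framing before arriving at the re-centering fix; in a final write-up you should state the re-centered definition of $F$ (or of its target level $c=F(\bar{t}/2,v)$) up front and delete the exploration, but the argument as ultimately assembled is sound.
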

\IfTAC{
\noindent A proof can be found in \CBlue{\cite[Lemma 5.11]{lou:TAC:2022}}.
}{
\begin{proof}
Consider a maximal flow solution $t \mapsto \phi^f(t, x_0)$ to $\dot x = f(x)$ from $x_0 \in \BRn$.
Obviously, $\sol(0,x_0)=x_0$.

\Red{By assumption,
$f$ is continuously differentiable on $\BRn$,
which implies Lipschitz continuity of $f$ \RRed{in $x$ on $\BRn$}
Then, by \cite[Theorem 3.2]{Khalil:2002},
for each initial condition, there exists a unique solution to the dynamical system $\dot{x}=f(x)\quad x\in\BRn$, 
and, by \cite[Theorem 3.5]{Khalil:2002},}
\Blue{it follows that solutions to $\dot x = f(x)$ depend continuously on the initial conditions.
In addition, since $v$ is not an equilibrium point of the dynamical system,
there exists $t_0>0$ such that $|v-\sol(t_{0},v)|>0$.}
Define a function $\varphi$ as
$$
\varphi(x, t)=\int_t^{t+t_0} |v-\sol(\tau,x)| \dif\tau,
$$
which is continuous with respect to $t$,
and
has the partial derivative
$$
\frac{\partial\varphi(x, t)}{\partial t}=|v-\sol(t+t_{0},x)|-|v-\sol(t,x)|.
$$
\CBlue{Note that $\frac{\partial\varphi(x, t)}{\partial t}$ is also continuous with respect to $t$.}
\Blue{Moreover,} since each solution to $\dot{x}=f(x)\quad x\in\BRn$
depends continuously on the initial condition,
\CBlue{the function $\frac{\partial\varphi(x, t)}{\partial t}$ is also continuous with respect to $x$.}

Using the fact that $\sol(0,v)=v$, we have
$$
\frac{\partial\varphi(v, t)}{\partial t}\Big|_{t=0}=|v-\sol(t_{0},v)|>0.
$$
\CBlue{In addition, since $\frac{\partial\varphi(x, t)}{\partial t}$ is continuous with respect to $x$,} 
there exists $\varepsilon>0$ such that for every $p\in v+\varepsilon\BB^{\circ}$
\begin{equation}\label{varphi:incre:eq}
\CBlue{\frac{\partial\varphi(p, t)}{\partial t}\Big|_{t=0}>0.}  
\end{equation}
\CBlue{Then, using the fact that $\frac{\partial\varphi(x, t)}{\partial t}$ is continuous with respect to $t$,
there exists $t_{c}>0$ such that $\sol(t,v)\in v+\varepsilon\BB^{\circ}$
and
\begin{equation}\label{varphi:derivative:eq1}
\frac{\partial\varphi(\sol(t,v), t)}{\partial t}>0,
\end{equation}
for all $t\in [0, t_{c}]$.}

We next prove that i) there exists $\bar{t}>0$ and $\sigma>0$ such that
for all $t\in [0, \bar{t}]$,
$\sol(t, v+\sigma\BB^{\circ}) \subset v+\varepsilon\BB$,
\Mag{($\sol(t, v+\sigma\BB^{\circ}) \subset v+\varepsilon\BB$,)}
and that ii) for each $q\in v+\sigma\BB$, there exists a unique $t_v\in [0,\bar{t}]$ such that
$\varphi(q, t_v)=\varphi(v, \bar{t}/2).$

We prove the first assertion.
First, \Blue{due to the
continuous dependence on initial conditions}, we can choose \CBlue{$\bar{t}\in [0,t_c]$} such that for all $t\in [0, \bar{t}]$
$$
\sol(t,v) \in v+\varepsilon\BB^{\circ}.
$$
Denote $\bar{t}_{1}:=\frac{\bar{t}}{2}.$ Then,
using the fact that 
\CBlue{$\varphi$ is a 
continuous function of $t$,
and using \eqref{varphi:derivative:eq1},} we have
$$
\varphi(v, \bar{t})>\varphi(v, \bar{t}_{1})>\varphi(v, 0).
$$
Next, one can choose $\eta>0$ such that
\addtocounter{equation}{0}
\IfTAC{
\begin{subequations}\label{varphi:eq1}
\begin{align}
&\sol(\bar{t}, v)+\eta\BB \subset v+\varepsilon\BB^{\circ},\tag{17a}\\
&\sol(0, v)+\eta\BB \subset v+\varepsilon\BB^{\circ},\tag{17b}
\end{align}
\end{subequations}
}{\begin{subequations}\label{varphi:eq1}
\begin{align}
&\sol(\bar{t}, v)+\eta\BB \subset v+\varepsilon\BB^{\circ},\tag{40a}\\
&\sol(0, v)+\eta\BB \subset v+\varepsilon\BB^{\circ},\tag{40b}
\end{align}
\end{subequations}
}
and such that for $q \in \sol(\bar{t}, v)+\eta\BB^{\circ}$
there holds $\varphi(q, 0)>\varphi(v, \bar{t}_{1})$ and for $q \in \sol(0, v)+\eta\BB^{\circ}$
there holds $\varphi(q, 0)<\varphi(v, \bar{t}_{1})$.
\\
Finally, we choose a number $\sigma>0$ such that 
\addtocounter{equation}{0}
\IfTAC{
\begin{subequations}\label{varphi:eq2}
\begin{align}
&\sol(\bar{t}, v+\sigma\BB) \subset \sol(\bar{t},v)+\eta\BB^{\circ},\tag{18a}\\
&\sol(0, v+\sigma\BB) \subset \sol(0,v)+\eta\BB^{\circ},\tag{18b}
\end{align}
\end{subequations}
}{
\begin{subequations}\label{varphi:eq2}
\begin{align}
&\sol(\bar{t}, v+\sigma\BB) \subset \sol(\bar{t},v)+\eta\BB^{\circ},\tag{41a}\\
&\sol(0, v+\sigma\BB) \subset \sol(0,v)+\eta\BB^{\circ},\tag{41b}
\end{align}
\end{subequations}
}
and such that for all $t\in [0, \bar{t}]$
\begin{equation}\label{varphi:eqn:1a}
\sol(t, v+\sigma\BB^{\circ}) \subset v+\varepsilon\BB.
\end{equation}

We next prove the second assertion.
For each $q\in v+\sigma\BB$,
using \eqref{varphi:eq1} and \eqref{varphi:eq2},
we have
$$\sol(\bar{t},q)\in \sol(\bar{t},v)+\eta\BB^{\circ} \subset v+\varepsilon\BB^{\circ},$$
$$\sol(0,q)\in \sol(0,v)+\eta\BB^{\circ} \subset v+\varepsilon\BB^{\circ}.$$
Then, 
we obtain
$$\varphi(q, \bar{t})>\varphi(v, \bar{t}_{1})>\varphi(q, 0).$$
%
Hence,
we have that for each $q\in v+\sigma\BB$, there exists 
$t_v\in [0,\bar{t}]$ such that
$\varphi(q, t_v)=\varphi(v, \bar{t}_{1}).$ \Blue{Therefore, for the chosen $\sigma$,
the 
time $t_v\in T_{v}:=\{ t\in [0,\bar{t}]: q\in v+\sigma\BB, \varphi(q,t_{v})=\varphi(v,\bar{t}_{1})\}$ can be used to construct the desired forward local section $\Sigma$.}
The visualized idea of the above proof can be seen in \Figure~\ref{fig:tube}. 
\begin{figure}[!ht]
\psfrag{W}[][][0.5]{\hspace{19mm}$\sol(t,x)\!\rightarrow\! \varphi(x,t)$}
\psfrag{E}[][][0.7]{$\varepsilon$}
\psfrag{Y}[][][0.4]{$\sol(0,v)$}
\psfrag{A}[][][0.4]{\hspace{-3mm}$\sol(\bar{t}_{1},v)$}
\psfrag{C}[][][0.4]{\hspace{-4mm}$\sol(\bar{t},v)$}
\psfrag{Z}[][][0.4]{\hspace{5mm}$\sol(0,q)$}
\psfrag{B}[][][0.4]{\hspace{5mm}$\sol(t_{v},q)$}
\psfrag{D}[][][0.4]{\hspace{3mm}$\sol(\bar{t},q)$}
\psfrag{P}[][][0.6]{\!\!\!$\varphi$}
\psfrag{H}[][][0.5]{$\varphi(q,0)$}
\psfrag{J}[][][0.5]{$\varphi(q,t_v)$}
\psfrag{M}[][][0.5]{\hspace{1mm}$\varphi(q,\bar{t})$}
\psfrag{I}[][][0.5]{$\varphi(v,0)$}
\psfrag{K}[][][0.5]{$\varphi(v,\bar{t}_{1})$}
\psfrag{N}[][][0.5]{\hspace{1mm}$\varphi(v,\bar{t})$}
\psfrag{R}[][][0.6]{$t_{v}$}
\psfrag{S}[][][0.6]{$\bar{t}_{1}$}
\psfrag{U}[][][0.6]{$\bar{t}$}
\psfrag{T}[][][0.6]{$t$}
\centering{\includegraphics[width=8cm]{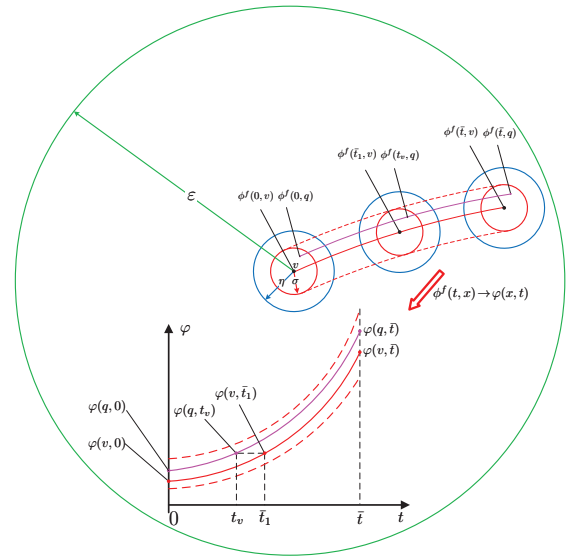}}
\caption{Visualized idea of the proof in Lemma~\ref{lemm:Bebutov:my}} 
\label{fig:tube}
\end{figure}

By the definition of $\Phi_{\bar{t}}$, since $\sol(\bar{t}_{1},v)\in \Phi_{\bar{t}}(v+\sigma\BB)$,
we have $\CBlue{\sol(t_{v},q)}\in \Phi_{\bar{t}}(v+\sigma\BB)$.
Then, the desired forward local section $\Sigma$ can be constructed as
$$\Sigma:=\{\sol(t_{v},q)\in \Phi_{\bar{t}}: q\in v+\sigma\BB\}$$ 
where $t_v\in T_{v}$. 

\CBlue{Finally, it remains to prove the uniqueness of the time $t_v\in T_{v}$.
Suppose that there exist two numbers, $t_{v}^{\prime}\in T_{v}$ and $t_{v}^{\prime \prime}\in T_{v}$,
\CRed{$t_{v}^{\prime}\neq t_{v}^{\prime \prime}$,} such that $\sol(t_{v}^{\prime},q)\in \Sigma$ and
$\sol(t_{v}^{\prime\prime},q)\in \Sigma$.
Without loss of generality, we further suppose $t_{v}^{\prime}<t_{v}^{\prime \prime}$.
Then,
$\varphi(q, t_{v}^{\prime})=\varphi(q, t_{v}^{\prime\prime})=\varphi(v,\bar{t}_{1})$.
\CRed{Note that by \eqref{varphi:derivative:eq1} and \eqref{varphi:eqn:1a},
for each $q\in v+\sigma\BB$, $\varphi(q, \cdot)$ is monotonically increasing on $T_{v}$,
which implies that 
$\frac{\partial\varphi(q, t)}{\partial t}>0$ for all $t\in [t_{v}^{\prime}, t_{v}^{\prime \prime}]$
and thus $\varphi(q, t_{v}^{\prime})<\varphi(q, t_{v}^{\prime\prime}).$
However, since $\varphi(q, t_{v}^{\prime})=\varphi(q, t_{v}^{\prime\prime})=\varphi(v,\bar{t}_{1})$, this leads to
a contradiction.}
%
Therefore, we have that $t_{v}^{\prime}=t_{v}^{\prime\prime}.$}
This completes the proof.
%
\end{proof}
}
\STR

The following result reveals the behavior of the solutions to
the flow dynamics of the hybrid system $\MH|_{M}\!=\! (M\!\cap\! C, f, M\!\cap\! D, g)$
in some neighborhood of any point in $\Blue{M\cap C}$ 
and ensures the existence of a forward local section $\Sigma$ in the tube $\Phi_{\bar{t}}$.

\begin{lemma}\label{lemm:tubular1}
Consider a hybrid system $\MH=(C,f,D,g)$ on $\BRn$ and a compact set $M\subset\BRn$ satisfying Assumption~\ref{ass:basic_data}.
Suppose that for the hybrid system $\MH|_{M} = (M\cap C, f, M\cap D, g)$,
$M\cap C$ has a nonempty interior and 
contains no equilibrium 
set for the flow dynamics
\begin{equation}\label{eq:flowdyna}
\dot{x}=f(x)\qquad x\in \Blue{M\cap C}. 
\end{equation}
For each $v\in (M\cap C)^{\circ}$ 
and a sufficiently small $\bar{t}>0$,
there exists $\sigma>0$
such that
each solution $\phi_x$ to \eqref{eq:flowdyna}
starting from
$\phi_{x}(0,0) \in \Phi_{\bar{t}}(v+\sigma\BB)$
has the following properties:
i) $\Phi_{\bar{t}}(v+\sigma\BB)\!\subset\! (M\cap C)^{\circ}$; ii)
there exists
a forward local section
$\Sigma \!\subset\! \Phi_{\bar{t}}(v+\sigma\BB)$.
\end{lemma}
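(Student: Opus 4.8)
The plan is to reduce this statement to Lemma~\ref{lemm:Bebutov:my}, which already produces a forward local section for the (unconstrained) dynamical system $\dot x = f(x)$ on $\BRn$, and then to check that the constraint set $M\cap C$ does not interfere with the construction provided the tube is taken short enough. First I would fix $v\in (M\cap C)^{\circ}$. Since $M\cap C$ has nonempty interior and contains no equilibrium for the flow dynamics \eqref{eq:flowdyna}, the point $v$ is not an equilibrium of $\dot x = f(x)$; together with continuous differentiability of $f$ on a neighborhood of $M\cap C$ (item~2 of Assumption~\ref{ass:basic_data}), this places us exactly in the hypotheses of Lemma~\ref{lemm:Bebutov:my}. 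Hence for any sufficiently small $\bar t>0$ there is a $\sigma_0>0$ and a forward local section $\Sigma\subset \Phi_{\bar t}(v+\sigma_0\BB)$ for the system $\dot x=f(x)\quad x\in\BRn$.

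Next I would establish property (i): shrink the ball radius so that the whole short tube stays in the interior of $M\cap C$. Because $v\in (M\cap C)^{\circ}$, pick $\rho>0$ with $v+2\rho\BB\subset (M\cap C)^{\circ}$. By continuous dependence of solutions of $\dot x=f(x)$ on initial conditions (using the local Lipschitz property of $f$ implied by item~2 of Assumption~\ref{ass:basic_data}, via \cite[Theorem 3.5]{Khalil:2002}) and the fact that $\phi^f(0,x)=x$, there exist $\bar t>0$ small and $\sigma\in(0,\min\{\sigma_0,\rho\}]$ such that $|\phi^f(t,x)-x|\le \rho$ for all $x\in v+\sigma\BB$ and all $t\in[0,\bar t]$; consequently $\phi^f(t,x)\in v+2\rho\BB\subset (M\cap C)^{\circ}$ for all such $t,x$. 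This gives $\Phi_{\bar t}(v+\sigma\BB)\subset (M\cap C)^{\circ}$. A short remark is needed here: since the points of this tube lie in the interior of $M\cap C$ and in particular away from $D$ (because $D\subset\partial C$ by item~1 of Assumption~\ref{ass:basic_data}, as $h=0$ on $D$ while $h>0$ in $C^\circ$), a solution $\phi_x$ to the constrained flow system \eqref{eq:flowdyna} starting in $\Phi_{\bar t}(v+\sigma\BB)$ coincides, on the relevant short interval, with the solution of the unconstrained system $\dot x=f(x)$ — no flow constraint is active and no jump is triggered — so the two notions of ``solution'' agree on the tube.

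For property (ii), I would simply take the forward local section $\Sigma$ furnished by Lemma~\ref{lemm:Bebutov:my} (intersected with, or constructed inside, the shrunken tube $\Phi_{\bar t}(v+\sigma\BB)$, which is legitimate since the construction there builds $\Sigma$ as $\{\phi^f(t_v,q): q\in v+\sigma\BB\}$ for the times $t_v\in T_v$ and only requires $\sigma$ small). By the previous paragraph this $\Sigma$ lies in $(M\cap C)^{\circ}$, hence is a closed subset of $\Phi_{\bar t}(v+\sigma\BB)$ as required by Definition~\ref{def:sectionlocal}, and every solution $\phi_x$ to \eqref{eq:flowdyna} starting in $v+\sigma\BB$ — which, as noted, agrees with the unconstrained solution on $[0,\bar t]$ — meets $\Sigma$ at a unique time $t_v\in[0,\bar t]$. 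I expect the only genuinely delicate point to be the bookkeeping that the constrained solutions of \eqref{eq:flowdyna} really do coincide with the unconstrained ones on the short tube — i.e. verifying that neither the flow set boundary nor the jump set is reached within time $\bar t$ — which is where the nonempty-interior hypothesis on $M\cap C$ and the choice of a sufficiently small $\bar t$ are essential; the rest is a direct appeal to Lemma~\ref{lemm:Bebutov:my} plus continuous dependence on initial conditions.
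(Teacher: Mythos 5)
Your proposal is correct and follows essentially the same route as the paper: property (ii) is obtained by a direct appeal to Lemma~\ref{lemm:Bebutov:my}, using that the no-equilibrium hypothesis together with item 2) of Assumption~\ref{ass:basic_data} places $v$ in that lemma's hypotheses. You are in fact more careful than the paper's own terse proof on property (i) --- the paper merely asserts that $\Phi_{\bar t}(v+\sigma\BB)\subset (M\cap C)^{\circ}$ because the tube consists of truncated solutions of \eqref{eq:flowdyna}, whereas your continuous-dependence argument for keeping the tube inside the interior, and your check that constrained and unconstrained solutions coincide on the short tube, supply the details that assertion leaves implicit.
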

\IfTAC{
\noindent
A proof can be found in \cite[Lemma 5.12]{lou:TAC:2022}.
}{
\begin{proof}
The first property, namely,
$\Phi_{\bar{t}}(v+\sigma\BB)\subset (M\cap C)^{\circ},$ follows directly from the definition
of $\Phi_{\bar{t}}(v+\sigma\BB)$, since $\Phi_{\bar{t}}(v+\sigma\BB)$ are truncated solutions
to \eqref{eq:flowdyna} starting from  $v+\sigma\BB$.
Next, we apply \Blue{Lemma~\ref{lemm:Bebutov:my}} 
to prove the second property.
Since
$M\cap C$ contains no equilibrium set for the flow dynamics \eqref{eq:flowdyna},
each $x\in M\cap C$ is not an equilibrium point.
Moreover,
from item $2)$ of Assumption~\ref{ass:basic_data},
$f$ is continuous on $M\cap \MC$ and differentiable on a neighborhood of $M\cap \MC$.
Then, by \Blue{Lemma~\ref{lemm:Bebutov:my}}, 
for each point $v\in (M\cap C)^{\circ}$ and
sufficiently small $\bar{t}>0$,
there exists $\sigma>0$
such that
each solution $\phi_x$ to \eqref{eq:flowdyna} starting from
$\phi_{x}(0,0) \in \Blue{v+\sigma\BB}$
satisfies the second property in the theorem.
The proof is complete.
\end{proof}
}

The following result
is derived via an application of
the tubular flow theorem
\IfTAC{\cite[Chapter 2, Theorem 1.1]{Palis:1982}}{(Theorem~\ref{appx:thm1} in Appendix)}
to the flow dynamics \eqref{eq:flowdyna}. 

\begin{lemma}\label{lemm:tubular}
Consider a hybrid system $\MH=(C,f,D,g)$ on $\BRn$ and a compact set $M\subset\BRn$ satisfying Assumption~\ref{ass:basic_data}.
Suppose that for the hybrid system $\MH|_{M} = (M\cap C, f, M\cap D, g)$,
\Blue{$M\cap C$ has a nonempty interior}
and contains \Red{no critical points}\footnote{\Red{
For a differential map $f: \BR^{m}\rightarrow \BRn$,
a point $x$ is a critical point of $f$
if $\frac{\partial f}{\partial x}(x)$ is not full rank and
is a regular point if $\frac{\partial f}{\partial x}(x)$ is full rank.}}
 \Red{of the map $f$}.
For any open set $U\subset M\cap C$
and
for each point $v\in U$, 
there exists an open neighborhood $\mathcal{N}_{v}\subset U$ of $v$
such that
solutions to \eqref{eq:flowdyna} from $\mathcal{N}_{v}$
are diffeomorphic to the solutions to the system
\begin{equation}\label{eq:xii}
\dot \xi_1 = 1, \ \dot \xi_i = 0 \quad \forall i \in \{2,3,\cdots,n\}
\end{equation}
on $(-1, 1)^n.$
\end{lemma}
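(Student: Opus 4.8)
The plan is to obtain the claimed straightening of the flow as a direct application of the tubular flow theorem (Theorem~\ref{appx:thm1}) to the vector field $f$ at the point $v$, after checking that its hypotheses hold locally around $v$ and after reducing the constrained flow dynamics \eqref{eq:flowdyna} to the unconstrained one near $v$.

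First I would record the two facts that make $v$ admissible for the tubular flow theorem. By item~$2)$ of Assumption~\ref{ass:basic_data}, $f$ is continuously differentiable on an open neighborhood of $M\cap C$, hence $C^1$ near $v$ and, in particular, locally Lipschitz, so solutions of $\dot x=f(x)$ exist and are unique near $v$. Since $M\cap C$ contains no critical points of $f$, the point $v$ is a regular point of the flow dynamics; in particular $f(v)\neq 0$. Because $U$ is open and contains $v$, there is $\rho>0$ with $v+\rho\BB\subset U\subset M\cap C$, so on $v+\rho\BB$ the state constraint $x\in M\cap C$ is inactive: the solutions of \eqref{eq:flowdyna} lying in $v+\rho\BB$ are exactly the solution segments of the unconstrained field $\dot x=f(x)$ that remain in $v+\rho\BB$, to which Theorem~\ref{appx:thm1} applies.

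Next, applying Theorem~\ref{appx:thm1} to $f$ at $v$ (working on the ball $v+\rho\BB$) yields an open neighborhood $\mathcal{N}_v\subset v+\rho\BB\subset U$ of $v$ and a $C^1$ diffeomorphism $\Psi:\mathcal{N}_v\to(-1,1)^n$ such that $D\Psi(x)\,f(x)=e_1$ for all $x\in\mathcal{N}_v$, where $e_1=(1,0,\dots,0)$. I would then check that $\Psi$ conjugates solutions: if $t\mapsto\phi^f(t)$ solves \eqref{eq:flowdyna} with its trajectory in $\mathcal{N}_v$, then $\xi:=\Psi\circ\phi^f$ satisfies $\dot\xi=D\Psi(\phi^f)\,f(\phi^f)=e_1$, i.e. $\dot\xi_1=1$ and $\dot\xi_i=0$ for all $i\in\{2,\dots,n\}$, which is exactly \eqref{eq:xii} on $(-1,1)^n$; conversely $\Psi^{-1}$ carries solutions of \eqref{eq:xii} back to solutions of \eqref{eq:flowdyna}. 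Uniqueness of solutions makes this correspondence a bijection, so $\Psi$ is the desired diffeomorphism between the solutions to \eqref{eq:flowdyna} through $\mathcal{N}_v$ and those of \eqref{eq:xii}.

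The main obstacle is bookkeeping rather than a deep difficulty: Theorem~\ref{appx:thm1} only produces a \emph{local} flow box — a neighborhood swept out by short flow segments — so the statement must be read as concerning the restrictions of solutions to $\mathcal{N}_v$, and one must pass from the constrained system \eqref{eq:flowdyna} to the unconstrained field before invoking the theorem, which is why we shrink to the ball $v+\rho\BB\subset U\subset M\cap C$. A second point that deserves care is the precise role of the ``no critical points'' hypothesis: here it is used only to guarantee that $v$ is not an equilibrium of the flow dynamics, i.e. $f(v)\neq 0$, which is the regularity condition required by Theorem~\ref{appx:thm1} — the same non-degeneracy already exploited in Lemma~\ref{lemm:tubular1} and Lemma~\ref{lemm:Bebutov:my}, which supply the transversal section underlying the construction.
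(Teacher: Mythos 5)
Your proposal is correct and follows essentially the same route as the paper: verify that $f$ is $C^1$ and that $v$ is a regular point, then invoke the tubular flow theorem (Theorem~\ref{appx:thm1}) to obtain $\mathcal{N}_v\subset U$ and the straightening diffeomorphism. The extra steps you supply — shrinking to a ball where the constraint $x\in M\cap C$ is inactive and checking via the chain rule that the diffeomorphism conjugates solutions of \eqref{eq:flowdyna} to those of \eqref{eq:xii} — are left implicit in the paper's proof but are consistent with it, so no further changes are needed.
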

\begin{proof}
We use the tubular flow theorem \IfTAC{\cite[Chapter 2, Theorem 1.1]{Palis:1982}}{(see Theorem~\ref{appx:thm1} in Appendix)}
to prove the result.
First, we verify the conditions of the tubular flow theorem.
Since
\Red{$M\cap C$ contains no critical points for the map $f$ in \eqref{eq:flowdyna},}
each
$x\in M\cap C$ 
is a regular point of $f$.
Moreover,
since
\Blue{$M\cap C$ has a nonempty interior and}
 $f$ is continuously differentiable by item 2) of Assumption~\ref{ass:basic_data},
$f$ is a vector field of class $\mathcal{C}^{r}$,\footnote{$\mathcal{C}^{r}$ denotes
the differentiability class of mappings having $r$ continuous derivatives.} $r\geq 1$,
on any open set $U\subset M\cap C.$
Therefore, all conditions in the tubular flow theorem
are verified.

Now, by the tubular flow theorem,
letting $v\in U$ be a regular point of $f$,
there exists an open neighborhood $\mathcal{N}_{v}\subset U$ of $v$
such that solutions to \eqref{eq:flowdyna} from $\mathcal{N}_{v}$
are diffeomorphic to the solutions to the system \eqref{eq:xii}
on $(-1, 1)^n.$
\end{proof}

The following result provides sufficient conditions for the existence of
a hybrid limit cycle of a hybrid system.\footnote{
Here, we establish sufficient conditions for the existence of
a hybrid limit cycle \emph{with multiple jumps} in each period.
A hybrid limit cycle notion allowing for multiple jumps in a period can
be defined similarly; see \IfTAC{\cite{lou.li.sanfelice16:TAC}}{\cite{lou.li.sanfelice16:TAC,Lou:ADHS15}}.
For specific systems with one jump as will be illustrated in next examples, the result is also applicable.}
In addition to technical conditions, ZLAS would serve as a sufficient condition for the existence of a
hybrid limit cycle,
which 
is motivated by the use of ZLAS for continuous-time systems in \cite{Ding:2004}.

\begin{theorem}\label{thm:exist2}
Consider a hybrid system $\MH=(C,f,D,g)$ on $\BRn$ and a compact set $M\subset\BRn$ satisfying Assumption~\ref{ass:basic_data}.
Suppose that for the hybrid system $\MH|_{M} = (M\cap C, f, M\cap D, g)$,
$M\cap C$
\Red{has a nonempty interior and contains no critical points for the map $f$},
and contains no equilibrium set for the flow dynamics \eqref{eq:flowdyna},
and
for each $x\in M\cap C$,
each maximal solution to $\MH|_{M}$ 
is 
complete with 
its hybrid time domain unbounded in the $t$ direction,
and
each solution to \eqref{eq:flowdyna} 
is not complete and ends at a point in $M\cap C$.
Then, for each solution $\phi \in \MS_{\MH|_{M}}(M\cap C)$,
$\MH|_{M}$ has a nonempty $\omega$-limit set $\Omega(\phi)$.
In addition, if 
the solution $\phi$ is ZLAS and
$\Omega(\phi)\cap (M\cap C)^{\circ}$ is nonempty,
then
$\Omega(\phi)$
is a hybrid limit cycle for $\MH|_{M}$ with period given by some
$T^{*} > 0$ and multiple jumps per period.
\end{theorem}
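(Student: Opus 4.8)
The plan is to prove the two assertions in turn. For the first, observe that any $\phi\in\MS_{\MH|_{M}}(M\cap C)$ is, by hypothesis, complete with $\dom\phi$ unbounded in the $t$-direction, and its range lies in the compact set $M$; hence $\phi$ is precompact. Since $\MH|_{M}$ satisfies the hybrid basic conditions (see the remark following Assumption~\ref{ass:basic_data}), the theory of $\omega$-limit sets of precompact solutions applies: choosing $(t_{i},j_{i})\in\dom\phi$ with $t_{i}+j_{i}\to\infty$ and extracting a convergent subsequence of $\{\phi(t_{i},j_{i})\}\subset M$ shows $\Omega(\phi)\neq\emptyset$; moreover $\Omega(\phi)$ is compact, weakly invariant, and $|\phi(t,j)|_{\Omega(\phi)}\to 0$, which I would reuse below \cite[Corollary 6.18]{Goebel:book}.

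For the second assertion, fix $v\in\Omega(\phi)\cap(M\cap C)^{\circ}$. Since $M\cap C$ contains no equilibrium of \eqref{eq:flowdyna} and no critical point of $f$, Lemma~\ref{lemm:tubular1} provides $\bar t>0$ and $\sigma>0$ with $\Phi_{\bar t}(v+\sigma\BB)\subset(M\cap C)^{\circ}$ and a forward local section $\Sigma\subset\Phi_{\bar t}(v+\sigma\BB)$ through $v$; Lemma~\ref{lemm:tubular} additionally gives coordinates on a neighborhood $\mathcal N_{v}$ of $v$ turning \eqref{eq:flowdyna} into the trivial flow \eqref{eq:xii}, so near $v$ every solution of \eqref{eq:flowdyna} meets $\Sigma$ transversally and exactly once. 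Because $v\in\Omega(\phi)$, there is a sequence $(t_{i},j_{i})\in\dom\phi$ with $t_{i}+j_{i}\to\infty$ and $\phi(t_{i},j_{i})\to v$; for $i$ large $\phi(t_{i},j_{i})\in v+\sigma\BB$, so the forward local section property forces $\phi$ to meet $\Sigma$ within flow time $\bar t$ at a point near $v$. Enumerating the successive crossings of $\phi$ with $\Sigma$ as $q_{1},q_{2},\dots$, one obtains a subsequence $q_{k_{l}}\to v$.

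Next I would introduce the first-return (Poincar\'e) map $\Gamma$ on a relative neighborhood of $v$ in $\Sigma$, $\Gamma(p):=$ the first point at which the maximal solution of $\MH|_{M}$ from $p$ re-meets $\Sigma$, so that $q_{k+1}=\Gamma(q_{k})$. Its well-definedness uses the hypotheses that every maximal solution of $\MH|_{M}$ is complete with $t$-unbounded domain while every solution of \eqref{eq:flowdyna} is incomplete and ends in $M\cap C$: together these force solutions to alternate flow intervals with jumps indefinitely, and transversality of $\Sigma$ guarantees a genuine crossing after each lap. Continuity of $\Gamma$ near $v$ follows from continuous dependence of solutions of \eqref{eq:flowdyna} on initial conditions (item~2) of Assumption~\ref{ass:basic_data}), continuity of the time-to-impact function on the relevant set (Lemma~\ref{lem:continuity_impact_time}, Lemma~\ref{lem:continuityTI}), and continuity of $g$. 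The crux is to promote the subsequential convergence $q_{k_{l}}\to v$ to convergence $q_{k}\to q^{\ast}$ of the whole crossing sequence and to a genuine fixed point $\Gamma(q^{\ast})=q^{\ast}$. This is where ZLAS is essential: after translating $\phi$ so that $\phi(0,0)$ is a crossing point lying inside the attractivity ball of radius $\mu$ around $v$ (the tail of a ZLAS solution being again ZLAS, with possibly smaller $\mu$), the solution $\phi_{v}$ through $v$ --- which, by weak invariance of $\Omega(\phi)$, can be taken with $\text{rge}\,\phi_{v}\subset\Omega(\phi)$ --- belongs to $\MS_{\MH|_{M}}(\phi(0,0)+\mu\BB)$, so ZS and ZLA yield a reparametrization $\tau\in\MT$ along which $\phi_{v}$ tracks, and asymptotically coincides with, $\phi$. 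Reading this at the $\Sigma$-crossings and using transversality, the crossings of $\phi_{v}$ accumulate at $v=\phi_{v}(0,0)$; a solution that is ZS and returns arbitrarily close to its initial state along $\Sigma$ must be flow-periodic, so $\phi_{v}$ is flow-periodic with some smallest period $T^{\ast}>0$ and finitely many jumps per period, $q^{\ast}=v$, and $\Gamma(v)=v$.

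Finally, $\phi^{\ast}:=\phi_{v}$ generates, via Definition~\ref{def:periodic_orbit} extended to several jumps per period, a hybrid limit cycle $\MO=\{x:x=\phi^{\ast}(t,j),\,(t,j)\in\dom\phi^{\ast}\}$ for $\MH|_{M}$. It remains to identify $\Omega(\phi)$ with $\MO$: the inclusion $\MO\subseteq\Omega(\phi)$ holds because $\MO=\text{rge}\,\phi^{\ast}$ is the Hausdorff limit of the laps of $\phi$ between successive crossings $q_{k}\to q^{\ast}=v$ (continuous dependence on the lap and convergence of return times), and these laps are traversed at hybrid times tending to infinity; conversely $\Omega(\phi)\subseteq\MO$ because, $q_{k}\to q^{\ast}$ being established, every lap of $\phi$ eventually lies in any prescribed neighborhood of $\MO$, so every $\omega$-limit point of $\phi$ lies on the closed set $\MO$. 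I expect the main obstacle to be exactly the ZLAS step of the third paragraph: converting the reparametrized, whole-trajectory closeness encoded in ZLAS into convergence of the discrete return iterates to an \emph{exact} (not merely asymptotic) fixed point, together with the auxiliary facts that ZLAS is inherited by tails and that the relevant nearby solution $\phi_{v}$ can be placed inside the $\mu$-ball by translation.
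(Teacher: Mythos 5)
Your setup (nonemptiness, compactness and weak invariance of $\Omega(\phi)$ from precompactness and the hybrid basic conditions; a forward local section $\Sigma$ through a point $v\in\Omega(\phi)\cap(M\cap C)^{\circ}$ via Lemma~\ref{lemm:tubular1}; a return map $\Gamma$ on $\Sigma$) matches the paper's. The divergence — and the gap — is at the fixed-point step. You assert that ``a solution that is ZS and returns arbitrarily close to its initial state along $\Sigma$ must be flow-periodic,'' and you yourself flag this as the main obstacle; but this is precisely the hard content of the theorem and you do not supply an argument for it. Recurrence of the return iterates plus stability does not by itself force an exact fixed point in dimension $n\geq 2$ without additional structure: turning the reparametrized, asymptotic closeness of ZLA into exact periodicity of $\phi_v$ is a nontrivial claim (it is the hybrid analogue of Ding's continuous-time result, whose proof is not a one-liner). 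The paper avoids this entirely: it uses the tubular flow theorem (Lemma~\ref{lemm:tubular}) to carry $\Sigma$ diffeomorphically onto a compact \emph{convex} set $\Sigma_{\xi}\subset(-1,1)^n$, verifies that $\Gamma$ is continuous and maps $\Sigma$ into $\Sigma$, and then invokes Brouwer's fixed point theorem to produce \emph{some} fixed point $q\in\Sigma$ — it never claims that $v$ itself is fixed. ZLAS is used only to establish that $\Gamma(\Sigma)\subseteq\Sigma$, i.e., that every solution starting on $\Sigma$ returns to $\Sigma$ (the paper's Claims 1 and 2).

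A secondary weakness: your justification that $\Gamma$ is well defined on a relative neighborhood of $v$ in $\Sigma$ (``solutions alternate flow intervals with jumps indefinitely, and transversality guarantees a crossing after each lap'') only shows that solutions keep reaching the jump set $M\cap D$, not that they return to the \emph{local} section $\Sigma$ near $v$. Returning to a small section requires the recurrence of $\phi$ to $v$ \emph{transferred to all solutions starting on} $\Sigma$, which is exactly what the paper extracts from ZLA (tracking $\phi$, whose tail stays within $\sigma/2$ of $v$, up to reparametrization). So the proposal is recoverable only if you either (i) replace your fixed-point step by the Brouwer argument on $H^{-1}(\Sigma)$, or (ii) actually prove the ``ZS $+$ recurrence $\Rightarrow$ flow-periodic'' lemma you are relying on; as written, the proof is incomplete at its central step.
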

\begin{proof}
First, we prove nonemptyness and forward invariance of $\Omega(\phi)$.
Since the hybrid system $\MH=(C,f,D,g)$ on $\BRn$ and a compact set $M\subset\BRn$ satisfy Assumption~\ref{ass:basic_data},
every maximal solution to $\MH|_{M}$ is unique
via \cite[Proposition 2.11]{Goebel:book}
and
$\MH|_{M}$ satisfies the hybrid basic conditions.
By \cite[Theorem 6.8]{Goebel:book}, $\MH|_{M}$ is nominally well-posed.
Since each solution $\phi \in \MS_{\MH|_{M}}(M\cap C)$ is unique and
complete,
the set $M\cap C$ is forward invariant for $\MH|_{M}$.
Note that completeness of each solution $\phi$ and the compactness of $M$ 
imply that each $\phi$ is bounded.
Then, it follows from \cite[Lemma 3.3]{Sanfelice:2007} that the $\omega$-limit set $\Omega(\phi)$ is a nonempty, compact, and weakly invariant subset of $M$.

Next, we prove the existence of a \Blue{forward local section} $\Sigma$.
By assumption, let the solution $\phi \in \MS_{\MH|_{M}}(M\cap C)$ be ZLAS.
Since $\Omega(\phi)\cap (M\cap C)^{\circ}$ is nonempty, 
we can choose a point $p\in \Omega(\phi)\cap (M\cap C)^{\circ}$.
By Definition~\ref{def:LS1},
one can choose a sequence $\{ (t_{i}, j_{i}) \}_{i=1}^{\infty}$ 
such that $\lim_{i\RA\infty} t_i + j_i=\infty$
and $\lim_{i\RA\infty} \phi(t_i, j_i) = p\in \Omega(\phi)\cap (M\cap C)^{\circ}$.
Therefore,
there exist positive constants \Blue{$0<\sigma<\delta$} such that
$p+\sigma\BB\subset \phi(t_l, j_l)+\delta\BB$
and for sufficiently small $\bar{t}>0,$
$\Phi_{\bar{t}}(p+\sigma\BB)\subset \phi(t_l, j_l)+\delta\BB$
for some $(t_{l}, j_{l})\in \{ (t_{i}, j_{i}) \}_{i=1}^{\infty}$ with $(t_{l}, j_{l})\in \dom\phi$.
Then,
with the picked constants $\sigma$ and $\bar{t}$ (which can be chosen smaller if necessary),
by Lemma~\ref{lemm:tubular1},
we have
the following properties:
i) $\Phi_{\bar{t}}(p+\sigma\BB)\subset (M\cap C)^{\circ}$; ii)
there exists
a \Blue{forward local section}
$\Sigma \subset \Phi_{\bar{t}}(p+\sigma\BB)$.

Now,
to show the existence of a hybrid limit cycle,
let us introduce a Poincar\'{e} map for local structure of
hybrid systems.
Given the \Blue{forward local section}
$\Sigma \subset \Phi_{\bar{t}}(p+\sigma\BB)$, we denote
the Poincar\'{e} map as $\Gamma:\Sigma\RA \Sigma$  and define it as
\IfConf{
\begin{equation}\label{eq:PXaa}
\begin{array}{llll}
\Gamma(x):= \big\{
\psi(t,j)\in \Sigma: & \!\!\!\! \psi \in \mathcal{S}_{\MH|_{M}}(x), t>0,\\
& \!\!\!\!\!\!\!\!\!\! (t,j)\in \dom \psi \; \big\}\quad \forall x\in \Sigma.
\end{array}
\end{equation}
}{
\begin{equation}\label{eq:PXaa}
\Gamma(x):= \big\{
\psi(t,j)\in \Sigma: \psi \in \MS_{\MH|_{M}}(x), t>0, (t,j)\in \dom \psi \; \big\}\quad \forall x\in \Sigma.
\end{equation}
}

We next prove that
i) for each solution $\psi$ to $\MH|_{M}$ starting from $\psi(0,0)\in\Sigma$,
there exists $(t,j)\in\dom\psi$ with $t>0$ such that
$\psi(t,j)\in\Sigma$,
 and that ii)
the Poincar\'{e} map $\Gamma$  
has a fixed point $q\in \Sigma$.\footnote{A point $q$ is a fixed point of a Poincar\'{e} map $\Gamma: \Sigma \to \Sigma$ if
$q=\Gamma(q)$.}

We prove the first assertion. By the definition of $\Omega(\phi)$ and from the analysis above, we have the following claim.

\textbf{Claim 1:} With the solution $\phi$ and $\Blue{\sigma}$ above, there exists $(t_k, j_k) \in \{ (t_{i}, j_{i}) \}_{i=1}^{\infty}$
such that $|\phi(t_k, j_k)-p|\leq \sigma/2$
and $|\phi(t_m, j_m)-p|\leq \sigma/2$ for each $t_m\geq t_{k}$ and each $j_m\geq j_{k}.$

Let $\phi_1(0,0):=\phi(t_k, j_k)$ as above
and define $\phi_1$ as the translation of $\phi$ by $(t_k, j_k)$, which leads to
a complete solution $\phi_1$ due to completeness of $\phi$.
By assumption, the solution $\phi_1$ to $\MH|_{M}$ is ZLAS.
Then, we have the following claim by Definition \ref{def:ZAS}.

\textbf{Claim 2:} 
With $\sigma$ above,
for each $\phi_2 \in \MS_{\MH|_{M}}(\phi_1(0,0) + \delta\BB)$
there exists a function $\tau \in \MT$ such that
for $\varepsilon=\sigma/2>0$
there exists $T>0$ for which we have
$(t, j)\in \dom \phi_1,$ $t+j\geq T$ implies
that $(\mtau, j) \in \dom \phi_2$ and
$|\phi_1(t,j) - \phi_2(\mtau, j)| \leq \varepsilon=\sigma/2.$

Since
$\Sigma\subset \Phi_{\bar{t}}(p+\sigma\BB)\subset \phi_1(0,0)+\delta\BB$,
each solution $\phi_{3}$ to $\MH|_{M}$ from \textcolor{black}{$\phi_{3}(0,0)\in \Sigma$} also satisfies
\textbf{Claim 2}.
\Blue{In addition,
by assumption, since each solution to \eqref{eq:flowdyna} 
is not complete and ends at a point in $M\cap C$,
we have recurrent jumps.}
Therefore,
from \textbf{Claim 1} and \textbf{Claim 2},
for each $\phi_{3} \in \MS_{\MH|_{M}}(\Sigma)$,
there exist $\tau \in \MT$ and $(t_{m}, j_{m})\in\dom\phi_{1}$
satisfying $t_m\geq 0$, $j_m\geq 1,$ and
$t_{m}+j_{m}\geq T$
such that $(\tau(t_m), j_m) \in \dom \phi_{3},$
$|\phi_{1}(t_m, j_m)-p|\leq \sigma/2,$
and
$$|\phi_{1}(t_m, j_m) - \phi_{3}(\tau(t_m), j_m)| \leq \sigma/2,$$
which leads to
$\phi_{3}(\tau(t_m), j_m)\in \Phi_{\bar{t}}(p+\sigma\BB).$\footnote{
If this conclusion holds for $j_m=1$,
the remaining proofs will show that
$\MH|_{M}$ has a hybrid limit cycle with one jump in each period.}
In addition, there exists $(t_{m}, j_{m})\in\dom\phi_{1}$ as above such that\footnote{
Since $p\in \Omega(\phi)\cap (M\cap C)^{\circ}$
and $\Phi_{\bar{t}}(p+\sigma\BB)\subset (M\cap C)^{\circ}$,
there always exists $(t_{m}, j_{m})$ such that
$\phi_{3}(\tau(t_m), j_m)\in \Blue{(p+\sigma\BB)}\backslash D.$
In fact, if that were not the case,
for each
$(t_{m}, j_{m})\in\dom\phi_{1}$
satisfying $t_m\geq 0$, $j_m\geq 1$ and
$t_{m}+j_{m}\geq T$,
we would have $\phi_{3}(\tau(t_m), j_m)\in D.$
Since
$|\phi_{1}(t_m, j_m)-p|\leq \sigma/2$
and
$|\phi_{1}(t_m, j_m) - \phi_{3}(\tau(t_m), j_m)| \leq \sigma/2$,
and since $\sigma$ is arbitrary,
$\lim_{m\RA\infty} \phi_{3}(\tau(t_m), j_m) = p\in D$,
which contradicts with the fact $p\in \Omega(\phi)\cap (M\cap C)^{\circ}$.
}
$\phi_{3}(\tau(t_m), j_m)\!\!\in\!\! \Blue{(p+\sigma\BB)}\backslash D.$
Let $\bar{\phi}(0,0)\!\!=\!\!\phi_{3}(\tau(t_m), j_m)$
and
define $\bar{\phi}$ as the translation of $\phi_{3}$ by $(\tau(t_m), j_m)$, which leads to
a complete solution $\bar{\phi}$ due to completeness of $\phi_3$.
%
Then, $\bar{\phi}(0,0)\in \Blue{(p+\sigma\BB)}\backslash D.$
By the second property
of Lemma~\ref{lemm:tubular1}
and the definition of \Blue{forward local section} in Definition \ref{def:sectionlocal},
we have that the solution $\bar{\phi}$ to \eqref{eq:flowdyna}
reaches
the \Blue{forward local section}
$\Sigma \subset \Phi_{\bar{t}}(p+\sigma\BB)$
at a unique time $t_{p}\in [0, \bar{t}]$, 
that is, $(t_{p},0)\in\dom\bar{\phi}$ and
$\bar{\phi}(t_{p},0)\in\Sigma$,
which implies $\phi_{3}(\tau(t_m)\!+\!t_{p}, j_m)\!\!\in\!\!\Sigma$.
Therefore, the first assertion is proved.

To prove the second assertion, that is, that the Poincar\'{e} map $\Gamma$ has a fixed point $q\in \Sigma$,
first we show continuity of $\Gamma$ on $\Sigma$.
Since
$\phi_{3}(0,0)\in\Sigma$ and
$\phi_{3}(\tau(t_m)+t_{p}, j_m)\in\Sigma$,
we have that
$\Gamma(\phi_{3}(0,0))=\phi_{3}(\tau(t_m)+t_{p}, j_m)\in\Sigma$.
Since
$\Sigma\subset\Phi_{\bar{t}}(p+\sigma\BB)$ and
$\bar{t}$ can be sufficiently small, we have
$\phi_{3}(0,0)\in p+\sigma\BB$
and
$\Gamma(\phi_{3}(0,0))\in p+\sigma\BB$.
Then,
it follows that 
\begin{equation*}
|\Gamma(\phi_{3}(0,0))-\phi_{3}(0,0)|\leq |\Gamma(\phi_{3}(0,0))-p|
+|p-\phi_{3}(0,0)|\leq 2\sigma.
\end{equation*}
Therefore, since \Blue{the chosen} $\sigma$ can be \Blue{small enough},
we have that the map $\Gamma$ in \eqref{eq:PXaa} is continuous on $\Sigma$.

Next, by applying the Brouwer's fixed point theorem\IfTAC{,}{~(see Theorem~\ref{appx:thm2} in Appendix),} 
we show
that
the map $\Gamma$ has a fixed point $q\in \Sigma$.
\Red{Note that by assumption,
each $x\in M\cap C$
is a regular point of $f$ and
all conditions in Lemma~\ref{lemm:tubular} are satisfied.
Therefore, $p$ is a regular point of $f$, and} by using Lemma \ref{lemm:tubular},
\CBlue{for any open set $U\subset M\cap C$ containing $p$,
there exists an open neighborhood $\mathcal{N}_{p}\subset U$ of $p$
such that}
solutions to \eqref{eq:flowdyna} from $\mathcal{N}_{p}$
are diffeomorphic to the solutions to the system \eqref{eq:xii} on $(-1, 1)^n.$
In other words,
there exists
a $C^{r}$ diffeomorphism
$H: (-1, 1)^n \rightarrow \mathcal{N}_{p}$
such that
for any solution $\phi_{\xi}$ to \eqref{eq:xii},
$\phi_x=H(\phi_{\xi})$ is a solution to $\dot{x}=f(x)$\ $x\in \mathcal{N}_{p}$.
Note that given an initial condition $(s_{1},s_{2},\cdots,s_{n})\in (-1, 1)^n$,
a solution to \eqref{eq:xii} is given by
$\phi_{\xi}(t,0)=(s_1+t,s_{2},\cdots,s_{n})$
for all $t\in [0, 1-s_1)$, and that
the function
$t\mapsto H(\phi_{\xi}(t,0))$
is a solution to $\dot{x}=f(x)$\ $x\in \mathcal{N}_{p}$.
Denote $\Sigma_{\xi}:=H^{-1}(\Sigma)=\{\phi_{\xi}(t_{p},0)\in (-1, 1)^n:
t\mapsto \phi_{\xi}(t,0) \text{ is a solution to } \eqref{eq:xii}
\text{ from }\phi_{\xi}(0,0)\in (-1, 1)^n,
t\in [0, t_{p}], (t,0)\in\dom\phi_{\xi}\}$.
Note that $\Sigma_{\xi}$ is convex and bounded.
In addition, since $\Sigma$ is closed and
$H$ is a diffeomorphism,
$\Sigma_{\xi}$ is also closed and thus
$\Sigma_{\xi}$ is a convex compact set.
Define a map\footnote{The operator $\circ$ defines a function composition, i.e., $H^{-1}\circ \Gamma\circ H(x) = H^{-1}(\Gamma(H(\xi)))$ for all $\xi \in (-1, 1)^n$.} $\Gamma_{\xi}=H^{-1}\circ \Gamma\circ H$
as $\Gamma_{\xi}: \Sigma_{\xi}\rightarrow \Sigma_{\xi}$.
Due to $H$ being a diffeomorphism and continuity of $\Gamma$,
$\Gamma_{\xi}$ is continuous.
Therefore, by
Brouwer's fixed point theorem\IfTAC{,}{~(see Theorem~\ref{appx:thm2} in Appendix),}
~$\Gamma_{\xi}$
has a fixed point $q'\in \Sigma_{\xi}$,
i.e., $\Gamma_{\xi}(q')=q'$.
Since $\Gamma_{\xi}=H^{-1}\circ \Gamma\circ H$,
we have $H^{-1}\circ \Gamma\circ H(q')=q'$, which implies that
$\Gamma\circ H(q')=H(q')\in\Sigma$. Let $q=H(q')$.
Therefore, we have that $\Gamma$
has a fixed point $q\in \Sigma$,
i.e., $\Gamma(q)=q$.

From the existence of a fixed point for $\Gamma$
and the fact that $\phi_{3}(0,0)\in\Sigma$ and
$\phi_{3}(\tau(t_m)+t_{p}, j_m)\in\Sigma$,
there is a flow periodic solution $\phi^{*}$ to $\MH|_{M}$ with period $T^{*}=\tau(t_m)+t_{p}$ and $j_m$ jumps in each
period.
Therefore, $\MH|_{M}$ has a hybrid limit cycle $\MO$ with $j_m$ jumps in each period.
Note that
for the solution
$\phi\in \MS_{\MH|_{M}}(q)$,
every point $\zeta^*$ in the hybrid limit cycle $\MO$ is in $\Omega(\phi)$
since
there exists a sequence $\{ (t_{i}, j_{i}) \}_{i=1}^{\infty}$ of points $(t_i, j_i) \in \dom\phi$ such that
$\lim_{i\RA\infty} \phi(t_i, j_i) = \zeta^{*}$ with $\lim_{i\RA\infty} t_i + j_i=\infty$.
To prove that every point in $\Omega(\phi)$ is also in the hybrid limit cycle $\MO$,
we proceed by contradiction.
Suppose that $q\in \Omega(\phi)$ and $q\notin \MO$.
Since from the analysis above,
there is a fixed point $q\in\Omega(\phi)\cap \Sigma$
for \Blue{any chosen} $\sigma\in (0,\delta)$
we have
$\phi_{3}(0,0)\in\Sigma$ and
$\phi_{3}(\tau(t_m)+t_{p}, j_m)\in\Sigma$.
Then, $\Gamma(q)=q$, which leads to a contradiction with $q\notin \MO$.
Thus, every point in $\Omega(\phi)$ is also in the hybrid limit cycle $\MO$.
Therefore, we have that $\Omega(\phi)$ is a hybrid limit cycle.
\end{proof}

\begin{remark}\label{rem:ff}
In Theorem~\ref{thm:exist2}, there are several ways to guarantee that
$M\cap C$ contains no equilibrium set for the flow dynamics \eqref{eq:flowdyna}.
One way to assure that is to check if
for each $x\in M\cap C$,
$f^{\T}(x)f(x)>0$.
\end{remark}

The following example illustrates Theorem~\ref{thm:exist2}.

\begin{example}\label{exam:AS2}
Consider the academic system $\MH_{\mathrm{A}}|_{M_{\mathrm{A}}}$ in Example~\ref{exam:AS1}.
We will verify the existence of a hybrid limit cycle via Theorem~\ref{thm:exist2}.
First, items 1)-3) of Assumption~\ref{ass:basic_data} have been illustrated in Example~\ref{exam:AS1}.
\Red{Since the Jacobian of the map $f_{\textrm{A}}$ given by $\mathbb{J}_{f_{\textrm{A}}}(x)=-a$
with $a>0$ has the maximal rank $1$,
$M_{\mathrm{A}}\cap C_{\mathrm{A}}$
contains no critical points for the map $f_{\textrm{A}}$.
By} the definition of $M_{\mathrm{A}}$, for all $x\in M_{\mathrm{A}}\cap C_{\mathrm{A}}$,
$x\leq b_{1}<b/a$,
which implies that $f_{\mathrm{A}}^{\T}(x)f_{\mathrm{A}}(x)=(-ax+b)^2\geq
(b-ab_1)^2>0$.
Then, by Remark \ref{rem:ff},
$M_{\textrm{A}}\cap C_{\textrm{A}}$
contains no equilibrium set for the flow dynamics
$\dot{x}=f_{\textrm{A}}(x)$ \ $x\in M_{\textrm{A}}\cap C_{\textrm{A}}$,
where $(M_{\textrm{A}}\cap C_{\textrm{A}})^{\circ}=(0, b_{1})$ is nonempty.
By the definitions of $f_{\mathrm{A}}$ and $g_{\mathrm{A}}$,
the set $M_{\mathrm{A}}$ is forward invariant and
each $\phi \in \MS_{\MH_{\mathrm{A}}|_{M_{\mathrm{A}}}}(M_{\mathrm{A}}\cap  C_{\mathrm{A}})$
is unique and complete with $\dom\phi$ unbounded in the $t$ direction. 
Next, from the data of $\MH_{\mathrm{A}}|_{M_{\mathrm{A}}}$,
each solution $\phi \in \MS_{\MH_{\mathrm{A}}|_{M_{\mathrm{A}}}}(M_{\mathrm{A}}\cap  C_{\mathrm{A}})$
to $\dot x = f_{\mathrm{A}}(x)$ \ $x\in M_{\mathrm{A}}\cap C_{\mathrm{A}}$ is not complete and ends at a point in $M_{\mathrm{A}}\cap  C_{\mathrm{A}}$.
Therefore, for each maximal solution $\phi$ from $\xi\in [b_2, b_1]$ given by,
for each $(t,j)\in \BR_{\geq 0}\times \BN$,
\begin{equation*}
\phi(t,j)\!=\!\left\{\!\!\!
  \begin{array}{llllcccc}
    (\xi\!-\!\frac{b}{a}) e^{-a t}\!+\!\frac{b}{a} & \!\!t\in [0, t_{1}], j=0\\
    (b_{2}\!-\!\frac{b}{a}) e^{-a (t-t_{1}' )}\!+\!\frac{b}{a} &  \!\!t\in [t_{1}', t_{1}]\!+\!jT^{*}, j\in \BN\backslash\{0\}
  \end{array}
  \right.
\end{equation*}
where $t_{1}'=(j-1)T^{*}+t_{1}$,
$t_{1}=\frac{1}{a}\ln\frac{a\xi-b}{ab_{1}-b}$,
and
$T^{*}=\frac{1}{a}\ln\frac{ab_{2}-b}{ab_{1}-b}$,
by Theorem~\ref{thm:exist2},
$\MH_{\mathrm{A}}|_{M_{\mathrm{A}}}$ has a nonempty $\omega$-limit set
$\Omega(\phi):=\{x\in [0, b_1]: x=\phi(t,1), t\in [t_{1}, t_{1}+T^*]\}$.
Finally, the ZLAS property of
each $\phi \in \MS_{\MH_{\mathrm{A}}|_{M_{\mathrm{A}}}}$ has been verified in Example~\ref{exam:AS1}.
In addition, from the construction of $\Omega(\phi)$ and the condition $b_1>b_2>0$,
$\Omega(\phi)\cap (M_{\mathrm{A}}\cap C_{\mathrm{A}})^{\circ}=[b_2, b_1)$ is nonempty.
Therefore, by Theorem \ref{thm:exist2},
$\Omega(\phi)$
is the hybrid limit cycle for $\MH_{\mathrm{A}}|_{M_{\mathrm{A}}}$ with period
\IfTAC{
$T^{*}=\frac{1}{a}\ln\frac{ab_{2}-b}{ab_{1}-b}$ and one jump per period.}
{
$T^{*}=\frac{1}{a}\ln\frac{ab_{2}-b}{ab_{1}-b}$ and one jump per period as shown in \Figure~\ref{Academic:fig1}.
\begin{figure}[!ht]
\psfrag{x}[][][1.0]{$x$}
\psfrag{t}[][][1.0]{$t$}
\centering{\includegraphics[width=0.3\textwidth]{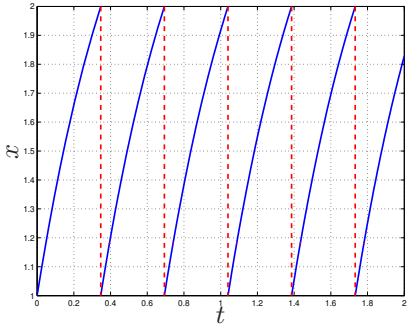}}
\caption{State trajectories of a solution to $\MH_{\mathrm{A}}|_{M_{\mathrm{A}}}$ from $b_2.$
Parameters used in the plot are $a=2,$ $b=6,$ $b_1=2$ and $b_2=1$. The solution is flow periodic with $T^{*}=\ln2/2$.
}
\label{Academic:fig1}
\end{figure}
}
\end{example}

In light of Example~\ref{exam:AS2},
~one may wonder if incremental graphical asymptotic stability
would serve as a necessary condition for the existence of a hybrid limit cycle.
Unfortunately, the fact that incremental graphical asymptotic stability
is a property for all solutions starting in a neighborhood makes it difficult
to allow for the existence of a hybrid limit cycle.
The following result establishes a sufficient condition for the nonexistence of hybrid limit cycles for systems that are $\delta$LAS.

\begin{theorem}\label{thm:nonexist}
For a hybrid system $\MH=(C,f,D,g)$ on $\BRn$ and a closed set $M\subset\BRn$ satisfying Assumption~\ref{ass:basic_data},
consider the hybrid system $\MH|_{M} = (M\cap C, f, M\cap D, g)$ and assume that
each solution $\phi \in \MS_{\MH|_{M}}(M\cap C)$
is 
complete with $\dom\phi$ unbounded in the $t$ direction.
If the hybrid system $\MH|_{M}$ is $\delta$LAS,
then $\MH|_{M}$ has no hybrid limit cycles for $\MH|_{M}$ with period given by some
$T^{*} > 0.$ 
\end{theorem}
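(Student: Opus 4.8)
The plan is to argue by contradiction. Suppose $\MH|_M$ is $\delta$LAS and yet it has a hybrid limit cycle $\MO$ with period $T^* > 0$, generated by a flow periodic solution $\phi^*$ with one jump per period. The idea is to exploit the $\delta$LA property with two solutions that start very close together on $\MO$ but are phase-shifted by one period, so that the graphical closeness forced by $\delta$LA becomes incompatible with the genuine separation along the orbit. Concretely, I would first fix the point $\bar x := \MO \cap (M\cap D)$ at which $\phi^*$ jumps (this exists and is a single point by Lemma~\ref{lem:transversal}), and fix the jump time $t^*$ with $(t^*,0),(t^*,1)\in\dom\phi^*$. Pick a point $x_0 \in \MO$ in the interior of an interval of flow (so $x_0$ is not on $M\cap D$), say $x_0 = \phi^*(s,0)$ for some $0 < s < t^*$, and consider two solutions: $\phi_1 := \phi^*$ itself (the orbit), and $\phi_2 := \phi^*$ shifted so that $\phi_2(0,0) = x_0$ as well — i.e. the same orbit but started from the same point. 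Since $\phi_1$ and $\phi_2$ are just reparametrizations of each other (both equal to the periodic solution up to a shift of the jump index), they are trivially graphically close, which by itself yields nothing. The real mechanism has to come from the \emph{uniformity over a neighborhood} built into $\delta$LAS.

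So the corrected approach: let $\mu > 0$ be the radius from the $\delta$LA definition. Because $\MO$ is a hybrid limit cycle with $T^* > 0$ and contains more than two points, there exist two distinct points $p, q \in \MO$ with $|p - q| =: 3\eta > 0$, both lying on the orbit at times that differ by \emph{less} than one half period while the orbit itself at a later hybrid time has moved a macroscopic distance $\eta$. More useful: take $\phi_1 = \phi^*$ and take $\phi_2$ to be the solution from $\phi_2(0,0) = \phi^*(0,1)$ — that is, $\phi^*$ advanced by exactly one jump (equivalently by $(T^*,1)$ in hybrid time, recalling the periodicity $\phi^*(t+T^*,j+1)=\phi^*(t,j)$). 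Then $\phi_1(0,0) = \phi_2(0,0)$, so the hypothesis $|\phi_1(0,0)-\phi_2(0,0)| \le \mu$ is satisfied. But now for each $(t,j) \in \dom\phi_1$, $\delta$LA gives some $(s,j) \in \dom\phi_2$ with $|t-s|\le\varepsilon$ and $|\phi_1(t,j) - \phi_2(s,j)| \le \varepsilon$ once $t + j \ge T$. Using $\phi_2(s,j) = \phi^*(s,j+1) = \phi^*(s+T^*, j+2)$ — wait; the cleaner bookkeeping is: $\phi_2(t,j) = \phi^*(t + T^*\cdot 0, j+1)$? I would need to track the jump counter carefully, since $\phi_2$ starts at the post-jump point, so $\phi_2(t,j) = \phi^*(t',j+1)$ for an appropriate $t'$. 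The contradiction target is then: the \emph{jump indices must match} (the $j$'s are equal on both sides in Definition~\ref{def:incrementalStable}), but $\phi_1$ at jump index $j$ is near $\bar x$ only at isolated $t$'s whereas $\phi_2$ at the \emph{same} index $j$ is, by construction, exactly one jump out of phase, so it is near $\bar x$ at $t$-values shifted by $\pm T^*$ — and the transversality of $\MO$ to $M\cap D$ (Lemma~\ref{lem:transversal}, with $L_f h(\bar x) \ne 0$) forces a quantitative separation near the jump that cannot be absorbed into an $\varepsilon$ as $\varepsilon \downarrow 0$.

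The rigorous engine I would use is therefore: (i) along $\phi^*$, by Lemma~\ref{lem:transversal} and Assumption~\ref{ass:basic_data}(3) the solution crosses $M\cap D$ transversally, so there is $\rho > 0$ and a time window $[t^* - \rho, t^*]$ on which $h(\phi^*(t,0))$ decreases strictly to $0$ — hence $\phi^*$ is at Euclidean distance at least some $c_0 > 0$ from $\bar x$ whenever its flow time lies at distance $\ge \rho/2$ from the jump time, \emph{within the same jump index}; (ii) the two solutions $\phi_1,\phi_2$ share jump indices in the $\delta$LA matching, but one of them jumps roughly a full $T^*$ of flow time before the other does (measured within a fixed index $j$), because $\phi_2$ is the one-jump-advanced copy; (iii) therefore at a hybrid time where $\phi_1$ is at $\bar x$ (just before its jump at some index $j$), $\phi_2$ at index $j$ has either already jumped long ago or is far from its own jump, so $|\phi_1 - \phi_2| \ge c_0$ at all nearby $(s,j)$ with $|t-s| \le \varepsilon$ once $\varepsilon < \rho/2$ — contradicting $\delta$LA for $\varepsilon := \min\{c_0, \rho\}/3$. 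I would also need completeness and unboundedness of $\dom\phi$ in the $t$-direction (given in the hypothesis) to guarantee there is a hybrid time beyond $T$ at which $\phi_1$ is right before a jump, so the contradiction is actually triggered.

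The main obstacle, and the part that needs care rather than routine calculation, is the jump-index bookkeeping: Definition~\ref{def:incrementalStable} matches $(t,j)$ on $\phi_1$ with $(s,j)$ on $\phi_2$ at the \emph{same} $j$, so I must verify that shifting $\phi^*$ by exactly one jump genuinely produces, at each common index $j$, a one-period flow-time offset in when the jump occurs, and that this offset does not shrink — it is exactly $T^*$ every time, by periodicity — so the transversal-crossing separation argument applies at arbitrarily large hybrid time. A secondary subtlety is ensuring the chosen $\varepsilon$ is smaller than both the transversality margin (so the matched $s$ cannot "jump over" to the other branch) and the orbit's minimal excursion away from $\bar x$; this is where Lemma~\ref{lem:transversal} and the fact that $\MO$ has more than two points are essential. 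Once these are pinned down, the contradiction with $\delta$LA is immediate, completing the proof.
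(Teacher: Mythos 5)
There is a genuine gap, and it sits exactly where you flagged your own unease: the construction of the pair $(\phi_1,\phi_2)$. You propose $\phi_2(0,0)=\phi^{*}(0,1)$, "the one-jump-advanced copy," and assert $\phi_1(0,0)=\phi_2(0,0)$. But $\phi^{*}(0,1)=g(\phi^{*}(0,0))$, and by item 3) of Assumption~\ref{ass:basic_data} we have $g(M\cap D)\cap(M\cap D)=\emptyset$, so this point is \emph{not} equal to $\phi^{*}(0,0)$ and in general is not within $\mu$ of it; the hypothesis $|\phi_1(0,0)-\phi_2(0,0)|\leq\mu$ of $\delta$LA is therefore not available. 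If instead you shift by the full hybrid period $(T^{*},1)$, then $\phi_2(0,0)=\phi^{*}(T^{*},1)=\phi^{*}(0,0)$ and, by uniqueness of solutions under Assumption~\ref{ass:basic_data}, $\phi_2\equiv\phi_1$ — no contradiction can be extracted. So neither reading of "advanced by one jump" produces two admissible, genuinely phase-offset solutions.

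The repair — and this is what the paper does — is to take both initial conditions on $\MO$ but \emph{Euclidean}-close rather than hybrid-time-shifted: $\phi_1(0,0)=\bar x\in\MO\cap(M\cap D)$ (so $\phi_1$ jumps at $t=0$) and $\phi_2(0,0)\in\MO\cap(M\cap C)$ a point on the flow arc just upstream of $\bar x$, chosen with $|\phi_1(0,0)-\phi_2(0,0)|\leq\mu$. Then $\phi_2$ must flow for a positive time before its first jump, the offset $t'_j-t_j$ between the two solutions' jump times is \emph{constant in $j$} by periodicity, and for $\varepsilon=\tfrac{1}{4}(t'_1-t_1)$ and $t^{*}$ the midpoint of a gap $[t_{j^{*}},t'_{j^{*}}]$ at large hybrid time, the interval $[t^{*}-\varepsilon,t^{*}+\varepsilon]$ at index $j^{*}+1$ is disjoint from $\dom\phi_2$ — there is literally no admissible $(s,j^{*}+1)$, so the state estimate in \eqref{increSta2} is never even reached. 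This also shows that your steps (i) and (iii), which invoke transversality and a quantitative state-space separation $c_0$ from $\bar x$, are unnecessary: the contradiction is purely a hybrid-time-domain mismatch. Your core intuition (persistent jump-time offset versus same-index matching in $\delta$LA) is the right one, but as written the argument does not get off the ground because the two solutions you compare either coincide or fail the closeness hypothesis.
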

\IfTAC{
\noindent A proof can be found in \CBlue{\cite[Theorem 5.17]{lou:TAC:2022}}.
}{
\begin{proof}
We proceed by contradiction.
Suppose $\MH|_{M}$ has a 
 hybrid limit cycle $\MO$
defined by a flow periodic solution $\phi^{*}$ with period $T^{*}>0.$ 
%
\SMF By Assumption~\ref{ass:basic_data},
every maximal solution to $\MH|_{M}$ is unique
via \cite[Proposition 2.11]{Goebel:book}. \STF
Consider two maximal solutions $\phi_1, \phi_2 \in \MS_{\MH|_{M}}$ where $\phi_1, \phi_2$ are two flow periodic solutions with
$\phi_i(t, j) = \phi_i(t + T^{*}, j+1)$ for each $(t,j)\in \dom\phi_i$, $i=1,2$.
Without loss of generality, for any $\mu>0$,
we can pick $\phi_1(0,0) \in M\cap D$ and $\phi_2(0,0) \in M\cap C$ satisfying
$|\phi_1(0,0)-\phi_2(0,0)|\leq \mu$.
Then, $\phi_1$ reaches the jump set before $\phi_2$, as $\phi_1(0,0)$ already belongs to the jump set.
For each $j\in \BN$, let
$t_{j} = \max_{(t,j)\in \dom\phi_1\cap \dom\phi_2}t$ and
$t'_{j} = \min_{(t,j+1)\in \dom\phi_1\cap \dom\phi_2}t$.
Then,
for each $j+1\in \BN\backslash\{0\}$,
the hybrid time domain of $\phi_1$
\Cred{is equal to the union of intervals}
$([t_{j},t_{j+1}], j+1)$
and
the hybrid time domain of $\phi_2$
\Cred{is equal to the union of intervals}
$([t_{j}',t_{j+1}'], j+1)$.
In addition, since for each $j\in \BN,$
$t'_{j}>t_{j}$,
it follows that $([t_{j}, t_{j}'], j+1) \subset \dom\phi_1$ and
$([t_{j}, t_{j}'), j+1) \not\subset \dom\phi_2$.
Moreover,
since $\phi_1$ and $\phi_2$ are two flow periodic solutions
that share the same hybrid limit cycle with period $T^{*}$, 
we have $t_j' - t_j = t_1' - t_1$ for each $j \in \BN$.
Now let $\varepsilon = \frac{1}{4} (t_1' - t_1) > 0$, and for any $T>0$, pick
$t^{*} = \frac{1}{2} ( t_{j^{*}} + t_{j^{*}}')$ at some $j^{*}\in \BN$
such that $(t^{*}, j^{*} + 1) \in \dom\phi_1$ and
$t^{*} + j^{*} + 1 > T$. Then, it is impossible to find $(s, j^{*} + 1) \in \dom\phi_2$ such that
$|\phi_1(t^{*}, j^{*}+1) - \phi_2(s, j^{*} + 1)| \leq \varepsilon$ with $|t^{*}-s| \leq \varepsilon$ since
$([t^{*} - \varepsilon, t^{*} + \varepsilon], j^{*} + 1) \subset
([t_{j^{*}}, t_{j^{*}}'), j^{*} + 1) \not\subset \dom\phi_2$.
This contradicts the assumption that the hybrid system $\MH|_{M}$ is $\delta$LAS.
\end{proof}
}

\section{Sufficient Conditions for Asymptotic Stability of Hybrid Limit Cycles}
\label{sec:stability}

\IfTAC{}{
In this section, we present stability properties of hybrid limit cycles for $\MH$.
}

\subsection{Notions}\label{sec:PM}

Following the stability notion introduced in \cite[Definition 3.6]{Goebel:book}, we employ the following notion for
stability of hybrid limit cycles.

\begin{definition}
\label{def:stability_hybrid_orbit}
Consider a hybrid system $\MH=(C,f,D,g)$ on $\BRn$ and a compact hybrid limit cycle $\MO$.
Then, the hybrid limit cycle $\MO$ is said to be
\begin{itemize}

\item \emph{stable} for $\MH$
if for every $\varepsilon>0$ there exists
$\delta>0$ such that every solution
$\phi$ to $\MH$ with $|\phi(0,0)|_{\MO}\leq \delta$ satisfies $|\phi(t,j)|_{\MO}\leq \varepsilon$
for each $(t,j)\in\dom\phi$;

\item \emph{globally attractive} for $\MH$
if every {maximal} solution $\phi$ to $\MH$ from $\bar{\MC}\cup \MD$ is complete and satisfies
$\lim\limits_{t+j\rightarrow\infty}\!|\phi(t,j)|_{\MO}\!=\!0;$

\item \emph{globally asymptotically stable} for $\MH$ if it is
both stable and globally attractive;

\item \emph{locally attractive} for $\MH$ if there exists $\mu>0$ such that every maximal solution $\phi$ to $\MH$
     starting from $|\phi(0,0)|_{\MO}\!\leq\! \mu$ is complete and satisfies
$\lim\limits_{t+j\rightarrow\infty}\!|\phi(t,j)|_{\MO}\!=\!0;$

\item \emph{locally asymptotically stable} for $\MH$ if it is
both stable and locally attractive.
\end{itemize}
\end{definition}

Given $M\subset\BRn$ and $\MH=(C,f,D,g)$,
for $x\in M\cap (\MC\cup \MD)$, define the ``distance" function $d:M\cap (\MC\cup \MD)\to \mathbb{R}_{\geq 0}$ as
\begin{align}
d(x):=\sup_{t\in[0,T_{I}(x)],\
(t,j)\in \dom \phi,\ \phi \in \MS_{\MH|_{M}}(x)} \; |\phi(t,j)|_{\MO}
\nonumber 
\end{align}
 when $0 \leq T_{I}(x) < \infty,$ and
$$
d(x):=\sup_{(t,j)\in \dom \phi,\ \phi \in \MS_{\MH|_{M}}(x)} \; |\phi(t,j)|_{\MO}
$$
if $T_{I}(x) = \infty,$ \Cred{where $T_{I}$ is the time-to-impact function defined in \eqref{eq:TI}}.
Note that $d$ vanishes on $\MO$.
%
Then, following \IfTAC{}{the ideas in} \cite[Lemma 4]{Grizzle:2001},
 the following property of the function $d$ can be established.
\begin{lemma}
\label{lem:continuity_distance}
Consider a hybrid system $\MH=(C,f,D,g)$ on $\BRn$ and a
closed set $M\subset\BRn$ satisfying Assumption~\ref{ass:basic_data}.
Suppose that every maximal solution to $\MH|_{M} = (M\cap C, f, M\cap D, g)$ is complete
and
$\MH|_{M}$ has a flow periodic solution $\phi^{*}$ with
period $T^{*}>0$
that defines a hybrid limit cycle
$\MO\subset M\cap C$.
%
Then, the function
$d:
M \cap C
\!\RA\! \BR_{\geq 0}$
is well-defined and continuous on $\MO$.
\end{lemma}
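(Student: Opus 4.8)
The plan is to prove two things about $d: M\cap C \to \reals_{\geq 0}$: first that it is well-defined, and second that it is continuous at every point of $\MO$. For well-definedness, I would argue that for each $x\in M\cap C$ the supremum defining $d(x)$ is attained over a compact set and is finite. Since every maximal solution to $\MH|_M$ is complete and $M$ is closed (and bounded, being compact under Assumption~\ref{ass:basic_data} as used throughout, or at least solutions remain in $M\cap(\MC\cup\MD)$), each solution $\phi\in\MS_{\MH|_M}(x)$ has $\text{rge}\,\phi\subset M$, so $|\phi(t,j)|_\MO$ is bounded; moreover by Remark~\ref{rk:hybrid_basic_conds} solutions have uniformly separated jumps, so on the interval $[0,T_I(x)]$ there are finitely many jumps, $t\mapsto\phi(t,j)$ is continuous on each flow interval, and the supremum is a maximum over a compact hybrid time domain. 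Uniqueness of solutions (via \cite[Proposition 2.11]{Goebel:book}, since Assumption~\ref{ass:basic_data} holds) removes any ambiguity in the supremum over $\phi$. That $d$ vanishes on $\MO$ follows from forward invariance of $\MO$ (Lemma~\ref{lem:closeness_orbit}): any $\phi\in\MS_{\MH|_M}(x)$ with $x\in\MO$ has $\text{rge}\,\phi\subset\MO$, hence $|\phi(t,j)|_\MO=0$.

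The substantive part is continuity of $d$ on $\MO$. Fix $x^*\in\MO$. I would split into the case $x^*\neq \phi^*(t^*,0)$ (the jump point of the limit cycle in $M\cap D$) and the case $x^*=\phi^*(t^*,0)$. In the first case, $0<T_I(x^*)<\infty$ by Lemma~\ref{lem:continuityTI} (or the argument in its proof using forward invariance), and $T_I$ is continuous at $x^*$ by Lemma~\ref{lem:continuity_impact_time}. Then for $x$ near $x^*$ in $M\cap C$: by continuous dependence of flow solutions on initial conditions (item 2 of Assumption~\ref{ass:basic_data} gives local Lipschitzness of $f$, hence \cite[Theorem 3.5]{Khalil:2002} applies), the flow portion $\phi^f(\cdot,x)$ stays uniformly close to $\phi^f(\cdot,x^*)$ on $[0,T_I(x^*)+\bar\varepsilon]$; since $T_I(x)\to T_I(x^*)$, the solution from $x$ flows until it reaches $M\cap D$ near the point $\bar x:=\phi^f(T_I(x^*),x^*)$, then jumps via the continuous map $g$ to a point near $g(\bar x)$. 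Because $g(\bar x)\in\MO$ (forward invariance applied to the jump of $\phi^*$), and because after the jump the new flow segment again depends continuously on its (now perturbed) initial condition, the entire trajectory from $x$ over $[0,T_I(x)]$ — and indeed the supremum of its distance to $\MO$ — is within $\varepsilon$ of that of $\phi^*$ from $x^*$, which is $0$. Here one must be careful to use the transversality of $\MO$ to $M\cap D$ (Lemma~\ref{lem:transversal}) and the condition $L_f h(\bar x)<0$ to ensure the perturbed solution genuinely crosses into $D$ rather than grazing it — this is exactly the mechanism already exploited in the proof of Lemma~\ref{lem:continuity_impact_time}.

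For the case $x^*=\phi^*(t^*,0)\in M\cap D$, the point $x^*$ is also in $M\cap C$ (since $D\subset\overline C$ and here $\MO\subset M\cap C$ forces $x^*\in M\cap C$), but $T_I(x^*)=0$: the solution jumps immediately. A sequence $x_k\to x^*$ in $M\cap C$ may have $T_I(x_k)$ not small — solutions from just inside $C$ near the jump point may flow almost a full period $T^*$ before returning to $D$ — but the key observation is that $d$ measures the sup over $[0,T_I(x_k)]$, and by forward invariance plus continuous dependence the trajectory from $x_k$ is uniformly close to the limit cycle $\MO$ over its whole period-length flow, provided $x_k$ is close enough to $\MO$ (here using that $\MO$ is the graph of a flow-periodic solution and the flow depends continuously on initial data over the compact time interval $[0,T^*]$, together with one jump via continuous $g$). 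Hence $d(x_k)\to 0 = d(x^*)$. So continuity at the jump point reduces to the same continuous-dependence estimate over one period, uniform in the starting point near $\MO$.

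The main obstacle I anticipate is handling the discontinuity of the flow map of $\MH$ across $M\cap D$: the function $x\mapsto\phi(t,j)$ is not continuous in general, so one cannot simply invoke a black-box continuity result — one must decompose each trajectory into its (at most finitely many, by uniform dwell-time) flow arcs, chain continuous-dependence estimates for the ODE with continuity of $g$ arc-by-arc, and control the jump times via continuity of $T_I$ and transversality so that the number and location of jumps of the perturbed trajectory match those of the nominal one on the relevant interval. This bookkeeping — ensuring the perturbed solution has exactly one jump on $[0,T_I(x)]$ and that $T_I(x)$ is uniformly bounded for $x$ near $x^*$ — is where the care lies, but all the needed ingredients (Lemma~\ref{lem:transversal}, Lemma~\ref{lem:continuity_impact_time}, Lemma~\ref{lem:continuityTI}, Remark~\ref{rk:hybrid_basic_conds}, and forward invariance from Lemma~\ref{lem:closeness_orbit}) are already in place.
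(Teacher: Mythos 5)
Your core mechanism --- continuity of $T_I$ at points of $\MO$ with $0<T_I<\infty$ (Lemma~\ref{lem:continuity_impact_time}), continuous dependence of the flow on initial conditions, and the triangle inequality $|\phi^f(t,x')|_{\MO}\leq |\phi^f(t,x')-\phi^f(t,x)|+|\phi^f(t,x)|_{\MO}$ with the second term vanishing by forward invariance of $\MO$ --- is exactly the paper's argument, so the proof goes through. Two differences are worth flagging. First, you work harder than necessary on the ``chaining across jumps'' issue: by its definition, $d(x)$ only involves the trajectory for $t\in[0,T_I(x)]$, i.e., a single flow arc plus at most the one post-jump value $\phi(T_I(x),1)=g(\phi(T_I(x),0))$, which is disposed of by continuity of $g$ together with $g(\bar x)\in\MO$; no arc-by-arc bookkeeping over several jumps is needed, and the paper's proof never chains estimates across jumps. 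Second, you explicitly treat the jump point $x^{*}=\phi^{*}(t^{*},0)$, which the paper glosses over, and that care is welcome --- but your picture of that case is off. A nearby $x_k\in M\cap C$ cannot ``flow almost a full period before returning to $D$'': since $h(x^{*})=0$ and $L_f h<0$ on a neighborhood of $x^{*}$ by item 3) of Assumption~\ref{ass:basic_data}, any such $x_k$ has $h(x_k)\geq 0$ small and $h$ strictly decreasing along the flow, so $T_I(x_k)\to 0$ and the trajectory over $[0,T_I(x_k)]$ stays near $x^{*}\in\MO$ trivially. Had $T_I(x_k)$ stayed bounded away from zero, the continuous-dependence-over-one-period argument you invoke would actually fail, because the maximal solution of $\MH|_{M}$ from $x^{*}$ jumps immediately, so there is no flow from $x^{*}$ over a period to compare against, and a non-jumping flow from $x_k$ would part company with $\MO$ at the instant $\MO$ jumps. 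With these two simplifications your proposal collapses to the paper's proof.
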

\IfTAC{
\noindent A proof can be found in \cite[Lemma 6.2]{lou:TAC:2022}.
}{
\begin{proof}
Given $x_0$ such that $x_0\in M\cap C$ and $0<T_I(x_0)<\infty$, the solution $t\mapsto \phi^f(t,x_0)$ to $\dot x = f(x)$ is well-defined on $t\in [0,T_I(x_0)]$. Therefore, $d(x_0)$ is also well-defined at $x_0$.

To show continuity of $d$ on $\MO$, consider a point $x\in \MO$.
As argued in the proof of Lemma~\ref{lem:continuity_impact_time}, the solution $t\mapsto \phi^f(t,x_0)$ to $\dot x = f(x)$ depends continuously on the initial condition. Then, by Lemma~\ref{lem:continuity_impact_time}, given $\varepsilon>0$, there exists $\delta>0$ such that for any $x\in \MO$ and $x'\in x+\delta\BB$,
via the triangle inequality, we have that
\begin{align}
|\phi^f(t,x')|_{\MO} \leq  |\phi^f(t,x) - \phi^f(t,x')| + |\phi^f(t,x)|_{\MO}
\end{align}
for each $t\in [0,\min\{T_I(x), T_I(x')\}]$, where $|T_I(x) - T_I(x')| < \varepsilon$ and $|\phi^f(t,x) - \phi^f(t,x')|<\varepsilon$ for each $t\in [0,\min\{T_I(x), T_I(x')\}]$.
\Blue{Then, for the solution $\phi \in \MS_{\MH}(x')$,
we have $\phi(t,0)=\sol(t,x')$
for each $0\leq t\leq T_{I}(x')$.}
 Note that by properties of $f$, $|\phi^f(t,x)|_{\MO} = 0$.
Then,
it follows that $|\phi^f(t,x')|_{\MO} \leq \varepsilon$ and
\begin{align*}d(x')\!=\!\!\!\sup_{
t\in[0,T_{I}(x')], \phi \in \MS_{\MH}(x')}\!\!\! |\phi(t,0)|_{\MO}
=\! \sup_{t\in[0,T_{I}(x')]}\; |\sol(t,x')|_{\MO}.
\end{align*}
\end{proof}
}


\subsection{Asymptotic Stability Properties of $\MO$} 
\label{subsec:stability}

To establish conditions for asymptotic stability of a hybrid limit cycle,
let us introduce a Poincar\'{e} map
for hybrid systems.
 Referred to as the {\em hybrid Poincar\'{e} map}, given a maximal solution
$\phi$ to $\MH|_{M}$, we denote it as $P:M\cap \MD\RA M \cap \MD$  and define
it as\footnote{The hybrid Poincar\'{e} map $P$ in \eqref{eq:PX} is different from
the Poincar\'{e} map $\Gamma:\Sigma\RA \Sigma$ in \eqref{eq:PXaa}.
The map $P$ in \eqref{eq:PX} maps $M\cap \MD$ to $M\cap \MD$ within one jump,
while the map $\Gamma$ in \eqref{eq:PXaa}
maps a closed set $\Sigma\subset (M\cap C)^{\circ}$ 
to $\Sigma$
and allows for multiple jumps.}
\begin{equation}\label{eq:PX}
\begin{array}{lllll}
P(x)\!\!\!\!&:=&\!\!\! \big\{
\phi(T_{I}(g(x)),j): \phi \in \MS_{\MH|_{M}}(g(x)),\\
&& \;\; (T_{I}(g(x)),j)\in \dom \phi \; \big\}
\;\; \forall x\!\in\! M\!\cap\! \MD,
\end{array}
\end{equation}
where $T_{I}$ is the time-to-impact function defined in \eqref{eq:TI}.

The importance of the hybrid Poincar\'{e} map in \eqref{eq:PX}
is that it allows one
to determine the stability of hybrid limit cycles.
Before revealing the stability properties of a hybrid limit cycle, we introduce the
following stability notions for the hybrid Poincar\'{e} map $P$ in \eqref{eq:PX}.
Let $P^k$ denote $k$ compositions of the hybrid Poincar\'{e} map $P$ with itself;
namely, $P^k(x)=\underbrace{P\circ P\cdots \circ P}_{k}(x)$.

\begin{definition}
\label{def:stability_fixedpoint}
A fixed point
$x^{*}$ of a hybrid Poincar\'{e} map $P: M\cap \MD \RA M \cap \MD$ defined in \eqref{eq:PX} is said to be
\begin{itemize}

\item \emph{stable} if for each $\epsilon>0$ there exists $\delta>0$ such that
for each $x\in M\cap \MD$, $|x-x^{*}|\leq \delta$ implies
 $|P^{k}(x)-x^{*}|\leq \epsilon$ for all $k\in\BN$;

\item \emph{globally attractive} if for each $x \!\in\! M \cap \MD$,
$\lim\limits_{k\rightarrow\infty} \! P^{k}(x) \!= \!x^{*};$

\item \emph{globally asymptotically stable} if it is
both stable and globally attractive;

\vspace{4pt}

\item \emph{locally attractive} if there exists $\mu>0$ such that
for each $x\in M\cap \MD$,
 $|x-x^{*}|\leq \mu$ implies
$\lim\limits_{k\rightarrow\infty}P^{k}(x)=x^{*};$
\item \emph{locally asymptotically stable} if it is
both stable and locally attractive.

\end{itemize}
\end{definition}


A relationship between stability of fixed points of hybrid Poincar\'{e} maps and stability of the corresponding hybrid limit cycles is established next.

\begin{theorem}\label{thm:main}
Consider a hybrid system $\MH=(C,f,D,g)$ on $\BRn$ and a
closed set $M\subset\BRn$ satisfying Assumption~\ref{ass:basic_data}.
Suppose that every maximal solution to $\MH|_{M} = (M\cap C, f, M\cap D, g)$ is complete
and
$\MH|_{M}$ has a flow periodic solution $\phi^{*}$ with
period $T^{*}>0$ 
that defines a hybrid limit cycle
$\MO\subset M\cap C$.
Then, the following statements hold:
\begin{itemize}
\item [1)] $x^{*}\in M\cap D$ is a stable fixed point of the hybrid Poincar\'{e} map $P$
 in \eqref{eq:PX}
if and only if the 
hybrid limit cycle $\MO$ of $\MH|_{M}$ generated by a flow periodic solution $\phi^{*}$ with period $T^{*}$
from
$\phi^{*}(0,0) = x^{*}$
is stable for $\MH|_{M}$,
\item [2)] $x^{*} \in M\cap D$ is a globally asymptotically stable fixed point
of the hybrid Poincar\'{e} map $P$ if and only if
$\MH|_{M}$ has a unique hybrid limit cycle $\MO$
generated by a flow periodic solution $\phi^{*}$ with period $T^{*}$
    from
$\phi^{*}(0,0) = x^{*}$ that is
globally asymptotically stable for $\MH|_{M}$ with basin of attraction containing every point in\footnote{A ``global''  property for $\MH|_{M}$ implies a ``global'' property of the original system $\MH$ only when $M$ contains $C$. For tools to establish the asymptotic stability property, see \cite{Goebel:book}.} $M \cap C$.
\end{itemize}
\end{theorem}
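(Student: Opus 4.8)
The plan is to prove both biconditionals by relating a hybrid solution $\phi$ to the sequence of its states at jump times through the hybrid Poincar\'{e} map $P$, using throughout that, under Assumption~\ref{ass:basic_data} and Lemmas~\ref{lem:closeness_orbit}--\ref{lem:transversal}, the hybrid limit cycle $\MO$ is compact and meets $M\cap D$ at the single transversal point $x^{*}:=\MO\cap(M\cap D)$. A preliminary step records that $x^{*}$ is a fixed point of $P$: since $g(x^{*})\in M\cap C$ is the post-jump point of the flow periodic solution $\phi^{*}$ and $g(x^{*})\notin M\cap D$ by item~3) of Assumption~\ref{ass:basic_data}, one has $T^{*}=T_{I}(g(x^{*}))\in(0,\infty)$ and $P(x^{*})=x^{*}$; continuity of $T_{I}$ near $\MO$ (Lemmas~\ref{lem:continuity_impact_time}--\ref{lem:continuityTI}) makes the bookkeeping at $x^{*}$ legitimate.

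Next I would set up a quantitative ``transversality dictionary'': because $z\mapsto|z|_{M\cap D}$ is continuous on the compact set $\MO$ and vanishes only at $x^{*}$, for every $\eta>0$ there is $\rho(\eta)>0$ such that any $w\in M\cap D$ with $|w|_{\MO}\le\rho(\eta)$ obeys $|w-x^{*}|\le\eta$, while $|w-x^{*}|\le\eta$ trivially gives $|w|_{\MO}\le\eta$ since $x^{*}\in\MO$. I would also use the distance function $d$ of Lemma~\ref{lem:continuity_distance}, continuous on $\MO$ and zero there, together with compactness of $\MO$: for every $\rho'>0$ there is $\eta'>0$ with $d(x)\le\rho'$ whenever $x\in M\cap C$ lies within $\eta'$ of $\MO$. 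Since by construction $d(g(x))$ bounds the distance to $\MO$ of the entire flow arc running from $g(x)$ to $P(x)$, and $g$ is continuous with $g(x^{*})\in\MO$, controlling $|x-x^{*}|$ controls both the next iterate $P(x)$ and the flow piece between them; the uniform lower bound $t_{j+1}-t_{j}\ge r>0$ on interjump times from Remark~\ref{rk:hybrid_basic_conds} converts $t+j\to\infty$ into ``infinitely many jumps'' and back.

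With these tools item~1) follows. For ``$\Leftarrow$'', given $\varepsilon$ pick $\rho$ from the dictionary with target $\varepsilon$; by stability of $\MO$ choose $\delta$ so that $|\phi(0,0)|_{\MO}\le\delta$ forces $|\phi(t,j)|_{\MO}\le\rho$ for all $(t,j)$; for $x\in M\cap D$ with $|x-x^{*}|\le\delta$ the iterates $P^{k}(x)$ are states of the solution $\phi\in\MS_{\MH|_{M}}(x)$, so $|P^{k}(x)|_{\MO}\le\rho$ and hence $|P^{k}(x)-x^{*}|\le\varepsilon$. For ``$\Rightarrow$'', given $\varepsilon$ use stability of $x^{*}$ to get $\delta_{P}$; by continuous dependence on initial conditions (item~2) of Assumption~\ref{ass:basic_data}) comparing with $\phi^{*}$ over $[0,T^{*}]$, choose a neighborhood of $\MO$ so small that any solution starting there jumps, its first jump point $x_{1}$ satisfies $|x_{1}-x^{*}|\le\delta_{P}$, and $d(\phi(0,0))\le\varepsilon$; then $|x_{k}-x^{*}|=|P^{k-1}(x_{1})-x^{*}|\le\varepsilon$ for all $k$ while each flow segment stays within $d(g(x_{k}))\le\varepsilon$ of $\MO$, so $|\phi(t,j)|_{\MO}\le\varepsilon$. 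Item~2) combines item~1) for the stability half with attractivity: in ``$\Leftarrow$'', global attractivity of $\MO$ plus the interjump lower bound give $P^{k}(x)=\phi(t_{k},k)$ with $t_{k}+k\to\infty$, so $|P^{k}(x)|_{\MO}\to0$ and the dictionary yields $P^{k}(x)\to x^{*}$; in ``$\Rightarrow$'', $P(x^{*})=x^{*}$ produces $\phi^{*}$ and $\MO$, item~1) gives stability, any complete solution from $\xi_{0}\in M\cap C$ is governed by $x_{k}=P^{k-1}(x_{1})\to x^{*}$ so its flow segments satisfy $d(g(x_{k}))\to0$ and $|\phi(t,j)|_{\MO}\to0$, and uniqueness of $\MO$ follows since a second hybrid limit cycle would meet $M\cap D$ in a fixed point of $P$ other than $x^{*}$, contradicting global attractivity of $x^{*}$.

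The step I expect to be the main obstacle is the last one in item~2): showing that every maximal solution from $M\cap C$ actually reaches $M\cap D$, so that the Poincar\'{e} dynamics controls it, since completeness alone does not exclude a solution that flows forever without jumping. I would handle it by first deducing from item~1) together with the local-attractivity-via-$P$ argument that $\MO$ is locally asymptotically stable, invoking Lemma~\ref{thm:recurrence} to obtain recurrence of $M\cap D$ from the (open, forward invariant) basin of $\MO$, and then arguing that a forever-flowing solution would have to approach the flow portion of $\MO$ near $x^{*}$, where $h$ vanishes and $L_{f}h(x^{*})<0$, forcing it either to leave $C$ or to hit $M\cap D$ --- either way a contradiction --- which closes the global statement.
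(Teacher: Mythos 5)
Your proposal follows essentially the same route as the paper's proof: identify $P^{k}$-iterates with the jump-time samples of a solution, use stability/attractivity of $\MO$ to bound $|P^{k}(x)|_{\MO}$ in one direction, and use continuity of $d\circ g$ at $x^{*}$ (via Lemmas~\ref{lem:continuity_impact_time} and~\ref{lem:continuity_distance}, transversality, and the inter-jump lower bound of Remark~\ref{rk:hybrid_basic_conds}) to control whole flow arcs in the other. Your ``transversality dictionary'' converting $|w|_{\MO}\le\rho$ into $|w-x^{*}|\le\eta$ for $w\in M\cap D$ is a welcome explicit step that the paper leaves implicit when it passes from $|P^{k}(x)|_{\MO}\le\varepsilon$ to stability of the fixed point in the sense of Definition~\ref{def:stability_fixedpoint}. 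The one place you depart from the paper is the final patch for solutions from $M\cap C$ that never reach $M\cap D$: as written it is circular, since ``a forever-flowing solution would have to approach the flow portion of $\MO$'' presupposes the convergence to $\MO$ you are trying to prove, and Lemma~\ref{thm:recurrence} only gives finite-time attractivity of $M\cap D$ from the basin $\MB_{\MO}$, which is exactly the set whose extent is in question. Be aware, however, that the paper's own proof does not address this case at all --- it tacitly assumes every maximal solution from $M\cap C$ jumps infinitely often --- so you have correctly isolated a genuine gap in the necessity direction of item~2) rather than introduced a new one; closing it properly requires an additional hypothesis (e.g., finite time-to-impact from all of $M\cap C$) rather than the argument you sketch.
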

\begin{proof}
We first prove the sufficiency of item $1)$.
\SMF By Assumption~\ref{ass:basic_data},
every maximal solution to $\MH|_{M}$ is unique
via \cite[Proposition 2.11]{Goebel:book}. \STF
Consider the hybrid limit cycle $\MO$ generated by a flow periodic solution to $\MH|_{M}$
from
$x^{*}$ 
with $x^{*} \in M\cap D$.
Since $\MO$ is stable for $\MH|_{M}$,  given $\varepsilon>0$ there exists $\delta >0$ such that for any solution $\phi$ to $\MH|_{M}$, $|\phi(0,0)|_{\MO} \leq \delta$ implies $|\phi(t,j)|_{\MO} \leq \varepsilon $ for each $(t,j)\in \dom \phi$. Since $\phi$ is complete and $P^k(x^{*})=\phi(T_{I}(g(x^{*})),j)$ for some $j$,
in particular, we have that $|P^k(x^{*})|_\MO \leq \varepsilon$ for each $k\in {\mathbb N}$. Therefore, $x^{*}\in M \cap D$ is a stable fixed point of the hybrid Poincar\'{e} map $P$.


Next, we prove the necessity of item $1)$ as in the proof of \cite[Theorem 1]{Grizzle:2001}. Suppose that $x^{*}\in M\cap D$ is a stable fixed point of $P$. Then, $P(x^{*})=x^{*}$ due to the continuity of $P$ in \eqref{eq:PX}
and,
for any $\bar{\epsilon}>0$, there exists $\bar \delta>0$ such that
\begin{align*}
\tilde{x}\in (x^{*} + \bar \delta\BB)\cap (\X\cap\MD)
\end{align*}
implies \IfTAC{$P^{k}(\tilde{x})\in (x^{*}+\bar{\epsilon}\BB)\cap (\X\cap\MD)$ for all $k \in \mathbb{N}.$~}{$$P^{k}(\tilde{x})\in (x^{*}+\bar{\epsilon}\BB)\cap (\X\cap\MD) \quad \forall k \in \mathbb{N}.$$}Moreover, by assumption, every
maximal solution $\phi$ to $\MH|_{M}$ from
\IfTAC{$\tilde{x}\in (x^{*} + \bar \delta\BB)\cap (\X\cap\MD)$~}{\[\tilde{x}\in (x^{*} + \bar \delta\BB)\cap (\X\cap\MD)\]}is complete. Since solutions are guaranteed to exist from
$M \cap D$, there exists a \SMF complete solution $\phi$ \STF from every such
point $\tilde{x}$.
Furthermore, the distance between $\phi$ and the hybrid limit cycle $\MO$ satisfies\footnote{
Given two functions $d: M \cap C
 \to \BR_{\geq 0}$ and $g: M \cap D \to M \cap D$, the operator $\circ$ defines a function composition, i.e., $d \circ g (x) = d(g(x))$ for all $x \in M \cap D$.
}
\[
\sup\limits_{(t,j) \in \dom \phi}
|\phi(t,j)|_{\MO}\leq \sup\limits_{x\in (x^{*} + \bar \delta \BB) \cap (\X\cap\MD)} d\circ g(x).
\]
By Lemma~\ref{lem:continuity_distance}, $d$ is continuous at $x^{*}$. Since $\MO$ is transversal to $\X\cap\MD$, $\MO\cap (\X\cap\MD)$ is a singleton, $g(x^{*})\in\MO,$  and $g$ is continuous,
we have that $d\circ g$ is continuous at $x^{*}$.
Moreover, since $d\circ g(x^{*})=0,$
it follows by continuity that given any $\epsilon>0$, we can pick $\bar \epsilon$ and $\bar \delta$ such that $0<\bar{\epsilon} < \epsilon$ and
\[
\sup\limits_{x\in (x^{*}+ \bar \delta\BB) \cap (\X\cap\MD)} d\circ g(x) \leq \epsilon.
\]
Therefore, an open neighborhood of $\MO$ given by $\MV:=\{x\in \BRn: d(x)\in [0, \Blue{\bar\epsilon})\}$
is such that any solution $\phi$ to $\MH|_{M}$ from $\phi(0,0) \in \MV$ satisfies $|\phi(t,j)|_\MO \leq \epsilon$ for each $(t,j)\in \dom \phi$. Thus, the necessity of item $1)$ follows immediately.

The stability part of item $2)$ follows similarly. Sufficiency of the global attractivity part
in item $2)$ is proved as follows. Suppose the hybrid limit cycle $\MO$ generated by
a flow periodic solution 
to $\MH|_{M}$ from $x^{*}$ 
is globally attractive for $\MH|_{M}$ with basin of attraction containing every point in $M \cap C$.
Then, given $\epsilon>0$, for any solution $\phi$ to $\MH|_{M}$, there exists $\bar T>0$ such that $|\phi(t,j)|_\MO \leq \epsilon$ for each $(t,j)\in \dom \phi$ with $t+j\geq \bar T$.
Note that \Blue{$\phi$ is precompact since
$\phi$ is complete and the set $M$ is compact by Assumption~\ref{ass:basic_data}.}
Therefore,
via \cite[Lemma 2.7]{Sanfelice:2007},
$\dom\phi$ is unbounded in the $t$-direction
as Assumption~\ref{ass:basic_data}
prevents solutions from being Zeno\IfTAC{.~}{(see Remark~\ref{rk:hybrid_basic_conds}).~}
It follows that $|P^k(x^{*})|_\MO \leq \epsilon$ for sufficiently large $k$. Therefore, $x^{*}$ is a globally attractive fixed point of $P$.

Finally, we prove the necessity of the global attractivity property in item $2)$. Assume that $x^{*} \in M \cap D$ is a globally attractive fixed point.
Then,
for any $\bar{\epsilon}>0$, there exists $\bar \delta>0$ such that,
for all $k\in \mathbb{N}$,~\IfTAC{$\tilde{x}\in (x^{*}+ \bar \delta \BB)\cap (\X\cap\MD)$~}{\[\tilde{x}\in (x^{*}+ \bar \delta \BB)\cap (\X\cap\MD)\]}
implies
$\lim\limits_{k\rightarrow\infty}P^{k}(\tilde{x})=x^{*}.$
\SMF Moreover, following from Definition~\ref{def:stability_fixedpoint}, it is implied that a maximal solution $\phi$ to $\MH|_{M}$
from $x^{*}$ is complete. \STF
Then, by continuity of $d$ and $g$,
\[\lim\limits_{k\rightarrow\infty} d\circ g(P^{k}(\tilde{x}))=d\circ g(x^{*})=0,\]
from which it follows that
\IfTAC{
\begin{eqnarray}
\lim\limits_{t+j\RA\infty} |\phi(t,j)|_{\MO}
&\!\!\leq\!\!&
d\circ g(x^{*})=0.\nonumber
\end{eqnarray}}
{\begin{eqnarray}
\lim\limits_{t+j\RA\infty} |\phi(t,j)|_{\MO}
&\!\!\leq\!\!&
\lim\limits_{k\rightarrow\infty}
\sup\limits_{\tilde{x}\in (x^{*}+ \bar \delta \BB) \cap (\X\cap\MD)}
d\circ g( P^{k}(\tilde{x}) )
\nonumber\\
&\!\!\leq\!\!&
\sup\limits_{\tilde{x}\in (x^{*}+ \bar \delta \BB) \cap (\X\cap\MD)}
\lim\limits_{k\rightarrow\infty}
d\circ g( P^{k}(\tilde{x}) )
\nonumber\\
&\!\!\leq\!\!&
d\circ g(x^{*})=0.\nonumber
\end{eqnarray}}The proof is complete. 
\end{proof}

\begin{remark}
In \cite{Grizzle:2001},
sufficient and necessary conditions for stability properties of periodic orbits in impulsive systems
are established using properties of the fixed points of the corresponding Poincar\'{e} maps.
Compared to \cite{Grizzle:2001},
Theorem~\ref{thm:main} enables the use of
the Lyapunov stability tools in \cite{Goebel:book}
to certify asymptotic stability of a fixed point without even
computing the hybrid Poincar\'{e} map.
\end{remark}

At times, one might be interested only on local
asymptotic stability of the fixed point of the hybrid Poincar\'{e} map.
Such case is handled by
the following result.

\begin{corollary}\label{cor:main}
Consider a hybrid system $\MH=(C,f,D,g)$ on $\BRn$ and a closed set $\X\subset\BRn$ satisfying Assumption~\ref{ass:basic_data}.
Suppose that every maximal solution to $\MH|_{M} = (M\cap C, f, M\cap D, g)$ is complete
and
$\MH|_{M}$ has a flow periodic solution $\phi^{*}$ with
period $T^{*}>0$ 
that defines a hybrid limit cycle
$\MO\subset M\cap C$.
Then,
$x^{*}\in M\cap D$ is a locally asymptotically stable fixed point of the
hybrid Poincar\'{e} map $P$ if and only if
$\MH|_{M}$ has a unique hybrid limit cycle $\MO$ generated by a flow periodic solution $\phi^{*}$
with period $T^{*}$ 
from $\phi^{*}(0,0)= x^{*}$ that is locally asymptotically stable for $\MH|_{M}$.
\end{corollary}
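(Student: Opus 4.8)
The plan is to establish the equivalence by adapting the proof of Theorem~\ref{thm:main} to the \emph{local} attractivity notions, reusing the stability equivalence proved there verbatim. First I would invoke item~1) of Theorem~\ref{thm:main}: $x^{*}\in M\cap D$ is a stable fixed point of $P$ if and only if the hybrid limit cycle $\MO$ generated by the flow periodic solution $\phi^{*}$ with $\phi^{*}(0,0)=x^{*}$ is stable for $\MH|_{M}$. This disposes of the ``stable'' half of ``locally asymptotically stable'' on both sides, so it remains only to transfer the attractivity, and for this I would mirror the global attractivity argument of item~2) of Theorem~\ref{thm:main}, replacing the global claims by their local counterparts.

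For sufficiency, assume $\MO$ is locally asymptotically stable for $\MH|_{M}$, with some $\mu>0$ such that every maximal solution with $|\phi(0,0)|_{\MO}\leq\mu$ is complete and satisfies $|\phi(t,j)|_{\MO}\to 0$. By transversality of $\MO$ (Lemma~\ref{lem:transversal}), $\MO\cap(M\cap D)=\{g(x^{*})\}$ is a singleton, so by Lemma~\ref{lem:continuity_distance} the map $d\circ g$ is continuous at $x^{*}$ with $d\circ g(x^{*})=0$; hence there is $\bar\delta>0$ with $g\big((x^{*}+\bar\delta\BB)\cap(M\cap D)\big)\subset\{x:|x|_{\MO}\leq\mu\}$. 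Along any solution $\phi$ from such an initial condition, Remark~\ref{rk:hybrid_basic_conds} (completeness, compactness of $M$, and the uniform positive lower bound on the time between jumps from \cite[Lemma 2.7]{Sanfelice:2007}) rules out Zeno behaviour, so $\dom\phi$ is unbounded in the $t$-direction and infinitely many jumps occur; evaluating $\phi$ at its jump times yields the iterates $P^{k}(x^{*})$, which therefore satisfy $|P^{k}(x^{*})|_{\MO}\leq|\phi(t,j)|_{\MO}\to 0$. Thus $x^{*}$ is a locally attractive fixed point of $P$.

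For necessity, assume $x^{*}$ is a locally asymptotically stable fixed point of $P$, with $\mu>0$ such that $|\tilde x-x^{*}|\leq\mu$, $\tilde x\in M\cap D$, implies $P^{k}(\tilde x)\to x^{*}$. As in the proof of Theorem~\ref{thm:main}, continuity of $d$ and $g$ gives $d\circ g(P^{k}(\tilde x))\to d\circ g(x^{*})=0$, and bounding $|\phi(t,j)|_{\MO}$ over the $k$-th interval of flow of a solution $\phi$ from $\tilde x$ by $d\circ g(P^{k-1}(\tilde x))$ shows $|\phi(t,j)|_{\MO}\to 0$ for solutions starting in $M\cap D$ near $x^{*}$. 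The main obstacle — and the only substantive difference from the global case — is promoting this to attractivity of $\MO$ from a genuine neighborhood of $\MO$ in $M\cap C$, rather than merely from points of $M\cap D$ near $x^{*}$. I would handle this exactly as in the final part of the proof of Theorem~\ref{thm:main}: stability of $\MO$ (already available from item~1)) supplies an open neighborhood $\MV=\{x: d(x)\in[0,\bar\epsilon)\}$ of $\MO$ on which solutions remain $\epsilon$-close to $\MO$; the finite-time attractivity and recurrence of $M\cap D$ from the basin of $\MO$ (Lemma~\ref{thm:recurrence}) then force every solution starting in $\MV$ to reach $M\cap D$ at a point lying, by continuity of $g$ and stability, in $x^{*}+\mu\BB$, after which the fixed-point convergence applies. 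Combining the two directions with the stability equivalence of item~1) of Theorem~\ref{thm:main} yields the claimed iff statement.
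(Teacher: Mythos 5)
Your proposal reproduces the paper's proof of this corollary in all essentials: the stability equivalence is delegated to item 1) of Theorem~\ref{thm:main}; sufficiency of local attractivity of the fixed point is obtained by evaluating a convergent solution at its jump times; and necessity is obtained from continuity of $d\circ g$ at $x^{*}$ together with the bound $\lim_{t+j\to\infty}|\phi(t,j)|_{\MO}\leq \lim_{k\to\infty}d\circ g(P^{k}(\tilde{x}))=d\circ g(x^{*})=0$. (The slight sloppiness of writing $P^{k}(x^{*})$ where $P^{k}(\tilde{x})$ for $\tilde{x}$ near $x^{*}$ is meant is present in the paper's proof as well, so it is not a deviation.)

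The one place you go beyond the paper is the final ``promotion'' step of the necessity direction, where you upgrade convergence from initial conditions in $(x^{*}+\bar\delta\BB)\cap(M\cap D)$ to convergence from a genuine neighborhood of $\MO$ by invoking Lemma~\ref{thm:recurrence}. As written this is circular: the hypothesis of Lemma~\ref{thm:recurrence} is that $\MO$ is locally \emph{asymptotically} stable, i.e.\ stable \emph{and} locally attractive, and local attractivity of $\MO$ is precisely what you are in the middle of establishing. At that stage you only have stability of $\MO$ (from item 1)) and attractivity from jump-set points near $x^{*}$. To close the loop non-circularly you would need to prove finite-time attainment of $M\cap D$ near $x^{*}$ from a neighborhood of $\MO$ directly --- e.g.\ via continuity of the time-to-impact function on $\MO\setminus\{\phi^{*}(t^{*},0)\}$ (Lemma~\ref{lem:continuityTI}), continuous dependence of flows on initial conditions, and item 3) of Assumption~\ref{ass:basic_data} --- rather than via Lemma~\ref{thm:recurrence}. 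Your instinct that an extra step is needed here is sound (the paper silently restricts attention to initial conditions in $M\cap D$ near $x^{*}$ and omits this promotion); only the tool you reach for presupposes the conclusion.
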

\IfTAC{
\noindent The proof can be found in \cite[Corollary 6.6]{lou:TAC:2022}.
}{
\begin{proof}
The sufficiency and necessity of the stability part have been proven in Theorem~\ref{thm:main}. The sufficiency of the claim is proved as follows. Suppose the hybrid limit cycle $\MO$ generated by a flow periodic solution $\phi^{*}$ to $\MH|_{M}$ from $\phi^{*}(0,0)= x^{*}$ is locally attractive for $\MH|_{M}$. Then, given $\epsilon>0$ there exists $\delta>0$ such that
for any solution $\phi$ to $\MH|_{M}$ with
$|\phi(0,0)|_\MO \leq \delta$ and $\phi(0,0)\in \MB_{\MO}$
there exists $\bar T>0$ such that $|\phi(t,j)|_\MO \leq \epsilon$ for each $(t,j)\in \dom \phi$, \SMF $t+j \geq \bar T$. \STF
Then, it follows that $|P^k(x^{*})|_\MO \leq \epsilon$  for sufficiently large $k$. Therefore, $x^{*}\in M\cap \MD$ is a locally attractive fixed point (with basin of attraction given by $M\cap D\cap \MB_{\MO}$).

To prove the necessity of the claim, assume that $x^{*} \in M \cap D$ is a locally attractive fixed point of the hybrid Poincar\'{e} map. Then,
for any $\bar{\epsilon}>0$, there exists $\bar \delta > 0$ such that
for each
\[\tilde{x}\in (x^{*}+ \bar \delta \BB)\cap (\X\cap\MD) \subset \MB_\MO\]
we have
$\lim\limits_{k\rightarrow\infty}P^{k}(\tilde{x})=x^{*}.$
Then, by continuity of $d$ and $g$,
\[\lim\limits_{k\rightarrow\infty} d\circ g(P^{k}(\tilde{x}))=d\circ g(x^{*})=0,\]
from which it follows that
\begin{align}
\lim\limits_{t+j\RA\infty} |\phi(t,j)|_{\MO}
&\leq
\lim\limits_{k\rightarrow\infty}
\sup\limits_{\tilde{x}\in (x^{*}+\bar \delta \BB) \cap (\X\cap\MD)}
d\circ g( P^{k}(\tilde{x}) )
\nonumber\\
&\leq
\sup\limits_{\tilde{x}\in (x^{*}+ \bar \delta \BB) \cap (\X\cap\MD)}
\lim\limits_{k\rightarrow\infty}
d\circ g( P^{k}(\tilde{x}) )
\nonumber\\
&\leq
d\circ g(x^{*})=0.\nonumber
\end{align}
%
\end{proof}
}

\begin{remark}\label{compare:transv}
In \cite{Grizzle:2001} and \cite{HaddadACC:2002}, the authors extend the Poincar\'{e} method to
analyze the stability properties of periodic orbits in nonlinear systems with impulsive effects. In particular, the solutions to the systems considered therein
are right-continuous over (not necessarily closed) intervals of flow.
In particular, the models therein (as well as those in \cite{Tang:Manchester:2014})
require $\MC\cap \MD = \emptyset$, \SMF
which prevents the application of the robustness results in \cite{Goebel:book} due to the fact that the hybrid basic conditions would not hold.
On the other hand,
our results allow us to \STF
establish robustness properties of hybrid limit cycles
as shown in Section~\ref{sec:robustness}.
\end{remark}

\begin{remark}
In \cite{Tang:Manchester:2014}, within a contraction framework,
conditions guaranteeing local orbital stability of limit cycles for a class of hybrid systems are provided,
where, as a difference to the notion used here, orbital stability is solely defined as an attractivity
(or convergence) property.
Note that the case of limit cycles with multiple jumps for hybrid systems is not
explicitly analyzed in~\cite{Tang:Manchester:2014}, while
the results here are applicable to the situation where a hybrid limit cycle may contain multiple jumps within a period; see our preliminary results in \cite{lou.li.sanfelice16:TAC}. 
\end{remark}

\IfTAC{
}{
The following two examples illustrate the sufficient condition in Theorem~\ref{thm:main}
and Corollary~\ref{cor:main}
by checking the eigenvalues of the Jacobian matrix of the hybrid Poincar\'{e} map at the fixed point, respectively.
}


\begin{example}\label{exam:TCP4}
Consider the hybrid congestion control system in Example~\ref{exam:TCP3}.
A solution $\phi^{*}$ to $\MH_{\mathrm{TCP}}|_{M_{\mathrm{TCP}}}=
(M_{\textrm{\tiny TCP}}\cap\MC_{\textrm{\tiny TCP}},f_{\textrm{\tiny TCP}},M_{\textrm{\tiny TCP}}\cap\MD_{\textrm{\tiny TCP}},g_{\textrm{\tiny TCP}})$ from
$\phi^{*}(0,0)=(q_{\max}, 2Bm/(1+m))\in M_{\textrm{\tiny TCP}}\cap \MC_{\textrm{\tiny TCP}}$
is a flow periodic solution with $T^{*}=2B(1-m)/(a+ma),$  
which defines a hybrid limit cycle
$\MO\subset M_{\textrm{\tiny TCP}}\cap \MC_{\textrm{\tiny TCP}}$.
We verify the sufficient
condition 2) in Theorem \ref{thm:main} as follows.
Due to the specific form of the flow map of $\MH_{\textrm{\tiny TCP}}$,
the Jacobian of the hybrid Poincar\'{e} map has an explicit analytic form. The flow solution $\phi^f$ to the flow dynamics $\dot x=f_{\textrm{\tiny TCP}}(x)$ from $\xi$ \Cred{is} given by
\begin{equation}\label{eq:phif:TCP}
\phi^f(t,\xi)=
\left[
\begin{array}{ccc}
\xi_1 + (\xi_2-B)t+ \frac{at^{2}}{2}\\
\xi_2 + at
\end{array}
\right].
\end{equation}
From the definition of the hybrid Poincar\'{e} map and the solution of the flow dynamics
from $x=(q_{\max}, r)$
with $x\in\MD_{\textrm{\tiny TCP}}$,
it follows that
\[P_{\textrm{\tiny TCP}}(x)
=\left[
\begin{array}{cc l}
q_{\max} + (mr-B)\hat{T}+ \frac{a{\hat{T}}^{2}}{2}\\
mr + a\hat{T}
\end{array}
\right],
\]
where $\hat{T}=2(B-mr)/a$, which leads to
\IfTAC{\begin{equation}\label{eq:normal:PM}
P_{\textrm{\tiny TCP}}(x)
=(q_{\max}, 2B-mr).
\end{equation}}{\begin{equation}\label{eq:normal:PM}
P_{\textrm{\tiny TCP}}(x)
=\left[
\begin{array}{ccc}
q_{\max}\\
2B-mr
\end{array}
\right].
\end{equation}}\IfTAC{Then, the eigenvalues of the Jacobian of the hybrid Poincar\'{e} map $P_{\textrm{\tiny TCP}}$ are computed as
$\lambda_{1}=0$ and $\lambda_{2}=-m,$~}{
The Jacobian of the hybrid Poincar\'{e} map $P_{\textrm{\tiny TCP}}$ is given by
\[\mathbb{J}_{P_{\textrm{\tiny TCP}}}(x)=
\left[
\begin{array}{l c l}
0 & 0\\
0 & -m
\end{array}
\right],
\]
whose eigenvalues are
$\lambda_{1}=0$ and $\lambda_{2}=-m,$~}
which, since $m\in (0, 1)$, are inside the unit circle.
According to Theorem \ref{thm:main},
the hybrid limit cycle $\MO$ of the hybrid system $\MH_{\mathrm{TCP}}|_{M_{\mathrm{TCP}}}$
is 
asymptotically stable
with basin of attraction containing every point in $M_{\textrm{\tiny TCP}}\cap \MD_{\textrm{\tiny TCP}}$.  
\end{example}

\IfTAC{}{
\begin{example}\label{exam:Izhi:revisit4}
Consider the Izhikevich neuron system analyzed in Example~\ref{exam:Izhi:revisit2}.
Suppose the hybrid Poincar\'{e} map for $\MH_{\mathrm{I}}|_{M_{\mathrm{I}}}$ is given by $P_{\textrm{I}}$ with a fixed point $x^{*}$.
Using Corollary~\ref{cor:main} to show that
the hybrid limit cycle $\MO$ of $\MH_{\mathrm{I}}|_{M_{\mathrm{I}}}$ is locally asymptotically stable,
it suffices to check the eigenvalues of the Jacobian matrix
of the hybrid Poincar\'{e} map at the fixed point.
Due to the quadratic form in the flow map of $\MH_{\mathrm{I}}$,
an analytical expression of the Jacobian of the hybrid Poincar\'{e} map is difficult to obtain.
Instead, we 
compute the Jacobian based on a numerically approximated Poincar\'{e} map.
We consider the case of intrinsic bursting behavior with parameters  
given in \eqref{eq:Izhi:param}, and obtain a fixed point
$x^{*}\approx (30, -7.50)$ and limit cycle period $T^{*}\approx 31.22$ \Cred{\text{s}}.
The Jacobian
of the hybrid Poincar\'{e} map at the fixed point is
$$\mathbb{J}_{P_{\textrm{I}}}(x^{*})=
\begin{bmatrix}
         0    &     0\\
   0 &  -0.025
\end{bmatrix}.
$$
The eigenvalues of $\mathbb{J}_{P_{\textrm{I}}}$ are
$\lambda_{1}=0$ and $\lambda_{2}=-0.025,$
which are inside the unit circle.
Corollary~\ref{cor:main} implies that the hybrid limit cycle $\MO$ of the Izhikevich neuron model
is locally asymptotically stable.
The properties of the hybrid limit cycle $\MO$
are illustrated numerically in~\Figure~\ref{fig:1},
\Cred{where the hybrid limit cycle
is defined by the solution from the point $P_2$ that jumps at the
point $P_1$.}
Note that this hybrid limit cycle $\MO$
is locally asymptotically stable, but not globally asymptotically stable
with basin of attraction containing every point in $M_{\mathrm{I}} \cap  C_{\mathrm{I}}$.
\end{example}
}

\IfTAC{}{
\begin{example}\label{exam:Compass2}
Consider the compass gait biped system analyzed in Example~\ref{exam:Compass}.
Suppose the hybrid Poincar\'{e} map for $\MH_{\mathrm{B}}|_{M_{\mathrm{B}}}$ is given by $P_{\textrm{B}}$ with a fixed point $x^{*}$.
Using 
Corollary~\ref{cor:main}
to show that
$x^{*}$ is locally asymptotically stable for $\MH_{\mathrm{B}}|_{M_{\mathrm{B}}}$, then
it suffices to check the eigenvalues of the Jacobian matrix
of the hybrid Poincar\'{e} map at the fixed point.
Due to the nonlinear form in the flow map of $\MH_{\mathrm{B}}$,
an analytical expression of the Jacobian of the hybrid Poincar\'{e} map is difficult to obtain.
Instead, we 
compute the Jacobian based on a numerically approximated Poincar\'{e} map.
We consider the parameters given in \eqref{eq:biped:param}, and obtain a fixed point
$x^{*}\approx (0.22,-0.32,-1.79,-1.49)$ and limit cycle period $T^{*}\approx 0.734$ \Cred{\text{s}}.
The Jacobian
of the hybrid Poincar\'{e} map at the fixed point is
{\small$$\mathbb{J}_{P_{\textrm{B}}}(x^{*})=
\begin{bmatrix}
    0.0000  &  0.0000 &  -0.0000 &  -0.0000\\
    4.8611 &  -1.7156 &  -0.0948  &  0.8187\\
    5.3550 &  -1.9180 &  -0.1241  &  1.2845\\
    9.3471 &  -3.2933 &  -0.2036  &  1.9851
\end{bmatrix}.
$$}
The eigenvalues of $\mathbb{J}_{P_{\textrm{B}}}$ are
$\lambda_{1}=0.8897$, $\lambda_{2}= -0.7456$, $\lambda_{3}= -0.0000$, $\lambda_{4}= 0.0013$
with one eigenvalue located at zero and the other one locates inside the unit circle.
Corollary~\ref{cor:main} implies that the hybrid limit cycle $\MO$ of the compass gait biped system
is locally asymptotically stable.
The properties of the hybrid limit cycle $\MO$
are illustrated numerically in~\Figure~\ref{fig:biped},
\Cred{where the hybrid limit cycle
is defined by the solution from the point $P_1$ that jumps at the
point $P_2$.}
Note that this hybrid limit cycle $\MO$
is locally asymptotically stable, but not globally asymptotically stable
with basin of attraction containing every point in $M_{\mathrm{B}} \cap  C_{\mathrm{B}}$.
\end{example}
}

\section{Robustness of \Cred{Asymptotically} Stable Hybrid Limit Cycles}
\label{sec:robustness}

\IfTAC{}{In this section, we
propose results on robustness to generic perturbations,
to inflations of flow and jump sets, and to computation error of the hybrid Poincar\'{e} map.}

\subsection{Robustness to General Perturbations}

First, we present results guaranteeing robustness to generic perturbations of asymptotically stable hybrid limit cycles.
More precisely,
we consider the perturbed continuous dynamics of the hybrid system $\MH|_{M}=(\X\cap\MC,f,\X\cap\MD,g)$ given by \IfTAC{$\dot{x} = f(x+d_{1})+d_{2} \quad  x+\SM d_{3} \ST\in M\cap C,$}{\begin{equation}
\dot{x} = f(x+d_{1})+d_{2} \quad  x+\SM d_{3} \ST\in M\cap C,  \nonumber
\end{equation}} where $d_{1}$ corresponds to state noise \Cred{(e.g., measurement noise)}, $d_{2}$ captures unmodeled
dynamics \Cred{or additive perturbations}, and $d_{3}$ captures
generic disturbances on the state \Cred{when checking if the state belongs to the constraint}.
Similarly, we consider the perturbed discrete dynamics \IfTAC{$x^{+} = g(x+d_{1})+d_{2} \quad  x+\SM d_{4} \ST\in M\cap D,$}{\begin{equation}
x^{+} = g(x+d_{1})+d_{2} \quad  x+\SM d_{4} \ST\in M\cap D, \nonumber
\end{equation}} where $d_{4}$ captures generic disturbances on the state \Cred{when checking if the state belongs to the constraint $M\cap D$}.
The hybrid system $\MH|_{M}$ with such perturbations results in the perturbed hybrid system
\begin{equation}\label{sec3:eq33}
\tilde{\MH}|_{M}
\left\{
\begin{array}{cccccc}
\dot{x} \!\!&=&\!\! f(x\!+\!d_{1})\!+\!d_{2} && \!\!\!\! x+\SM d_{3} \ST\in \X\!\cap\!\MC,\\
x^{+}   \!\!&=&\!\! g(x\!+\!d_{1})\!+\!d_{2} && \!\!\!\! x+\SM d_{4} \ST\in \X\!\cap\!\MD.
\end{array}
\right.
\end{equation}
The perturbations \SM $d_{i}$ $(i=1,2,3,4)$ \ST might be state
or hybrid time dependent, but are assumed to have Euclidean norm bounded by \SM $\bar{M}_{i}\geq 0$ $(i=1,2,3,4)$, \ST and to be admissible, namely,
$\dom d_{i}$ $(i=1,2,3,4)$ is a hybrid time domain and the function $t\mapsto d_{i}(t,j)$ is measurable on $\dom d_{i}\cap (\BR_{\geq 0}\times
\{j\})$ for each $j\in\BN$.

\IfTAC{}{
Next,
we recall the following stability notion from \cite[Definition 7.10]{Goebel:book}.

\begin{definition}\label{def:KLAS}
{($\KL$ asymptotic stability)}  
Let $\MH$ be a hybrid system on $\BRn$, $\MA\subset\BRn$ be a compact set, and
$\MB_{\MA}$ be the basin of attraction of the set $\MA$.\footnote{\Cred{The}
 $\MB_{\MA}$ is the set of points $\xi\in\BRn$ such that every
complete solution $\phi$ to $\MH|_{M}$ with $\phi(0,0)=\xi$ is bounded and $\lim_{t+j\rightarrow\infty}
|\phi(t,j)|_{\MA}=0.$
}
The set $\MA$ is $\KL$ asymptotically stable on $\MB_{\MA}$ for $\MH$ if for every
proper indicator $\omega$ of $\MA$ on $\MB_{\MA}$, there exists a function $\beta\in\KL$
such that \Cred{every solution $\phi\in\MS_{\MH}(\MB_{\MA})$ satisfies}
\begin{equation}\label{KLbound}
\omega(\phi(t,j))\leq
\beta(\omega(\phi(0,0)), t+j)
\quad \forall (t,j)\in \dom \phi.
\end{equation}
\end{definition}
}

\IfTAC{
}
{The next result establishes that
local asymptotic stability of $\MO$ and
Assumption~\ref{ass:basic_data} guarantee a $\KL$ bound as in \eqref{KLbound}; namely, local asymptotic stability of $\MO$ is uniform.}

\begin{theorem}\label{thm:KLAS}
Consider a hybrid system $\MH=(C,f,D,g)$ on $\BRn$ and a closed set $\X\subset\BRn$ satisfying Assumption~\ref{ass:basic_data}.
If $\MO$ is a locally asymptotically stable hybrid limit cycle
for $\MH|_{M}$ with basin of attraction $\MB_{\MO}$,
then $\MO$ is $\mathcal{K}\mathcal{L}$ asymptotically stable\IfTAC{\footnote{\Red{See \cite[Definition 7.10]{Goebel:book} for a definition of $\mathcal{K}\mathcal{L}$ asymptotic stability.}}}{} on the basin of attraction $\MB_{\MO}$ of the set $\MO$.
\end{theorem}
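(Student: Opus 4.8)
The plan is to reduce the claim to a known conversion result from \cite{Goebel:book} by verifying that $\MH|_{M}$ is a well-posed hybrid system (i.e., satisfies the hybrid basic conditions) and that the set $\MO$ is compact. Under Assumption~\ref{ass:basic_data}, items $1)$ and $2)$ already give that the data of $\MH|_{M}=(M\cap C, f, M\cap D, g)$ satisfies the hybrid basic conditions, as noted in Remark~\ref{rk:hybrid_basic_conds}; and by Lemma~\ref{lem:closeness_orbit}, since $\MO$ is a hybrid limit cycle, it is compact and forward invariant. Therefore $\MH|_{M}$ is a well-posed hybrid system with a compact attractor $\MO$ that is, by hypothesis, locally asymptotically stable with basin of attraction $\MB_{\MO}$.

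With these structural facts in place, I would invoke the equivalence between (uniform) $\KL$ asymptotic stability and ordinary (local) asymptotic stability for well-posed hybrid systems, namely \cite[Theorem 7.12]{Goebel:book} (the robust-stability/converse result stating that for a nominally well-posed hybrid system satisfying the hybrid basic conditions, asymptotic stability of a compact set on its basin of attraction is equivalent to $\KL$ asymptotic stability on that basin). Concretely, the steps are: (i) cite Remark~\ref{rk:hybrid_basic_conds} and \cite[Assumption 6.5]{Goebel:book} to record that $\MH|_{M}$ satisfies the hybrid basic conditions, hence is nominally well-posed by \cite[Theorem 6.8]{Goebel:book}; (ii) cite Lemma~\ref{lem:closeness_orbit} to record compactness of $\MO$; (iii) observe that $\MB_{\MO}$ is open (a standard property of basins of attraction for well-posed hybrid systems, \cite[Lemma 7.8]{Goebel:book}), so that proper indicators $\omega$ of $\MO$ on $\MB_{\MO}$ exist; (iv) apply \cite[Theorem 7.12]{Goebel:book} to conclude that local asymptotic stability of $\MO$ on $\MB_{\MO}$ is equivalent to the existence, for every proper indicator $\omega$ of $\MO$ on $\MB_{\MO}$, of $\beta\in\KL$ with $\omega(\phi(t,j))\leq\beta(\omega(\phi(0,0)),t+j)$ for all $(t,j)\in\dom\phi$ and all $\phi\in\MS_{\MH|_{M}}(\MB_{\MO})$, which is exactly Definition~\ref{def:KLAS}.

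The only genuinely non-routine point is making sure the hypotheses of the converse theorem in \cite{Goebel:book} are literally met: that theorem is stated for the basin of attraction of a compact set, and one must confirm that the ``local asymptotic stability'' in Definition~\ref{def:stability_hybrid_orbit} (stated via $|\phi(0,0)|_{\MO}\leq\mu$) coincides with asymptotic stability of the compact set $\MO$ in the sense of \cite[Definition 3.6, Definition 7.4]{Goebel:book}. Since $\MO$ is compact, $|\cdot|_{\MO}$ is a well-defined continuous function (cf.\ the remark following Lemma~\ref{lem:closeness_orbit}), and the two formulations agree; I would spell this equivalence out in one sentence. I expect \emph{this identification of the notions} — plus the verification that $\MB_{\MO}$ is open so that proper indicators exist — to be the main (mild) obstacle; everything else is a direct citation.

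\textbf{Proof.}
By items $1)$ and $2)$ of Assumption~\ref{ass:basic_data}, the data of $\MH|_{M}=(M\cap C, f, M\cap D, g)$ satisfies the hybrid basic conditions \cite[Assumption 6.5]{Goebel:book} (see Remark~\ref{rk:hybrid_basic_conds}); hence $\MH|_{M}$ is nominally well-posed by \cite[Theorem 6.8]{Goebel:book}. By Lemma~\ref{lem:closeness_orbit}, the hybrid limit cycle $\MO$ is compact, so $x\mapsto |x|_{\MO}$ is continuous and the local asymptotic stability of $\MO$ in Definition~\ref{def:stability_hybrid_orbit} coincides with asymptotic stability of the compact set $\MO$ in the sense of \cite[Definition 7.4]{Goebel:book}. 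Moreover, since $\MH|_{M}$ is nominally well-posed and $\MO$ is compact, the basin of attraction $\MB_{\MO}$ is open \cite[Lemma 7.8]{Goebel:book}, so there exist proper indicators $\omega$ of $\MO$ on $\MB_{\MO}$. Applying \cite[Theorem 7.12]{Goebel:book}, asymptotic stability of $\MO$ on $\MB_{\MO}$ for $\MH|_{M}$ is equivalent to: for every proper indicator $\omega$ of $\MO$ on $\MB_{\MO}$ there exists $\beta\in\KL$ such that every $\phi\in\MS_{\MH|_{M}}(\MB_{\MO})$ satisfies
\begin{equation*}
\omega(\phi(t,j))\leq \beta(\omega(\phi(0,0)), t+j)\qquad \forall (t,j)\in\dom\phi.
\end{equation*}
This is precisely the statement that $\MO$ is $\KL$ asymptotically stable on $\MB_{\MO}$ (Definition~\ref{def:KLAS}), which completes the proof.
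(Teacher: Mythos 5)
Your proposal is correct and follows essentially the same route as the paper: establish compactness of $\MO$ via Lemma~\ref{lem:closeness_orbit}, note that Assumption~\ref{ass:basic_data} makes $\MH|_{M}$ (nominally) well-posed, and then invoke \cite[Theorem 7.12]{Goebel:book} to obtain the $\KL$ bound on $\MB_{\MO}$. The only differences are cosmetic — you cite the hybrid-basic-conditions route to nominal well-posedness and spell out the openness of $\MB_{\MO}$ and the identification of the stability notions, which the paper leaves implicit.
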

\begin{proof}
First, it is proved in Lemma \ref{lem:closeness_orbit} that
$\MO$ is a compact set.
Second, note that for a hybrid system $\MH$ on $\BRn$ and a closed set $\X\subset\BRn$ satisfying Assumption~\ref{ass:basic_data},
$\MH|_{M}$ is well-posed \cite[Definition 6.29]{Goebel:book}. Then, it is also nominally well-posed.
Therefore, according to \cite[Theorem 7.12]{Goebel:book},
$\MB_{\MO}$ is open and $\MO$ is $\mathcal{K}\mathcal{L}$ asymptotically stable on $\MB_{\MO}$.
\end{proof}

The following result establishes that the stability of $\MO$ for $\MH|_{M}$
is robust to the class of perturbations defined above.

\begin{theorem}\label{thm:robust}
Consider a hybrid system $\MH=(C,f,D,g)$ on $\BRn$ and a closed set $\X\subset\BRn$ satisfying Assumption~\ref{ass:basic_data}.
If $\MO$ is a locally asymptotically stable compact set for $\MH|_{M}$ with basin of attraction $\MB_{\MO}$,
then for every proper indicator $\omega$ of $\MO$ on $\MB_{\MO}$ there exists
$\tilde{\beta}\in \KL$ such that
for every $\varepsilon>0$ and every compact set $K\subset \MB_\MO$,
there exist \SM $\bar{M}_i>0,$ $i\in \{1,2,3,4\},$ such that for any
admissible perturbations $d_i,$ $i\in \{1,2,3,4\},$ with
Euclidean norm bounded by $\bar{M}_i$, respectively, \ST
every solution $\tilde\phi$ to $\tilde{\MH}|_{M}$ with $\tilde\phi(0,0)\in K$
satisfies
$$\omega (\tilde\phi(t,j)) \leq
\tilde{\beta}( \omega( \tilde\phi(0,0) ), t+j)+\varepsilon
\quad \forall (t,j)\in\dom{\tilde\phi}.$$
\end{theorem}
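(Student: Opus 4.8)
The plan is to reduce Theorem~\ref{thm:robust} to the semiglobal-practical robustness result for well-posed hybrid systems in \cite[Chapter 7]{Goebel:book}, exploiting the $\KL$ characterization already obtained in Theorem~\ref{thm:KLAS}. First I would observe that, under Assumption~\ref{ass:basic_data}, the restricted system $\MH|_{M} = (M\cap C, f, M\cap D, g)$ satisfies the hybrid basic conditions (as noted in Remark~\ref{rk:hybrid_basic_conds}) and hence is well-posed in the sense of \cite[Definition 6.29]{Goebel:book}. Since $\MO$ is a locally asymptotically stable compact set for $\MH|_{M}$ with basin of attraction $\MB_{\MO}$, Theorem~\ref{thm:KLAS} gives that $\MO$ is $\KL$ asymptotically stable on $\MB_{\MO}$: for every proper indicator $\omega$ of $\MO$ on $\MB_{\MO}$ there is $\beta\in\KL$ with $\omega(\phi(t,j))\leq \beta(\omega(\phi(0,0)),t+j)$ for all solutions $\phi$ to $\MH|_{M}$ from $\MB_{\MO}$. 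This is the nominal bound that the perturbed bound must degrade to, with an additive $\varepsilon$ and a (possibly) reshaped $\tilde\beta$.

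Next I would recast the perturbed system $\tilde{\MH}|_{M}$ in \eqref{sec3:eq33} as a perturbation of $\MH|_{M}$ covered by the framework of \cite[Chapter 7]{Goebel:book}. The state/dynamics perturbations $d_1,d_2$ and the constraint-set perturbations $d_3,d_4$ together produce an inflated hybrid system whose data is contained in the $\rho$-perturbation $\MH_\rho$ of $\MH|_{M}$ for a state-dependent (or merely bounded) admissible $\rho$ built from $\bar M_1,\bar M_2,\bar M_3,\bar M_4$; concretely, the flow map $f(x+d_1)+d_2$ lies within $f((x+\rho(x)\BB)\cap(M\cap C))+\rho(x)\BB$ when $\bar M_i$ are small, and similarly for the jump map and for the membership constraints $x+d_3\in M\cap C$, $x+d_4\in M\cap D$, which are weaker than $x\in (M\cap C)+\bar M_3\BB$, $x\in (M\cap D)+\bar M_4\BB$. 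Hence every solution $\tilde\phi$ to $\tilde{\MH}|_{M}$ is also a solution to $\MH_\rho$ for the corresponding $\rho$. Then I would invoke the semiglobal practical robustness theorem \cite[Theorem 7.21]{Goebel:book} (robustness of $\KL$ asymptotic stability to perturbations): since $\MO$ is $\KL$ asymptotically stable on $\MB_{\MO}$ for the well-posed $\MH|_{M}$, for every proper indicator $\omega$ there is $\tilde\beta\in\KL$ such that for each $\varepsilon>0$ and each compact $K\subset\MB_{\MO}$ there is $\bar\rho>0$ for which, whenever $\rho$ is admissible with $\rho\leq\bar\rho$, every solution $\tilde\phi$ to $\MH_\rho$ from $K$ satisfies $\omega(\tilde\phi(t,j))\leq\tilde\beta(\omega(\tilde\phi(0,0)),t+j)+\varepsilon$ for all $(t,j)\in\dom\tilde\phi$. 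Choosing $\bar M_i$ so that the induced $\rho$ obeys $\rho\leq\bar\rho$ (uniform continuity of $f$ on the compact $M\cap C$ from item~2) of Assumption~\ref{ass:basic_data} lets one translate a bound on $\bar M_1$ into a bound on the image perturbation) then yields exactly the claimed estimate, completing the proof.

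The main obstacle I anticipate is the bookkeeping that converts the four explicit perturbation channels $d_1,d_2,d_3,d_4$ into a single admissible $\rho$-inflation that \cite[Theorem 7.21]{Goebel:book} accepts, and in particular ensuring admissibility (hybrid-time-domain structure, measurability) of $\rho$ is inherited from that of the $d_i$, and that the perturbed data remains outer semicontinuous, locally bounded, and convex-valued where required — the convexification of $f((x+\rho(x)\BB)\cap(M\cap C))+\rho(x)\BB$ is automatically larger than the single-valued perturbed flow, so containment still holds. A secondary technical point is that $K\subset\MB_{\MO}$ is compact while $\MB_{\MO}$ is open (guaranteed by \cite[Theorem 7.12]{Goebel:book} via Theorem~\ref{thm:KLAS}), so there is positive distance from $K$ to the boundary of $\MB_{\MO}$, which is what makes the semiglobal-in-$K$, practical-in-$\varepsilon$ trade-off possible; I would state this explicitly before applying the robustness theorem. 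Once these are in place, the result is essentially a direct citation of \cite[Theorem 7.21]{Goebel:book} specialized to $\MA=\MO$, so no lengthy computation is needed beyond the reduction described above.
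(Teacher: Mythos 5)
Your proposal is correct and follows essentially the same route as the paper: embed every solution of $\tilde{\MH}|_{M}$ into an outer $\rho$-inflation of $\MH|_{M}$ with $\rho\geq\max_i \bar M_i$, invoke well-posedness from Assumption~\ref{ass:basic_data} together with the semiglobal practical robustness theorem for $\KL$ asymptotically stable compact sets (the paper cites \cite[Theorem 3.26]{Ricardo:book:2021}, which plays the same role as the result you cite from \cite{Goebel:book}), and then choose the $\bar M_i$ so that the induced inflation parameter lies below the threshold $\rho^{*}$ supplied by that theorem. The only cosmetic difference is that the paper works with a constant inflation $\rho$ rather than a state-dependent one, which slightly simplifies the admissibility bookkeeping you flag.
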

\begin{proof}
\Cred{Following \cite[Section 6.4]{Goebel:book}},
we introduce the following perturbed hybrid system $\MH|_{M}^{\rho}$ with 
\Cred{constant} $\rho > 0$:
\begin{equation}\label{eq:rho:perturBB}
\MH|_{M}^{\rho} \left\{
  \begin{array}{cccclcccc}
    \dot{x} \ \in F_{\rho}(x) && x \in \MC_{\rho}, \\
    x^{+}  \! \in G_{\rho}(x) && x \in \MD_{\rho},
  \end{array}
  \right.
\end{equation}
where 
\begin{equation}
{\footnotesize
\begin{array}{rlll}
\MC_{\rho} \!\!\!\!&:=\{x\in\BRn : (x+\rho\BB)\cap (\X\cap\MC) \neq \emptyset\},\nonumber\\
F_{\rho}(x) \!\!\!\!& := \overline{\text{co}}f((x+\rho\BB)\cap (\X\cap\MC))+\rho\BB\quad \Cred{\forall x\in\BRn},\nonumber\\
\MD_{\rho} \!\!\!\!& :=\{x\in\BRn : (x+\rho\BB)\cap (\X\cap\MD) \neq \emptyset\},\nonumber\\
G_{\rho}(x)  \!\!\!\!& := \{v\!\in\!\BRn \!:\! v\!\in\! \eta \!+\! \rho\BB, \eta\!\in\! g((x \!+\! \rho\BB)\!\cap\! (\X\!\cap\!\MD) )\}
\; \Cred{\forall x\!\in\!\BRn}. \nonumber
\end{array}
}
\end{equation}
Then, every solution to $\tilde{\MH}|_{M}$
with admissible perturbations \SM $d_i$ having Euclidean norm bounded by $\bar{M}_i,$ $i\in \{1,2,3,4\},$ respectively, \ST
is a solution to the hybrid system $\MH|_{M}^{\rho}$ with \SM
$\rho \geq \max\{\bar{M}_{1},\bar{M}_{2},$ $\bar{M}_{3},\bar{M}_{4}\}$, \ST
which corresponds to an outer perturbation of $\MH|_{M}$ and satisfies
\Cred{the convergence property \cite[Assumption 3.25]{Ricardo:book:2021}. 
Then, the claim follows by \cite[Theorem 3.26]{Ricardo:book:2021}} and the fact that every solution to $\tilde{\MH}|_{M}$ is a solution to \eqref{eq:rho:perturBB}.
In fact, using \cite[Theorem 3.26]{Ricardo:book:2021},
for every proper indicator $\omega$ of $\MO$ on $\MB_{\MO}$
there exists $\tilde{\beta}\in\KL$ such that for each compact set
$K\subset \MB_{\MO}$ and each $\varepsilon>0$, there exists $\rho^{*}>0$
such that for each $\rho\in (0, \rho^{*}]$, every solution $\phi_{\rho}$ to \eqref{eq:rho:perturBB}
from $K$ satisfies
$\omega(\phi_{\rho}(t,j)) \leq
\tilde{\beta}( \omega(\phi_{\rho}(0,0)), t+j )+\varepsilon
$
for each $(t,j)\in\dom{\phi_{\rho}}$.
The proof concludes using the
relationship between the solutions to $\tilde{\MH}|_{M}$ and \eqref{eq:rho:perturBB},
and picking $\bar{M}_i,$ \IfTAC{}{$i\in \{1,2,3,4\},$}  such that $\max\{\bar{M}_{1},\bar{M}_{2},$ $\bar{M}_{3},\bar{M}_{4}\} \in (0,\rho^*]$.
\end{proof}

\begin{remark}
Robustness results of stability of compact sets for
general hybrid systems are available in \cite{Goebel:book}.
Since $\MO$ is an asymptotically stable compact set for $\MH|_{M}$,
Theorem \ref{thm:robust}
is novel
in the context of the literature of Poincar\'{e} maps.
In particular, if one was to write the systems in \cite{Grizzle:2001} and \cite{HaddadACC:2002} within the
framework of \cite{Goebel:book},
then one would not be able to apply the results
on robustness for hybrid systems in \cite{Goebel:book} since the hybrid basic conditions
would not be satisfied and the hybrid limit cycle may not be given
by a compact set.
Furthermore, through an application of \cite[Lemma 7.19]{Goebel:book}, it can be shown that
the hybrid limit cycle is robustly $\mathcal{K}\mathcal{L}$ asymptotically stable on  $\MB_{\MO}$.
\end{remark}

\begin{remark}
Recently, the authors in \cite{Hamed:Gregg:2017, Hamed:Gregg:2019} present static or dynamic decentralized (event-based) controllers
for robust stabilization of hybrid periodic orbits against
possible disturbances and established results on
$H_2/H_{\infty}$ optimal decentralized event-based control design.
In contrast to our work, they use input-to-state stability
for robust stability properties of hybrid periodic orbits with respect to disturbance inputs in the discrete dynamics. 
Note that the results in \cite{Hamed:Gregg:2017, Hamed:Gregg:2019}
consider possible disturbances only \Cred{on the} discrete dynamics
and are only suitable for nonlinear impulsive systems that have jumps on switching surfaces. 
On the other hand, in this paper, we establish conditions for robustness of
hybrid limit cycles that allow \Cred{disturbances}
in the continuous/discrete dynamics
and are applicable for hybrid dynamical systems
with nonempty intersection between the flow set and the jump set.
\end{remark}

\begin{remark}
\Cred{Very recently, the authors in \cite{Choi:2022} propose a reachability-based approach to
compute regions-of-attraction for hybrid limit cycles in
a class of hybrid systems with a switching surface
and bounded disturbance.
Note that the approach in \cite{Choi:2022}
deals with bounded disturbance only on the continuous dynamics
and is only suitable for hybrid systems that have jumps on switching surfaces.}
\end{remark}

\IfTAC{}{
\begin{example}\label{exam:Izhi:Robust}
Consider the Izhikevich neuron system in Example~\ref{exam:Izhi:revisit1}.
\Cred{We illustrate} Theorem~\ref{thm:main} for the hybrid system $\MH_{\mathrm{I}}$
by plotting the solutions from the initial condition $(-55, -6)$,
when an additive perturbation \Cred{$d_{2}=(d_{2v},d_{2w})$} affects the jump map.
The noise is injected as unmodeled dynamics on the jump map as
\Cred{$d_{2}=(d_{2v},d_{2w})=(\rho\sin(t), 0),$
where a variety values for $\rho$ are used to verify} robustness.
\Figure~\ref{fig3:exam1} shows the phase plots
for the perturbed solution (red line) and normal solution (blue line).
It is found that the hybrid limit cycle $\MO$ is robust to the additive perturbation $d_{2}$
when \Cred{$\rho\in (0,0.4)$} as shown in \Figure~\ref{fig:sub1},
while $\MO$ is not robust to the additive perturbation $d_{2}$
when $\rho=0.42$ as shown in \Figure~\ref{fig:sub2}.

Next, more simulations are performed to quantify the relationship between
$\rho^{*}$ (the maximal value of the perturbation parameter $\rho$)
and $\varepsilon$ (the desired level of closeness to $\MO$).
Given a compact set
$K:= [-57, -53] \times [-6.20, -5.80]$,
and different desired region radiuses
$\varepsilon\in\{0.3, 0.6, 0.9, 1.2, 1.5\}$,
the simulation results are shown in Table~\ref{tab:margin:Izhi}, which
indicates that the relationship between $\rho^{*}$ and $\varepsilon$ can be approximated as
$\rho^{*}\approx 0.004\varepsilon$.
As it can be seen, the larger admissible convergence error
the larger the perturbation parameter $\rho^{*}$ can be.
This study validates Theorem~\ref{thm:robust}.
\begin{figure}[!ht]
\centering
\subfigure[$\rho=0.24$]{
\label{fig:sub1}
\psfrag{x1}[][][1.0]{$v$ (mV)}
\psfrag{y1}[][][1.0]{$w$}
\includegraphics[width=\figwidth]{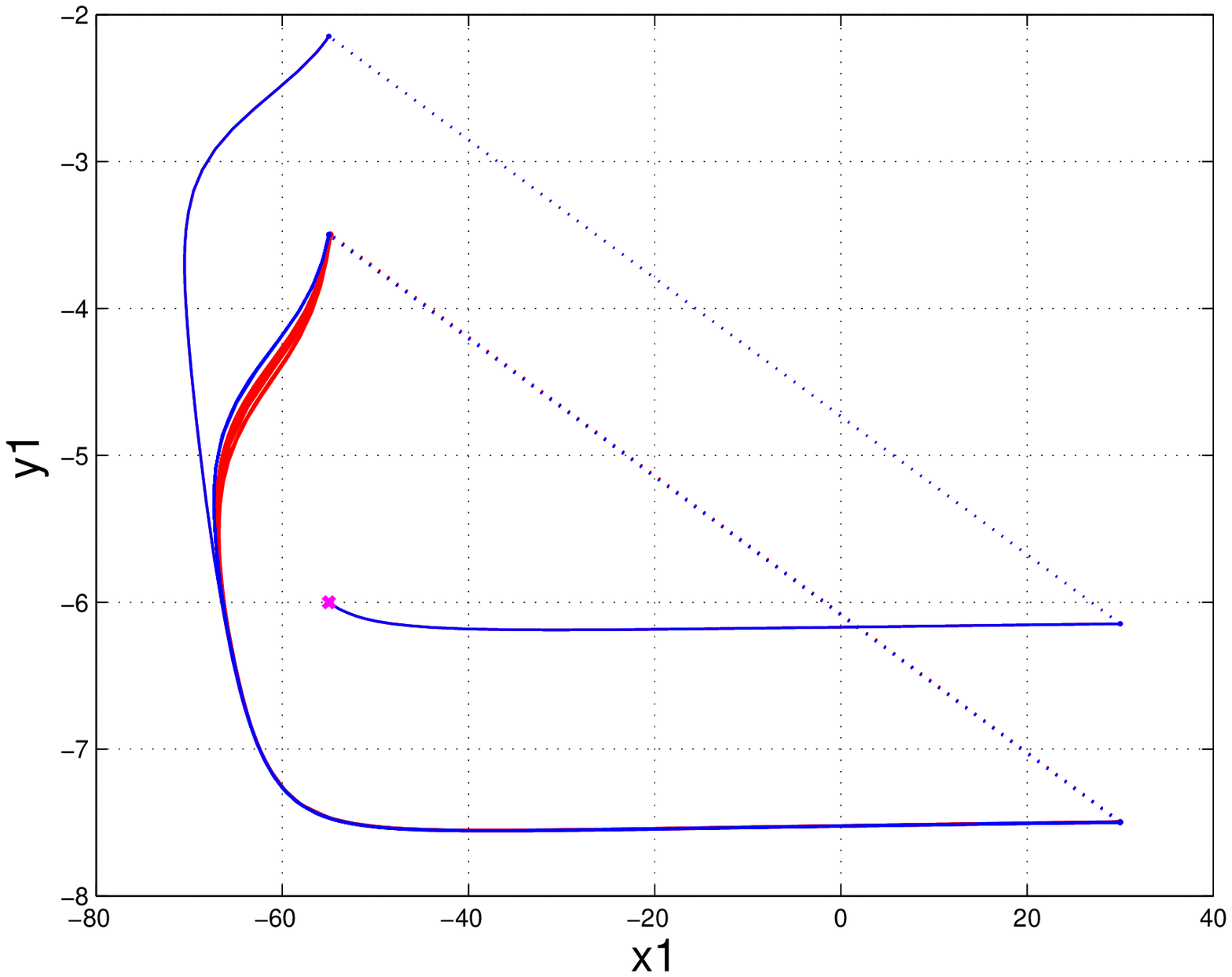}
}
\
\subfigure[$\rho=0.42$]{
\label{fig:sub2}
\psfrag{x1}[][][1.0]{$v$ (mV)}
\psfrag{y1}[][][1.0]{$w$}
\includegraphics[width=\figwidth]{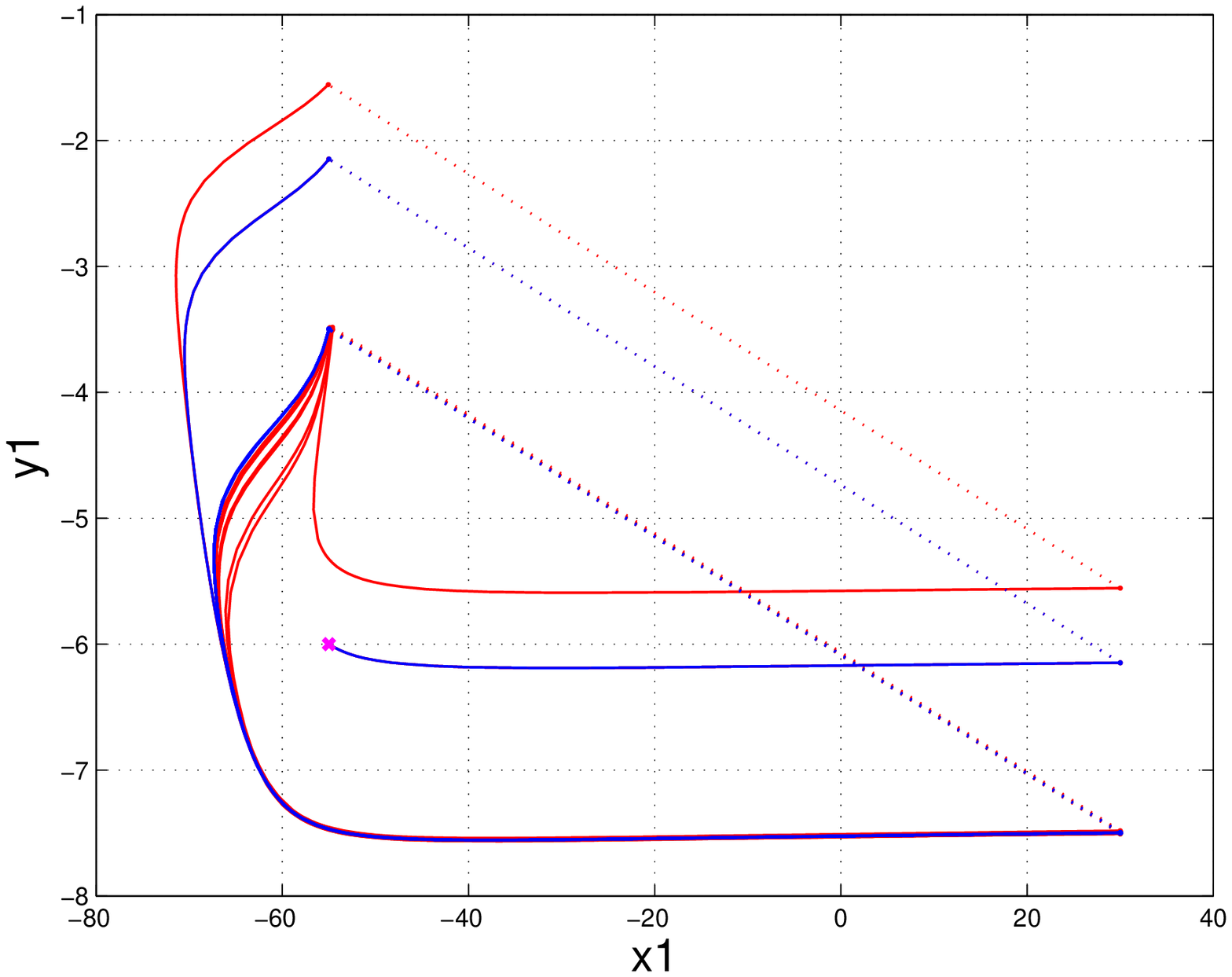}
}
\caption{Trajectories with initial condition $(-55, -6)$ in Example \ref{exam:Izhi:Robust}.
\Cred{The perturbed solutions (red line) and normal solutions (blue line) are given for different values of
additive perturbation $d_{2}=(\rho\sin(t), 0).$
(a) When $\rho=0.24$, the hybrid limit cycle $\MO$ is robust to small state perturbation;
(b) when $\rho=0.42$, $\MO$ is not robust to the additive perturbation $d_{2}$.}
}
\label{fig3:exam1}
\end{figure}
\begin{table}[!ht]
\tabcolsep 5mm
\caption{Relationship between $\rho^{*}$ and $\varepsilon$}
\begin{center}
\begin{tabular}{cccc}
\hline
$\rho^{*}$ & $\varepsilon$ & $\rho^{*}/\varepsilon$ \\ \hline
    0.0010  &  0.30 &   0.0031\\
    0.0019  &  0.60 &   0.0030\\
    0.0032  &  0.90  &  0.0035\\
    0.0042  &  1.20  &  0.0034\\
    0.0058  &  1.50  &  0.0039\\
    \hline
\end{tabular}
\end{center}\label{tab:margin:Izhi}
\end{table}
\end{example}
}

\IfTAC{\vspace{-2mm}}{}
\subsection{Robustness to Inflations of $C$ and $D$}
\label{sec:RobustInflations}

We consider the following specific parametric perturbation on $h$, in both the flow and jump sets,
with $\epsilon>0$ denoting the parameter:
the perturbed flow set is an inflation of the original flow set while the condition
$h(x)=0$ in the jump set is replaced by $h(x)\in [-\epsilon,\epsilon]$.
The resulting system is denoted as $\MH|_{M}^{\epsilon}$ and is given by
\begin{equation}\label{sec:HMrho}
  \MH|_{M}^{\epsilon} \left\{
  \begin{array}{cccclcccc}
    \dot{\bx} & = & f(\bx) && \bx \in \MC_\epsilon\cap M, \\
    \bx^{+}   & = & g(\bx) && \bx \in \MD_\epsilon\cap M, \\
  \end{array}
  \right.
\end{equation}
where the flow set and the jump set are replaced by
$\MC_\epsilon=\{ x\in\BRn: h(x)\geq -\epsilon \}$ and $\MD_\epsilon=\{ x\in\BRn: h(x)\in [-\epsilon,\epsilon], L_{f}h(x)\leq 0 \}$,
respectively, while the flow map and jump map are the same as for $\MH|_{M}$.
We have the following result, whose proof follows from the proof of Theorem~\ref{thm:robust}.
\begin{theorem}\label{col:robustSR}
Consider a hybrid system $\MH=(C,f,D,g)$ on $\BRn$ and a closed set $\X\subset\BRn$ satisfying Assumption~\ref{ass:basic_data}.
If $\MO$ is a locally asymptotically stable compact set for $\MH|_{M}$ with basin of attraction $\MB_{\MO}$,
then there exists $\tilde{\beta}\in \KL$ such that,
for every $\varepsilon>0$ and each compact set $K\subset \MB_\MO$,  
there exists $\bar \epsilon>0$ such that for each $\epsilon\in (0,\bar \epsilon]$
every solution $\phi$ to $\MH|_{M}^{\epsilon}$ in \eqref{sec:HMrho} with $\phi(0,0)\in K$ 
satisfies
\begin{equation}\label{eqn:KLboundParametricPerturbation}
|\phi(t,j)|_{\MO} \leq
\tilde{\beta}( |\phi(0,0)|_{\MO}, t+j)+\varepsilon
\quad \forall (t,j)\in\dom{\phi}.
\end{equation}
\end{theorem}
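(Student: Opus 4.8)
\textbf{Proof proposal for Theorem~\ref{col:robustSR}.}
The plan is to reduce the parametric perturbation of $h$ in the flow and jump sets to the generic outer perturbation framework already used in the proof of Theorem~\ref{thm:robust}, so that the $\KL$ estimate \eqref{eqn:KLboundParametricPerturbation} follows from \cite[Theorem 3.26]{Ricardo:book:2021} applied to $\MH|_{M}$. First I would observe that, by Theorem~\ref{thm:robust} (more precisely, Theorem~\ref{thm:KLAS}), the hypotheses already give that $\MO$ is a compact set and that $\MH|_{M}$ is a well-posed hybrid system whose restriction satisfies the hybrid basic conditions, with $\MO$ being $\KL$ asymptotically stable on $\MB_{\MO}$. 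Hence, by \cite[Theorem 7.12]{Goebel:book} / \cite[Theorem 3.26]{Ricardo:book:2021}, there is a single $\tilde\beta\in\KL$ such that for every compact $K\subset\MB_{\MO}$ and every $\varepsilon>0$ there is $\rho^{*}>0$ for which every solution of the $\rho$-inflated system $\MH|_{M}^{\rho}$ in \eqref{eq:rho:perturBB} from $K$ satisfies $|\phi(t,j)|_{\MO}\le \tilde\beta(|\phi(0,0)|_{\MO},t+j)+\varepsilon$ for all $\rho\in(0,\rho^{*}]$.

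The key step is then the inclusion of data: I would show that for $\epsilon>0$ small enough, every solution to $\MH|_{M}^{\epsilon}$ in \eqref{sec:HMrho} is also a solution to $\MH|_{M}^{\rho}$ for a suitable $\rho=\rho(\epsilon)$ with $\rho(\epsilon)\to 0$ as $\epsilon\to 0$. To see this, note that $h$ is continuously differentiable and $M$ is compact (Assumption~\ref{ass:basic_data}), so $h$ is Lipschitz on a neighborhood of $M$ with some constant $L_h>0$; consequently $\MC_{\epsilon}\cap M=\{x\in M: h(x)\ge -\epsilon\}\subset (\X\cap\MC)+\rho\BB=\MC_{\rho}$ provided $\rho\ge \epsilon/L_h$, because any $x$ with $h(x)\ge -\epsilon$ lies within Euclidean distance $\epsilon/L_h$ of the zero-level-set side where $h\ge 0$. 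A similar argument applies to $\MD_{\epsilon}\cap M=\{x\in M: h(x)\in[-\epsilon,\epsilon],\ L_f h(x)\le 0\}$: every such $x$ is within distance $\epsilon/L_h$ of $\{h=0\}$, and (using continuity of $L_f h$ on $M\cap \MC$, which holds since $f$ is continuously differentiable there) one can enlarge $\rho$ slightly so that $\MD_{\epsilon}\cap M\subset \MD_{\rho}$. Since the flow map $f$ and jump map $g$ of $\MH|_{M}^{\epsilon}$ coincide with those of $\MH|_{M}$ and $F_{\rho}(x)\supset f(x)$, $G_{\rho}(x)\supset g(x)$ on the relevant sets, any arc satisfying \eqref{sec:HMrho} satisfies the inclusions defining $\MH|_{M}^{\rho}$. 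Thus, given $\varepsilon>0$ and compact $K\subset\MB_{\MO}$, obtain $\rho^{*}$ from the $\KL$ estimate above and set $\bar\epsilon:=c\,\rho^{*}$ for a constant $c>0$ absorbing $L_h$ and the $L_f h$ correction; then for each $\epsilon\in(0,\bar\epsilon]$ every solution of $\MH|_{M}^{\epsilon}$ from $K$ is a solution of $\MH|_{M}^{\rho(\epsilon)}$ with $\rho(\epsilon)\le\rho^{*}$, and \eqref{eqn:KLboundParametricPerturbation} follows directly.

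The main obstacle I anticipate is the careful handling of the jump set inflation: the condition $h(x)\in[-\epsilon,\epsilon]$ together with $L_f h(x)\le 0$ must be shown to sit inside $\MD_{\rho}=\{x:(x+\rho\BB)\cap(\X\cap\MD)\neq\emptyset\}$, and this requires not just closeness to $\{h=0\}$ but that the nearby point in $\{h=0\}$ also (approximately) satisfies $L_f h\le 0$. Here item~3) of Assumption~\ref{ass:basic_data} ($L_f h(x)<0$ strictly on $M\cap D$) is helpful: strict negativity on the compact set $M\cap D$ gives a uniform margin, and by continuity of $L_f h$ this margin persists on a neighborhood, so for $\epsilon$ small the set $\MD_{\epsilon}\cap M$ stays within a tube around $M\cap D$ on which $L_f h<0$, hence within $\MD_{\rho}$ for $\rho$ proportional to $\epsilon$. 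Once this containment is established, the rest is a routine invocation of the pre-existing robustness machinery, exactly as in the proof of Theorem~\ref{thm:robust}.
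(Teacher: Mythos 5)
Your overall strategy is exactly the one the paper intends: the paper gives no separate argument for Theorem~\ref{col:robustSR} beyond the remark that it ``follows from the proof of Theorem~\ref{thm:robust}'', i.e.\ one embeds the solutions of $\MH|_{M}^{\epsilon}$ into those of the $\rho$-inflated system $\MH|_{M}^{\rho}$ of \eqref{eq:rho:perturBB} and invokes the $\KL$ robustness bound of \cite[Theorem 3.26]{Ricardo:book:2021}. Your proposal supplies the missing containment step, which is the right thing to focus on.

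However, the justification you give for that containment is wrong as stated. You claim that Lipschitz continuity of $h$ with constant $L_h$ implies that any $x$ with $h(x)\geq-\epsilon$ lies within distance $\epsilon/L_h$ of $\{h\geq 0\}$. The Lipschitz bound gives exactly the opposite inequality: if $h(x)=-\epsilon$ and $h(y)\geq 0$, then $\epsilon\leq h(y)-h(x)\leq L_h|x-y|$, so $|x-y|\geq\epsilon/L_h$; it provides a \emph{lower} bound on the distance to the level set and no upper bound at all (consider $h(x)=-x_1^2$, for which $\MC_\epsilon$ is at distance $\sqrt{\epsilon}$, not $O(\epsilon)$, from $\MC$). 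To get an upper bound proportional to $\epsilon$ you would need a lower bound on $|\nabla h|$ near the zero level set; item~3) of Assumption~\ref{ass:basic_data} gives $\nabla h\neq 0$ on $M\cap D$, so this works locally for the jump set, but no such nondegeneracy is assumed on the rest of the boundary of $C$. The repair is to drop the quantitative claim entirely: since $M$ is compact and $h$ and $L_fh$ are continuous, a standard compactness argument shows that for every $\rho>0$ there exists $\bar\epsilon>0$ such that $\MC_\epsilon\cap M\subset(M\cap\MC)+\rho\BB$ and $\MD_\epsilon\cap M\subset(M\cap\MD)+\rho\BB$ for all $\epsilon\in(0,\bar\epsilon]$ (otherwise a sequence $x_i\in M$ with $h(x_i)\geq-1/i$, resp.\ $h(x_i)\in[-1/i,1/i]$ and $L_fh(x_i)\leq 0$, staying at distance $\geq\delta$ from $M\cap\MC$, resp.\ $M\cap\MD$, would converge to a point of that very set). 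Since the theorem only asserts the existence of some $\bar\epsilon>0$, and not a linear relation $\bar\epsilon=c\rho^{*}$, this weaker outer-semicontinuity statement suffices, and the rest of your argument then goes through.
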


Theorem~\ref{col:robustSR} implies that
the asymptotic stability property of the hybrid limit cycle $\MO$ is robust to
a parametric perturbation on $h$.
\Red{Note that
the $\mathcal{KL}$ bound \eqref{eqn:KLboundParametricPerturbation}
is obtained when
the parametrically perturbed system $\MH|_{M}^{\epsilon}$ in \eqref{sec:HMrho} should also exhibit a hybrid limit cycle.}
At times, a relationship between
the maximum value $\bar\epsilon$ of the perturbation and the factor $\varepsilon$ in the semiglobal and practical $\KL$ bound in
\eqref{eqn:KLboundParametricPerturbation} can be established numerically.
Next, Theorem~\ref{col:robustSR} and this relationship are illustrated
in the TCP congestion control example.

\begin{example}\label{exam:TCP:robust2}
Let us revisit the hybrid congestion control system \eqref{eq:TCP3}
in \IfTAC{Section \ref{sec:motive_ex},}{Example~\ref{exam:TCP2},}
where, now, the flow set $\MC_{\textrm{\tiny TCP}}$ and the jump set $\MD_{\textrm{\tiny TCP}}$ are replaced by
$\MC_{\textrm{\tiny TCP}}^{\epsilon}=\{x\in\BR^{2}:  q_{\max}-q\geq -\epsilon  \}$ \RRed{and}
$\MD_{\textrm{\tiny TCP}}^{\epsilon}=\{x\in\BR^{2}: q_{\max}-q\in [-\epsilon,\epsilon], r\geq B\},$
respectively.
To validate \Blue{Theorem}~\ref{col:robustSR}, multiple simulations are performed
to show a relationship between $\bar{\epsilon}$, the maximal value of the perturbation
parameter $\epsilon$, and $\varepsilon$, the desired level of closeness to the hybrid limit cycle $\MO$.
Given the compact set
$K= [0.68, 0.72] \times [0.58, 0.64]$
and different
desired level $\varepsilon \in \{0.01, 0.02, 0.03, 0.04\}$  of closeness to the hybrid limit cycle,
\IfTAC{it}{the simulation results are shown in Table~\ref{tab:margin:TCP:SR},
which} indicates that the relationship between $\bar{\epsilon}$ and $\varepsilon$ can be approximated as
$\bar{\epsilon}\approx 2.8 \varepsilon$.
\IfTAC{}{
%
\begin{table}[!ht]
\tabcolsep 5mm
\caption{Relationship between $\bar{\epsilon}$ and $\varepsilon$}
\begin{center}
\begin{tabular}{cccc}
\hline
$\bar{\epsilon}$ & $\varepsilon$ & $\bar{\epsilon}/\varepsilon$ \\ \hline
    0.022  &  0.01  &  2.20\\
    0.056  &  0.02  &  2.80\\
    0.078  &  0.03  &  2.60\\
    0.107  &  0.04  &  2.68\\ \hline
\end{tabular}
\end{center}
\label{tab:margin:TCP:SR}
\end{table}
}
\end{example}

\IfTAC{
\vspace{-3mm}
\subsection{Robustness to Computation Error of Hybrid Poincar\'{e} Map}
}
{\subsection{Robustness to Computation Error of The Hybrid Poincar\'{e} Map}
}

The \Cred{hybrid Poincar\'{e} map} defined in \eqref{eq:PX} indicates
the evolution of a trajectory of a hybrid system from a point on the jump set $M\cap D$
to another point in the same set $M\cap D$.
As stated in Theorem~\ref{thm:main} and Corollary~\ref{cor:main}, stability of hybrid limit cycles can be verified by checking
the eigenvalues of the Jacobian of the hybrid Poincar\'{e} map at its fixed point.
However, errors in the computation of the hybrid Poincar\'{e} map
may influence the statements made about \Cred{asymptotic} stability.
Typically, the hybrid Poincar\'{e} map is computed numerically by
discretizing the flows, using
integration schemes such as Euler, Runge-Kutta,
and multi-step methods \cite{Ricardo:2010:simulator},
\Cred{which unavoidably lead to an approximation of Poincar\'{e} maps}.

Following the ideas in
\IfTAC{\cite{Ricardo:2010:simulator}}{\cite{Ricardo:2010:simulator,Stuart94}}
about perturbations
introduced by computations,
the discrete-time system associated with the (exact) hybrid Poincar\'{e} map $P$ in \eqref{eq:PX}
is given by\footnote{\Cred{By some abuse of notation,
though it is not hybrid, we label as $\MH_{P}$ the discrete-time system in \eqref{sec:exactPM}
associated to the Poincar\'{e} map $P$
and we use $j\in \BN$ as time instead of $(0,j)$ for $\MH_{P}$ afterwards.
}}
\begin{equation}\label{sec:exactPM}
\MH_{P}: \bx^{+} = P(\bx) \quad  \bx\in \X\cap\MD,
\end{equation}
which we treat as a hybrid system without flows.
As argued above, due to unavoidable errors in computations and computer implementations,
only approximations of the map $P$ and \Cred{of} the sets $M$ and $D$ are
available.
In particular, given a point $\bx\in \X\cap\MD$,
the value of the step size, \Cred{denoted $s>0$,} used in the computation of  $P$ \Cred{at a point $x$}
affects the precision of the resulting approximation, which, in turn,
may prevent the solution to \eqref{sec:exactPM} to remain in $ \X\cap\MD$
and be complete.
Due to this, we denote by $P_{s}$ \Cred{the results of computing} $P$, and
by $M_s$ and $D_s$ the approximations of $M$ and $D$, respectively.
\Cred{With some abuse of notation,} the discrete-time system associated with $P_{s}, M_s,$ and $D_s$
is defined as
\begin{equation}\label{sec:eqPM}
\MH_{P_{s}}: \bx^{+} = P_{s}(\bx) \quad  \bx\in M_s\cap\MD_{s}.
\end{equation}

\Cred{The approximations of $P_{s}, M_s,$ and $D_s$} are assumed to satisfy the following properties.

\begin{assumption}\label{ass:data:Ps}
Given $M\subset\BRn$ and $\MH=(C,f,D,g)$,
the function $P_s:\reals^n \to \reals^n$ \Cred{parameterized by $s>0$} is such that, for some
continuous function $\varrho:\reals^n \to \reals_{\geq 0}$,
there exists $s^*>0$ such that,
\Cred{for all $x \in M\cap D$,}
\begin{eqnarray} \label{eqn:PsProperty}
P_s(x) \in P_\varrho(x)  \qquad \forall s  \in  (0,s^*]
\end{eqnarray}
where \IfTAC{$P_\varrho(x):=\{v \in \reals^n: v \in g + \varrho(g)\BB,\;
g \in P(x + \varrho(x)\ball) \}  
$}{$$
P_\varrho(x):=\{v \in \reals^n: v \in g + \varrho(g)\ball,\;
g \in P(x + \varrho(x)\ball) \}  
$$} and the set $M_s \cap D_s$ satisfies, \Cred{for any positive sequence $\{s_i\}_{i=1}^{\infty}$ such that $s_{i}\searrow 0$,
\begin{eqnarray} \label{eqn:MsCapDsProperty}
\limsup\limits_{i\rightarrow\infty} M_{s_i} \cap D_{s_i} \subset M \cap D.
\end{eqnarray}}
\end{assumption}
\begin{remark}
The property in \eqref{eqn:PsProperty} is a consistency condition
on the integration scheme used to compute the flows
involved in \eqref{eq:PX}. For instance, when the forward
Euler method is used to approximate those flows,
the numerical values of $\phi$ are generated using the scheme $x+ s f(x)$, which,
under Lipschitzness of $f$ and boundedness of solutions (and its derivatives) to $\dot x = f(x)$,
is convergent of order $1$; in particular,
the error between $P_s$ and $P$ is $O(s)$, which implies that \eqref{eqn:PsProperty} holds
for some function
\IfTAC{$\varrho$.}{$\varrho$;
see, e.g., \cite[Chapter 3.2]{AscherPetzold98}.}
 Vaguely, the property in \eqref{eqn:MsCapDsProperty} holds when
a distance between $M_s \cap D_s$ and $M \cap D$
approaches zero as the step size vanishes, which
is an expected property as precision improves with a decreasing step size.
\Blue{Condition \eqref{eqn:MsCapDsProperty} is satisfied when, for small enough $s>0$, $M_{s} \cap D_{s}$
is contained in an outer perturbation of $M \cap D$.
Very often, the jump set $M \cap D$ can be implemented accurately in the computation of the hybrid Poincar\'{e} map, i.e., it
may be possible to take \IfTAC{$M_{s}=M$ and $D_{s}=D$.}{$M_{s}=M$ and $D_{s}=D$, as illustrated in Example~\ref{exam:TCP:robust5} below.}}
\end{remark}


The following closeness result
between solutions to $\MH_{P}$ and $\MH_{P_s}$ holds.

\begin{theorem}\label{thm:PM:closeness}
(\Blue{closeness between} solutions and approximations on compact domains)
Consider a hybrid system $\MH=(C,f,D,g)$ on $\BRn$ and a closed set $\X\subset\BRn$ satisfying Assumption~\ref{ass:basic_data}.
Assume the computed Poincar\'{e} map $P_{s}$ approximating $P$ and
the sets $M_s$ and $D_s$ approximating $M$ and $D$, respectively,
satisfy Assumption~\ref{ass:data:Ps}.
Then, for every compact set $K\subset M \cap D$, 
every
$\varepsilon>0$, and every simulation horizon $J\in \BN$, there exists $s^{*}>0$ with the following property:
there exists $\delta^{*}>0$ such that for each $\delta\in (0,\delta^{*}]$,
for each $s\in (0, s^{*}]$ and any solution
$\phi_{P_{s}} \in \MS_{\MH_{P_{s}}}( K+\delta\BB )$    
there exists a solution
$\phi_{P} \in \MS_{\MH_{P}}(K)$    
\Cred{with $\dom\phi_{P}\subset\BN$}
such that $\phi_{P_{s}}$ and $\phi_{P}$ are \Cred{$(J, \varepsilon)$-close}.
\footnote{\Cred{
See \cite[Definition 3.2]{Ricardo:2010:simulator} for a definition of $(T,J,\varepsilon)$-close
to quantify the distance between hybrid arcs (and solutions).
Here, 
it is just the hybrid case but with $t=0$.}}
\end{theorem}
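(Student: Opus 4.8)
The plan is to recast the discrete-time systems $\MH_P$ and $\MH_{P_s}$ as hybrid systems without flows (only jumps), verify that $\MH_P$ satisfies the hybrid basic conditions, and then invoke the general closeness result for hybrid solutions on compact time horizons (e.g., \cite[Theorem 4.25]{Goebel:book} or the simulation-oriented version in \cite[Theorem 3.5]{Ricardo:2010:simulator}). First I would observe that, by Lemma~\ref{lem:continuity_impact_time} together with items 1)--3) of Assumption~\ref{ass:basic_data}, the time-to-impact function $T_I$ is continuous at points of the relevant subset of $M\cap C$, and that $g$ is continuous on $M\cap D$; since $\MH|_M$ satisfies the hybrid basic conditions and solutions depend continuously on initial data, the map $P$ in \eqref{eq:PX} is (outer semicontinuous and locally bounded, hence) continuous on $M\cap D$. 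Consequently $\MH_P$ in \eqref{sec:exactPM}, viewed as a purely discrete hybrid system with data $(\emptyset, \cdot, M\cap D, P)$, satisfies the hybrid basic conditions A1)--A2).

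The second step is to exhibit $\MH_{P_s}$ as a perturbation of $\MH_P$ in the sense required by the closeness theorem. By \eqref{eqn:PsProperty} in Assumption~\ref{ass:data:Ps}, for $s\in(0,s^*]$ every value $P_s(x)$ lies in the inflated set-valued map $P_\varrho(x)$, which is exactly the $\varrho$-inflation of $P$ (inflating both the argument and the image by the continuous gauge $\varrho$); and by \eqref{eqn:MsCapDsProperty}, the jump sets $M_{s_i}\cap D_{s_i}$ converge (in the outer-limit sense) into $M\cap D$ as $s_i\searrow 0$. These are precisely the graphical-convergence hypotheses $\MH_{P_s}\to\MH_P$ under which the semicontinuous dependence of solution sets holds. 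Then I would fix a compact set $K\subset M\cap D$, a tolerance $\varepsilon>0$, and a horizon $J\in\BN$, and apply the closeness theorem: there exist $s^*>0$ and $\delta^*>0$ such that for all $s\in(0,s^*]$, all $\delta\in(0,\delta^*]$, and every $\phi_{P_s}\in\MS_{\MH_{P_s}}(K+\delta\BB)$, there is a solution $\phi_P\in\MS_{\MH_P}(K)$ that is $(J,\varepsilon)$-close to $\phi_{P_s}$ (in the purely-discrete specialization, $(T,J,\varepsilon)$-closeness with $T=0$ reduces to: for each $(0,j)\in\dom\phi_{P_s}$ with $j\le J$ there is $(0,j)\in\dom\phi_P$ with $|\phi_{P_s}(0,j)-\phi_P(0,j)|\le\varepsilon$, and symmetrically). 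Because $\MS_{\MH_P}(K)$ only contains arcs with domain in $\BN$, the claim $\dom\phi_P\subset\BN$ is automatic.

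The main obstacle I anticipate is the bookkeeping of \emph{which} inflation model is being used and verifying that Assumption~\ref{ass:data:Ps} indeed matches the hypotheses of the invoked hybrid closeness theorem --- in particular, that the inflation of $P$ by the continuous function $\varrho$ (rather than a constant $\rho$) still produces an outer semicontinuous, locally bounded, nonempty-valued perturbation with the correct graphical limit as $s\to0$, and that the completeness/boundedness of $\phi_{P_s}$ over the finite horizon $j\le J$ (which may fail globally, since iterates could leave $M_s\cap D_s$) is not needed: we only track the solution up to the first $j$ where it leaves the jump set, and the closeness conclusion is stated on the truncated domain. A secondary technical point is continuity of $P$ at the relevant points of $M\cap D$: this relies on $g(M\cap D)$ landing in the region where $T_I$ is continuous and finite, which follows from item 3) of Assumption~\ref{ass:basic_data} ($L_f h<0$ on $M\cap D$ and $g(M\cap D)\cap(M\cap D)=\emptyset$) together with completeness of maximal solutions to $\MH|_M$; I would state this as a short lemma or fold it into the first paragraph of the proof.
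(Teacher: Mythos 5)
Your proposal follows essentially the same route as the paper's proof: establish continuity of $P$ on $M\cap D$ (via Lemma~\ref{lem:continuity_impact_time}) so that $\MH_{P}$ satisfies the hybrid basic conditions, use Assumption~\ref{ass:data:Ps} to embed every solution of $\MH_{P_{s}}$ into the state-dependent $\varrho$-outer-perturbation of $\MH_{P}$, and then invoke the general closeness-of-solutions result for outer perturbations of well-posed hybrid systems from \cite{Ricardo:2010:simulator} (the paper cites Lemma~5.1 for the convergence property and Theorem~3.4 for closeness). The technical caveats you flag --- that the $\varrho$-inflation must yield an admissible perturbation and that continuity of $P$ hinges on $g(M\cap D)$ landing where $T_I$ is finite --- are precisely the points the paper handles (or implicitly relies on) in the same way.
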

\IfTAC{
\noindent A proof can be found in \cite[Theorem 7.12]{lou:TAC:2022}.
}{
\begin{proof}
By Assumption~\ref{ass:basic_data} and the definition of $P$ in \eqref{eq:PX},
\Cred{$P$ is well-defined and,
by Lemma~\ref{lem:continuity_impact_time}, continuous} on the closed set $M\cap D$.
Then, the hybrid system $\MH_{P}$ without flows
satisfies the hybrid basic conditions \Cred{A1) and A2) in Section~\ref{subsec:hybrid}}.
Next, to show the convergence property between $\MH_{P}$ and $\MH_{P_{s}}$,
let us first consider an outer perturbation of $\MH_{P}$.
\Blue{Given $\delta>0$, let $s\in (0,\delta]$.
Obviously, the step size $s$ explicitly depends on $\delta$
and approaches zero as $\delta \searrow 0$.}
The outer perturbation
\IfTAC{}{(see \cite[Example 5.3]{Goebel:2006} for more details)}
of $\MH_{P}$
for a state dependent perturbation determined by the constant $\delta$  
and
a continuous function $\varrho:\reals^n \to \reals_{\geq 0}$ is given by
\begin{equation}\label{eqn:OP}
\MH_{P_{\delta}}: \bx^{+} \in P_{\delta}(\bx)\quad x\in D_{\delta},
\end{equation}
where
$P_{\delta}(x) := \big\{ v \in \reals^n: v \in g + \delta\varrho(g)\ball,\;
g \in P( x +$ $\delta\varrho(x)\ball) \big\},$
$D_{\delta}:=\{ x\in\BRn: x+\delta\varrho(x)\BB \cap (M\cap D)
\neq \emptyset \}.$
%
%

Then, by \Blue{\cite[Lemma 5.1]{Ricardo:2010:simulator}},
the outer perturbation
$\MH_{P_{\delta}}$ of $\MH_{P}$ has the convergence property.
Consequently,
the closeness between solutions to $\MH_{P}$ and $\MH_{P_{\delta}}$ follows from \cite[Theorem 3.4]{Ricardo:2010:simulator}.
Using Assumption~\ref{ass:data:Ps},
for every compact set $K\subset\BRn,$
the solutions $\phi_{P_{s}} \in \MS_{\MH_{P_{s}}}( K+\delta\BB )$
are solutions to the perturbed hybrid system
$\MH_{P_{\delta}}$. Hence, the properties of solutions to the perturbed hybrid system
$\MH_{P_{\delta}}$ also hold for those of $\MH_{P_{s}}$.
The proof concludes by exploiting
the closeness property between solutions to $\MH_{P}$ and $\MH_{P_{s}}$.
\end{proof}
}

Inspired by \cite[Theorem 5.3]{Ricardo:2010:simulator},
the following stability result shows that when
Assumption~\ref{ass:data:Ps} holds,
asymptotic stability of the fixed point of $P$ (assumed to be unique) is preserved
under the computation of $P$.
\begin{theorem}\label{thm:PM:SPS}
(stability preservation under computation error of $P$)
Consider a hybrid system $\MH=(C,f,D,g)$ on $\BRn$ and a closed set $\X\subset\BRn$ satisfying Assumption~\ref{ass:basic_data}.
Assume that the computed Poincar\'{e} map $P_{s}$ approximating $P$ and
the sets $M_s$ and $D_s$ approximating $M$ and $D$, respectively,
satisfy Assumption~\ref{ass:data:Ps},
and that $x^{*}$ is a unique
globally asymptotically stable 
fixed point of $P$.
Then, $x^{*}$ is a unique semiglobally practically asymptotically stable 
fixed point of $P_{s}$ with basin of attraction containing every point in $M\cap D,$
i.e.,
there exists $\tilde{\beta}\in \KL$ such that,
for every $\varepsilon>0$, each compact set $K\subset M_{s} \cap D_{s}$,
and every simulation horizon $J\in \BN$, there exists $s^{*}>0$ such that, for each $s\in (0, s^{*}]$, every
solution $\phi_{P_s} \in \MS_{\MH_{P_{s}}}(K)$
to $\MH_{P_{s}}$ 
satisfies for each
$j\in \dom\phi_{P_{s}}$
\begin{equation}
|\phi_{P_{s}}(j)-x^{*}| \leq
\tilde{\beta}( |\phi_{P_{s}}(0)-x^{*}|, j)+\varepsilon. \nonumber
\end{equation}
\end{theorem}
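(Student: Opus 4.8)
The plan is to reduce the statement to the semiglobal–practical robustness of $\mathcal{K}\mathcal{L}$ asymptotic stability of compact sets for (nominally) well-posed hybrid systems, applied to the purely discrete system $\MH_P$ in \eqref{sec:exactPM}. First I would verify that $\MH_P$ satisfies the hybrid basic conditions A1)--A2): the set $M\cap D$ is closed by Assumption~\ref{ass:basic_data}, and the Poincar\'{e} map $P$ in \eqref{eq:PX} is single-valued and continuous on $M\cap D$ — this is where I would invoke Lemma~\ref{lem:continuity_impact_time} (continuity of the time-to-impact function $T_I$ at points of $M\cap C$) together with continuity of $f$ and $g$ from Assumption~\ref{ass:basic_data}, exactly as already argued in the proof of Theorem~\ref{thm:PM:closeness}. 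Since $M\cap D$ is compact and, by hypothesis, $x^{*}$ is the unique globally asymptotically stable fixed point of $P$, the singleton $\{x^{*}\}$ is a globally asymptotically stable compact set for the well-posed hybrid system $\MH_P$; by \cite[Theorem 7.12]{Goebel:book} this upgrades to $\mathcal{K}\mathcal{L}$ asymptotic stability, i.e., there exists $\beta\in\KL$ with $|\phi_P(j)-x^{*}|\leq\beta(|\phi_P(0)-x^{*}|,j)$ for every solution $\phi_P$ to $\MH_P$ and every $j\in\dom\phi_P$ (on the compact domain $M\cap D$ one may take the distance to $x^{*}$ as a proper indicator, up to a class-$\Kinf$ rescaling).

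Second I would cast the computed system $\MH_{P_s}$ in \eqref{sec:eqPM} as a vanishing data perturbation of $\MH_P$. Given $\delta>0$ and $s\in(0,\delta]$, consider the state-dependent outer perturbation $\MH_{P_\delta}$ of $\MH_P$ determined by $\delta$ and the continuous gauge $\varrho$ of Assumption~\ref{ass:data:Ps}, with jump map $P_\delta(x):=\{v: v\in \eta+\delta\varrho(\eta)\BB,\ \eta\in P(x+\delta\varrho(x)\BB)\}$ and jump set $D_\delta:=\{x: (x+\delta\varrho(x)\BB)\cap(M\cap D)\neq\emptyset\}$. By \eqref{eqn:PsProperty}, every solution of $\MH_{P_s}$ from a point of $M_s\cap D_s$ is a solution of $\MH_{P_\delta}$ once $\delta$ is chosen commensurate with $s$; by \eqref{eqn:MsCapDsProperty}, the initial constraint set $M_s\cap D_s$ is, for small $s$, contained in an outer inflation of $M\cap D$, so solutions of $\MH_{P_s}$ from a compact $K\subset M_s\cap D_s$ start from $(M\cap D)+\delta'\BB$ with $\delta'\to 0$ as $s\searrow 0$. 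As $\delta\searrow 0$ the data of $\MH_{P_\delta}$ converge graphically to those of $\MH_P$, so $\MH_{P_s}$ realizes an admissible perturbation in the sense required by the semiglobal–practical robustness result \cite[Theorem 3.26]{Ricardo:book:2021} (equivalently, the corresponding result in \cite{Goebel:book}). Applying that result to the $\mathcal{K}\mathcal{L}$-asymptotically stable compact set $\{x^{*}\}$ of $\MH_P$ yields a single $\tilde{\beta}\in\KL$ such that for every $\varepsilon>0$, every compact $K\subset M_s\cap D_s$, and every $J\in\BN$ there is $s^{*}>0$ with $|\phi_{P_s}(j)-x^{*}|\leq\tilde{\beta}(|\phi_{P_s}(0)-x^{*}|,j)+\varepsilon$ for all $s\in(0,s^{*}]$, all $\phi_{P_s}\in\MS_{\MH_{P_s}}(K)$, and all $j\in\dom\phi_{P_s}$; the solution-closeness on the horizon $J$ from Theorem~\ref{thm:PM:closeness} can be reused to localize this argument if one wants to avoid appealing to the general theorem directly. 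Uniqueness of $x^{*}$ as a semiglobally practically asymptotically stable fixed point of $P_s$ then follows by contradiction: a second such point would have to lie within $\varepsilon$ of $x^{*}$ for every $\varepsilon>0$, hence coincide with $x^{*}$, the unique fixed point of the nominal $P$.

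The hard part will be the second step — certifying that the computed data $(P_s,\,M_s\cap D_s)$ indeed constitute a \emph{vanishing} outer perturbation of $\MH_P$ to which the abstract robustness theorem applies. Concretely, one must translate the consistency bound \eqref{eqn:PsProperty} and the outer-semicontinuity condition \eqref{eqn:MsCapDsProperty} of Assumption~\ref{ass:data:Ps} into the graphical convergence of the perturbed data demanded by \cite[Theorem 3.26]{Ricardo:book:2021}, and to do so uniformly over the chosen compact set $K$ and horizon $J$ while keeping $x^{*}$ fixed; the well-posedness of $\MH_P$ secured through Lemma~\ref{lem:continuity_impact_time} is exactly what makes this reduction legitimate. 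Once this perturbation bookkeeping is in place, the upgrade from global asymptotic stability to $\mathcal{K}\mathcal{L}$ asymptotic stability and the uniqueness argument are routine.
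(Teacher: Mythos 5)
Your proposal is correct and follows essentially the same route as the paper's proof: establish a nominal $\KL$ bound for $\MH_{P}$ from well-posedness and global asymptotic stability of $x^{*}$, realize $\MH_{P_{s}}$ as a solution subset of the outer perturbation $\MH_{P_{\delta}}$ via Assumption~\ref{ass:data:Ps}, and invoke the semiglobal--practical robustness theorem to obtain the $\tilde{\beta}$-plus-$\varepsilon$ bound (the paper cites \cite[Theorems 3.1 and 3.5, Lemma 5.1]{Ricardo:2010:simulator} where you cite the equivalent results in \cite{Goebel:book} and \cite{Ricardo:book:2021}). The only addition is your explicit uniqueness argument at the end, which the paper leaves implicit.
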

\begin{proof}
Since the hybrid system $\MH_{P}$ without flows
satisfies the hybrid basic conditions \Cred{A1) and (A2) in Section~\ref{subsec:hybrid}}
and
$x^{*}$ is a unique globally asymptotically stable
fixed point of $P$,  
by \cite[Theorem 3.1]{Ricardo:2010:simulator},
there exists $\tilde{\beta}\in \KL$ such that
\Cred{each solution
$\phi_{P} \in \MS_{\MH_{P}}(M \cap D)$  to $\MH_{P}$ 
satisfies}
\begin{equation}
|\phi_{P}(j)-x^{*}| \leq
\tilde{\beta}( |\phi_{P}(0)-x^{*}|, j)\quad \forall j\in \dom\phi_{P}. \nonumber
\end{equation}
\IfTAC{\Blue{Given $\delta>0$, let $s\in (0,\delta]$.}~}{}Given a compact set $K\subset M_{s} \cap D_{s}$
and a simulation horizon $J\in \BN$,
by the assumptions,
\cite[Lemma 5.1]{Ricardo:2010:simulator} implies that
\IfTAC{\Blue{for a state dependent perturbation determined by the constant $\delta$
and a continuous function $\varrho:\reals^n \to \reals_{\geq 0}$,
the outer perturbation $\MH_{P_{\delta}}$ 
of $\MH_{P}$ given by
\begin{equation}\label{eqn:OP}
\MH_{P_{\delta}}: \bx^{+} \in P_{\delta}(\bx)\quad x\in D_{\delta},
\end{equation}
where
$P_{\delta}(x) := \big\{ v \in \reals^n: v \in g + \delta\varrho(g)\ball,\;
g \in P( x +$ $\delta\varrho(x)\ball) \big\},$
$D_{\delta}:=\{ x\in\BRn: x+\delta\varrho(x)\BB \cap (M\cap D)
\neq \emptyset \},$}}
{the outer perturbation $\MH_{P_{\delta}}$ of $\MH_{P}$ in \eqref{eqn:OP}}
satisfies the convergence property in \cite[Definition 3.3]{Ricardo:2010:simulator}.
Then, using $K$ above,
\cite[Theorem 3.5]{Ricardo:2010:simulator} implies that
for each $\varepsilon>0$ there exists $\delta^{*}>0$ such that
for each $\delta\in (0,\delta^{*}]$,
every solution $\phi_{P_{\delta}} \in \MS_{\MH_{P_{\delta}}}( K+\delta\BB )$ to $\MH_{P_{\delta}}$   
satisfies for each $j\in\dom\phi_{P_{\delta}}$
\begin{equation}
|\phi_{P_{\delta}}(j)-x^{*}| \leq
\tilde{\beta}( |\phi_{P_{\delta}}(0)-x^{*}|, j)+\varepsilon. \nonumber
\end{equation}
By Assumption~\ref{ass:data:Ps},
the properties of solutions to \IfTAC{}{the perturbed hybrid system}
$\MH_{P_{\delta}}$ also hold  for  solutions $\phi_{P_{s}}$.
The result follows by this preservation and the
$\KL$ bound of solutions to $\MH_{P}$.
\end{proof}

Note that the property in Theorem~\ref{thm:PM:SPS} holds for small
enough step size $s$. The step size bound $s^{*}$ decreases with
the desired level of closeness to $x^{*}$, which is given by $\varepsilon$.
The next result shows that
the computed Poincar\'{e} map $P_{s}$ has
a semiglobally asymptotically stable \IfTAC{(semi-GAS)}{} compact set $\MA_{s}$
with basin of attraction containing every point in $M\cap D$
that reduces to a singleton $\{ x^{*} \}$ as $s$ approaches zero.
\begin{theorem}\label{thm:PM:continuity}
(continuity of asymptotically stable fixed points)
Consider a hybrid system $\MH=(C,f,D,g)$ on $\BRn$ and a closed set $\X\subset\BRn$ satisfying Assumption~\ref{ass:basic_data}.
%
Assume that $x^{*}$ is a unique
globally asymptotically stable fixed point of the hybrid Poincar\'{e} map $P$
and the computed Poincar\'{e} map $P_{s}$ approximating $P$ and
the sets $M_s$ and $D_s$ approximating $M$ and $D$, respectively,
satisfy Assumption~\ref{ass:data:Ps}.
Then, there exists $s^{*}>0$ such that for each $s\in (0, s^{*}]$,
the computed Poincar\'{e} map $P_{s}$ has
a \IfTAC{semi-GAS}{semiglobally asymptotically stable}
compact set $\MA_{s}$
with basin of attraction containing every point in $M\cap D$
satisfying
\Blue{\IfTAC{$\lim_{s\searrow 0} \MA_{s}=x^{*}.$}{$$\lim_{s\searrow 0} \MA_{s}=x^{*}.$$}}
\end{theorem}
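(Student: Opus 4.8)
The plan is to reduce the statement to the general set-valued perturbation machinery of \cite{Ricardo:2010:simulator}, exactly as in the proofs of Theorem~\ref{thm:PM:closeness} and Theorem~\ref{thm:PM:SPS}, but now extracting the \emph{existence} of a nearby asymptotically stable compact attractor for the computed map rather than just a semiglobal practical $\KL$ bound. First I would observe that, by Assumption~\ref{ass:basic_data} and the definition of $P$ in \eqref{eq:PX}, $P$ is well-defined and continuous on the closed set $M\cap D$ (continuity coming from Lemma~\ref{lem:continuity_impact_time}), so $\MH_{P}$ in \eqref{sec:exactPM}, viewed as a hybrid system without flows, satisfies the hybrid basic conditions A1) and A2). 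Since $x^{*}$ is by hypothesis a unique globally asymptotically stable fixed point of $P$, the singleton $\MA_{0}:=\{x^{*}\}$ is a globally asymptotically stable compact set for $\MH_{P}$, and $\MH_{P}$ is (nominally) well-posed.

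Next I would set up the outer-perturbation family $\MH_{P_{\delta}}$ as in \eqref{eqn:OP}, with perturbation radius governed by the constant $\delta$ and the continuous function $\varrho$ supplied by Assumption~\ref{ass:data:Ps}. By \cite[Lemma 5.1]{Ricardo:2010:simulator}, this family has the convergence property to $\MH_{P}$ in the sense of \cite[Definition 3.3]{Ricardo:2010:simulator}. The key step is then to invoke the semicontinuity/robustness result for asymptotically stable compact sets under such perturbations --- the relevant statement is \cite[Theorem 5.3]{Ricardo:2010:simulator} (used already for Theorem~\ref{thm:PM:SPS}), which not only gives the semiglobal practical $\KL$ bound but, through the associated $\KL$-characterization of asymptotic stability and the fact that the perturbed dynamics is itself a well-posed hybrid system on a compact domain, yields for each small $\delta>0$ a compact set $\MA_{\delta}$ that is globally asymptotically stable for $\MH_{P_{\delta}}$ with basin of attraction containing every point in $M\cap D$, and such that $\MA_{\delta}\to\{x^{*}\}$ (in the sense of set convergence / Hausdorff distance on compacta) as $\delta\searrow 0$. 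Concretely, one takes $\MA_{\delta}$ to be (the closure of) the $\omega$-limit set of the union of all solutions to $\MH_{P_{\delta}}$ from a large compact set, or equivalently a sublevel set of a smooth Lyapunov function for $\MH_{P}$ at a level shrinking with $\delta$; compactness and invariance follow from well-posedness and the $\KL$ bound, and $\lim_{\delta\searrow 0}\MA_{\delta}=\{x^{*}\}$ follows from the semiglobal practical estimate together with uniqueness of $x^{*}$.

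Then I would transfer this from $\MH_{P_{\delta}}$ to $\MH_{P_{s}}$ using Assumption~\ref{ass:data:Ps}: by \eqref{eqn:PsProperty} every solution $\phi_{P_{s}}$ to $\MH_{P_{s}}$ from a compact $K\subset M_{s}\cap D_{s}$ is, for $s$ small enough (say $s\le s^{*}$ and $s\le\delta$, so that $s$ depends on $\delta$ and vanishes with it), also a solution of $\MH_{P_{\delta}}$; and by \eqref{eqn:MsCapDsProperty} the domains $M_{s_{i}}\cap D_{s_{i}}$ collapse into $M\cap D$ along any $s_{i}\searrow 0$, so the basin-of-attraction claim restricted to $M\cap D$ carries over. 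Hence for each $s\in(0,s^{*}]$ we may set $\MA_{s}:=\MA_{\delta(s)}$ (choosing the correspondence $s\mapsto\delta$ monotone with $\delta(s)\searrow 0$ as $s\searrow 0$), obtaining a semiglobally asymptotically stable compact set $\MA_{s}$ for $P_{s}$ with basin of attraction containing every point in $M\cap D$ and with $\lim_{s\searrow 0}\MA_{s}=x^{*}$; the $\KL$ estimate of Theorem~\ref{thm:PM:SPS} gives the ``semiglobal'' qualifier on the attractivity. The main obstacle I anticipate is the bookkeeping needed to guarantee that $\MA_{s}$ genuinely \emph{converges} to the singleton $\{x^{*}\}$ rather than merely shrinking to some perturbation-dependent compact set: this requires pairing the semiglobal practical $\KL$ bound (which controls the ultimate bound in terms of $\varepsilon$) with the uniqueness of the fixed point $x^{*}$ of $P$ and with upper semicontinuity of $\omega$-limit sets under graphical convergence of the perturbed systems, i.e.\ one must rule out that $\MA_{s}$ ``drifts'' --- this is exactly where \eqref{eqn:MsCapDsProperty} and the continuity of $\varrho$ with $\varrho$-inflations vanishing as $s\searrow 0$ are used, and it should be made precise via \cite[Theorem 7.21]{Goebel:book}-type semicontinuity of attractors or the robustness arguments already cited in this section.
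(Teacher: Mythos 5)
Your proposal is correct and follows essentially the same route as the paper: outer-perturb $\MH_{P}$ to $\MH_{P_{\delta}}$, invoke the semiglobal practical $\KL$ bound from \cite{Ricardo:2010:simulator}, extract a compact, forward invariant, uniformly attractive set whose size is controlled by $\varepsilon$ (hence by $\delta$, hence by $s$), transfer to $\MH_{P_{s}}$ via Assumption~\ref{ass:data:Ps}, and let $\varepsilon\searrow 0$ force $\MA_{s}\to\{x^{*}\}$. The only substantive difference is the concrete attractor construction --- the paper takes $\MA_{s}=B_{\varepsilon}$ to be the closure of the infinite-horizon reachable set from $x^{*}+2\varepsilon\BB$ for an auxiliary system $\tilde{\MH}_{\tilde{P}_{\delta}}$ whose jump map is extended to all of $\BRn$ by $\{x^{*}\}$ (to guarantee completeness), then applies \cite[Theorem 3.26]{Ricardo:book:2021}, whereas you propose $\omega$-limit sets or Lyapunov sublevel sets; both realizations serve the same purpose within an otherwise identical argument.
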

\begin{proof}
%
%
%
Let $K$ be any compact set such that for some
$\varepsilon>0,$ $x^{*}+2\varepsilon\BB\subset K\subset \BRn$.
Using $K$ as above and an arbitrary simulation horizon
$J\in \BN$,
consider the perturbed system $\MH_{P_{\delta}}$ in \eqref{eqn:OP}
and define $\tilde{\MH}_{\tilde{P}_{\delta}}$ with
\begin{align}
\tilde{P}_{\delta}(x)=
\left\{
\begin{aligned}
P_{\delta}(x)\cup \{ x^{*} \} &\quad x\in D_{\delta}\\
 \{x^{*}\}\qquad &\quad x\in \BRn\backslash D_{\delta}
\end{aligned}
\right.\nonumber
\end{align}
and $\tilde{D}_{\delta}=\BRn$.
Using $K$ and $\varepsilon$ as above,
\cite[Theorem 3.5]{Ricardo:2010:simulator} implies that
for each $\varepsilon>0$ there exists $\delta^{*}>0$ such that
for each $\delta\in (0,\delta^{*}]$,
every solution $\phi_{\tilde{P}_{\delta}} \in \MS_{\tilde{\MH}_{\tilde{P}_{\delta}}}(K)$ to $\tilde{\MH}_{\tilde{P}_{\delta}}$
satisfies for each \Cred{$j\in\dom \phi_{\tilde{P}_{\delta}}$}
\begin{equation}\label{eqn:Hdelta:KL}
|\phi_{\tilde{P}_{\delta}}(j)-x^{*}| \leq
\tilde{\beta}( |\phi_{\tilde{P}_{\delta}}(0)-x^{*}|, j)+\varepsilon.
\end{equation}

For a simulation horizon
$J\in \BN$, let $\mathrm{Reach}_{J,\tilde{\MH}_{\tilde{P}_{\delta}}}(x^{*}+2\varepsilon\BB)$
be the reachable set of $\tilde{\MH}_{\tilde{P}_{\delta}}$ from $x^{*}+2\varepsilon\BB$ up to $J,$ i.e.,
\IfConf{
\begin{align}
\mathrm{Reach}_{J,\tilde{\MH}_{\tilde{P}_{\delta}}}\!\!(x^{*}\!+\!2\varepsilon\BB)\!:=
& \{ \phi_{\tilde{P}_{\delta}}\!(j)\!:\!
\phi_{\tilde{P}_{\delta}} \text{ is a solution to } \tilde{\MH}_{\tilde{P}_{\delta}}, \nonumber\\
& \hspace{-7mm}\phi_{\tilde{P}_{\delta}}(0)\in x^{*}+2\varepsilon\BB, j\in\dom\phi, j\leq J\}. \nonumber
\end{align}
}{
\begin{align}
\mathrm{Reach}_{J,\tilde{\MH}_{\tilde{P}_{\delta}}}\!\!(x^{*}\!+\!2\varepsilon\BB)\!:=
& \{ \phi_{\tilde{P}_{\delta}}\!(j)\!:\!
\phi_{\tilde{P}_{\delta}} \text{ is a solution to } \tilde{\MH}_{\tilde{P}_{\delta}}, \nonumber\\
& \hspace{0.5cm}\phi_{\tilde{P}_{\delta}}(0)\in x^{*}+2\varepsilon\BB, j\in\dom\phi, j\leq J\}. \nonumber
\end{align}}
%
Now, following a similar step as in the proof of \cite[Theorem 5.4]{Ricardo:2010:simulator},
let
$$B_{\varepsilon}:=\overline{\mathrm{Reach}_{\infty,\tilde{\MH}_{\tilde{P}_{\delta}}}(x^{*}+2\varepsilon\BB)}.$$
By \eqref{eqn:Hdelta:KL}, $B_{\varepsilon}$ is bounded.
Moreover, since $B_{\varepsilon}$ is closed by definition,
it follows that it is compact.
Next, we show that it is forward invariant.
Consider a solution
$\phi_{\tilde{P}_{\delta}} \in \MS_{\tilde{\MH}_{\tilde{P}_{\delta}}}(B_{\varepsilon})$
to $\tilde{\MH}_{\tilde{P}_{\delta}}$.
Assume that there exists $j'\in \dom\phi_{\tilde{P}_{\delta}}$
for which $\phi_{\tilde{P}_{\delta}}(j')\notin B_{\varepsilon}$.
By definition of $B_{\varepsilon}$, since $\phi_{\tilde{P}_{\delta}}(0)\in B_{\varepsilon}$, the solution $\phi_{\tilde{P}_{\delta}}$
belongs to $B_{\varepsilon}$ for each $j\in \dom\phi_{\tilde{P}_{\delta}}$. This is a contradiction.
Next, we show that solutions to $\tilde{\MH}_{\tilde{P}_{\delta}}$ starting from $K$ converge to $B_{\varepsilon}$ uniformly.
\eqref{eqn:Hdelta:KL}
implies that for the given $K$ and $\varepsilon$, there exists $N>0$ such that for
every solution $\phi_{\tilde{P}_{\delta}} \in \MS_{\tilde{\MH}_{\tilde{P}_{\delta}}}(K)$ to $\tilde{\MH}_{\tilde{P}_{\delta}}$
and for each $j\in\dom \phi_{\tilde{P}_{\delta}}$, $j\geq N$:\IfTAC{~$|\phi_{\tilde{P}_{\delta}}(j)-x^{*}| \leq 2\varepsilon.$~}{\begin{equation}\label{eqn:}
|\phi_{\tilde{P}_{\delta}}(j)-x^{*}| \leq 2\varepsilon.\nonumber
\end{equation}}Then, since $B_{\varepsilon}$ is compact, forward invariant, and uniformly
attractive from $K$, by
\cite[Theorem 3.26]{Ricardo:book:2021},
$B_{\varepsilon}$
is a \IfTAC{semi-GAS}{semiglobally asymptotically stable} set for $\tilde{\MH}_{\tilde{P}_{\delta}}$.
By the construction of $\tilde{\MH}_{\tilde{P}_{\delta}}$ and Assumption~\ref{ass:data:Ps},
semiglobal asymptotic stability of $B_{\varepsilon}$ for $\MH_{P_{s}}$
with basin of attraction containing every point in $M\cap D$
follows.~\IfTAC{Finally, note that $B_{0}=\{x^{*}\}$ and that as $\varepsilon \RA 0$,}{

Finally, note that $B_{0}=\{x^{*}\}$ and that as $\varepsilon \RA 0$,}
$\Blue{\lim_{\varepsilon\searrow 0} B_{\varepsilon}=x^{*}}$.
By \eqref{eqn:Hdelta:KL}, 
$\varepsilon \searrow 0$ implies $\delta \searrow 0$.
Moreover, 
from the proof of Theorem~\ref{thm:PM:closeness},
we have $s \searrow 0$ as $\delta \searrow 0$.
It follows that
$s \searrow 0$ as $\varepsilon \searrow 0$.
Therefore, the result follows by $\MA_{s}=B_{\varepsilon}$.
\end{proof}

\IfTAC{}{
The following example illustrates that the Euler integration scheme for
differential equations satisfies the continuity property in Theorem~\ref{thm:PM:continuity}.

\begin{example}\label{exam:TCP:robust5}
Consider the hybrid congestion control system in Example~\ref{exam:TCP3}.
To approximate the hybrid Poincar\'{e} map at $x=(q,r)\in\MD_{\textrm{\tiny TCP}}$,  
we numerically compute solutions from $g_{\textrm{\tiny TCP}}(x)=(g_{1}(x), g_{2}(x))=(q_{\max}, mr)$
by discretizing the flows of the hybrid system
using Euler integration scheme with step size $s$.
From the definition of the hybrid Poincar\'{e} map and the flow from $g_{\textrm{\tiny TCP}}(x)$,
it follows that, using \eqref{eq:phif:TCP},
the computed Poincar\'{e} map after $k$ steps of size $s$
with $ks\geq T^{*}$ and $(k-1)s<T^{*}$ is
\IfTAC{
\begin{equation*}\label{sec:HPsTCP}
\MH_{P_{\textrm{\tiny TCP}_{s}}}\!:
\bx^{+}\!=\! P_{\textrm{\tiny TCP}_{s}}(\bx)=
\left[\!\!
\begin{array}{ccc}
g_{\textrm{\tiny P}_{s}}\\   
mr+aks
\end{array}
\!\!\right]
\;\;  \bx \!\in \! M_{\textrm{\tiny TCP}_{s}}\!\!\cap\MD_{\textrm{\tiny TCP}_{s}},
\end{equation*}
}
{
\begin{equation*}\label{sec:HPsTCP}
\MH_{P_{\textrm{\tiny TCP}_{s}}}:
\bx^{+} = P_{\textrm{\tiny TCP}_{s}}(\bx)=
\left[\!\!
\begin{array}{ccc}
g_{\textrm{\tiny P}_{s}}\\   
mr+aks
\end{array}
\!\!\right]
\;\;  \bx \!\in \! M_{\textrm{\tiny TCP}_{s}}\!\!\cap\MD_{\textrm{\tiny TCP}_{s}},
\end{equation*}}where $g_{\textrm{\tiny P}_{s}}:=q_{\max}+(mr-B)ks+\frac{k^2}{2}as^{2}$,
$M_{\textrm{\tiny TCP}_{s}}=M_{\textrm{\tiny TCP}}$ and $\MD_{\textrm{\tiny TCP}_{s}}=\MD_{\textrm{\tiny TCP}}$ (i.e., the approximation of the sets has no error)
satisfy Assumption~\ref{ass:data:Ps}.

To derive the fixed point $x_s^*$ of $P_{\textrm{\tiny TCP}_{s}}$,
it follows from the form of $\MH_{P_{\textrm{\tiny TCP}_{s}}}$ that $x_s^*=(q,r)$ is such that
$(mr-B)ks + \frac{k^2}{2} a s^{2}=0$ and $mr+aks=r$.
By solving these two equations, we obtain
$r=2B/(m+1)$.
Therefore, when $ks=2(B-mr)/a$,
the fixed point $x_{s}^{*}$ of $P_{\textrm{\tiny TCP}_{s}}$
can be computed as $x_{s}^{*}=(q_{\max}, 2B/(m+1))$,
which is equivalent to the fixed point $x^{*}$ of the hybrid Poincar\'{e} map $P_{\textrm{\tiny TCP}}$ in \eqref{eq:normal:PM}.
In this case, using the fact that $r=2B/(m+1)$, we have
$ks=2(B-mr)/a=2B(1-m)/(a+ma)=T^{*}$ satisfying the conditions
$ks\geq T^{*}$ and $(k-1)s<T^{*}$ for each $k \in \{1,2,3,\dots\}$.
In addition,
we have $|x_{s}^{*}-x^{*}| \RA 0 \; \mathrm{as}\; s\searrow 0$,
which illustrates Theorem~\ref{thm:PM:continuity}.
\end{example}
}

\IfTAC{\vspace{-2mm}}{}

\section{Conclusion}\label{sec:conclu}

\IfTAC{
Notions and tools for the analysis of existence and stability of hybrid limit cycles in hybrid dynamical systems were proposed.
Necessary conditions were established for the existence of hybrid limit cycles.
The Zhukovskii stability notion for hybrid systems was introduced
and
 a sufficient condition relying on Zhukovskii stability of
the hybrid system was established for the existence of hybrid limit cycles.
Sufficient and necessary conditions for the stability of hybrid limit cycles were presented.
Moreover, comparing to previous results in the literature, we established conditions for robustness of
hybrid limit cycles with respect to small perturbations and to computation error of the hybrid Poincar\'{e} map,
which is a very challenging problem in systems with impulsive effects.
Examples were included to aid the reading and
illustrate the concepts and the methodology of applying the new results.
Future work includes exercising the presented conditions on
systems of higher dimension and more intricate dynamics,
and hybrid control design for asymptotic stabilization of limit cycles
as well as their robust implementation.
}
{
Notions and tools for the analysis of existence and stability of hybrid limit cycles in hybrid dynamical systems were proposed.
Necessary conditions
were established for the existence of hybrid limit cycles.
The Zhukovskii stability notion for hybrid systems was introduced and
a relationship between Zhukovskii
stability and the incremental graphical stability was presented.
A sufficient condition relying on Zhukovskii stability of
the hybrid system was established for the existence of hybrid limit cycles.
In addition to nominal results, the key novel contributions included
an approach relying on incremental graphical stability
for the nonexistence of hybrid limit cycles.
To investigate the stability properties of the hybrid limit cycles,
we also constructed a time-to-impact function inspired by those introduced \cite{Grizzle:2001,Maghenem:ACC:2020,Hamed:Gregg:2017}.
Based on these constructions,
sufficient and necessary conditions for the stability of hybrid limit cycles were presented.
Moreover, comparing to previous results in the literature, we established conditions for robustness of
hybrid limit cycles with respect to small perturbations and to computation error of the hybrid Poincar\'{e} map,
which is a very challenging problem in systems with impulsive effects.
An extension effort that characterizes the robust stability properties for the situation where
a hybrid limit cycle may contain multiple jumps within a period
can be found in \cite{lou.li.sanfelice16:TAC,Lou:ADHS15}.  
Examples were included to aid the reading and
illustrate the concepts and the methodology of applying the new results.
Future work includes exercising the presented conditions on 
systems of higher dimension and more intricate dynamics,
and hybrid control design for asymptotic stabilization of limit cycles
as well as their robust implementation.
}

\IfTAC{} 
{  

\section{Appendix}

The following theorems are used in the proof of Theorem~\ref{thm:exist2}
and we present them here for completeness.

\begin{theorem}\label{appx:thm1}
(Tubular Flow Theorem, \cite[Chapter 2, Theorem 1.1]{Palis:1982})
Let $f$ be a vector field of class $\mathcal{C}^{r}$,
$r\geq 1$, on $U\subset \BRn$
and let $v\in U$ be a regular point of $f$.
Let $\Xi=:\{(\xi_{1},\cdots,\xi_{n})\in \BRn: |\xi_i|<1, i=1,2,\cdots,n\}$ and let $f_{\Xi}$ be the vector field on $\Xi$
defined by $f_{\Xi}(\xi)=(1,0,\cdots,0).$  Then there exists a $\mathcal{C}^{r}$ diffeomorphism
$H: \mathcal{N}_{v}\rightarrow \Xi$, for some neighborhood $\mathcal{N}_v$ of $v$ in $U,$ taking trajectories of $\dot x = f(x)$ to
trajectories of $\dot \xi = f_{\Xi}(\xi)$.
\end{theorem}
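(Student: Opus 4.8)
The plan is to establish this flow-box (straightening) theorem by constructing an explicit local chart adapted to the flow of $f$ and then verifying that this chart conjugates $\dot x = f(x)$ to the constant field $f_\Xi$. By hypothesis $v$ is a regular (non-singular) point of the vector field, so $f(v)\neq 0$. First I would fix a hyperplane $S\subset\BRn$ through $v$ transverse to $f(v)$, i.e. $\BRn=\mathrm{span}\{f(v)\}\oplus T_vS$, and choose an affine parametrization $\sigma:(-\epsilon,\epsilon)^{n-1}\RA S$ with $\sigma(0)=v$ whose derivative $D\sigma(0)$ spans $T_vS$. Because $f$ is of class $\mathcal{C}^r$, the local flow $(t,x)\mapsto \sol(t,x)$ of $\dot x = f(x)$ is itself of class $\mathcal{C}^r$ jointly in $(t,x)$ by the $\mathcal{C}^r$-dependence theorem for ODEs (continuous dependence on initial conditions, used already in this paper via \cite[Theorem~3.5]{Khalil:2002}, being the $r=0$ instance). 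This lets me define, on a small product neighborhood of the origin in $\BR\times(-\epsilon,\epsilon)^{n-1}$, the $\mathcal{C}^r$ map $\Psi(\xi_1,\xi_2,\dots,\xi_n):=\sol\big(\xi_1,\sigma(\xi_2,\dots,\xi_n)\big)$.

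Next I would differentiate $\Psi$ at the origin. Since $\frac{\partial}{\partial\xi_1}\Psi(0)=\frac{\partial}{\partial t}\sol(t,v)\big|_{t=0}=f(v)$ and the columns $\frac{\partial}{\partial\xi_i}\Psi(0)=\frac{\partial}{\partial\xi_i}\sigma(0)$, $i\in\{2,\dots,n\}$, span $T_vS$, transversality of $f(v)$ to $S$ forces $D\Psi(0)$ to be invertible. By the inverse function theorem, $\Psi$ restricts to a $\mathcal{C}^r$ diffeomorphism from a neighborhood $\Omega$ of the origin onto a neighborhood $\mathcal{N}_v$ of $v$ in $U$; shrinking $\mathcal{N}_v$ if necessary so that $\Omega$ is a product of open intervals, I compose $\Psi^{-1}$ with an affine rescaling of $\Omega$ onto $\Xi=\{\xi:|\xi_i|<1\}$ to obtain the desired $\mathcal{C}^r$ diffeomorphism $H:\mathcal{N}_v\RA\Xi$.

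It remains to check that $H$ carries trajectories of $\dot x=f(x)$ to trajectories of $\dot\xi=f_\Xi(\xi)=(1,0,\dots,0)$. By the semigroup property of the flow, $\Psi(\xi_1+s,\xi_2,\dots,\xi_n)=\sol\big(s,\Psi(\xi_1,\dots,\xi_n)\big)$; hence, for fixed $(\xi_2,\dots,\xi_n)$, the curve $s\mapsto\Psi(\xi_1+s,\xi_2,\dots,\xi_n)$ is the trajectory of $\dot x=f(x)$ through $\Psi(\xi_1,\dots,\xi_n)$, while its $H$-image $s\mapsto(\xi_1+s,\xi_2,\dots,\xi_n)$ is exactly the trajectory of $f_\Xi$ through that point. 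Since every trajectory of $f$ meeting $\mathcal{N}_v$ is, on the time-interval for which it remains in $\mathcal{N}_v$, of this form, $H$ conjugates the two flows, completing the argument. I expect the main technical point to be the $\mathcal{C}^r$-regularity bookkeeping — justifying that the flow of a $\mathcal{C}^r$ field is $\mathcal{C}^r$ in $(t,x)$ and that $\Omega$ can be taken as a product so the inverse-function-theorem chart is genuinely of flow-box form — rather than any hard estimate; the transversal-section construction and the conjugacy verification are then routine.
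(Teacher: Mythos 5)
Your proposal is correct: it is the standard flow-box construction (transversal section at the regular point, the map $\Psi(\xi_1,\xi')=\sol(\xi_1,\sigma(\xi'))$, invertibility of $D\Psi(0)$ from transversality, the inverse function theorem, and the group property of the flow for the conjugacy), which is precisely the argument in the cited reference \cite[Chapter 2, Theorem 1.1]{Palis:1982}; the paper itself states this theorem without proof, as an imported classical result. The only hair to split is that the final affine rescaling of $\Omega$ onto the cube $\Xi$ rescales the time parameter along trajectories by a constant factor, so $H$ carries orbits of $f$ onto orbits of $f_\Xi$ rather than preserving the parametrization exactly --- which is all the statement (``taking trajectories to trajectories'') requires.
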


\begin{theorem}\label{appx:thm2}
(Brouwer's Fixed Point Theorem, \cite[Corollary 1.1.1]{Florenzano:2003})
Let $X$ be a nonempty compact convex subset of $\BRn$
and $P: X\rightarrow X$ a continuous (single-valued) mapping. Then there exists a $q\in X$
such that $P(q)=q$.
\end{theorem}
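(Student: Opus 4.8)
The plan is to reduce the statement to a canonical domain and then derive it from the nonexistence of a continuous retraction of a ball onto its boundary. First I would reduce to the closed unit ball. Since $X$ is a nonempty compact convex subset of $\mathbb{R}^n$, let $k$ denote its affine dimension; then $X$ is homeomorphic to the closed unit ball $\mathbb{B}^k \subset \mathbb{R}^k$ via some homeomorphism $\psi$. The fixed-point property is invariant under such a conjugation: if $q^{\star}$ is a fixed point of $\psi\circ P\circ\psi^{-1}:\mathbb{B}^k\to\mathbb{B}^k$, then $\psi^{-1}(q^{\star})$ is a fixed point of $P$. Hence it suffices to prove the result for a continuous self-map of $\mathbb{B}^k$.

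Next I would establish the link to the no-retraction theorem. Arguing by contradiction, suppose a continuous $P:\mathbb{B}^k\to\mathbb{B}^k$ has no fixed point, so $P(x)\ne x$ for every $x$. Define $r(x)$ as the unique point at which the ray emanating from $P(x)$ through $x$ meets the boundary sphere $S^{k-1}$. Using $P(x)\ne x$ one verifies that $r$ is well-defined and continuous, and that $r(x)=x$ for all $x\in S^{k-1}$; therefore $r$ is a continuous retraction of $\mathbb{B}^k$ onto $S^{k-1}$. It then remains to show that no such retraction can exist.

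For the no-retraction step I would use an analytic degree/determinant argument. First approximate $r$ uniformly by a smooth map and correct it near the boundary so that the smooth approximant $\tilde r$ remains a retraction onto $S^{k-1}$. Then consider the homotopy $h_t(x)=(1-t)x+t\,\tilde r(x)$ together with the volume functional
\[
V(t)=\int_{\mathbb{B}^k}\det\!\big(Dh_t(x)\big)\,dx .
\]
One shows that $V$ is a polynomial in $t$, that $V(0)$ equals the volume of $\mathbb{B}^k$, and that a boundary (Stokes-type) computation forces $V$ to be constant in $t$. On the other hand, since $\tilde r$ takes values in $S^{k-1}$, its Jacobian is everywhere rank-deficient, giving $\det D\tilde r\equiv 0$ and hence $V(1)=0$. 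The equality $V(0)=V(1)$ then contradicts $V(0)>0$, ruling out $\tilde r$ and, by the uniform approximation, ruling out $r$.

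The main obstacle will be precisely this last step, the exclusion of a continuous retraction of the ball onto its boundary, since it is the genuinely topological core of the theorem and cannot be reached by convexity alone. I expect the delicate technical points to be the smoothing of $r$ while preserving the retraction property up to the boundary, and the verification that $V$ is polynomial with the claimed endpoint values. An alternative that avoids smoothing altogether is the combinatorial route through \emph{Sperner's lemma} and the \emph{KKM covering lemma}, which yields the fixed point directly on a simplex and then transfers to $\mathbb{B}^k$ by the homeomorphism from the first step.
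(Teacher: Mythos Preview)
The paper does not prove this statement at all: Theorem~\ref{appx:thm2} is listed in the Appendix purely as a cited result from \cite{Florenzano:2003}, with no accompanying argument. It is invoked as a black box inside the proof of Theorem~\ref{thm:exist2} to guarantee a fixed point of the conjugated map $\Gamma_\xi$ on the convex compact set $\Sigma_\xi$. So there is nothing in the paper to compare your proposal against at the level of proof strategy.

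That said, your outline is a sound and standard route to Brouwer's theorem. The reduction of a nonempty compact convex $X\subset\mathbb{R}^n$ to a closed ball of its affine dimension is correct, and the ray construction $r(x)$ from a hypothetical fixed-point-free $P$ does yield a continuous retraction of $\mathbb{B}^k$ onto $S^{k-1}$. The Milnor-style determinant argument you sketch is also valid; the usual way to see that $V$ is constant is to note that $V$ is a polynomial in $t$ and that for all sufficiently small $t$ the map $h_t$ is a diffeomorphism of $\mathbb{B}^k$ onto itself (being a $C^1$-small perturbation of the identity fixing $S^{k-1}$), whence $V(t)=\mathrm{vol}(\mathbb{B}^k)$ on a neighborhood of $0$ and therefore everywhere. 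Your ``Stokes-type'' phrasing is a legitimate alternative justification, but be sure you can carry it out precisely, since this is exactly the place where the argument has content. The smoothing step---replacing $r$ by a nearby smooth $\tilde r$ that still restricts to the identity on $S^{k-1}$---is routine via a partition-of-unity collar construction, but you should state it carefully. Your fallback through Sperner/KKM is also a complete and self-contained alternative. Either path would constitute a genuine proof where the paper simply cites one.
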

}

\end{document}